\documentclass[runningheads]{llncs}
\synctex=-1


\title{Pattern detection in ordered graphs}

\author{Guillaume Ducoffe\inst{1}\orcidID{0000-0003-2127-5989},\\
Laurent Feuilloley\inst{2}\orcidID{0000-0002-3994-0898}, \\
Michel Habib\inst{3}\orcidID{0000-0002-8564-2314},   \\
Fran\c{c}ois Pitois\inst{2,4}\orcidID{0000-0002-5524-0138}}

\authorrunning{G. Ducoffe, L. Feuilloley, M. Habib, F. Pitois}

\institute{
University of Bucharest, Romania, National Institute for Research and Development in Informatics, Romania \and
 Univ Lyon, CNRS, INSA Lyon, UCBL, LIRIS, UMR5205, F-69622 Villeurbanne, France \and
IRIF, CNRS \& Universit\'e Paris Cit\'e,  France \and 
LIB, Université Bourgogne Franche-Comté, France
}

\usepackage[utf8]{inputenc} 
\usepackage[english]{babel} 
\usepackage{thm-restate}

\makeatletter

\usepackage{subcaption} 
\usepackage{multicol}
\usepackage{rotating}
%
\usepackage{amssymb} 
%
\usepackage{tikz} 
\usepackage{hyperref}
\usepackage{algorithm2e}

\usepackage{enumitem}
\usepackage{amsmath} 

\spnewtheorem{hypothesis}{Hypothesis}{\bfseries}{\itshape}

\newcommand{\CC}{\mathcal{C}}

\newcommand{\edit}{dist_{out}}

\newcommand{\pnograph}{\textsc{No Graph}}
\newcommand{\ppath}{\textsc{Linear Forest}}
\newcommand{\pstar}{\textsc{Star}}
\newcommand{\pinterval}{\textsc{Interval}}
\newcommand{\psplit}{\textsc{Split}}
\newcommand{\pforest}{\textsc{Forest}}
\newcommand{\pbipartite}{\textsc{Bipartite}}
\newcommand{\pchordal}{\textsc{Chordal}}
\newcommand{\pcomparability}{\textsc{Comparability}}
\newcommand{\ptriangle}{\textsc{Triangle}}

\begin{document}

\maketitle{}

\begin{abstract}
A popular way to define or characterize graph classes is via forbidden subgraphs or forbidden minors. 
These characterizations play a key role in graph theory, but they rarely lead to efficient algorithms to recognize these classes.
In contrast, many essential graph classes can be recognized efficiently thanks to characterizations of the following form: there must exist an ordering of the vertices such that some ordered \emph{pattern} does not appear, where a pattern is basically an ordered subgraph. 
These pattern characterizations have been studied for decades, but there have been recent efforts to better understand them systematically.  
In this paper, we focus on a simple problem at the core of this topic: given an ordered graph of size $n$, how fast can we detect whether a fixed pattern of size $k$ is present? 

Following the literature on graph classes recognition, we first look for patterns that can be detected in linear time. 
We prove, among other results, that almost all patterns on three vertices (which capture many interesting classes, such as interval, chordal, split, bipartite, and comparability graphs) fall in this category.
Then, in a finer-grained complexity perspective, we prove conditional lower bounds for this problem. In particular we show that for a large family of patterns on four vertices it is unlikely that subquadratic algorithm exist. 
Finally, we define a parameter for patterns, the \emph{merge-width}, and prove that for patterns of merge-width $t$, one can solve the problem in $O(n^{ct})$ for some constant~$c$. As a corollary, we get that detecting outerplanar patterns and other classes of patterns can be done in time independent of the size of the pattern.
\end{abstract}

\begin{keywords} Hereditary graph classes, forbidden structures, pattern characterization, graph algorithms, fine-grained complexity, parameterized algorithms, subgraph detection, merge-width. 
\end{keywords}

\section{Introduction}

\subsection{Motivation}

\paragraph{Forbidden structures characterization and recognition.}
The most popular way to define graph classes is to forbid some substructure, typically (induced) subgraphs or minors. 
This approach has led to a huge amount of results on the structure of these graphs and how to use it \emph{e.g.} for combinatorial optimization. 
It is also a versatile framework, in the sense that many useful classes can be described by forbidden subgraphs or minors. 
For example, \cite{LekkeikerkerB62} established a list of forbidden induced subgraphs characterizing interval graphs, a well-known class, whose original definition does not refer to any forbidden structure.

But when it comes to recognizing graph classes efficiently, forbidden subgraphs become less handy. First, the list of forbidden structures might be long or infinite. For example, the list for interval graphs contains three infinite families.
Second, detecting induced subgraphs efficiently can be challenging.

\paragraph{Efficient recognition and forbidden patterns.}
The classic efficient approach to recognize classes such as interval or chordal graphs is not to use forbidden subgraphs, but instead to use some specific graph traversals to unveil the inner structure of these graphs.
For example, chordal graphs can be recognized in a two-step way: first perform a specific breadth-first search (BFS) called LexBFS, and then check that the reverse order in which the vertices are visited has the property that for any vertex, its neighborhood restricted to its successors is a clique. 
The key principle behind the efficiency of these recognition algorithms is that \emph{many graphs classes are characterized by the existence of an ordering of their vertices with efficiently checkable properties}. 
For example, a graph is an interval graph, if and only if, there exists an ordering of its vertices such that for any three (not necessarily consecutive) vertices $u<v<w$, if $(u,w)$ is an edge, then $(u,v)$ must also be an edge. A compact way to encode this forbidden structure is to use what is called a \emph{pattern}~\cite{FeuilloleyH21}, see Figure~\ref{fig:interval} and its caption.

\begin{figure}[!h]
\vspace{-0.4cm}            
    \centering
    \begin{tikzpicture}
			[scale=1.5,auto=left,every node/.style=		
			{circle,draw,fill=black!5}]
			\node (a) at (0,0) {1};
			\node (b) at (1,0) {2};
			\node (c) at (2,0) {3};
			\draw (a) to[bend left=50] (c);
			\draw[dashed] (a) to (b);
		\end{tikzpicture}
    \caption{This picture illustrates the pattern associated with interval graphs. There is a \emph{realization} of this pattern in an ordered graph if there exists three (not necessarily consecutive) vertices $v_1<v_2<v_3$ such that $(v_1,v_3)\in E$ and $(v_1,v_2)\notin E$. In other words, for the pattern to be realized, the plain edges (called \emph{mandatory edges}) should be present, the dashed  edges (called \emph{forbidden edges}) should be absent, and there is no constraints on the edges that are not drawn (called \emph{undecided pairs}). A graph is an interval graph if and only if there exists a vertex ordering where there is no realization of the pattern. }
    \label{fig:interval}
\end{figure}
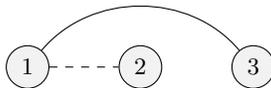
\vspace{-0.5cm}

Characterizations by forbidden patterns have been established for many classes, such as chordal graphs, interval graphs, co-comparability graphs, trivially perfect graphs, etc. 
In~\cite{FeuilloleyH21} the authors studied systematically all the classes that can be characterized by forbidden patterns on three vertices, and proved that they basically all correspond to well-known classes, and can almost all be recognized in linear time.

\paragraph{Pattern detection in ordered graphs.}
The typical recognition algorithms using forbidden pattern characterization usually have two components: they first build a candidate order, and then check that the pattern does not appear. 
At first sight, it seems that the first part should be the challenging one, and should dictate the complexity.
Indeed, for a graph on $n$ vertices, brute-force generation of all vertex orderings takes $O(n!)$, while for a pattern on $k$ vertices, brute-force checking takes at most $O(n^k)$. 
But, as hinted before, for many graph classes, there exists an efficient way to compute a unique candidate ordering via a set of traversals, leading to a polynomial-time algorithm. 
Then, for the recognition problem, the next step is to optimize the exponent of the polynomial, and it is not clear which part, building  the order or verifying the order, is the costliest.

In this paper, we focus on the verification part: we study the fine-grained complexity of detecting patterns in ordered graphs. 
A selection of questions we want to answer is as follows: Given a pattern on $k$ vertices, can we detect it in time much lower than~$n^k$? Independent of~$k$? In linear time? Are there some parameters of the pattern that dictate the complexity? What are the patterns that are hard to detect? 

Although our original motivations are on the side of understanding recognition algorithms and pattern characterizations better, our study is also close to the area that studies the efficient detection of subgraphs in (unordered) graphs, \emph{i.e.} subgraph isomorphism. 
The fine-grained complexity of this problem has been studied extensively recently (see \emph{e.g.} \cite{DaW22} and the references therein), and we hope that the new questions and techniques we introduce here can also inform this more established area. The other direction is already fruitful: we use some of their techniques and insights.  

Finally, this work also continues a line of work called \emph{certifying algorithms}~\cite{McConnellMNS11}. 
In this type of algorithms, not only does the algorithm compute the answer, but also a \emph{certificate} that allows to check quickly that the solution is correct. 
The motivation being that the code or the execution of the algorithm could be faulty. 
Typical examples are certifying bipartiteness and planarity, by a 2-coloring and a planar embedding, respectively. 
In the case of pattern characterization, the ordering avoiding the pattern is a certificate, and in this paper we are interested in the complexity of its \emph{authentication}.

\subsection{Organization of the paper and overview of the results}

Because of the page limit and to improve readability, the paper is mainly an overview of our results and technique. Most of the precise formal statements and proofs are deferred to the appendix. The overview is divided in three sections, that correspond to the three types of results obtained. 
In Section~\ref{sec:o-linear}, following the tradition of the graph class recognition literature, we first establish that several interesting types of patterns can be detected in time linear in the number of edges of the ordered graph: most patterns on three vertices, patterns that are basically forests, some patterns arising from geometry, etc. 
Then, in Section~\ref{sec:o-lower-bounds}, we prove that the problem is at most as hard as detecting cliques in ordered graphs, and then using fine-grained complexity techniques, we prove lower bounds; in particular, we establish that for a large family of patterns on four vertices, it is unlikely that one could do better than quadratic time. 
Finally, in Section~\ref{sec:o-parametrized}, we identify a parameter that we call \emph{merge-width}, such that the complexity of recognizing a pattern is in $O(n^{ct})$, for mergewidth $t$ and a constant $c$ independent from $t$.
We prove that this parameter is bounded for various types of patterns related to outerplanar graphs.

\section{Model, definitions, and basic properties}
\label{sec:model}

We first define formally patterns and the related concepts. We assume that the reader is familiar with basic graph notions. 


\begin{definition}
An ordered graph $(G,\tau)$ is an undirected, finite, simple, and loopless graph, denoted by a pair $(V(G),E(G))$, where $V(G)$ is the vertex set, totally ordered by $\tau$, and $E(G)$ is the edge set. 
The number of vertices will be denoted by $n$ and the number of edges by $m$.
\end{definition}

The ordered graphs given as input are encoded by adjacency lists. 

\begin{definition}
\label{def:pattern}
A \emph{pattern} $P$ is a $4$-tuple $(V(P),M(P),F(P),U(P))$ where $V(P)$ is a totally ordered set of vertices, and every unordered pair of vertices belongs to one of the three disjoint sets $M(P)$, $F(P)$, and $U(P)$ called respectively \emph{mandatory edges}, \emph{forbidden edges} and \emph{undecided pairs}. 
\end{definition}

In the figures (\emph{e.g.} in Figure~\ref{fig:interval}), we use plain edges for mandatory edges, dashed edges for  forbidden edges, and nothing for undecided pairs. Also, in our drawings, the patterns are ordered from left to right.
The \emph{complement} of a pattern is the pattern on the same vertex set where we have exchanged $M(P)$ and $F(P)$. The \emph{mirror} of a pattern is the same pattern with the reverse ordering. 
A pattern is \emph{fully specified} if it has no undecided pair, and it is \emph{positive} if it has no forbidden edge.

Note that ``\emph{pattern}" has multiple definitions in the literature, but in the paper it always refers to Definition~\ref{def:pattern}, following~\cite{FeuilloleyH21}.

\begin{definition}
    An ordered graph $G$ is a \emph{realization} of a pattern $P$ if $V(G)=V(P)$ and $E(G)=M(P)\cup U'$, where $U'\subset U(P)$.
    An ordered graph $G$ \emph{contains} a pattern $P$ if there exists a set of vertices $X$, such that the ordered subgraph induced by $X$ is a realization of $P$. If an ordered graph $G$ does not contain a pattern $P$, then it \emph{avoids} it.
\end{definition}


A simpler formalism with only two types of edges can also be considered, but it appears that the pattern formalism is better when handling classic graph classes. (See also the discussion in~\cite{FeuilloleyH21}.)
Patterns can also be described as acyclic signed graphs with 2 types of arcs (mandatory edges and forbidden edges).


Our focus is on the following algorithmic problem.

\medskip

\noindent Problem: {\sc Detection of pattern $P$} (where $P$ has $k$ vertices).\\
Input: A graph $G$ on $n$ vertices and $m$ edges, described by its adjacency list, and a total ordering $\tau$ of its vertices. \\
Output: Does $(G,\tau)$ contain $P$?

\medskip

For short, we will sometimes refer to the problem as $P$-\textsc{Detection}. Often, we will not refer explicitly to $\tau$, and simply consider that the vertices are named from 1 to $n$ in the ordering of $\tau$.
We will also mention the generalization where we want to detect a set of patterns, in this case we want to decide whether the ordered graph contains \emph{at least one} of the patterns of the set.

Note that detecting a pattern or its mirror are two problems with the same complexity. Also, if we allow quadratic time, then detecting a pattern or its complement are also equivalent, since we can complement the ordered graph.

\section{Linear-time detection (overview)}
\label{sec:o-linear}

Our first motivation for the study of pattern detection originates from graph classes recognition, and in particular linear-time recognition. For this reason, our first focus in on the question: What are the patterns that can be recognized in linear-time? 
Note that here linear time means time $O(m)$ where $m$ is the number of edges. (Throughout the paper, that deals with higher complexities, most complexities will be expressed as a function of the number of vertices $n$.)

\paragraph{Focus on the patterns on three vertices.}
Since one can expect that the larger the pattern, the more complicated the detection problem, we start with patterns on three vertices. 
Paper~\cite{FeuilloleyH21} made a systematic study of the classes of graphs that can be characterized by the existence of a vertex ordering such that a given pattern (or a given family of patterns) on three vertices does not appear. 
It happens that there are around 20 such classes (and their complements), and that they are almost all well-known classes (paths, trees, interval, chordal, split, permutation, comparability, triangle-free, bipartite graphs etc.). In addition, all of them can be recognized in linear time \cite{FeuilloleyH21}, except for the classes defined by the individual patterns \ptriangle{}, \pcomparability{} and their complements. (For patterns on three vertices, we will use the notations from~\cite{FeuilloleyH21}. In particular, the patterns have names in small capital letters, that are related to the graph classes in which they are forbidden. The details are given in Appendix~\ref{sec:three-vertices}. The pattern we have just mentioned are illustrated in Figure~\ref{fig:comp-triangle-chordal}.)

\begin{figure}[!h]
    \centering
    \begin{tabular}{ccccc}
    \begin{tikzpicture}
			[scale=1.3, auto=left,every node/.style=		
			{circle,draw,fill=black!5}]
			\node (a) at (0,0) {1};
			\node (b) at (1,0) {2};
			\node (c) at (2,0) {3};
			\draw[dashed] (a) to[bend left=50] (c);
			\draw (a) to (b);
			\draw (b) to (c);
		\end{tikzpicture} \hspace{1.5cm}&
    \begin{tikzpicture}
			[scale=1.3,auto=left,every node/.style=		
			{circle,draw,fill=black!5}]
			\node (a) at (0,0) {1};
			\node (b) at (1,0) {2};
			\node (c) at (2,0) {3};
			\draw (a) to[bend left=50] (c);
			\draw (a) to (b);
			\draw (b) to (c);
		\end{tikzpicture}
    \hspace{1.5cm}&
    \begin{tikzpicture}
	[scale=1.3,auto=left,every node/.style=		
	{circle,draw,fill=black!5}]
	\node (a) at (0,0) {1};
	\node (b) at (1,0) {2};
	\node (c) at (2,0) {3};
	\draw (a) to[bend left=50] (c);
	\draw[dashed] (a) to (b);
	\draw (b) to (c);
\end{tikzpicture}
    \end{tabular}
    \caption{Illustration of the patterns \pcomparability{}, \ptriangle{} and \pchordal{}}
    \label{fig:comp-triangle-chordal}
\end{figure}
\vspace{-.5cm}

\paragraph{Recognition versus detection.}
It is important to make a clear distinction between two algorithmic problems. On the one hand, for a fixed pattern $P$ (resp. a family of patterns), the \emph{recognition problem} takes as input a graph and asks whether there exists an ordering of the graph that avoids the pattern (resp. the patterns).
On the other hand, for a fixed pattern $P$ (resp. a family of patterns), the \emph{detection problem} takes as input an ordered graph, and asks whether the pattern is present or not in the ordered graph. 

These two problems are incomparable in general. 
The recognition problem might seem harder, in the sense that in this problem we need to decide whether there exists a vertex ordering with some given property, whereas in the detection problem we are given the ordering. 
But in many cases, there are very efficient ways to generate an ordering with very strong properties, such that either this ordering avoids the pattern, or no ordering does. 
In these cases, the detection problem can be harder, because one has to design an algorithm that detects the pattern in \emph{any} input ordering, and not only in one that has a very specific shape. 

\paragraph{Detection of one pattern on three vertices.}
We prove the following theorem.

\begin{restatable}{theorem}{thmOnePatternThreeVertices}
\label{thm:OnePatternThreeVertices}
Let $P$ be a pattern on three vertices.
Detecting $P$ can be done in linear time, if $\cal P$ is not \pcomparability{}, nor \ptriangle{}, nor one of their complements.
\end{restatable}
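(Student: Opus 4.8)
The plan is to enumerate all patterns on three vertices up to the symmetries that preserve detection complexity (mirror and complement, both of which have the same complexity as noted in Section~\ref{sec:model}), and then give a linear-time detection algorithm for each surviving case. A pattern on three vertices is determined by assigning to each of the three pairs one of the labels mandatory/forbidden/undecided, so there are $3^3 = 27$ patterns; quotienting by mirror symmetry and complementation cuts this down to a small number of equivalence classes (on the order of ten). After discarding \pcomparability{}, \ptriangle{} and their complements (which are excluded from the statement), it remains to handle the others. These fall into a few structural buckets: patterns with at most one mandatory edge (so the ``graph part'' is a subgraph of a single edge), patterns whose mandatory edges form a path of length two with the third pair \emph{forbidden} (this is the \pchordal{}/\pinterval{}-type pattern, which we handle separately), and patterns with an undecided pair. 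I would treat each bucket with a tailored but elementary sweep.

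The core algorithmic idea, which I would apply to every remaining case, is a single left-to-right sweep over the vertices $1,\dots,n$ maintaining a small amount of aggregate information. For concreteness, consider a pattern whose vertices are $v_1<v_2<v_3$ with some edge between $v_1$ and $v_3$ being mandatory (the ``long'' edge) and local conditions on the pairs $(v_1,v_2)$ and $(v_2,v_3)$. When we process vertex $w$ in the role of $v_3$, we want to know whether there is some $u < w$ playing the role of $v_1$ with $(u,w) \in E$ (or $\notin E$, depending on the label) and some $v$ strictly between them realizing the middle constraints. The key observation is that for patterns on three vertices the middle constraint is always of one of the forms ``there exists $v$ with $u<v<w$'' (undecided middle pair pair of labels that are automatically realizable by a counting argument), ``there exists a \emph{non-neighbor} of $u$ strictly between $u$ and $w$'', or ``there exists a \emph{neighbor} of $u$ strictly between $u$ and $w$''. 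In each such case, whether a valid $v$ exists depends only on: the position of $u$, the position of $w$, and the degree / adjacency structure of $u$ among the vertices in the open interval $(u,w)$. All of these can be read off in amortized constant time if, while scanning $w$, we iterate over $u$ in $N(w)$ (or maintain pointers into the adjacency lists) and keep running counts — a technique analogous to the linear-time interval- and chordal-graph recognition algorithms referenced in the introduction.

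Concretely I would: (1) fix a canonical orientation of the pattern via the mirror/complement reductions so that the mandatory long edge, if any, joins the extreme vertices; (2) split into the finite list of cases and, for the handful of cases with no mandatory edge at all (essentially \pnograph{}-like patterns and their single-forbidden-edge variants), observe detection is trivial or reduces to checking a non-edge incident to a vertex with non-neighbors on both sides, doable in $O(m+n)$; (3) for the cases with exactly one mandatory long edge and one side pair undecided, reduce to ``$\exists\, u<w$ with $uw\in E$ and the interval $(u,w)$ nonempty,'' which after sorting adjacency lists by the order $\tau$ is a constant-time check per edge; (4) for the \pchordal{}-like case (mandatory path $v_1v_2$, $v_2v_3$, forbidden $v_1v_3$), sweep on the middle vertex $v_2$: for each $v_2$, we need a neighbor $v_1 < v_2$ and a neighbor $v_3 > v_2$ that are non-adjacent, and I would handle this by, for each $v_2$, comparing its left-neighborhood and right-neighborhood against adjacency, again using the sorted adjacency lists and the fact that a single witness suffices, which makes a greedy ``take the first left neighbor, scan right neighbors'' approach correct in amortized linear time; (5) for cases with undecided side pairs adjacent to a mandatory edge, note the undecided pair imposes no constraint, so these collapse to already-handled cases. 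Summing the per-edge and per-vertex work over all cases gives the claimed $O(m)$ bound (with an additive $O(n)$ for the pointer initialization, absorbed into $O(m)$ on connected instances or stated as $O(m+n)$).

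The main obstacle I anticipate is case (4), the \pchordal{}/\pinterval{}-flavored patterns with a forbidden long edge. Here a naive approach would, for each middle vertex $v_2$, compare all of $N(v_2)\cap\{<v_2\}$ with all of $N(v_2)\cap\{>v_2\}$ looking for a non-edge, which is quadratic in the worst case. Making this linear requires the combinatorial insight — already exploited in classical interval-graph recognition — that it suffices to test very few candidate pairs: essentially one has to argue that if \emph{any} left-neighbor/right-neighbor pair is non-adjacent then a canonically chosen one (e.g.\ the leftmost left-neighbor against the rightmost right-neighbor, or a pair found by a two-pointer merge of the two sorted neighbor lists) already is, or else all such pairs are edges, which itself forces a contradiction with some global structure. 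Getting this exchange argument exactly right, and handling the boundary cases when one of the two neighborhoods is empty, is where I expect to spend most of the effort; the remaining cases are routine sweeps.
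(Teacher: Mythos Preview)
Your plan has two real gaps.

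First, you quotient by both mirror and complement, claiming both preserve detection complexity. Mirror does; complement does not in linear time. As the paper notes in Section~\ref{sec:model}, complementation only gives equivalence ``if we allow quadratic time'', because you would have to complement the input graph. So you cannot discharge, say, co-\pchordal{} by appealing to an algorithm for \pchordal{}; the paper gives a separate linear-time algorithm for co-\pchordal{} precisely for this reason. Your case analysis must keep complements as distinct cases.

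Second, and more seriously, your ``hard case (4)'' is mis-identified. The pattern you describe --- mandatory $(v_1,v_2)$, mandatory $(v_2,v_3)$, forbidden $(v_1,v_3)$ --- is \pcomparability{}, which is explicitly \emph{excluded} from the theorem (and for which no linear-time algorithm is known). The actual non-trivial case covered by the theorem is \pchordal{}: mandatory $(v_1,v_3)$, mandatory $(v_2,v_3)$, forbidden $(v_1,v_2)$. Here the question, for each vertex $i$ playing $v_3$, is whether $N^-(i)$ is a clique, and the naive check is quadratic. The paper's key idea is to test only the rightmost left-neighbor $j\in N^-(i)$ and verify the inclusion $N^-(i)\setminus\{j\}\subseteq N^-(j)$; a short minimality argument shows that if the pattern occurs anywhere it already occurs with this $j$, and the inclusions can be batched so that each $N^-(j)$ is scanned once in total. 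Your proposed ``leftmost vs rightmost'' or two-pointer heuristic does not supply this exchange argument, and as stated your sweep on the middle vertex is solving the wrong pattern. The remaining twelve patterns are, as you suspect, routine neighborhood checks; it is \pchordal{} and co-\pchordal{} that carry the content, and your plan currently misses both.
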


Note that the hard cases are the same as for the recognition problem. 
Actually, for these patterns, generating a good candidate ordering is somehow easier than checking the ordering. Indeed,  the complexity is the same for all patterns except for the \pcomparability{} pattern, for which generating a good candidate can be done in linear time~\cite{McConnellS99} while it is believed that the detection cannot be done in subquadratic time.

\paragraph{Easy cases using neighborhood manipulations}

All the cases of the proof of Theorem~\ref{thm:OnePatternThreeVertices} can be handled with similarly, except for \pchordal{} (see Figure~\ref{fig:comp-triangle-chordal}), that we will consider later. 
First, for every vertex $i$, we build $N^-(i)$ and $N^+(i)$, that are the ordered adjacency lists of the predecessors and successors of $i$, respectively. 
These (ordered) lists can be computed in time $O(m)$, from the unsorted adjacency list, since the numbers are between 1 and $n$.
We illustrate the technique for \pinterval{}. Here, it is straightforward to see from the pattern, that it is \emph{not} present, if and only if, for all $i$, either $N^+(i)$ is empty or it forms an interval of vertices: $[i+1,j]$ for some~$j$. Since for checking this property, the neighborhood of each vertex is scanned only $O(1)$ times, the complexity is $O(m)$. 

\paragraph{The case of \pchordal{}.}

A general rule of thumb for pattern detection is that the more decided edges, the more difficult the detection. 
On three vertices, \ptriangle{} and \pcomparability{} (and their complement) are a priori the most difficult to recognize, 
but the last fully-specified pattern, \pchordal{}, is different: we can detect it in linear time.

It has been proved in~\cite{GalinierHP95,TarjanY84} that we can detect \pchordal{} in linear time when the ordering has been produced by a graph traversal called LexBFS. 
But here we want an algorithm that does not exploit the nice structure given by LexBFS. 
Note that the natural algorithm that consists in checking that the predecessors of a vertex form a clique is quadratic.
Instead, in our algorithm we only check for each vertex $i$ whether the pattern appears on some positions $(k,j,i)$, where $j$ is the right-most predecessor of $i$. 
We prove that, thanks to the structure of the pattern, such a partial verification is enough to ensure correctness, and it can be done in linear time. 

\paragraph{Extension to patterns characterizing other classes.}
We prove the following extension, that considers all classes characterizable by a set of patterns on three vertices.

\begin{restatable}{theorem}{thmThreeVerticesClass}
\label{thm:three-vertices-class}
Consider a graph class $\mathcal{C}$ characterized by a set of forbidden patterns on three vertices, that is neither the class of (co-)comparability nor \mbox{(co-)} triangle-free graphs. 
There exists a set of forbidden patterns characterizing $\mathcal{C}$ that can be detected in linear time.
\end{restatable}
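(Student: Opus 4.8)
The plan is to leverage Theorem~\ref{thm:OnePatternThreeVertices}: we already know that each individual pattern on three vertices, except \pcomparability{}, \ptriangle{} and their complements, can be detected in linear time. A graph class $\mathcal{C}$ characterized by a set $S$ of forbidden patterns on three vertices can be recognized by detecting whether \emph{any} pattern of $S$ appears, which reduces to running the individual detection algorithms and taking the disjunction --- so if $S$ avoids the three hard patterns and their complements, we are immediately done. The only obstacle is that the canonical characterizing set $S$ of $\mathcal{C}$ given in~\cite{FeuilloleyH21} might \emph{contain} one of the hard patterns, even though $\mathcal{C}$ itself is not (co-)comparability or (co-)triangle-free. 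The key claim, which the statement asserts, is that in that case one can always find an \emph{alternative} characterizing set $S'$ of $\mathcal{C}$ using only easy patterns.

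\textbf{Key steps.} First I would recall, from the classification in~\cite{FeuilloleyH21} (and reproduced in Appendix~\ref{sec:three-vertices}), the finite list of classes definable by forbidden patterns on three vertices, together with at least one characterizing set for each. This is a bounded case analysis: there are only around twenty such classes up to complement and mirror. Second, for each class $\mathcal{C}$ in the list whose ``default'' characterizing set mentions \pcomparability{}, \ptriangle{}, or a complement, I would exhibit by hand an equivalent characterizing set built only from the easy patterns. The crucial observation making this possible is that the hard patterns never appear ``in isolation'' in these alternative descriptions: whenever a class is defined with \ptriangle{} (say) together with other patterns, the presence of the other patterns lets us replace \ptriangle{} by a combination of less-decided patterns --- intuitively, a pattern with more decided edges can be split into, or implied by, several patterns with fewer decided edges once the surrounding forbidden structures are taken into account. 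Third, I would verify that each replacement is sound, i.e. an ordered graph avoids $S'$ on some ordering iff it avoids the original set, which amounts to checking that the two pattern sets forbid exactly the same ordered graphs. Finally, each pattern in $S'$ being one of the easy cases, Theorem~\ref{thm:OnePatternThreeVertices} gives a linear-time detection algorithm for each, and running them in sequence (there are $O(1)$ of them) detects the set $S'$ in linear time.

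\textbf{Main obstacle.} The delicate part is the replacement step: showing that every class in the Feuilloley--Habib classification other than (co-)comparability and (co-)triangle-free admits a characterizing set avoiding \pcomparability{}, \ptriangle{} and their complements. This is where one must be careful, because it is a genuine combinatorial fact about the lattice of pattern-closed classes rather than a formal triviality --- it could conceivably fail, and the exclusion of exactly (co-)comparability and (co-)triangle-free in the statement signals that these are precisely the classes for which no such replacement exists (their only characterizing set is the single hard pattern, with nothing else to exploit). I expect to handle this by going through the finite classification table, and the work is bookkeeping rather than deep: for the handful of entries where a hard pattern shows up, display the alternative set and a one-line justification of equivalence. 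A secondary, very minor point is to note that detecting a finite set of easy patterns is still linear, since it is just $O(1)$ linear-time passes over the adjacency lists.
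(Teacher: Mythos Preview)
Your plan has a genuine gap at the replacement step. For several of the classes in the Feuilloley--Habib classification --- most notably permutation graphs, characterized by $\{\pcomparability{}, \text{co-}\pcomparability{}\}$ --- \emph{every} characterization by three-vertex patterns necessarily contains one of the hard patterns. (The paper states this explicitly: ``We are left with the classes for which every characterization by patterns on three vertices uses one of the hard patterns.'') So you cannot always ``exhibit by hand an equivalent characterizing set built only from the easy patterns''; for permutation graphs, bipartite permutation graphs, caterpillars, trivially perfect graphs, triangle-free $\cap$ co-chordal, and complete bipartite graphs, no such replacement set exists. Your observation that the hard patterns ``never appear in isolation'' is simply false for permutation graphs, where \emph{both} patterns in the set are hard.

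The paper's proof takes a different route for these six residual classes: it keeps the original characterizing set (hard pattern included) and designs, for each class, an ad-hoc linear-time algorithm that detects the \emph{full set} at once. The key idea is that the easy patterns in the set (or the interaction between two hard patterns) constrain the ordered graph enough that the hard pattern becomes checkable in linear time \emph{under that constraint}. For instance, for trivially perfect graphs ($\{\ptriangle{}, \pchordal{}\}$), once \pchordal{} is known to be absent, checking \ptriangle{} reduces to verifying a simple neighbourhood-inclusion condition; for permutation graphs, one runs a specific DFS to produce a second ordering and then checks in linear time that the pair of orderings really encodes the input graph as a permutation graph. The theorem's conclusion ``a set of forbidden patterns \ldots\ that can be detected in linear time'' is to be read as linear-time detection of the \emph{set}, not of each pattern individually.
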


As a consequence, since for (co-)comparability we can build a candidate ordering in linear time~\cite{McConnellS99}, we get that for every graph class characterized by a set of patterns on 3 vertices, \emph{it easier to find the ordering than to verify it}. This is not true for larger patterns: the path pattern on $k$ vertices corresponds to $k$-colorable graphs and therefore finding a good order is NP-hard.

\paragraph{Other linear-time cases}

We also prove that other types of patterns can be recognized in linear time. The formal theorems and proofs are deferred to the appendix, but roughly our results are as follows. 
First, all the positive patterns whose mandatory edges form a forest with no crossings, can be recognized in linear time (Appendix~\ref{sec:forests}).
 Then, all the positive patterns on four vertices whose mandatory edges form a path can be recognized in linear time (Appendix~\ref{sec:positive-P4}). 
Finally, among the few patterns on more than three vertices studied in the literature, two studied in intersection graph theory can be recognized in linear time (Appendix~\ref{sec:geometry}).
The techniques used are generalizations and variations of the techniques used for the patterns on three vertices and of the techniques we will describe in Section~\ref{sec:o-parametrized}. 


\section{Reduction and lower bounds (overview)}
\label{sec:o-lower-bounds}

Recall that almost every pattern on three nodes can be verified in linear time (Theorem~\ref{thm:OnePatternThreeVertices}). The next natural step is to study what happens for patterns on four nodes. 
On the positive side, we prove the existence of faster-than-$n^k$ time algorithms in order to verify any pattern on $k \geq 4$ nodes.
Unfortunately, we also give evidence in this section that many such patterns can only be detected in (at least) quadratic time, even for $k=4$.

\subsection{General reduction from clique detection}
\label{subsec:reduction-clique}

Recall that any pattern on $k$ nodes can be verified in $n^k$ time. 

However, there exists a faster algorithm: in roughly $n^{\omega k/3}$ time where $\omega$ denotes the exponent for square matrix multiplication, thanks to the following theorem.

\begin{theorem}
\label{thm:reduction-k-clique}
There exists a reduction from the $P$-{\sc Detection} problem on an $n$-vertex ordered graph $G=(V,E,\tau)$, for any $k$-node pattern $P$, to the detection of a $k$-clique in a graph $G'$ with $kn$ vertices.
\end{theorem}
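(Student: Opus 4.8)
The plan is to encode each of the $k$ roles of the pattern as its own ``layer''. Write the vertices of $P$ in their $\tau$-order as $p_1 < p_2 < \dots < p_k$. First I would take $k$ disjoint copies $V_1,\dots,V_k$ of $V$, with $V_i=\{(v,i):v\in V\}$, where the pair $(v,i)$ means ``vertex $v$ of $G$ is used as the $i$-th pattern vertex $p_i$''; this already fixes $|V(G')|=kn$. I make each layer $V_i$ an independent set, so that $G'$ is $k$-partite with parts $V_1,\dots,V_k$, and hence every $k$-clique of $G'$ contains exactly one vertex of each layer.

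Next I would put the information of the pattern into the cross-edges. For $i<j$ and $u,w\in V$, I add the edge $\{(u,i),(w,j)\}$ to $G'$ precisely when $u<_\tau w$ \emph{and} the unordered pair $\{u,w\}$ is consistent with the pattern pair $\{p_i,p_j\}$, where consistency means: $\{u,w\}\in E(G)$ if $\{p_i,p_j\}\in M(P)$; $\{u,w\}\notin E(G)$ if $\{p_i,p_j\}\in F(P)$; and no additional requirement if $\{p_i,p_j\}\in U(P)$. All of this can be computed in $O(k^2n^2)$ time, and $G'$ has $kn$ vertices as required.

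Then the correctness is a direct unfolding of the definitions. A $k$-clique of $G'$ is, by $k$-partiteness, a tuple $((v_1,1),\dots,(v_k,k))$ all of whose $\binom{k}{2}$ pairs are edges of $G'$. By construction this holds if and only if $v_1<_\tau v_2<_\tau\cdots<_\tau v_k$ --- in particular the $v_i$ are pairwise distinct, and the order $\tau$ induces on $\{v_1,\dots,v_k\}$ matches $p_1<\cdots<p_k$ --- and, for every $i<j$, the pair $v_iv_j$ respects the mandatory/forbidden/undecided status of $\{p_i,p_j\}$. That is exactly the statement that the ordered subgraph of $(G,\tau)$ induced by $\{v_1,\dots,v_k\}$ is a realization of $P$ (with $v_i$ playing the role of $p_i$), i.e.\ that $(G,\tau)$ contains $P$. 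Hence $G'$ has a $k$-clique iff $(G,\tau)$ contains $P$, which is the desired reduction; combined with the Ne\v{s}et\v{r}il--Poljak $k$-clique algorithm on the $kn$-vertex graph $G'$ this yields a $P$-\textsc{Detection} algorithm running in time roughly $n^{\omega k/3}$ (with the usual adjustment when $3\nmid k$).

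The step I expect to require the most care is not conceptual but a matter of making the two ``free'' ingredients of the construction do their job: one must build the strict comparison $u<_\tau w$ into \emph{every} cross-edge, so that a clique is forced to select the $v_i$ in increasing $\tau$-order --- which is what simultaneously rules out repeated vertices and guarantees order-isomorphism with $P$ --- and one must keep each layer independent, so that a clique cannot assign two graph-vertices to the same pattern role. Once both are in place the equivalence ``$k$-clique $\Leftrightarrow$ realization of $P$'' follows by inspection.
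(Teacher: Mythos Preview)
Your proposal is correct and is essentially identical to the paper's construction: the same $k$-partite blow-up with layers $V\times\{1,\dots,k\}$, the same cross-edge rule encoding the $\tau$-order constraint together with the mandatory/forbidden/undecided status of each pattern pair, and the same correctness argument. If anything, your write-up is more explicit than the paper's, which simply states the construction and leaves the verification to the reader.
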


Note that the pattern $P$ does not need to be fully specified. 
We now describe the reduction, that is illustrated in~Figure~\ref{fig:reduction} for the case $k=3$.
The vertex set of $G'$ equals $V \times \{1,2,\ldots,k\}$. For every $u,v \in V$ and for every $i,j$ such that $1 \leq i < j \leq k$, we add an edge in $G'$ between $(u,i)$ and $(v,j)$ if and only if one of the following sets of conditions hold:
\begin{itemize}
    \item There is a mandatory edge between the $i^{th}$ and $j^{th}$ vertices of $P$, $u$ and $v$ are adjacent in $G$, $\tau(u) < \tau(v)$;
    \item There is a forbidden edge between the $i^{th}$ and $j^{th}$ vertices of $P$, $u$ and $v$ are nonadjacent in $G$, $\tau(u) < \tau(v)$;
    \item The $i^{th}$ and $j^{th}$ vertices of $P$ form an undecided pair, $\tau(u) < \tau(v)$.
\end{itemize}

\vspace{-0.8cm}
\begin{figure}[!h]
    \centering
    \scalebox{0.9}{
    \input{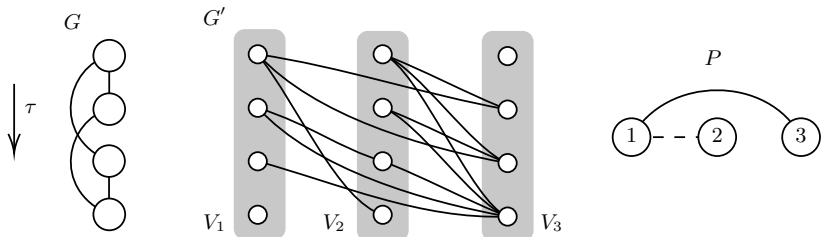}}
    \caption{Illustration of the reduction from $P$-{\sc Detection} to clique detection. On the left the ordered graph $(G,\tau)$ ordered from top to bottom, and on the right the pattern we want to detect. In the middle, the graph $G'$ built by the reduction. Here the pattern can be found in the last three vertices of $(G,\tau)$ and this implies that we can find a $3$-clique in~$G'$: consider the second vertex of $V_1$, the third of $V_2$ and the fourth of $V_3$.}
    \label{fig:reduction}
\end{figure}


This reduction can be done in ${\cal O}((kn)^2)$ time.
One can  check that there is an ordered $k$-subgraph of $G$ that realizes $P$ if and only if there is a $k$-clique in $G'$.

\subsection{Lower bounds}

We prove conditional lower bounds for the pattern detection problem. We actually prove various results, using several techniques, and assuming different complexity hypotheses. 
For the full list of results, we refer to Appendix~\ref{sec:lowerbounds}. 

In this overview, we will focus on the case of patterns of size 4, and give the flavor of our results. We now list the complexity hypotheses we use, which are classic hypotheses.
We refer to~\cite{DVW19,LWWW18,Williams18,WilliamsW18} for a thorough discussion about their plausibility, and their implications in the field of fine-grained complexity. See also Subsection~\ref{sec:lb:hyp} in the appendix.

\begin{description}
    \item[Hypothesis 1] 
    For every $k \geq 3$, deciding whether a graph contains a clique (an independent set, resp.) on $k$ vertices requires $n^{\omega\left(\lfloor k/3 \rfloor, \lceil k/3 \rceil, \lceil (k-1)/3 \rceil\right) - o(1)}$ time, with $\omega(p,q,r)$ the exponent for multiplying two matrices of respective dimensions $n^p \times n^q$ and $n^q \times n^r$.
    \item[Hypothesis 2]
    Every {\em combinatorial} algorithm (not using fast matrix multiplication or other algebraic techniques) for detecting a triangle in a graph requires $n^{3-o(1)}$ time.
    \item[Hypothesis 3]
    Deciding whether a $3$-uniform hypergraph contains a hyperclique with four nodes requires $n^{4-o(1)}$ time.    
\end{description} 

The following theorem follows from gathering most of our results, (and weakening some to make them fit in the same statement). 

\begin{theorem}
\label{thm:lowerbound-four}
Consider a fully-specified pattern $P$ on four vertices. 
Then, assuming Hypotheses 1, 2 and 3, the detection of $P$ cannot be done combinatorially in subquadratic time. 
For {\em most} such patterns $P$, the lower bound also holds for any algorithm (not necessarily combinatorial).
\end{theorem}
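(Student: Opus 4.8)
The plan is to prove the statement by a case analysis over the fully-specified patterns on four vertices, grouping patterns according to the combinatorial structure of their mandatory/forbidden edges, and in each group exhibiting a reduction from the appropriate hard problem (triangle detection under Hypothesis~2, $k$-clique/independent set under Hypothesis~1, or $4$-hyperclique under Hypothesis~3). Since detecting a pattern, its mirror, and — when quadratic time is allowed — its complement are all equivalent (as noted after the definition of the problem), we only need to handle one representative of each equivalence class, which cuts the number of cases roughly by a factor of~4. Recall also that a fully-specified pattern on four vertices is determined by a choice, for each of the six pairs, of ``mandatory'' or ``forbidden''; up to mirror and complement there is only a modest finite list.

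The first main step is to identify, inside each pattern on four vertices, an ordered sub-pattern on three vertices that is a ``copy'' of \pcomparability{} or \ptriangle{} (or one of their complements), or more precisely a triple of positions on which the induced constraints force a triangle-like relation. The idea is that almost every fully-specified $4$-pattern contains, on some triple of its positions, three mandatory edges with the ordering constraint of \ptriangle{}, or three edges realizing \pcomparability{}; one then embeds a triangle-detection instance into that triple, using the fourth position as a ``dummy'' that can always be satisfied (e.g.\ by attaching an isolated or universal vertex far enough in the order so that the remaining one decided pair on the fourth vertex is automatically met). Concretely, to reduce triangle detection on a graph $H$ to $P$-\textsc{Detection}: split $V(H)$ into three ordered blocks according to the position of the \ptriangle{}-copy inside $P$, put the edges of $H$ between consecutive blocks as dictated by the mandatory/forbidden labels, and add a fourth block of vertices forced to have the right adjacency to the first three blocks; a triangle in $H$ then yields a realization of~$P$ and conversely. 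This gives the combinatorial $n^{3-o(1)}$ lower bound, hence no subquadratic \emph{combinatorial} algorithm, for this large family of patterns.

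For the ``most such patterns'' part — the unconditional-on-combinatoriality lower bound — the plan is to use Hypothesis~3 instead: for patterns whose decided edges are ``dense enough'' (say, those that on some four positions force a $K_4$-like or $\overline{K_4}$-like configuration, or more generally those where one can route a $3$-uniform hyperedge relation), reduce $4$-hyperclique detection in a $3$-uniform hypergraph to $P$-\textsc{Detection} on an ordered graph with $O(n)$ vertices, so that a subquadratic algorithm for $P$ would give a sub-$n^4$ algorithm for $4$-hyperclique, contradicting Hypothesis~3. Patterns that are ``too sparse'' for this (essentially the ones whose mandatory edges form a forest, or close to it) are excluded from the ``most'' clause — indeed some of those are handled positively in Section~\ref{sec:o-linear} — and for the remaining boundary cases one falls back on Hypothesis~1 with $k=4$, reducing $4$-clique or $4$-independent-set detection via the same block-splitting construction (this is essentially the forward direction of Theorem~\ref{thm:reduction-k-clique} run backwards).

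The main obstacle I expect is the bookkeeping of the case analysis: making sure that \emph{every} fully-specified $4$-pattern (up to mirror/complement) either contains an exploitable \ptriangle{}/\pcomparability{} triple or admits a direct $4$-hyperclique / $4$-clique reduction, and in particular verifying that the ``fourth vertex'' can always be neutralized — this requires checking that the one or two additional decided pairs incident to the spare position can be simultaneously satisfied by padding vertices without creating spurious realizations of~$P$ elsewhere in the ordered graph, and without destroying the triangle/hyperedge correspondence. A secondary subtlety is drawing the exact boundary between the patterns covered by the ``for most'' (fully unconditional) claim and those that only get the combinatorial lower bound, and confirming that this boundary is consistent with the linear-time positive results already announced, so that no pattern is claimed both easy and hard.
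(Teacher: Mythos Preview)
Your plan has a concrete gap at its core. The main engine you propose is: find inside the $4$-pattern a $3$-vertex sub-pattern that is \ptriangle{} or \pcomparability{} (or a complement), embed triangle detection there, and neutralise the fourth vertex. But there exist fully-specified $4$-patterns in which \emph{all four} induced $3$-vertex sub-patterns are \pchordal{}/co-\pchordal{} (or their mirrors), i.e.\ exactly the linear-time-detectable ones. A short computation shows this happens precisely when $(1,2)=(3,4)$ and $(1,3)=(2,3)=(2,4)$ have the opposite sign, with $(1,4)$ arbitrary. Concrete examples are the $2K_2$-based pattern $Q_1(2K_2)$ (mandatory $(1,2),(3,4)$, everything else forbidden) and several $P_4$-based patterns associated with the orientation $\overrightarrow{P}(1,1,1)$, e.g.\ the ordering $(v_1,v_3,v_2,v_4)$. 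For these your reduction has nothing to hook into, and no amount of ``dummy fourth vertex'' padding rescues it, since the obstruction is that no triple carries a hard sub-pattern in the first place. These are precisely the cases the paper singles out as needing new tools: the directed-path reduction (Proposition~\ref{prop:k-path}) and the ad~hoc constructions of Section~\ref{sec:lb-k=4}.

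Your plan for the ``most patterns'' clause also diverges from the paper and is not obviously workable. You propose a direct reduction from $4$-hyperclique to $P$-\textsc{Detection}; the paper does \emph{not} do this. Instead it uses a randomized reduction from $H$-\textsc{Induced-SI} to $P_H$-\textsc{Detection} (Proposition~\ref{prop:red-isi}: shuffle the vertices of $G$ uniformly at random and run the pattern detector) and then imports known lower bounds for induced-subgraph detection of the underlying graph $H$: the clique/independent-set containment result of~\cite{DVW19} (plus a new Lemma~\ref{lem:3-indep-set} for the $3$-independent-set side), and the very recent $C_4$ result of~\cite{DaW22}, the latter being the \emph{only} place Hypothesis~\ref{hyp:4-hyperclique} enters. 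This machinery does not apply to $P_4$- and $2K_2$-based patterns (cograph recognition is linear, so no useful lower bound for $H=P_4$), which is exactly why the directed-path technique and the case-by-case reductions from triangle detection in Section~\ref{sec:lb-k=4} are needed there. Your outline does not contain any substitute for these.
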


The rest of this section is a sketch of the proof of Theorem~\ref{thm:lowerbound-four}, augmented with some discussions. We highlight that this is an informal sketch, and that the proofs are all written in a detailed and formal way in Section~\ref{sec:lowerbounds}.


\paragraph{Probabilistic reduction to subgraph detection.}
Since all three hypotheses listed above are believed to be true even for randomized algorithms, we can use randomized reductions that are correct with small but constant probability.

Roughly, we relate the complexity to verify a fully specified pattern to the structural and algorithmic properties of a related (undirected unordered) graph $H$.
Specifically, a fully-specified pattern is \emph{$H$-based} if when we remove the ordering and the forbidden edges we get $H$. 
Suppose that we can detect efficiently an $H$-based pattern $P_H$. 
Then we prove that we can decide whether a graph $G$ contains $H$ as an induced subgraph with constant probability, in the same complexity (plus an additive linear term). 
For that, given the graph $G$, we generate a random permutation of its vertices (using Fisher-Yates shuffle algorithm\cite{Dur64,FiY63}), run the $P_H$-detection algorithm on this ordered graph and output the same decision. 
If the induced subgraph $H$ is present, then there is a probability at least $1/(k!)$ that the answer is yes, and if it is absent this algorithm always answers no. 
(See Proposition~\ref{prop:red-isi} in the appendix.)

Now, almost all the proof is about how to adapt all the results known for unordered subgraph detection in order to make them go through the reduction above and obtain the result we want. This is much more challenging than it might seem at first because we do not have a full understanding of the complexity of unordered subgraph detection. Hence, sometimes we can reuse results in a black-box manner, but often we adapt or complete known proofs, and in some cases build ad hoc reductions from scratch.

\paragraph{Glossary for $H$ graphs.}

In the proof of Theorem~\ref{thm:lowerbound-four}, we will consider all the patterns of size four, but most of the time we can process all the ones that are based on the same graph $H$ at once.
The exhaustive list of such graphs $H$ on four vertices is given in Figure~\ref{fig:all-four-vertices}, with the names we will use.

\vspace{-0.5cm}
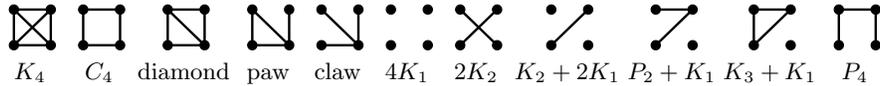
\begin{figure}[!h]
    \centering
    \begin{tabular}{ccccccccccc}
    \scalebox{0.6}{
    \tikzset{every picture/.style={line width=0.75pt}} 

\begin{tikzpicture}[x=0.75pt,y=0.75pt,yscale=-1,xscale=1]

\draw  [fill={rgb, 255:red, 0; green, 0; blue, 0 }  ,fill opacity=1 ] (96,130.85) .. controls (96,128.72) and (97.72,127) .. (99.85,127) .. controls (101.98,127) and (103.7,128.72) .. (103.7,130.85) .. controls (103.7,132.98) and (101.98,134.7) .. (99.85,134.7) .. controls (97.72,134.7) and (96,132.98) .. (96,130.85) -- cycle ;
\draw [line width=1.5]    (99.85,130.85) -- (99.85,160.85) ;
\draw  [fill={rgb, 255:red, 0; green, 0; blue, 0 }  ,fill opacity=1 ] (127,130.85) .. controls (127,128.72) and (128.72,127) .. (130.85,127) .. controls (132.98,127) and (134.7,128.72) .. (134.7,130.85) .. controls (134.7,132.98) and (132.98,134.7) .. (130.85,134.7) .. controls (128.72,134.7) and (127,132.98) .. (127,130.85) -- cycle ;
\draw  [fill={rgb, 255:red, 0; green, 0; blue, 0 }  ,fill opacity=1 ] (96,160.85) .. controls (96,158.72) and (97.72,157) .. (99.85,157) .. controls (101.98,157) and (103.7,158.72) .. (103.7,160.85) .. controls (103.7,162.98) and (101.98,164.7) .. (99.85,164.7) .. controls (97.72,164.7) and (96,162.98) .. (96,160.85) -- cycle ;
\draw  [fill={rgb, 255:red, 0; green, 0; blue, 0 }  ,fill opacity=1 ] (127,160.85) .. controls (127,158.72) and (128.72,157) .. (130.85,157) .. controls (132.98,157) and (134.7,158.72) .. (134.7,160.85) .. controls (134.7,162.98) and (132.98,164.7) .. (130.85,164.7) .. controls (128.72,164.7) and (127,162.98) .. (127,160.85) -- cycle ;
\draw [line width=1.5]    (130.85,130.85) -- (130.85,160.85) ;
\draw [line width=1.5]    (130.85,130.85) -- (99.85,130.85) ;
\draw [line width=1.5]    (130.85,160.85) -- (99.85,160.85) ;
\draw [line width=1.5]    (130.85,160.85) -- (99.85,130.85) ;
\draw [line width=1.5]    (130.85,130.85) -- (99.85,160.85) ;

\end{tikzpicture}}&
    \scalebox{0.6}{
    \tikzset{every picture/.style={line width=0.75pt}} 

\begin{tikzpicture}[x=0.75pt,y=0.75pt,yscale=-1,xscale=1]

\draw  [fill={rgb, 255:red, 0; green, 0; blue, 0 }  ,fill opacity=1 ] (96,130.85) .. controls (96,128.72) and (97.72,127) .. (99.85,127) .. controls (101.98,127) and (103.7,128.72) .. (103.7,130.85) .. controls (103.7,132.98) and (101.98,134.7) .. (99.85,134.7) .. controls (97.72,134.7) and (96,132.98) .. (96,130.85) -- cycle ;

\draw  [fill={rgb, 255:red, 0; green, 0; blue, 0 }  ,fill opacity=1 ] (127,130.85) .. controls (127,128.72) and (128.72,127) .. (130.85,127) .. controls (132.98,127) and (134.7,128.72) .. (134.7,130.85) .. controls (134.7,132.98) and (132.98,134.7) .. (130.85,134.7) .. controls (128.72,134.7) and (127,132.98) .. (127,130.85) -- cycle ;
\draw  [fill={rgb, 255:red, 0; green, 0; blue, 0 }  ,fill opacity=1 ] (96,160.85) .. controls (96,158.72) and (97.72,157) .. (99.85,157) .. controls (101.98,157) and (103.7,158.72) .. (103.7,160.85) .. controls (103.7,162.98) and (101.98,164.7) .. (99.85,164.7) .. controls (97.72,164.7) and (96,162.98) .. (96,160.85) -- cycle ;
\draw  [fill={rgb, 255:red, 0; green, 0; blue, 0 }  ,fill opacity=1 ] (127,160.85) .. controls (127,158.72) and (128.72,157) .. (130.85,157) .. controls (132.98,157) and (134.7,158.72) .. (134.7,160.85) .. controls (134.7,162.98) and (132.98,164.7) .. (130.85,164.7) .. controls (128.72,164.7) and (127,162.98) .. (127,160.85) -- cycle ;

\draw [line width=1.5]    (99.85,130.85) -- (99.85,160.85) ;
\draw [line width=1.5]    (130.85,130.85) -- (130.85,160.85) ;
\draw [line width=1.5]    (130.85,130.85) -- (99.85,130.85) ;
\draw [line width=1.5]    (130.85,160.85) -- (99.85,160.85) ;

\end{tikzpicture}}&
    \scalebox{0.6}{
    \tikzset{every picture/.style={line width=0.75pt}} 

\begin{tikzpicture}[x=0.75pt,y=0.75pt,yscale=-1,xscale=1]

\draw  [fill={rgb, 255:red, 0; green, 0; blue, 0 }  ,fill opacity=1 ] (96,130.85) .. controls (96,128.72) and (97.72,127) .. (99.85,127) .. controls (101.98,127) and (103.7,128.72) .. (103.7,130.85) .. controls (103.7,132.98) and (101.98,134.7) .. (99.85,134.7) .. controls (97.72,134.7) and (96,132.98) .. (96,130.85) -- cycle ;

\draw  [fill={rgb, 255:red, 0; green, 0; blue, 0 }  ,fill opacity=1 ] (127,130.85) .. controls (127,128.72) and (128.72,127) .. (130.85,127) .. controls (132.98,127) and (134.7,128.72) .. (134.7,130.85) .. controls (134.7,132.98) and (132.98,134.7) .. (130.85,134.7) .. controls (128.72,134.7) and (127,132.98) .. (127,130.85) -- cycle ;
\draw  [fill={rgb, 255:red, 0; green, 0; blue, 0 }  ,fill opacity=1 ] (96,160.85) .. controls (96,158.72) and (97.72,157) .. (99.85,157) .. controls (101.98,157) and (103.7,158.72) .. (103.7,160.85) .. controls (103.7,162.98) and (101.98,164.7) .. (99.85,164.7) .. controls (97.72,164.7) and (96,162.98) .. (96,160.85) -- cycle ;
\draw  [fill={rgb, 255:red, 0; green, 0; blue, 0 }  ,fill opacity=1 ] (127,160.85) .. controls (127,158.72) and (128.72,157) .. (130.85,157) .. controls (132.98,157) and (134.7,158.72) .. (134.7,160.85) .. controls (134.7,162.98) and (132.98,164.7) .. (130.85,164.7) .. controls (128.72,164.7) and (127,162.98) .. (127,160.85) -- cycle ;

\draw [line width=1.5]    (99.85,130.85) -- (99.85,160.85) ;
\draw [line width=1.5]    (130.85,130.85) -- (130.85,160.85) ;
\draw [line width=1.5]    (130.85,130.85) -- (99.85,130.85) ;
\draw [line width=1.5]    (130.85,160.85) -- (99.85,160.85) ;
\draw [line width=1.5]    (130.85,160.85) -- (99.85,130.85) ;

\end{tikzpicture}}&
    \scalebox{0.6}{
    \tikzset{every picture/.style={line width=0.75pt}} 

\begin{tikzpicture}[x=0.75pt,y=0.75pt,yscale=-1,xscale=1]

\draw  [fill={rgb, 255:red, 0; green, 0; blue, 0 }  ,fill opacity=1 ] (96,130.85) .. controls (96,128.72) and (97.72,127) .. (99.85,127) .. controls (101.98,127) and (103.7,128.72) .. (103.7,130.85) .. controls (103.7,132.98) and (101.98,134.7) .. (99.85,134.7) .. controls (97.72,134.7) and (96,132.98) .. (96,130.85) -- cycle ;

\draw  [fill={rgb, 255:red, 0; green, 0; blue, 0 }  ,fill opacity=1 ] (127,130.85) .. controls (127,128.72) and (128.72,127) .. (130.85,127) .. controls (132.98,127) and (134.7,128.72) .. (134.7,130.85) .. controls (134.7,132.98) and (132.98,134.7) .. (130.85,134.7) .. controls (128.72,134.7) and (127,132.98) .. (127,130.85) -- cycle ;
\draw  [fill={rgb, 255:red, 0; green, 0; blue, 0 }  ,fill opacity=1 ] (96,160.85) .. controls (96,158.72) and (97.72,157) .. (99.85,157) .. controls (101.98,157) and (103.7,158.72) .. (103.7,160.85) .. controls (103.7,162.98) and (101.98,164.7) .. (99.85,164.7) .. controls (97.72,164.7) and (96,162.98) .. (96,160.85) -- cycle ;
\draw  [fill={rgb, 255:red, 0; green, 0; blue, 0 }  ,fill opacity=1 ] (127,160.85) .. controls (127,158.72) and (128.72,157) .. (130.85,157) .. controls (132.98,157) and (134.7,158.72) .. (134.7,160.85) .. controls (134.7,162.98) and (132.98,164.7) .. (130.85,164.7) .. controls (128.72,164.7) and (127,162.98) .. (127,160.85) -- cycle ;

\draw [line width=1.5]    (99.85,130.85) -- (99.85,160.85) ;
\draw [line width=1.5]    (130.85,130.85) -- (130.85,160.85) ;
\draw [line width=1.5]    (130.85,160.85) -- (99.85,160.85) ;
\draw [line width=1.5]    (130.85,160.85) -- (99.85,130.85) ;

\end{tikzpicture}}&
    \scalebox{0.6}{
    \tikzset{every picture/.style={line width=0.75pt}} 

\begin{tikzpicture}[x=0.75pt,y=0.75pt,yscale=-1,xscale=1]

\draw  [fill={rgb, 255:red, 0; green, 0; blue, 0 }  ,fill opacity=1 ] (96,130.85) .. controls (96,128.72) and (97.72,127) .. (99.85,127) .. controls (101.98,127) and (103.7,128.72) .. (103.7,130.85) .. controls (103.7,132.98) and (101.98,134.7) .. (99.85,134.7) .. controls (97.72,134.7) and (96,132.98) .. (96,130.85) -- cycle ;

\draw  [fill={rgb, 255:red, 0; green, 0; blue, 0 }  ,fill opacity=1 ] (127,130.85) .. controls (127,128.72) and (128.72,127) .. (130.85,127) .. controls (132.98,127) and (134.7,128.72) .. (134.7,130.85) .. controls (134.7,132.98) and (132.98,134.7) .. (130.85,134.7) .. controls (128.72,134.7) and (127,132.98) .. (127,130.85) -- cycle ;
\draw  [fill={rgb, 255:red, 0; green, 0; blue, 0 }  ,fill opacity=1 ] (96,160.85) .. controls (96,158.72) and (97.72,157) .. (99.85,157) .. controls (101.98,157) and (103.7,158.72) .. (103.7,160.85) .. controls (103.7,162.98) and (101.98,164.7) .. (99.85,164.7) .. controls (97.72,164.7) and (96,162.98) .. (96,160.85) -- cycle ;
\draw  [fill={rgb, 255:red, 0; green, 0; blue, 0 }  ,fill opacity=1 ] (127,160.85) .. controls (127,158.72) and (128.72,157) .. (130.85,157) .. controls (132.98,157) and (134.7,158.72) .. (134.7,160.85) .. controls (134.7,162.98) and (132.98,164.7) .. (130.85,164.7) .. controls (128.72,164.7) and (127,162.98) .. (127,160.85) -- cycle ;

\draw [line width=1.5]    (130.85,130.85) -- (130.85,160.85) ;
\draw [line width=1.5]    (130.85,160.85) -- (99.85,160.85) ;
\draw [line width=1.5]    (130.85,160.85) -- (99.85,130.85) ;

\end{tikzpicture}}&
    \scalebox{0.6}{
    \tikzset{every picture/.style={line width=0.75pt}} 

\begin{tikzpicture}[x=0.75pt,y=0.75pt,yscale=-1,xscale=1]

\draw  [fill={rgb, 255:red, 0; green, 0; blue, 0 }  ,fill opacity=1 ] (96,130.85) .. controls (96,128.72) and (97.72,127) .. (99.85,127) .. controls (101.98,127) and (103.7,128.72) .. (103.7,130.85) .. controls (103.7,132.98) and (101.98,134.7) .. (99.85,134.7) .. controls (97.72,134.7) and (96,132.98) .. (96,130.85) -- cycle ;

\draw  [fill={rgb, 255:red, 0; green, 0; blue, 0 }  ,fill opacity=1 ] (127,130.85) .. controls (127,128.72) and (128.72,127) .. (130.85,127) .. controls (132.98,127) and (134.7,128.72) .. (134.7,130.85) .. controls (134.7,132.98) and (132.98,134.7) .. (130.85,134.7) .. controls (128.72,134.7) and (127,132.98) .. (127,130.85) -- cycle ;
\draw  [fill={rgb, 255:red, 0; green, 0; blue, 0 }  ,fill opacity=1 ] (96,160.85) .. controls (96,158.72) and (97.72,157) .. (99.85,157) .. controls (101.98,157) and (103.7,158.72) .. (103.7,160.85) .. controls (103.7,162.98) and (101.98,164.7) .. (99.85,164.7) .. controls (97.72,164.7) and (96,162.98) .. (96,160.85) -- cycle ;
\draw  [fill={rgb, 255:red, 0; green, 0; blue, 0 }  ,fill opacity=1 ] (127,160.85) .. controls (127,158.72) and (128.72,157) .. (130.85,157) .. controls (132.98,157) and (134.7,158.72) .. (134.7,160.85) .. controls (134.7,162.98) and (132.98,164.7) .. (130.85,164.7) .. controls (128.72,164.7) and (127,162.98) .. (127,160.85) -- cycle ;


\end{tikzpicture}}&
    \scalebox{0.6}{
    \tikzset{every picture/.style={line width=0.75pt}} 

\begin{tikzpicture}[x=0.75pt,y=0.75pt,yscale=-1,xscale=1]

\draw  [fill={rgb, 255:red, 0; green, 0; blue, 0 }  ,fill opacity=1 ] (96,130.85) .. controls (96,128.72) and (97.72,127) .. (99.85,127) .. controls (101.98,127) and (103.7,128.72) .. (103.7,130.85) .. controls (103.7,132.98) and (101.98,134.7) .. (99.85,134.7) .. controls (97.72,134.7) and (96,132.98) .. (96,130.85) -- cycle ;

\draw  [fill={rgb, 255:red, 0; green, 0; blue, 0 }  ,fill opacity=1 ] (127,130.85) .. controls (127,128.72) and (128.72,127) .. (130.85,127) .. controls (132.98,127) and (134.7,128.72) .. (134.7,130.85) .. controls (134.7,132.98) and (132.98,134.7) .. (130.85,134.7) .. controls (128.72,134.7) and (127,132.98) .. (127,130.85) -- cycle ;
\draw  [fill={rgb, 255:red, 0; green, 0; blue, 0 }  ,fill opacity=1 ] (96,160.85) .. controls (96,158.72) and (97.72,157) .. (99.85,157) .. controls (101.98,157) and (103.7,158.72) .. (103.7,160.85) .. controls (103.7,162.98) and (101.98,164.7) .. (99.85,164.7) .. controls (97.72,164.7) and (96,162.98) .. (96,160.85) -- cycle ;
\draw  [fill={rgb, 255:red, 0; green, 0; blue, 0 }  ,fill opacity=1 ] (127,160.85) .. controls (127,158.72) and (128.72,157) .. (130.85,157) .. controls (132.98,157) and (134.7,158.72) .. (134.7,160.85) .. controls (134.7,162.98) and (132.98,164.7) .. (130.85,164.7) .. controls (128.72,164.7) and (127,162.98) .. (127,160.85) -- cycle ;

\draw [line width=1.5]    (130.85,160.85) -- (99.85,130.85) ;
\draw [line width=1.5]    (130.85,130.85) -- (99.85,160.85) ;

\end{tikzpicture}}&
    \scalebox{0.6}{
    \tikzset{every picture/.style={line width=0.75pt}} 

\begin{tikzpicture}[x=0.75pt,y=0.75pt,yscale=-1,xscale=1]

\draw  [fill={rgb, 255:red, 0; green, 0; blue, 0 }  ,fill opacity=1 ] (96,130.85) .. controls (96,128.72) and (97.72,127) .. (99.85,127) .. controls (101.98,127) and (103.7,128.72) .. (103.7,130.85) .. controls (103.7,132.98) and (101.98,134.7) .. (99.85,134.7) .. controls (97.72,134.7) and (96,132.98) .. (96,130.85) -- cycle ;

\draw  [fill={rgb, 255:red, 0; green, 0; blue, 0 }  ,fill opacity=1 ] (127,130.85) .. controls (127,128.72) and (128.72,127) .. (130.85,127) .. controls (132.98,127) and (134.7,128.72) .. (134.7,130.85) .. controls (134.7,132.98) and (132.98,134.7) .. (130.85,134.7) .. controls (128.72,134.7) and (127,132.98) .. (127,130.85) -- cycle ;
\draw  [fill={rgb, 255:red, 0; green, 0; blue, 0 }  ,fill opacity=1 ] (96,160.85) .. controls (96,158.72) and (97.72,157) .. (99.85,157) .. controls (101.98,157) and (103.7,158.72) .. (103.7,160.85) .. controls (103.7,162.98) and (101.98,164.7) .. (99.85,164.7) .. controls (97.72,164.7) and (96,162.98) .. (96,160.85) -- cycle ;
\draw  [fill={rgb, 255:red, 0; green, 0; blue, 0 }  ,fill opacity=1 ] (127,160.85) .. controls (127,158.72) and (128.72,157) .. (130.85,157) .. controls (132.98,157) and (134.7,158.72) .. (134.7,160.85) .. controls (134.7,162.98) and (132.98,164.7) .. (130.85,164.7) .. controls (128.72,164.7) and (127,162.98) .. (127,160.85) -- cycle ;

\draw [line width=1.5]    (130.85,130.85) -- (99.85,160.85) ;

\end{tikzpicture}}&
    \scalebox{0.6}{
    \tikzset{every picture/.style={line width=0.75pt}} 

\begin{tikzpicture}[x=0.75pt,y=0.75pt,yscale=-1,xscale=1]

\draw  [fill={rgb, 255:red, 0; green, 0; blue, 0 }  ,fill opacity=1 ] (96,130.85) .. controls (96,128.72) and (97.72,127) .. (99.85,127) .. controls (101.98,127) and (103.7,128.72) .. (103.7,130.85) .. controls (103.7,132.98) and (101.98,134.7) .. (99.85,134.7) .. controls (97.72,134.7) and (96,132.98) .. (96,130.85) -- cycle ;

\draw  [fill={rgb, 255:red, 0; green, 0; blue, 0 }  ,fill opacity=1 ] (127,130.85) .. controls (127,128.72) and (128.72,127) .. (130.85,127) .. controls (132.98,127) and (134.7,128.72) .. (134.7,130.85) .. controls (134.7,132.98) and (132.98,134.7) .. (130.85,134.7) .. controls (128.72,134.7) and (127,132.98) .. (127,130.85) -- cycle ;
\draw  [fill={rgb, 255:red, 0; green, 0; blue, 0 }  ,fill opacity=1 ] (96,160.85) .. controls (96,158.72) and (97.72,157) .. (99.85,157) .. controls (101.98,157) and (103.7,158.72) .. (103.7,160.85) .. controls (103.7,162.98) and (101.98,164.7) .. (99.85,164.7) .. controls (97.72,164.7) and (96,162.98) .. (96,160.85) -- cycle ;
\draw  [fill={rgb, 255:red, 0; green, 0; blue, 0 }  ,fill opacity=1 ] (127,160.85) .. controls (127,158.72) and (128.72,157) .. (130.85,157) .. controls (132.98,157) and (134.7,158.72) .. (134.7,160.85) .. controls (134.7,162.98) and (132.98,164.7) .. (130.85,164.7) .. controls (128.72,164.7) and (127,162.98) .. (127,160.85) -- cycle ;

\draw [line width=1.5]    (130.85,130.85) -- (99.85,130.85) ;
\draw [line width=1.5]    (130.85,130.85) -- (99.85,160.85) ;

\end{tikzpicture}}&
    \scalebox{0.6}{
    \tikzset{every picture/.style={line width=0.75pt}} 

\begin{tikzpicture}[x=0.75pt,y=0.75pt,yscale=-1,xscale=1]

\draw  [fill={rgb, 255:red, 0; green, 0; blue, 0 }  ,fill opacity=1 ] (96,130.85) .. controls (96,128.72) and (97.72,127) .. (99.85,127) .. controls (101.98,127) and (103.7,128.72) .. (103.7,130.85) .. controls (103.7,132.98) and (101.98,134.7) .. (99.85,134.7) .. controls (97.72,134.7) and (96,132.98) .. (96,130.85) -- cycle ;

\draw  [fill={rgb, 255:red, 0; green, 0; blue, 0 }  ,fill opacity=1 ] (127,130.85) .. controls (127,128.72) and (128.72,127) .. (130.85,127) .. controls (132.98,127) and (134.7,128.72) .. (134.7,130.85) .. controls (134.7,132.98) and (132.98,134.7) .. (130.85,134.7) .. controls (128.72,134.7) and (127,132.98) .. (127,130.85) -- cycle ;
\draw  [fill={rgb, 255:red, 0; green, 0; blue, 0 }  ,fill opacity=1 ] (96,160.85) .. controls (96,158.72) and (97.72,157) .. (99.85,157) .. controls (101.98,157) and (103.7,158.72) .. (103.7,160.85) .. controls (103.7,162.98) and (101.98,164.7) .. (99.85,164.7) .. controls (97.72,164.7) and (96,162.98) .. (96,160.85) -- cycle ;
\draw  [fill={rgb, 255:red, 0; green, 0; blue, 0 }  ,fill opacity=1 ] (127,160.85) .. controls (127,158.72) and (128.72,157) .. (130.85,157) .. controls (132.98,157) and (134.7,158.72) .. (134.7,160.85) .. controls (134.7,162.98) and (132.98,164.7) .. (130.85,164.7) .. controls (128.72,164.7) and (127,162.98) .. (127,160.85) -- cycle ;

\draw [line width=1.5]    (99.85,130.85) -- (99.85,160.85) ;
\draw [line width=1.5]    (130.85,130.85) -- (99.85,130.85) ;
\draw [line width=1.5]    (130.85,130.85) -- (99.85,160.85) ;

\end{tikzpicture}}&
    \scalebox{0.6}{
    \tikzset{every picture/.style={line width=0.75pt}} 

\begin{tikzpicture}[x=0.75pt,y=0.75pt,yscale=-1,xscale=1]

\draw  [fill={rgb, 255:red, 0; green, 0; blue, 0 }  ,fill opacity=1 ] (96,130.85) .. controls (96,128.72) and (97.72,127) .. (99.85,127) .. controls (101.98,127) and (103.7,128.72) .. (103.7,130.85) .. controls (103.7,132.98) and (101.98,134.7) .. (99.85,134.7) .. controls (97.72,134.7) and (96,132.98) .. (96,130.85) -- cycle ;

\draw  [fill={rgb, 255:red, 0; green, 0; blue, 0 }  ,fill opacity=1 ] (127,130.85) .. controls (127,128.72) and (128.72,127) .. (130.85,127) .. controls (132.98,127) and (134.7,128.72) .. (134.7,130.85) .. controls (134.7,132.98) and (132.98,134.7) .. (130.85,134.7) .. controls (128.72,134.7) and (127,132.98) .. (127,130.85) -- cycle ;
\draw  [fill={rgb, 255:red, 0; green, 0; blue, 0 }  ,fill opacity=1 ] (96,160.85) .. controls (96,158.72) and (97.72,157) .. (99.85,157) .. controls (101.98,157) and (103.7,158.72) .. (103.7,160.85) .. controls (103.7,162.98) and (101.98,164.7) .. (99.85,164.7) .. controls (97.72,164.7) and (96,162.98) .. (96,160.85) -- cycle ;
\draw  [fill={rgb, 255:red, 0; green, 0; blue, 0 }  ,fill opacity=1 ] (127,160.85) .. controls (127,158.72) and (128.72,157) .. (130.85,157) .. controls (132.98,157) and (134.7,158.72) .. (134.7,160.85) .. controls (134.7,162.98) and (132.98,164.7) .. (130.85,164.7) .. controls (128.72,164.7) and (127,162.98) .. (127,160.85) -- cycle ;

\draw [line width=1.5]    (99.85,130.85) -- (99.85,160.85) ;
\draw [line width=1.5]    (130.85,130.85) -- (130.85,160.85) ;
\draw [line width=1.5]    (130.85,130.85) -- (99.85,130.85) ;

\end{tikzpicture}}\\
    $K_4$ &
    $C_4$ &
    diamond &
    paw & 
    claw &
    $4K_1$ &
    $2K_2$ &
    $K_2+2K_1$ &
    $P_2+K_1$ & 
    $K_3+K_1$ &
    $P_4$
\end{tabular}
    \caption{The 11 (unordered) graphs on four vertices: $K_4$, $C_4$, diamond, paw, claw, their respective complements, and $P_4$ which is its own complement.}
    \label{fig:all-four-vertices}
\end{figure}

\vspace{-1cm}

\paragraph{Cliques, independent sets, and $C_4$.}

The subgraph detection problem is best understood for cliques. But actually the case of the complete pattern (all mandatory edges) directly follows from Hypothesis~1, without needing the probabilistic reduction, since the ordering is irrelevant. 
Now, it might seem natural that if a subgraph $H$ is hard to detect, then the detection of any larger subgraph $H'$ containing it must also be hard. But this is not known to be true in general. 
It was proved for the case of cliques only, in~\cite{DVW19}. 
Thus, for any pattern on four vertices containing a triangle, we get from Hypothesis~1 that it requires time at least $n^{\omega-\varepsilon}$ (where $\omega=\omega(1,1,1) \in [2, 2.3727]$ \cite{Williams12}). 

At that point, we have proved Theorem~\ref{thm:lowerbound-four} for all the patterns that are $H$-based with $H$ being $K_4$, diamond, paw, or $K_3+K_1$. 
Now, thanks to complementation, we also get lower bound for $4K_1$. But for independent sets of size three, we are not aware of any result that we can use directly (that is, of analogues of~\cite{DVW19}). 
Hence, for the case of claw, $K_2+2K_1$ and $P_2+K_1$, we design direct reductions to their detection from detecting $3K_1$. See Lemma~\ref{lem:3-indep-set} in the Appendix. 

The case of $C_4$-based patterns is special, since it is the only case where we need to use Hypothesis~3, and get only a $n^{2-\varepsilon}$ lower bound. This is achieved by using directly the very recent result of~\cite{DaW22}.  

\paragraph{The need for new tools, directed paths and finishing the proof.}

At that point, we are left with $H=P_4$ or $2K_2$. These are the ones for which we need to work the most to prove our lower bounds. 
For these, we cannot use the probabilistic reduction we have used so far, because no nontrivial lower bound is known in the subgraph setting for these. 
Even worse, detecting $P_4$ in a graph is actually not hard, because the $P_4$-free graphs are the cographs, that are known to be recognizable in linear time~\cite{CorneilPS85}. 

We design a new general result based on the notion of directed path. 
Let a \emph{directed path} in a pattern be a path in the pattern, where the vertices follow the order of the pattern. 
We prove that if a pattern $P$ contains a unique directed path on $t$ vertices, then there is a $O(n^2)$ time reduction from $t$-clique detection to $P$-detection. 
See Proposition~\ref{prop:k-path} in the appendix.
In addition to improving on some patterns based on diamond, paw, or $C_4$, this allows us to get $n^{3-o(1)}$ lower bounds for combinatorial algorithms, assuming Hypothesis~2 for all $P_4$-based patterns where the path contains three (not necessarily consecutive) vertices $u,v,w$ that appear in this order in the pattern. 

Thanks to the directed path technique, we solve various $P_4$-based patterns, and also some $2K_2$-based patterns, via complementation. 
We finish the proof with several ad hoc reductions for the remaining cases, in particular from triangle detection. 
See Appendix~\ref{sec:lb-k=4}.

\paragraph{Beyond four vertices.}

Our primary goal was to understand completely the case of fully-specified patterns of four vertices. 
Along the way, several results obtained are actually more general (the probabilistic reduction from induced subgraph detection, the cases of patterns containing cliques, the directed path technique). We prove one more such result: if $P$ is the prefix of a larger pattern $Q$, then detecting $Q$ is at least as hard as detecting $P$ (under randomized reductions). 
See Appendix~\ref{sec:lb>4}.
\section{Parameterized algorithms (overview)}
\label{sec:o-parametrized}

In this section, we will introduce a parameter for patterns, the \emph{merge-width}, and design a polynomial-in-$n$ algorithm whose complexity exponent will depend only on this parameter.
Let us start by giving some intuition about our approach with the example of the pattern below, that we call a \emph{flat cycle} of length~$k$.

\vspace{-0.2cm}
    \begin{center}
    \tikzset{every picture/.style={line width=0.75pt}} 

\begin{tikzpicture}[x=0.75pt,y=0.75pt,yscale=-1,xscale=1]

\draw    (135,85) -- (215,85) ;
\draw  [fill={rgb, 255:red, 255; green, 255; blue, 255 }  ,fill opacity=1 ] (190,85) .. controls (190,82.24) and (192.24,80) .. (195,80) .. controls (197.76,80) and (200,82.24) .. (200,85) .. controls (200,87.76) and (197.76,90) .. (195,90) .. controls (192.24,90) and (190,87.76) .. (190,85) -- cycle ;
\draw  [draw opacity=0] (135,85) .. controls (135,85) and (135,85) .. (135,85) .. controls (135,69.05) and (161.86,56.12) .. (195,56.12) .. controls (228.14,56.12) and (255,69.05) .. (255,85) -- (195,85) -- cycle ; \draw   (135,85) .. controls (135,85) and (135,85) .. (135,85) .. controls (135,69.05) and (161.86,56.12) .. (195,56.12) .. controls (228.14,56.12) and (255,69.05) .. (255,85) ;  
\draw  [fill={rgb, 255:red, 255; green, 255; blue, 255 }  ,fill opacity=1 ] (160,85) .. controls (160,82.24) and (162.24,80) .. (165,80) .. controls (167.76,80) and (170,82.24) .. (170,85) .. controls (170,87.76) and (167.76,90) .. (165,90) .. controls (162.24,90) and (160,87.76) .. (160,85) -- cycle ;
\draw  [fill={rgb, 255:red, 255; green, 255; blue, 255 }  ,fill opacity=1 ] (130,85) .. controls (130,82.24) and (132.24,80) .. (135,80) .. controls (137.76,80) and (140,82.24) .. (140,85) .. controls (140,87.76) and (137.76,90) .. (135,90) .. controls (132.24,90) and (130,87.76) .. (130,85) -- cycle ;
\draw    (235,85) -- (255,85) ;
\draw  [fill={rgb, 255:red, 255; green, 255; blue, 255 }  ,fill opacity=1 ] (250,85) .. controls (250,82.24) and (252.24,80) .. (255,80) .. controls (257.76,80) and (260,82.24) .. (260,85) .. controls (260,87.76) and (257.76,90) .. (255,90) .. controls (252.24,90) and (250,87.76) .. (250,85) -- cycle ;

\draw (216,82.8) node [anchor=north west][inner sep=0.75pt]    {$\dotsc $};

\end{tikzpicture}

    \end{center}
\vspace{-0.2cm}

To detect this pattern in an ordered graph $G$, we use a dynamic programming approach.  
First, for every $x\in[2,k]$, we compute all the pairs $(i,j)$, with $1\leq i<j\leq n$, such that there exists a path of length $x$ in $G$, using increasing indices, with left-most vertex $i$ and right-most vertex $j$. 
For $x=2$, this is straightforward from the adjacency matrix of $G$.
For any~$x\geq 3$, this can be done easily by considering the intervals $[i,j']$ computed for $x-1$, and checking whether there is an edge $(j',j)$, $j'<j$ in $E(G)$. 
Now, for every pair $(i,j)$ corresponding to a path on $k$ vertices, we check whether $(i,j)$ is an edge of $G$ or not. 
The complexity of this algorithm is at most $O(n^3)$. 

Note that the complexity does not depend on $k$. 
Instead, the complexity comes from the fact that we have manipulated pairs of positions, corresponding to the extremities of ``subpatterns''.
If the pattern had been only the path, without the feedback edge, then we could have stored only the right end of the intervals: it would have been enough to remember that there exists a subpath of length $x$ that ends at some given vertex. 
But since we needed to add the feedback edge, it was necessary to store both endpoints.
We can rephrase this as: we needed the two endpoints of the path as \emph{anchors} to \emph{merge} with the feedback edge. 


\paragraph{Intuition of the merge-width and how to use it.}
We define a parameter on patterns, that we call \emph{merge-width}. 
We consider a few operations that allow to build a pattern: creations of vertices and edges, and merges of patterns. 
In the merge of two patterns, we will identify some $r$ vertices of the first pattern with $r$ vertices of the second (in a consistent order).  
At any step of the construction, any vertex can have a special role, that we call being an \emph{anchor}. The vertices that are identified in a merge must be anchors, for example. 
The \emph{merge-width} is the minimum over all possible constructions of a pattern, of the maximum, over all the intermediate patterns, of the number of anchors.

Our algorithm will proceed in a similar way as in the example:
we will follow the construction of the pattern bottom-up, storing at each point all the possible sets of positions for the anchors. 
Then, for a merge, we will compare these lists, detect the ones that match on all the anchors used in the merge, and keep the merge of these anchors list.
The complexity of the algorithm mainly depends on the size of the sets of the positions we manipulate, and this is measured by the merge-width. Hence, our algorithm has complexity $n^{ct}$, where $c$ is a constant and $t$ is the merge-width.

\paragraph{Difficulty: definition of the anchors}

Deciding which vertices should be anchors or not is actually tricky. 
It is not enough to simply keep the vertices that have to be merged at some point, because in the target pattern some vertices should appear in a specific order, and by keeping only the ``to-be-merged vertices" we cannot control this. 
See Figure~\ref{fig:wrong-merge}.

\vspace{-0.5cm}
\begin{figure}[!h]
    \centering
    \scalebox{0.60}{
    \input{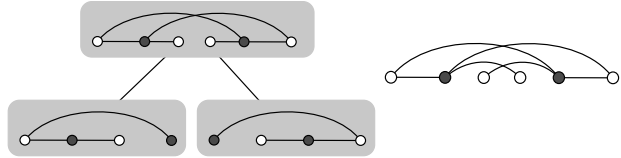}
    }
    \caption{Consider the merge on the left, where the anchors are the black vertices. Without more constraints, one can also obtain the pattern on the right. In this case, the algorithm could output Yes, while the correct answer is No.}
    \label{fig:wrong-merge}
\end{figure}
\vspace{-0.5cm}

To avoid this, we define a set of constraints that the construction of the pattern should satisfy in terms of anchor placement, and prove that this is enough to ensure the correctness of the algorithm.
See Subsection~\ref{subsec:anchored-operations} in the appendix.

\paragraph{Speed-up by matrix multiplication}
We speed up the computation of the anchors positions in case of a merge, thanks to matrix multiplication. 
For, example in the case of the motivating example, matrix multiplication allows getting to $O(n^\omega)$ instead of $O(n^3)$, and this is generalized for merges with more than two anchors.

\paragraph{Applications}
We prove that the patterns that can be drawn without crossing of edges (outerplanar patterns) have merge-width 2, and that if the number of crossings is bounded by $r$ then the merge width is at most $2r+2$.  
We believe that the merge-width should also be bounded for patterns of bounded book thickness, and for some notion of ``pattern tree-width", but we leave this for further work.

\section{Open problems}

We have introduced the pattern detection problem for ordered graphs. Many interesting directions are still to be explored on this new topic. 
Below is a list of some open problems. 

\begin{itemize}
    \item Can we go further on the lower bound side? 
    Using results from subgraph detection might not help much more, since we are already reaching the limit of the current knowledge (\emph{e.g.} the result about $C_4$ that we use is very recent). But maybe one can build more ad hoc proofs? 
    \item Is merge-width the right parameter to capture the difficulty of the problem? Can we show that patterns that can be detected fast have small merge-width?
    \item Is merge-width comparable to a known parameter? The book-thickness could be a candidate here.
    \item Are positive planar forests the largest natural class of patterns that can be recognized in linear time? 
    \item Would the picture be much different if instead of ordered graphs, we looked at partially ordered graphs? 
    \item Regarding ``geometric patterns", can the pattern $P_{ac}$ be tested in linear time, as it is a kind of combination of two linear-time detectable patterns $P_a$ and~$P_c$?
\end{itemize}

\newpage
\bibliography{biblio.bib}{}
\bibliographystyle{plain}

\newpage

\appendix

\noindent\textbf{Appendix table of contents}

\begin{itemize}
   \item Appendix~\ref{sec:three-vertices}: Detection of patterns on three vertices.
    \item Appendix~\ref{sec:lowerbounds}: Conditional lower bounds for patterns with $k \geq$ 4.
    \item Appendix~\ref{sec:parametrized}: Parametrized algorithm and merge-width.
    \item Appendix~\ref{sec:forests}: Linear-time detection of positive planar forests.
    \item Appendix~\ref{sec:positive-P4}:
    Linear-time detection of every positive $P_4$.
    \item Appendix~\ref{sec:geometry}: Linear-time detection of patterns arising from geometry.
    
\end{itemize}

\section{Detection of patterns on three vertices}
\label{sec:three-vertices}


Patterns on three vertices are very relevant to the study of classic graph classes. Indeed, many classes can be characterized by the existence of an ordering of the vertices that avoids one or several such patterns. 
In this detailed section, we study the detection problem for pattern on three vertices. 

We prove two theorems. The first one is for detecting one pattern.

\thmOnePatternThreeVertices* 

The natural generalization of this is to consider all sets of patterns on three vertices. 
Since this would go through a lengthy case analysis, for this version of the paper we go for a less systematic, but still meaningful version. The list of graph classes defined by such sets of patterns is established in~\cite{FeuilloleyH21}. 
For every such class, we consider one of the sets of patterns that defines it.

\thmThreeVerticesClass*


The rest of this section is devoted to the proofs of these theorems. Theorem~\ref{thm:OnePatternThreeVertices} is proved in two parts: Subsections~\ref{subsec:one-pattern-easy} and~\ref{subsec:one-pattern-chordal} (in the latter, we study the case where the pattern is \pchordal{} or co-\pchordal{}). Theorem~\ref{thm:three-vertices-class} is proved in Subsection~\ref{subsec:more-than-one-pattern}.

\subsection{Name conventions and observations}

In this section, for readability, we will refer to the pattern on three vertices by names that have been assigned in~\cite{FeuilloleyH21}. 
These are listed in Figure~\ref{fig:27-patterns}. 
They are related to the classes in which they are forbidden, and they are written in small capital letters.

\begin{figure}[!h]
\hspace{-0.6cm}
\scalebox{0.7}{
\begin{tabular}{ccc}
\begin{tabular}{cc}
0: \ptriangle{}
&
\begin{tikzpicture}
	[scale=1,auto=left,every node/.style=		
	{circle,draw,fill=black!5}]
	\node (a) at (0,0) {};
	\node (b) at (1,0) {};
	\node (c) at (2,0) {};
	\draw (a) to (b);
	\draw (a) to[bend left=50] (c);
	\draw (b) to (c);
\end{tikzpicture}\\
1: mirror-\pchordal{}
&
\begin{tikzpicture}
	[scale=1,auto=left,every node/.style=		
	{circle,draw,fill=black!5}]
	\node (a) at (0,0) {};
	\node (b) at (1,0) {};
	\node (c) at (2,0) {};
	\draw (a) to (b);
	\draw (a) to[bend left=50] (c);
	\draw[dashed] (b) to (c);
\end{tikzpicture}\\
2: \pcomparability
&
\begin{tikzpicture}
	[scale=1,auto=left,every node/.style=		
	{circle,draw,fill=black!5}]
	\node (a) at (0,0) {};
	\node (b) at (1,0) {};
	\node (c) at (2,0) {};
	\draw (a) to (b);
	\draw[dashed] (a) to[bend left=50] (c);
	\draw (b) to (c);
\end{tikzpicture}\\
3 co-\pchordal{}
&
\begin{tikzpicture}
	[scale=1,auto=left,every node/.style=		
	{circle,draw,fill=black!5}]
	\node (a) at (0,0) {};
	\node (b) at (1,0) {};
	\node (c) at (2,0) {};
	\draw (a) to (b);
	\draw[dashed] (a) to[bend left=50] (c);
	\draw[dashed] (b) to (c);
\end{tikzpicture}\\
4: \pchordal{}
&
\begin{tikzpicture}
	[scale=1,auto=left,every node/.style=		
	{circle,draw,fill=black!5}]
	\node (a) at (0,0) {};
	\node (b) at (1,0) {};
	\node (c) at (2,0) {};
	\draw[dashed] (a) to (b);
	\draw (a) to[bend left=50] (c);
	\draw (b) to (c);
\end{tikzpicture}\\
5: co-\pcomparability
&
\begin{tikzpicture}
	[scale=1,auto=left,every node/.style=		
	{circle,draw,fill=black!5}]
	\node (a) at (0,0) {};
	\node (b) at (1,0) {};
	\node (c) at (2,0) {};
	\draw[dashed] (a) to (b);
	\draw (a) to[bend left=50] (c);
	\draw[dashed] (b) to (c);
\end{tikzpicture}\\
6: mirror-co-\pchordal{}
&
\begin{tikzpicture}
	[scale=1,auto=left,every node/.style=		
	{circle,draw,fill=black!5}]
	\node (a) at (0,0) {};
	\node (b) at (1,0) {};
	\node (c) at (2,0) {};
	\draw[dashed] (a) to (b);
	\draw[dashed] (a) to[bend left=50] (c);
	\draw (b) to (c);
\end{tikzpicture}\\
7: co-\ptriangle{}
&
\begin{tikzpicture}
	[scale=1,auto=left,every node/.style=		
	{circle,draw,fill=black!5}]
	\node (a) at (0,0) {};
	\node (b) at (1,0) {};
	\node (c) at (2,0) {};
	\draw[dashed] (a) to (b);
	\draw[dashed] (a) to[bend left=50] (c);
	\draw[dashed] (b) to (c);
\end{tikzpicture}\\
\end{tabular}

&

\begin{tabular}{cc}
8: \pforest{}
&
\begin{tikzpicture}
	[scale=1,auto=left,every node/.style=		
	{circle,draw,fill=black!5}]
	\node (a) at (0,0) {};
	\node (b) at (1,0) {};
	\node (c) at (2,0) {};
	\draw (a) to[bend left=50] (c);
	\draw (b) to (c);
\end{tikzpicture}\\
9: mirror-\pinterval{}
&
\begin{tikzpicture}
	[scale=1,auto=left,every node/.style=		
	{circle,draw,fill=black!5}]
	\node (a) at (0,0) {};
	\node (b) at (1,0) {};
	\node (c) at (2,0) {};
	\draw (a) to[bend left=50] (c);
	\draw[dashed] (b) to (c);
\end{tikzpicture}\\
10: mirror-co-\pinterval{}
&
\begin{tikzpicture}
	[scale=1,auto=left,every node/.style=		
	{circle,draw,fill=black!5}]
	\node (a) at (0,0) {};
	\node (b) at (1,0) {};
	\node (c) at (2,0) {};
	\draw[dashed] (a) to[bend left=50] (c);
	\draw (b) to (c);
\end{tikzpicture}\\
11: co-\pforest{}
&
\begin{tikzpicture}
	[scale=1,auto=left,every node/.style=		
	{circle,draw,fill=black!5}]
	\node (a) at (0,0) {};
	\node (b) at (1,0) {};
	\node (c) at (2,0) {};
	\draw[dashed] (a) to[bend left=50] (c);
	\draw[dashed] (b) to (c);
\end{tikzpicture}\\
\\
12: \pbipartite{}
&
\begin{tikzpicture}
	[scale=1,auto=left,every node/.style=		
	{circle,draw,fill=black!5}]
	\node (a) at (0,0) {};
	\node (b) at (1,0) {};
	\node (c) at (2,0) {};
	\draw (a) to (b);
	\draw (b) to (c);
\end{tikzpicture}\\
\\
13: \psplit{}
&
\begin{tikzpicture}
	[scale=1,auto=left,every node/.style=		
	{circle,draw,fill=black!5}]
	\node (a) at (0,0) {};
	\node (b) at (1,0) {};
	\node (c) at (2,0) {};
	\draw (a) to (b);
	\draw[dashed] (b) to (c);
\end{tikzpicture}\\
\\
14: mirror-\psplit{}=co-\psplit{}
&
\begin{tikzpicture}
	[scale=1,auto=left,every node/.style=		
	{circle,draw,fill=black!5}]
	\node (a) at (0,0) {};
	\node (b) at (1,0) {};
	\node (c) at (2,0) {};
	\draw[dashed] (a) to (b);
	\draw (b) to (c);
\end{tikzpicture}\\
\\
15: co-\pbipartite{}
&
\begin{tikzpicture}
	[scale=1,auto=left,every node/.style=		
	{circle,draw,fill=black!5}]
	\node (a) at (0,0) {};
	\node (b) at (1,0) {};
	\node (c) at (2,0) {};
	\draw[dashed] (a) to (b);
	\draw[dashed] (b) to (c);
\end{tikzpicture}\\
16: mirror-\pforest{}
&
\begin{tikzpicture}
	[scale=1,auto=left,every node/.style=		
	{circle,draw,fill=black!5}]
	\node (a) at (0,0) {};
	\node (b) at (1,0) {};
	\node (c) at (2,0) {};
	\draw (a) to (b);
	\draw (a) to[bend left=50] (c);
\end{tikzpicture}\\
17: co-\pinterval{}
&
\begin{tikzpicture}
	[scale=1,auto=left,every node/.style=		
	{circle,draw,fill=black!5}]
	\node (a) at (0,0) {};
	\node (b) at (1,0) {};
	\node (c) at (2,0) {};
	\draw (a) to (b);
	\draw[dashed] (a) to[bend left=50] (c);
\end{tikzpicture}\\
18: \pinterval{}
&
\begin{tikzpicture}
	[scale=1,auto=left,every node/.style=		
	{circle,draw,fill=black!5}]
	\node (a) at (0,0) {};
	\node (b) at (1,0) {};
	\node (c) at (2,0) {};
	\draw[dashed] (a) to (b);
	\draw (a) to[bend left=50] (c);
\end{tikzpicture}\\
19: mirror-co-\pforest{}
&
\begin{tikzpicture}
	[scale=1,auto=left,every node/.style=		
	{circle,draw,fill=black!5}]
	\node (a) at (0,0) {};
	\node (b) at (1,0) {};
	\node (c) at (2,0) {};
	\draw[dashed] (a) to (b);
	\draw[dashed] (a) to[bend left=50] (c);
\end{tikzpicture}\\
\end{tabular}

&

\begin{tabular}{cc}
20: mirror-\pstar{}
&
\begin{tikzpicture}
	[scale=1,auto=left,every node/.style=		
	{circle,draw,fill=black!5}]
	\node (a) at (0,0) {};
	\node (b) at (1,0) {};
	\node (c) at (2,0) {};
	\draw (b) to (c);
\end{tikzpicture}\\
\\
21: mirror-co-\pstar{}
&
\begin{tikzpicture}
	[scale=1,auto=left,every node/.style=		
	{circle,draw,fill=black!5}]
	\node (a) at (0,0) {};
	\node (b) at (1,0) {};
	\node (c) at (2,0) {};
	\draw[dashed] (b) to (c);
\end{tikzpicture}\\
22: \ppath{}
&
\begin{tikzpicture}
	[scale=1,auto=left,every node/.style=		
	{circle,draw,fill=black!5}]
	\node (a) at (0,0) {};
	\node (b) at (1,0) {};
	\node (c) at (2,0) {};
	\draw(a) to[bend left=50] (c);
\end{tikzpicture}\\
23: co-\ppath{}
&
\begin{tikzpicture}
	[scale=1,auto=left,every node/.style=		
	{circle,draw,fill=black!5}]
	\node (a) at (0,0) {};
	\node (b) at (1,0) {};
	\node (c) at (2,0) {};
	\draw[dashed] (a) to[bend left=50] (c);
\end{tikzpicture}\\
\\
24: \pstar{}
&
\begin{tikzpicture}
	[scale=1,auto=left,every node/.style=		
	{circle,draw,fill=black!5}]
	\node (a) at (0,0) {};
	\node (b) at (1,0) {};
	\node (c) at (2,0) {};
	\draw (a) to (b);
\end{tikzpicture}\\
\\
25: co-\pstar{}
&
\begin{tikzpicture}
	[scale=1,auto=left,every node/.style=		
	{circle,draw,fill=black!5}]
	\node (a) at (0,0) {};
	\node (b) at (1,0) {};
	\node (c) at (2,0) {};
	\draw[dashed] (a) to (b);
\end{tikzpicture}\\
\\
\\
\\
26: \pnograph{}
&
\begin{tikzpicture}
	[scale=1,auto=left,every node/.style=		
	{circle,draw,fill=black!5}]
	\node (a) at (0,0) {};
	\node (b) at (1,0) {};
	\node (c) at (2,0) {};
\end{tikzpicture}\\
\end{tabular}

\end{tabular}
}
\caption{\label{fig:27-patterns} The 27 patterns on three nodes. By convention, since mirror-\psplit{}=co-\psplit{}, we will ignore the pattern mirror-\psplit{}.}
\end{figure}
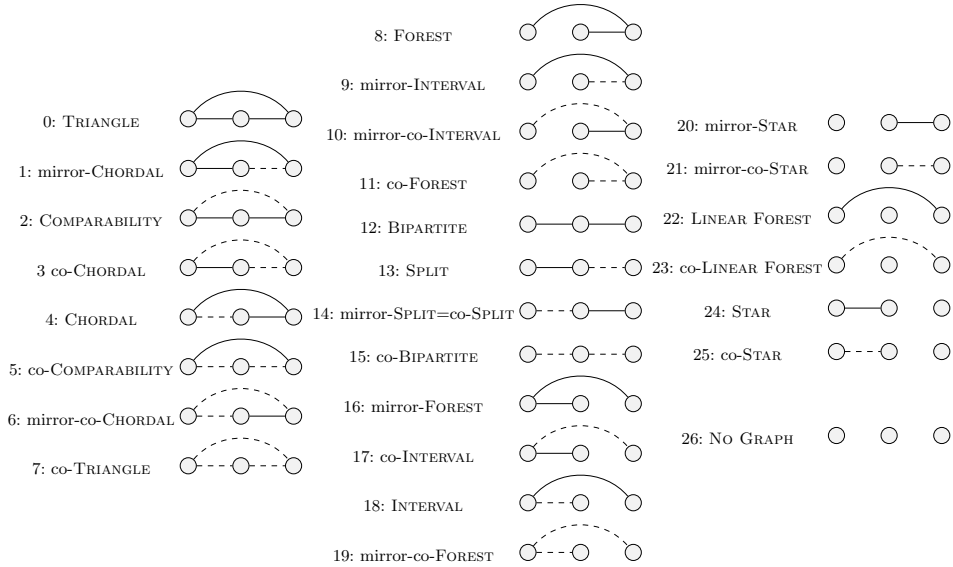

To reduce the number of cases to consider for these 3 nodes patterns we can rely on the following observations.

\begin{remark}\label{mirroring}
Let $f(n, m)$ be the complexity of deciding whether a total  ordering of the vertices of a graph $G$ avoids a pattern $P$, the complexity $mirror$-$f(n, m)$ of deciding that an ordering avoids the mirror pattern $mirror(P)$, satisfies $mirror$-$f(n, m) \leq max\{ f(n,m), g(n, m)\}$, where $g(n, m)$ is the complexity of changing the data in order to transform $\tau$ into $\tau^d$.
\end{remark}

\begin{proof}
If we have an Algorithm $\cal A$ that decides that a given total  ordering $\tau$ of the vertices of a graph $G$ does not have a pattern $P$,
it suffices to apply  $\cal A$ on the graph $G$ and the reverse ordering $\tau^d$ to certify that $\tau^d$ does not have the pattern $P$, and therefore that $\tau$ does not have the pattern $mirror$-$P$. \qed
\end{proof}

In most of the cases, $g(n, m)$ is in $O(n+m)$ since it suffices to sort with respect to $\tau^d$ the adjacency lists of $G$. Therefore if the pattern $P$ can be decided in linear time, then its mirror-pattern $mirror$-$P$ can also be decided in linear time.
 
\begin{remark}\label{unseulealafois}
Let $P_1$ and $P_2$ two patterns and  $f_1(n,m)$ (resp. $f_2(n,m)$) the complexity of deciding that an ordering $\tau$ of the vertices of a graph  $G$ with n vertices and m edges does not yields the pattern  $P_1$ (resp. $P_2$), then the complexity $f_{12}(n,m)$ of deciding that an ordering $\tau$ of the vertices of a graph  $G$ with n vertices and m edges does not yields the patterns   $P_1$  and $P_2$ satisfies 

$f_{12}(n,m) \leq max\{f_{1}(n,m), f_{2}(n,m)\}$.

\end{remark}

Therefore, a set of linear decidable patterns can also be decidable in linear time.

\begin{remark}\label{passageaucomplement}
Let $f(n, m)$ be the complexity of deciding that a total  ordering of the vertices of a graph $G$ avoids a set of patterns $\cal P$ that characterize a graph class $\cal G$.  Then the complement class can be decided in $O(max(\{f(n, m), g(n, m)\})$, where $g(n,m)$ is the complexity to build $\overline{G}$ from $G$.
\end{remark}
\begin{proof}
Starting from a graph $G$ we first construct its complement $\overline{G}$.
If we can characterize $\overline{G}$ by an ordering avoiding the patterns $\cal P$, then it suffices to decide them in $O(f, m)$.\qed
\end{proof}

As a consequence to characterize and decide a complement class we may use the set of patterns made up with the complement patterns (exchanging mandatory edges and forbidden edges). Furthermore in some examples it is not necessary to compute the complement, as it will be used in the following.

\paragraph{Final list.}
If we remove the patterns that are mirror of another pattern, we are left with the following list of 18 patterns (see Figure~\ref{fig:18-patterns}): \ptriangle{}, co-\ptriangle{}, \pcomparability{}, co-\pcomparability{}, \pchordal{}, co-\pchordal{}, \pforest{}, co-\pforest{}, \pinterval{}, co-\pinterval{}, \pbipartite, co-\pbipartite, \psplit{}, \pstar{}, co-\pstar{}, \ppath{}, co-\ppath{}, \pnograph{}.

\begin{figure}[!h]
\scalebox{0.7}{
\begin{tabular}{ccc}
\begin{tabular}{cc}
\ptriangle{}
&
\begin{tikzpicture}
	[scale=1,auto=left,every node/.style=		
	{circle,draw,fill=black!5}]
	\node (a) at (0,0) {};
	\node (b) at (1,0) {};
	\node (c) at (2,0) {};
	\draw (a) to (b);
	\draw (a) to[bend left=50] (c);
	\draw (b) to (c);
\end{tikzpicture}\\
\pcomparability
&
\begin{tikzpicture}
	[scale=1,auto=left,every node/.style=		
	{circle,draw,fill=black!5}]
	\node (a) at (0,0) {};
	\node (b) at (1,0) {};
	\node (c) at (2,0) {};
	\draw (a) to (b);
	\draw[dashed] (a) to[bend left=50] (c);
	\draw (b) to (c);
\end{tikzpicture}\\
co-\pchordal{}
&
\begin{tikzpicture}
	[scale=1,auto=left,every node/.style=		
	{circle,draw,fill=black!5}]
	\node (a) at (0,0) {};
	\node (b) at (1,0) {};
	\node (c) at (2,0) {};
	\draw (a) to (b);
	\draw[dashed] (a) to[bend left=50] (c);
	\draw[dashed] (b) to (c);
\end{tikzpicture}\\
\pchordal{}
&
\begin{tikzpicture}
	[scale=1,auto=left,every node/.style=		
	{circle,draw,fill=black!5}]
	\node (a) at (0,0) {};
	\node (b) at (1,0) {};
	\node (c) at (2,0) {};
	\draw[dashed] (a) to (b);
	\draw (a) to[bend left=50] (c);
	\draw (b) to (c);
\end{tikzpicture}\\
co-\pcomparability
&
\begin{tikzpicture}
	[scale=1,auto=left,every node/.style=		
	{circle,draw,fill=black!5}]
	\node (a) at (0,0) {};
	\node (b) at (1,0) {};
	\node (c) at (2,0) {};
	\draw[dashed] (a) to (b);
	\draw (a) to[bend left=50] (c);
	\draw[dashed] (b) to (c);
\end{tikzpicture}\\
co-\ptriangle{}
&
\begin{tikzpicture}
	[scale=1,auto=left,every node/.style=		
	{circle,draw,fill=black!5}]
	\node (a) at (0,0) {};
	\node (b) at (1,0) {};
	\node (c) at (2,0) {};
	\draw[dashed] (a) to (b);
	\draw[dashed] (a) to[bend left=50] (c);
	\draw[dashed] (b) to (c);
\end{tikzpicture}\\
\end{tabular}

&

\begin{tabular}{cc}
\pforest{}
&
\begin{tikzpicture}
	[scale=1,auto=left,every node/.style=		
	{circle,draw,fill=black!5}]
	\node (a) at (0,0) {};
	\node (b) at (1,0) {};
	\node (c) at (2,0) {};
	\draw (a) to[bend left=50] (c);
	\draw (b) to (c);
\end{tikzpicture}\\
co-\pforest{}
&
\begin{tikzpicture}
	[scale=1,auto=left,every node/.style=		
	{circle,draw,fill=black!5}]
	\node (a) at (0,0) {};
	\node (b) at (1,0) {};
	\node (c) at (2,0) {};
	\draw[dashed] (a) to[bend left=50] (c);
	\draw[dashed] (b) to (c);
\end{tikzpicture}\\
\\
\pbipartite{}
&
\begin{tikzpicture}
	[scale=1,auto=left,every node/.style=		
	{circle,draw,fill=black!5}]
	\node (a) at (0,0) {};
	\node (b) at (1,0) {};
	\node (c) at (2,0) {};
	\draw (a) to (b);
	\draw (b) to (c);
\end{tikzpicture}\\
\\
13: \psplit{}
&
\begin{tikzpicture}
	[scale=1,auto=left,every node/.style=		
	{circle,draw,fill=black!5}]
	\node (a) at (0,0) {};
	\node (b) at (1,0) {};
	\node (c) at (2,0) {};
	\draw (a) to (b);
	\draw[dashed] (b) to (c);
\end{tikzpicture}\\
\\
15: co-\pbipartite{}
&
\begin{tikzpicture}
	[scale=1,auto=left,every node/.style=		
	{circle,draw,fill=black!5}]
	\node (a) at (0,0) {};
	\node (b) at (1,0) {};
	\node (c) at (2,0) {};
	\draw[dashed] (a) to (b);
	\draw[dashed] (b) to (c);
\end{tikzpicture}\\
co-\pinterval{}
&
\begin{tikzpicture}
	[scale=1,auto=left,every node/.style=		
	{circle,draw,fill=black!5}]
	\node (a) at (0,0) {};
	\node (b) at (1,0) {};
	\node (c) at (2,0) {};
	\draw (a) to (b);
	\draw[dashed] (a) to[bend left=50] (c);
\end{tikzpicture}\\
\pinterval{}
&
\begin{tikzpicture}
	[scale=1,auto=left,every node/.style=		
	{circle,draw,fill=black!5}]
	\node (a) at (0,0) {};
	\node (b) at (1,0) {};
	\node (c) at (2,0) {};
	\draw[dashed] (a) to (b);
	\draw (a) to[bend left=50] (c);
\end{tikzpicture}\\
\end{tabular}

&

\begin{tabular}{cc}
\ppath{}
&
\begin{tikzpicture}
	[scale=1,auto=left,every node/.style=		
	{circle,draw,fill=black!5}]
	\node (a) at (0,0) {};
	\node (b) at (1,0) {};
	\node (c) at (2,0) {};
	\draw(a) to[bend left=50] (c);
\end{tikzpicture}\\
co-\ppath{}
&
\begin{tikzpicture}
	[scale=1,auto=left,every node/.style=		
	{circle,draw,fill=black!5}]
	\node (a) at (0,0) {};
	\node (b) at (1,0) {};
	\node (c) at (2,0) {};
	\draw[dashed] (a) to[bend left=50] (c);
\end{tikzpicture}\\
\\
\pstar{}
&
\begin{tikzpicture}
	[scale=1,auto=left,every node/.style=		
	{circle,draw,fill=black!5}]
	\node (a) at (0,0) {};
	\node (b) at (1,0) {};
	\node (c) at (2,0) {};
	\draw (a) to (b);
\end{tikzpicture}\\
\\
co-\pstar{}
&
\begin{tikzpicture}
	[scale=1,auto=left,every node/.style=		
	{circle,draw,fill=black!5}]
	\node (a) at (0,0) {};
	\node (b) at (1,0) {};
	\node (c) at (2,0) {};
	\draw[dashed] (a) to (b);
\end{tikzpicture}\\
\\
\pnograph{}
&
\begin{tikzpicture}
	[scale=1,auto=left,every node/.style=		
	{circle,draw,fill=black!5}]
	\node (a) at (0,0) {};
	\node (b) at (1,0) {};
	\node (c) at (2,0) {};
\end{tikzpicture}\\
\end{tabular}

\end{tabular}
}
\caption{\label{fig:18-patterns} The 18 patterns on three nodes, once we have removed mirrors.}
\end{figure}
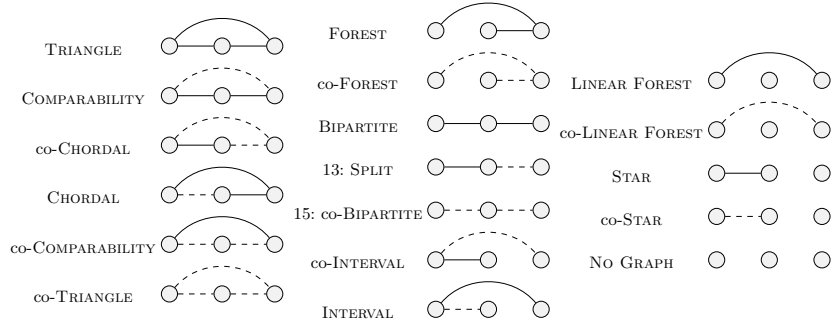

\subsection{One pattern, the easy cases}
\label{subsec:one-pattern-easy}

For most patterns on three vertices, detecting it basically boils down to comparing the neighborhoods of vertices. 
Remember that $N^-(i)$ and $N^+(i)$ denote respectively the predecessors and successors sorted adjacency lists of a vertex.

\begin{proposition}\label{prop:one-pattern-easy}
The detection of each of the following 12 patterns (and of their mirrors) can be done in linear time: \pnograph{}, \psplit{}, \pstar{}, co-\pstar{}, \pbipartite, co-\pbipartite, \pforest, co-\pforest, \ppath{}, co-\ppath{}, \pinterval{}, co-\pinterval{}.
\end{proposition}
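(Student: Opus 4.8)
The plan is to treat the twelve patterns (and their mirrors, which come for free by Remark~\ref{mirroring} since re-sorting all adjacency lists with respect to $\tau^d$ costs only $O(n+m)$) by a uniform pre-processing step followed by a simple local check at each vertex. First I would build, for every vertex $i$, the sorted lists $N^-(i)$ and $N^+(i)$ of predecessors and successors; as observed in the overview, this takes $O(m)$ time using bucket sort because the vertex labels lie in $[1,n]$. From these lists one can also read off in $O(1)$ per vertex the extremal neighbours $\min N^+(i)$, $\max N^+(i)$, $\min N^-(i)$, $\max N^-(i)$, the degrees $d^-(i)=|N^-(i)|$, $d^+(i)=|N^+(i)|$, and whether $N^+(i)$ (resp. $N^-(i)$) is a contiguous interval of integers (check that $\max-\min+1$ equals the cardinality). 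The whole point is that each of the twelve patterns, being on three vertices with at most two decided pairs, imposes a constraint that is \emph{local to the middle vertex} $v_2$ and can be phrased purely in terms of these quantities.

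Next I would go through the patterns and record, for each, the equivalent local condition. For \pnograph{} (no decided pair) the pattern is realized as soon as $n\geq 3$, so detection is trivial. For \pstar{} (mandatory edge $\{1,2\}$ only): the pattern appears iff some vertex $i$ has a successor $j$ with $j+1\le n$, i.e. iff some edge $(i,j)\in E$ with $i<j<n$ has a vertex strictly to its right in $\tau$; equivalently iff $\max\{\,j : (i,j)\in E,\ i<j\,\}<n$ fails, which is an $O(m)$ scan. The patterns \pstar{}, co-\pstar{}, mirror-\pstar{}, mirror-co-\pstar{}, \pbipartite{}, co-\pbipartite{}, \pforest{}, co-\pforest{}, \ppath{}, co-\ppath{}, \psplit{} all decide on one or two pairs meeting at the middle vertex (after, if needed, taking mirrors), so in each case the condition is: there is a witness iff some vertex $v$ has a predecessor/successor (or a \emph{non}-neighbour, detected by comparing a degree to a count of positions) of the prescribed adjacency type \emph{together with} a third vertex in the prescribed position — and these two requirements are independent once $v$ is fixed, so the check is: ``does there exist $v$ with $d^{\pm}(v)\geq 1$ of the right colour \emph{and} the relevant side of $v$ nonempty?'' This is $O(1)$ per vertex after pre-processing, hence $O(m)$ total. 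I would present this as a short table (one row per pattern) rather than spelling out each verification.

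The two patterns that need slightly more than ``is a side nonempty'' are \pinterval{} and co-\pinterval{}, because both their decided pairs are incident to the \emph{left} vertex $v_1$: \pinterval{} has forbidden $\{1,2\}$ and mandatory $\{1,3\}$, so it is realized iff some vertex $i$ has a successor $k$ (witnessing the mandatory edge $(i,k)$) and a \emph{non}-successor $j$ with $i<j<k$ (witnessing the forbidden pair). As noted in the overview, this is equivalent to: for every $i$, $N^+(i)$ is either empty or a contiguous block $[i+1,j]$ — which is exactly the ``interval'' test on $N^+(i)$ described above, done in $O(d^+(i))$ time, hence $O(m)$ overall; co-\pinterval{} is the complementary interval test on the non-neighbours among successors, handled symmetrically, and the mirrors follow from Remark~\ref{mirroring}. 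I do not anticipate a genuine obstacle here — the content is the case analysis itself — but the step requiring the most care is making sure that for each of the twelve patterns one has correctly reduced ``there exist three vertices $v_1<v_2<v_3$ realizing $P$'' to a condition \emph{centred at a single vertex}, since a naive reading would pair up two vertices and still leave an $O(n)$ search for the third; the key observation that removes that factor is that the third vertex only needs to exist on a prescribed side of an already-fixed vertex, and existence of such a vertex is a precomputed boolean. Finally, detection of a set of these patterns simultaneously is linear by Remark~\ref{unseulealafois}, which we will use for Theorem~\ref{thm:three-vertices-class}.
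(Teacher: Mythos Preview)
Your proposal is correct and takes essentially the same approach as the paper: precompute $N^-(i)$ and $N^+(i)$ in $O(m)$ time, then reduce detection of each of the twelve patterns to a per-vertex condition on these lists (degree thresholds, existence of a neighbour or non-neighbour on a given side, or an interval shape for $N^+(i)$) checkable in $O(1)$ or $O(d^+(i))$. One small slip worth fixing: your generic ``interval test'' ($\max-\min+1$ equals the cardinality) only certifies contiguity, whereas the \pinterval{} condition also needs $\min N^+(i)=i+1$; you do state the correct condition (``a contiguous block $[i+1,j]$'') two paragraphs later, so this is cosmetic rather than a real gap.
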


\begin{proof}
For each pattern cited, we establish a property such that: \emph{if it holds} at every vertex $i$, then the pattern \emph{does not} appear, other it appears.
(This negation might seem counter-intuitive, but it actually makes the condition easier to state, and it is aligned with the graph class recognition literature.)
These properties deal with the lists $N^-(i)$ and  $N^+(i)$.

\begin{itemize}
\item\pnograph{}:
$|V(G)| \leq 2$.
\item \psplit{}:
If for some $i$, $N^-(i) \neq \emptyset$  then $|N^+(i)|=n-i$.
\item \pstar{}:
If for some $i$, $N^-(i) \neq \emptyset$  then $i=n$.
\item co-\pstar{}:
If for some $i$, $|N^-(i)| < i-1$  then $i=n$.
\item \pbipartite{}:
If for some $i$, $N^-(i) \neq \emptyset$  then $N^+(i)=\emptyset$.
\item co-\pbipartite{}:
If for some $i$, $N^-(i) \neq \emptyset$  then $N^+(i)=\emptyset$.
\item \pforest{}:
$\forall i$, $|N^-(i)| \leq 1$.
\item co-\pforest{}
For all $i$, $|N^-(i)| \geq i-1$.
\item \ppath{}:
$\forall  ij \in E(G)$, with $i < j$ then $i=j-1$.
\item co-\ppath{}:
$\forall i$, either  $|N^+(i)|  = n-i$  or ($|N^+(i)|  =n- i-1$ and $i,i+1 \notin E(G)$ 
\item \pinterval{}:
$\forall i$, $N^+(i)$ if non empty is a unique interval of the type $[i+1, j]$.
\item co-\pinterval{}:
$\forall i$, the non-neighbours of $i$ to its right if exist, yield a unique interval of the type $[i+1, j]$.
\end{itemize}
Since these properties can be checked by scanning the lists $N^-(i)$ and $N^+(i)$ for all $i$ only once, the complexity is $O(m)$.\qed
\end{proof}

\noindent
 By no means the conditions proposed here to detect these patterns are unique, some other conditions could also be correct.

\subsection{One pattern: the special case of \textsc{Chordal} and co-\pchordal{}}
\label{subsec:one-pattern-chordal}

The patterns \pchordal{} and co-\pchordal{} have a special role in this paper. Indeed, they are examples of patterns that are fully specified, thus similar to the patterns studied in the lower bound section and to the patterns \ptriangle{} and \pcomparability{}, but unlike these patterns, these can be detected in linear-time. 

Nevertheless, the proof is less straightforward than in the previous cases. Indeed, the naive algorithm that checks that the left neighborhood of every vertex forms a clique has quadratic complexity.

\paragraph{\pchordal{} detection.}

Some algorithms for detecting the pattern \pchordal{} have been designed before~\cite{GalinierHP95,TY85b}, but they do so only if the graph ordering comes from a graph traversal called LexBFS. These orderings have strong properties that we cannot use, for example the first vertex has some special role in the graph.
We propose the following algorithm (Algorithm~\ref{algo:chordal}).


\vspace{0.5cm}

\begin{algorithm}[H]\label{algo:chordal}
	\DontPrintSemicolon
    \SetAlgoLined
		\SetKwInOut{Input}{Input}
	\SetKwInOut{Output}{Output}
	\SetKw{Continue}{continue}
	\SetKw{Break}{break}

	\Input{$G$ a graph given by its adjacency lists and $\tau$ a total ordering of the vertices}
	\Output{Does $(G,\tau)$ contain \pchordal{}? }
	
	\BlankLine
	Wlog we suppose $\tau=1, 2, \dots,n$\;
    Precompute $N^-(i)$ for every vertex $i$.\;  
	\For {$ i=3$ to $n$}{
	If {$N^-(i) \neq \emptyset$:}
	{Let $j \in N^-(i)$ with largest index \;
	If{ NOT($N^-(i)\setminus j \subseteq  N^-(j)$)} \;
	 {Return ``Yes, \pchordal{} found"} \;
	 }
	 }
	 Return ``No, \pchordal{} not found".
	 
	 \BlankLine
	 \BlankLine	
	 \caption{Detection algorithm for the pattern \pchordal{}.}
\end{algorithm}

\begin{proposition}\label{prop:chordal}
Algorithm~\ref{algo:chordal} detects the pattern \pchordal{} in linear-time.
\end{proposition}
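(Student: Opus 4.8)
The plan is to establish two things about Algorithm~\ref{algo:chordal}: that it is correct (it outputs ``Yes'' exactly when $(G,\tau)$ contains \pchordal{}) and that it admits an $O(n+m)$ implementation. I would start from the elementary reformulation that $(G,\tau)$ contains \pchordal{} if and only if there is a vertex $c$ whose left-neighbourhood $N^-(c)$ is \emph{not} a clique: a realization $a<b<c$ of \pchordal{} is precisely a pair $a<b$ of vertices of $N^-(c)$ with $ab\notin E$, and conversely any such pair yields a realization. So it suffices for the algorithm to decide whether some left-neighbourhood fails to be a clique.

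Soundness is immediate. If the algorithm returns ``Yes'' at iteration $i$, there is $\ell\in N^-(i)\setminus\{j\}$ with $\ell\notin N^-(j)$, where $j=\max N^-(i)$. Then $\ell<j<i$ (since $\ell\in N^-(i)$ and $\ell\neq j=\max N^-(i)$), $\ell i,ji\in E$, and $\ell j\notin E$ (because $\ell<j$, so $\ell\notin N^-(j)$ means $\ell\not\sim j$); hence $(\ell,j,i)$ realizes \pchordal{}. The substantial direction is completeness, and the key idea is a minimality argument. Suppose \pchordal{} appears, and let $c^{*}$ be the \emph{smallest} vertex whose left-neighbourhood is not a clique (such a vertex exists by the reformulation, and $c^{*}\ge 3$ since a non-clique set has at least two elements). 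I claim the algorithm returns ``Yes'' no later than iteration $c^{*}$. Assume not; then at iteration $c^{*}$ the test succeeds, i.e.\ $N^-(c^{*})\setminus\{j^{*}\}\subseteq N^-(j^{*})$ with $j^{*}=\max N^-(c^{*})$. Pick $a<b$ in $N^-(c^{*})$ with $ab\notin E$. If $b=j^{*}$, then $a\in N^-(c^{*})\setminus\{j^{*}\}\subseteq N^-(j^{*})=N^-(b)$, so $a\sim b$, a contradiction. If $b\neq j^{*}$, then $a<b<j^{*}$, so $a,b\in N^-(c^{*})\setminus\{j^{*}\}\subseteq N^-(j^{*})$; thus $N^-(j^{*})$ contains the non-adjacent pair $a,b$ and is not a clique, while $j^{*}<c^{*}$, contradicting the minimality of $c^{*}$. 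In either case we reach a contradiction, so the algorithm returns ``Yes''; and if instead \pchordal{} does not appear, no left-neighbourhood is a non-clique, so by soundness the algorithm returns ``No''. This finishes correctness.

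It remains to argue the $O(n+m)$ running time, and this is the step that needs the most care, because a literal line-by-line execution of the pseudocode can be superlinear: for a fixed $i$, testing $N^-(i)\setminus\{j\}\subseteq N^-(j)$ costs $\Theta(|N^-(i)|+|N^-(j)|)$, and $\sum_i|N^-(j_i)|$, where $j_i$ denotes the largest predecessor of $i$, is not obviously $O(m)$. The remedy is to \textbf{batch} the iterations by the value of $j_i$. After precomputing all lists $N^-(\cdot)$ in time $O(n+m)$ (as the vertices are labelled $1,\dots,n$), bucket the vertices $i$ with $N^-(i)\neq\emptyset$ according to $j_i$; then for each vertex $j$ whose bucket is nonempty, mark $N^-(j)$ once in a global Boolean array, run the subset tests for all $i$ with $j_i=j$ using constant-time array look-ups (cost $O(|N^-(i)|)$ each), and unmark. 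The total marking/unmarking cost is $\sum_j|N^-(j)|=m$, and the look-up cost is $\sum_i|N^-(i)|=m$, giving $O(n+m)$ overall; alternatively one may replace the batching by hashing for an expected-linear bound. This batching observation is the only nonroutine point in the complexity analysis, and together with the correctness argument it proves Proposition~\ref{prop:chordal}.
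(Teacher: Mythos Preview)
Your proof is correct and follows essentially the same route as the paper. The correctness argument is the same minimality idea: take the smallest vertex whose left-neighbourhood is not a clique and derive a contradiction from the inclusion $N^-(i)\setminus\{j\}\subseteq N^-(j)$; and for the complexity you use the same batching-by-$j$ trick as the paper (associate to each $j$ all the $i$'s with $j_i=j$ and process $N^-(j)$ once), though you spell out the Boolean-array marking more explicitly than the paper does.
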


\begin{proof}
We prove first the correctness and then the complexity
\paragraph{Correctness.}
If for some $i$ its largest index neighbor $j$  does not verify $N^-(i) \setminus j \subseteq  N^-(j)$, it yields some vertex $k$ with 
$ki, jk \in E(G)$ and  $kj \notin E(G)$ and therefore \pchordal{} appears.
Conversely, if the algorithm terminates with the answer No, we claim the pattern is not present. 
For the sake of contradiction, suppose the algorithm did not output yes, although the pattern is present. 
Let $i$ be the smallest integer such that the pattern appear with right-most vertex $i$.
The index $i$ must be strictly larger than 3, since otherwise the pattern would be detected.
Now let $z<t$ be the other vertices in the realization of the pattern. 
We have $z, t  \in N^-(i)$ such that $zi, ti \in E(G)$ and  $zt \notin E(G)$. 
The vertex $t$ cannot be the with largest index of $i$, otherwise the algorithm would detect the pattern. 
Thus, there exists $j$, such that $ji \in E$, and $z<t<j<i$. 
Because the algorithm does not output yes, we must have $N^-(i)\setminus j \subseteq  N^-(j)$.  
Then necessarily $zj, tj \in  E(G)$ and it yields a forbidden configuration on $z, t, j$. This contradict the minimality of $i$.

\paragraph{Complexity.}
Each vertex is considered at most once, in the $\textbf{for}$ loop and to evaluate the test $N^-(i)\setminus j \subseteq  N^-(j)$.
Since a vertex j can be several times the closest neighour to the right, we have to be careful to organize this test in order to be linear. One way to obtain this is to associate each time the list $N^-(i)\setminus j$  to $j$. Then in a second scan of the order $\tau$ we compare these associated lists to the list $N^-(j)$. So $N^-(j)$ is only scanned once for all its lists.
Therefore the whole process is linear.\qed
\end{proof}

Note that this detection algorithm can be embedded in known recognition algorithms, using specific searches such as LexBFS, LexDFS or MCS\cite{CorneilK08}.

This algorithm uses a paradigm that we can call \emph{Rely on your rightmost left neighbour}. 
We will generalize this ideas to certify more sophisticated patterns in linear time in the following sections.

\paragraph{co-\pchordal{} detection.}

Note that in linear time, detecting the complement of a pattern $P$ cannot be done by complementing the graph, and then detecting~$P$, since this uses quadratic time. 
In some cases, one can implicitly work on the complement graph, \emph{e.g.} when using partition refinement. Since this is not the case in our algorithm above, we need to define an algorithm for the case of co-\pchordal{}. 
We propose the following algorithm (Algorithm~\ref{algo:co-chordal}).



\begin{algorithm}[ht]
	\DontPrintSemicolon
    \SetAlgoLined
		\SetKwInOut{Input}{Input}
	\SetKwInOut{Output}{Output}
	\SetKw{Continue}{continue}
	\SetKw{Break}{break}

	\Input{$G$ a graph given by its adjacency lists and $\tau$ a total ordering of the vertices}
	\Output{Does $(G,\tau)$ contain co-\pchordal{}? }
	
	\BlankLine
	Wlog we suppose $\tau=1, 2, \dots,n$\;
	Precompute $N^-(i)$ for every vertex $i$.\; 
	\For {$ i=3$ to $n$}
	{
	If {$N^-(i) \neq \emptyset$}
	{Let $j \in \overline{N^-(i)}$ with largest index \;
	If{ NOT($N^-(j) \subseteq  N^-(i)$)} \;
	 {Return "Yes, co-\pchordal{} is found"} \;
	 }
	 }
	 Output "No, co-\pchordal{} not found."
	 
	 \BlankLine
	 \BlankLine	
	 \caption{Certifying a simplicial elemination ordering}\label{algo:co-chordal}
	
\end{algorithm}	

\begin{proposition}\label{prop:co-chordal}
Algorithm \ref{algo:co-chordal} detects co-\pchordal{} in linear time.
\end{proposition}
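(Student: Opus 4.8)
The plan is to follow the same two phases as in the proof of Proposition~\ref{prop:chordal} --- first correctness, then complexity --- replacing the ``rightmost left neighbour'' by the ``rightmost left non‑neighbour''. Recall that a realization of co-\pchordal{} in $(G,\tau)$ is a triple $a<b<c$ with $ab\in E(G)$, $ac\notin E(G)$, $bc\notin E(G)$, and that for a vertex $i$ the algorithm picks $j$ to be the largest index $<i$ with $ij\notin E(G)$, so that every vertex strictly between $j$ and $i$ is adjacent to $i$. For soundness, suppose the algorithm answers ``yes'' at step $i$: it has found $a\in N^-(j)\setminus N^-(i)$, and since $a<j<i$ this means $aj\in E(G)$, $ai\notin E(G)$, and by the choice of $j$ also $ji\notin E(G)$; hence $(a,j,i)$ realizes co-\pchordal{}.

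For completeness I would use the minimal‑counterexample argument exactly as for \pchordal{}. Assume co-\pchordal{} is present but the algorithm answers ``no'', and let $i$ be minimum such that some realization has $i$ as its rightmost vertex, with $z<t<i$ the other two vertices, so $zt\in E(G)$ and $zi,ti\notin E(G)$. Since $t$ is a non‑predecessor of $i$, the vertex $j$ exists and $t\le j<i$. If $t=j$ then $z\in N^-(j)\setminus N^-(i)$, so the test at step $i$ succeeds, a contradiction. If $t<j$, then the test having failed forces $N^-(j)\subseteq N^-(i)$; if $tj\in E(G)$ we would get $t\in N^-(j)\subseteq N^-(i)$, i.e. $ti\in E(G)$, a contradiction, so $tj\notin E(G)$, and symmetrically (using $z<j$) $zj\notin E(G)$; but then $(z,t,j)$ is a realization of co-\pchordal{} with rightmost vertex $j<i$, contradicting minimality. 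I expect this part to be routine.

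The step I expect to be the real obstacle is establishing the linear running time, since the naive subset test is quadratic. I would maintain a single boolean array $B[1..n]$ and, at step $i$: set $B[a]\leftarrow\mathbf{true}$ for all $a\in N^-(i)$ in time $O(|N^-(i)|)$; locate $j$ by scanning $i-1,i-2,\dots$ until hitting an index with $B=\mathbf{false}$, which costs $O(1+|N^-(i)|)$ because every rejected index is a predecessor of $i$; then scan the precomputed list $N^-(j)$ and output ``yes'' the moment some $a$ with $B[a]=\mathbf{false}$ appears; finally unmark $B$ along $N^-(i)$. The key observation is that a full scan of $N^-(j)$ is paid only when no witness is found, i.e. exactly when $N^-(j)\subseteq N^-(i)$, in which case $|N^-(j)|\le|N^-(i)|$; and the only time a scan is aborted early the algorithm halts. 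Hence $\sum_i |N^-(j_i)| = O(m+n)$, and together with the $O(m+n)$ spent marking/unmarking, locating the $j$'s, and building the lists $N^-(\cdot)$ from the adjacency lists, the total is $O(n+m)$. (One care point to flag when writing this up: the subset test must be entered whenever $i$ has at least one non‑predecessor, so that realizations such as $(1,2,3)$ with $N^-(3)=\emptyset$ are not missed; this does not affect the analysis.)
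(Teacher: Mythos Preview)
Your proof is correct and follows the paper's own route---which in fact says only that the argument is ``essentially the same as for the previous algorithm'' and that $j$ can be located in $O(|N^-(i)|)$ time---while making explicit the key linearity observation that a passed test entails $|N^-(j)|\le|N^-(i)|$, so that $\sum_i |N^-(j_i)|=O(m)$. The care point you flag is well taken: the guard in Algorithm~\ref{algo:co-chordal} should read $\overline{N^-(i)}\neq\emptyset$ (so that $j$ exists) rather than $N^-(i)\neq\emptyset$, and under that reading your completeness argument goes through verbatim.
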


\begin{proof}
The proof is essentially the same as for the previous algorithm. The only thing to check is that the part "Let $j \in \overline{N^-(i)}$ with largest index"  can be computed in time $O(|N^-(i)|)$, which is the case.\qed
\end{proof}

This concludes the proof of Theorem~\ref{thm:OnePatternThreeVertices}. 
Indeed, there are 18 such patterns (up to mirror), we have given linear detection for 12 such patterns in Proposition~\ref{prop:one-pattern-easy}, and then Propositions~\ref{prop:chordal} and~\ref{prop:co-chordal} gave us two more. The four remaining ones are the ones listed in the theorem as exceptions.




\subsection{Sets corresponding to graph classes}
\label{subsec:more-than-one-pattern}

We now prove Theorem~\ref{thm:three-vertices-class}. For every graph class charaterizable by a set of forbidden patterns on three vertices, we consider such a set of forbidden patterns, and show that it can be detected in linear time, except for the classes comparability, co-comparability, triangle-free and co-triangle-free.

Using the characterization Theorem~2 in~\cite{FeuilloleyH21}, we know that up to complement there exist exactly the 22 non-trivial classes that can be defined using forbidden sets on three nodes. 
(The trivial ones being finite classes, that can be processed in constant time.)

\begin{multicols}{3} 
\begin{enumerate}
\item forests
\item linear forests
\item stars
\item interval 
\item split 
\item bipartite 
\item chordal
\item comparability 
\item triangle-free  
\item permutation 
\item bipartite permutation
\item clique
\item threshold
\item proper interval
\item 1-split 
\item augmented clique
\item 2-star
\item bipartite chain 
\item caterpillar
\item trivially perfect 
\item triangle-free \\$\cap$ co-chordal
\item complete bipartite
\end{enumerate}
\end{multicols}

Most of the classes listed are classic classes. Exceptions are 1-split, 2-star and augmented cliques that are small variations of split, stars and cliques respectively. We do not need to define them properly here, hence we refer to~\cite{FeuilloleyH21}.

\paragraph{Easy cases}
For all this paragraph, we refer to\cite{FeuilloleyH21} for references.

For several classes, the result follows from Theorem~\ref{thm:OnePatternThreeVertices} since they can be characterized by a unique pattern these are: forests, linear forests, stars, interval, split, bipartite and chordal graphs, cliques, and all their respective complements.
For some others, there exists a set of patterns characterizing them such that all the patterns can be detected in linear time. In this case, we are also done since we can run the detection algorithm of each pattern independently one after the other. 
This is the case of:
\begin{itemize}
    \item threshold graphs, characterized by \pchordal{} and co-\pchordal{},
    \item proper interval graphs, characterized by \pchordal{} and mirror-\pchordal{},
    \item 1-split graphs, characterized by \psplit{} and co-\psplit{},
    \item Augmented cliques, characterized by \pchordal{} and \psplit{},
    \item 2-stars, characterized by co-\pchordal{} and \pforest{}.
    \item Bipartite chain graphs, characterized by co-\pchordal{} and \pbipartite{}. 
\end{itemize}

We are left with the classes for which every characterization by patterns on three vertices uses one of the hard patterns: \pcomparability{}, co-\pcomparability{}, \ptriangle{} and co-\ptriangle{}.
That is (if we remove the exceptions to the theorem, and up to complement): permutation, bipartite permutation, caterpillar, trivially perfect, triangle-free $\cap$ co-chordal, and complete bipartite. 

\paragraph{Remaining cases}

We prove the remaining cases with five propositions.

\begin{proposition}\label{permutations}
Permutation and bipartite permutation are characterized by a set of patterns that can be detected in linear time.
\end{proposition}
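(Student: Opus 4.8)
The plan is to circumvent the two ``hard'' patterns \pcomparability{} and co-\pcomparability{}, which (as recalled in the discussion preceding this proposition) appear in \emph{every} characterization of permutation and of bipartite permutation graphs by patterns on three vertices. Since we are free to pick any characterizing set, the idea is to trade these for an equivalent set whose members lie in the linear-time detectable families established earlier --- neighborhood comparisons in the spirit of Proposition~\ref{prop:one-pattern-easy}, the ``rely on your rightmost left neighbour'' paradigm of Algorithm~\ref{algo:chordal}, and positive patterns whose mandatory edges form a forest or a path --- at the cost of using patterns on four vertices and/or of fixing a convenient ambient ordering.

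For bipartite permutation graphs I would start from the strong-ordering characterization (equivalently, proper interval bigraphs): $G$ is a bipartite permutation graph iff it admits a vertex ordering $\tau$ of its two color classes $X,Y$ such that for all $a <_\tau c$ in the same class and $b <_\tau d$ in the other class, the presence of $ad$ and $cb$ forces the presence of $ab$ and $cd$. The first step is to re-encode ``$G$ bipartite and $\tau$ strong'' as the avoidance of a finite pattern set: the pattern \pbipartite{} (linear-time detectable by Proposition~\ref{prop:one-pattern-easy}) pins down a bipartition consistent with $\tau$, and a constant number of four-vertex ``staircase'' patterns --- each carrying two mandatory edges and one forbidden edge, over all relevant relative positions of the four vertices --- encode the no-crossing condition. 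The second step is to show that these four-vertex patterns are detectable by a single left-to-right scan rather than in generic quadratic time: in a strong ordering, the neighborhood of every vertex both within the opposite class and with respect to $\tau$ is an interval, so the scanning argument used for \pinterval{} and for proper interval graphs transfers almost verbatim.

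For permutation graphs I would instead use complement stability together with the rigidity of prime permutation graphs: $G$ is a permutation graph iff both $G$ and $\overline{G}$ are comparability graphs, and for every prime node of the modular decomposition the transitive orientation is unique up to reversal. Using the linear-time transitive-orientation and modular-decomposition machinery of~\cite{McConnellS99}, one reconstructs (on a yes-instance) the two defining linear orders, hence a canonical ordering $\tau$; the point is that on such a $\tau$ the verification to be done collapses to checking, for every vertex, that its set of ``inversions'' has exactly the interval-like shape already handled for \pchordal{} and proper interval graphs, which is expressed by a constant-size family of linear-time detectable patterns checked in one scan. Assembling these pieces (and their mirrors and complements, via Remarks~\ref{mirroring}--\ref{passageaucomplement}) yields the claimed characterizing sets.

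The main obstacle is the equivalence proof in both cases: one must show that the easy pattern family written down genuinely \emph{characterizes} the class, i.e.\ that every permutation (resp.\ bipartite permutation) graph admits an ordering avoiding it --- not merely the semantic ordering that avoids \pcomparability{}/co-\pcomparability{}. For bipartite permutation this reduces to normalizing an arbitrary strong ordering into one in which all the relevant neighborhoods are intervals; for permutation it rests on the known but delicate uniqueness (up to reversal) of transitive orientations of prime permutation graphs, and on verifying that stitching the per-prime-node canonical orders through the modular decomposition tree preserves pattern-avoidance. A secondary obstacle is proving that the four-vertex staircase patterns really are linear-time detectable and not merely another instance of the $2K_2$-based patterns conjectured hard in Section~\ref{sec:o-lower-bounds}; this is precisely why the argument must exploit bipartiteness and the interval structure rather than treat the patterns as black boxes.
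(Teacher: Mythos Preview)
Your proposal takes a fundamentally different route from the paper, and in doing so misses the key observation that makes the paper's argument short. The paper does \emph{not} trade the hard patterns for easier ones: it keeps the set $S=\{\pcomparability{},\ \text{co-}\pcomparability{}\}$ and shows that detecting this \emph{pair jointly} is linear-time, even though each pattern individually is believed hard. The reason is semantic: an ordering $\tau$ avoids both patterns if and only if $\tau$ is one of the two linear orders of a permutation representation of $G$. So, given $(G,\tau)$, one runs a specific depth-first search (with designated tie-breaks, as in~\cite{McConnellS99,CorneilDHK16}) to compute the candidate second order $\theta$, and then simply checks in linear time whether the permutation graph determined by $(\tau,\theta)$ coincides with $G$. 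If yes, both patterns are avoided; if no, at least one is present. For bipartite permutation the paper just adds the \pbipartite{} pattern (linear-time by Proposition~\ref{prop:one-pattern-easy}) to this same set, rather than passing to a strong-ordering characterization.

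Your plan is not wrong as a strategy, but it is much more circuitous and leaves precisely the obstacles you flag unresolved. In particular, your permutation sketch drifts from the detection problem (where $\tau$ is \emph{given}) into recognition (where you ``reconstruct \ldots\ a canonical ordering $\tau$''); you never actually name the replacement pattern set $S'$, nor prove either direction of the equivalence. And your secondary obstacle---that four-vertex staircase patterns might land among the $2K_2$-based hard cases---is real and would require a nontrivial argument. The paper's route sidesteps all of this: the conjunction of the two ``hard'' three-vertex patterns is exactly the statement ``$\tau$ is a permutation order'', and that statement is directly verifiable in one linear pass.
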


\begin{proof}
Permutation graphs are characterized by a family of two patterns: \pcomparability{} and co-\pcomparability{}.
And the class is stable by complement, which is why we do not mention the complements.
Also, it is enough to consider the case of permutation graphs, since for bipartite permutation graphs we simply add the bipartite pattern, which can be detected in linear time.

For permutation graph the proposition was already proved by~\cite{McConnellS99} (although not in these terms), and in \cite{CorneilDHK16}, the same result was proposed as an application of a general framework on graph searches. 
The idea is to process the given ordered graph to build a new graph $G'$, such that $G=G'$ if and only if the ordering avoids \pcomparability{} and co-\pcomparability{}. 
More precisely, a depth-first search traversal (with specific tie-break rules) yields a new ordering $\theta$ of $G$. Now $\tau, \theta$ define a representation of $G$ as a permutation graph, and to conclude it suffices to check in linear time that this representation corresponds to $G$.\qed
\end{proof}



\begin{proposition}\label{complete-bip}
Complete bipartite graphs and their complements are characterized by a set of patterns that can be detected in linear time.
\end{proposition}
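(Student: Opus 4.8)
The plan is, for each of the two classes, to write down a characterizing set of three-vertex patterns and to observe that detecting this set reduces to an \emph{order-independent} structural test that runs in linear time. The combinatorial starting point is that the complete bipartite graphs are precisely the $\{K_3,\overline{P_3}\}$-free graphs (with the edgeless graphs appearing as the degenerate members $K_{n,0}$), where $\overline{P_3}=K_2+K_1$ denotes a single edge plus an isolated vertex: forbidding $\overline{P_3}$ forces every vertex triple to span $0$, $2$ or $3$ edges, and additionally forbidding $K_3$ leaves only $0$ or $2$; one then checks that non-adjacency becomes an equivalence relation, so the vertex set splits into at most two independent sets that are complete to each other. Dually, $\overline{G}$ is complete bipartite iff $G$ is $\{P_3,3K_1\}$-free, that is, iff $G$ is a disjoint union of at most two cliques.

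For the class of complete bipartite graphs I would take as characterizing set $S_1$ the four patterns \ptriangle{}, co-\pchordal{}, co-\pcomparability{}, and mirror-co-\pchordal{} --- namely the unique linear ordering of $K_3$ together with the three orderings of $\overline{P_3}$, according to whether the isolated vertex comes first, in the middle, or last. All four patterns being fully specified, whether $(G,\tau)$ realizes one of them does not depend on $\tau$: it realizes \ptriangle{} iff $G$ has a triangle, and it realizes one of the other three iff $G$ has an induced $\overline{P_3}$. Hence (i) $S_1$ characterizes the class --- an ordering avoiding $S_1$ exists iff every ordering avoids it iff $G$ is $\{K_3,\overline{P_3}\}$-free --- and (ii) deciding whether $(G,\tau)$ contains a pattern of $S_1$ is exactly deciding whether the underlying graph $G$ fails to be complete bipartite, a linear-time test: if $E(G)=\emptyset$ answer ``avoided''; otherwise fix an edge $uv$, put $B=N(u)$ and $A=V\setminus N(u)$, and verify in one scan that $A$ and $B$ are independent, that $|E(G)|=|A|\cdot|B|$, and that no vertex is isolated. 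For the complement class I would symmetrically use $S_2$ consisting of co-\ptriangle{}, \pchordal{}, \pcomparability{}, and mirror-\pchordal{} --- the orderings of $3K_1$ and of $P_3$; detecting $S_2$ in $(G,\tau)$ reduces to deciding whether $G$ is \emph{not} a disjoint union of at most two cliques, which is done in linear time by computing the connected components and checking that there are at most two of them and that each component $C$ satisfies $|E(C)|=\binom{|C|}{2}$. A direct algorithm is genuinely needed for the complement class: Remark~\ref{passageaucomplement} only delivers it through an explicit construction of $\overline{G}$, which already costs $\Theta(n^2)$.

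The point to emphasize --- and what makes this proposition not an immediate consequence of Theorem~\ref{thm:OnePatternThreeVertices} --- is that $S_1$ contains co-\pcomparability{} and $S_2$ contains \pcomparability{}, patterns believed not to admit subquadratic, let alone linear, detection \emph{in isolation}. The resolution is precisely the observation above: a set of fully specified three-vertex patterns can be far easier to detect than some of its members, because avoiding the whole set is an order-independent property which here coincides with membership in a linear-time-recognizable graph class, so that no individual detection of the hard pattern (in particular, no triangle detection) is ever performed. For correctness of the reduction one needs only the routine facts that an ordered triangle realizes \ptriangle{}, that an ordered copy of $\overline{P_3}$ realizes whichever of co-\pchordal{}, co-\pcomparability{}, mirror-co-\pchordal{} corresponds to the position of its isolated vertex, and conversely that a realization of any of the four patterns on three vertices induces a $K_3$ or a $\overline{P_3}$ (and dually for $S_2$). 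I expect the only delicate point to be consistency, in the degenerate cases (edgeless, one- or two-vertex graphs), between the structural tests and the convention fixing which of these count as complete bipartite; under the $\{K_3,\overline{P_3}\}$-free reading the tests as stated already handle them (for instance $|E(G)|=|A|\cdot|B|$ subsumes the edgeless case, taking $A=V$, $B=\emptyset$).
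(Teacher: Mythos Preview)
Your proof is correct, but it takes a genuinely different route from the paper's.

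The paper uses the characterizing set $\{$co-\pchordal{}, co-\pcomparability{}, \pbipartite{}$\}$ from~\cite{FeuilloleyH21}: it first detects \pbipartite{} in linear time, and in the absence of that pattern argues that the ordering must present the graph as a concatenation of two independent sets $V_1,V_2$; completeness of the bipartition is then checked by a single scan of the edges.  The complement class is handled by the symmetric argument (two disjoint cliques).  You instead replace \pbipartite{} by \ptriangle{} and the missing ordering mirror-co-\pchordal{}, obtaining a set $S_1$ of four \emph{fully specified} patterns that together exhaust all orderings of $K_3$ and of $\overline{P_3}$.  This makes the presence of the pattern set order-independent, so detection collapses to the recognition problem ``is $G$ complete bipartite?'', which you solve directly by a neighborhood-based test; dually for $S_2$ and the complement class.

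What each approach buys: the paper stays with the canonical three-pattern set from~\cite{FeuilloleyH21} and therefore has to exploit the specific structure that an ordering avoiding \pbipartite{} imposes.  Your approach trades one undecided-pair pattern for two fully specified ones, which is a small cost that yields a conceptually cleaner argument---you never reason about any particular ordering, and you make explicit why the presence of the hard pattern co-\pcomparability{} in the set is harmless (it need not be detected on its own).  Your observation that a set of fully specified patterns covering all orderings of a fixed induced subgraph has order-independent detection is a nice general principle that the paper's proof does not isolate.
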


\begin{proof}
These graphs can be characterized by the patterns co-\pchordal{}, co-\pcomparability{} and \pbipartite{}~\cite{FeuilloleyH21}.
The pattern \pbipartite{} implies that the ordering is of the concatenation of two vertex sets $V_1$, $V_2$ such that the graph has a bipartition $(V_1,V_2)$. 
Then, checking completeness can be done in linear time, by a scan of the edge set.
A similar argument works for the complement class (2 disjoint cliques).\qed
\end{proof}

\begin{proposition}\label{caterpillars}
Caterpillars are characterized by a set of patterns that can be detected in linear time. \
\end{proposition}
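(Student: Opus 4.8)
The plan is to prove that caterpillars (more precisely, disjoint unions of caterpillars, which is what the pattern-class framework produces) are characterised by the two patterns \pforest{} and \pinterval{}; since each of these is detected in linear time by Proposition~\ref{prop:one-pattern-easy}, Remark~\ref{unseulealafois} then yields a linear-time detection algorithm for the pair. Recall that a graph admits an ordering avoiding \pforest{} exactly when it is a forest (the condition being $|N^-(i)|\le 1$ for all $i$), and admits an ordering avoiding \pinterval{} exactly when it is an interval graph (Figure~\ref{fig:interval}). So the class defined by avoiding both patterns is contained in the intersection of forests and interval graphs, and the point of the proof is to show that this intersection is exactly the class of disjoint unions of caterpillars, and moreover that a single ordering can avoid both patterns at once on such graphs.

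First I would establish sufficiency: every disjoint union of caterpillars admits one ordering avoiding both patterns. Order the components one after another; inside a caterpillar with spine $s_1 - s_2 - \dots - s_k$ (the path obtained by deleting the leaves, with the obvious convention for caterpillars on at most two vertices), place $s_1$, then all leaves attached to $s_1$, then $s_2$, then all leaves attached to $s_2$, and so on, ending with $s_k$ and its leaves. In this ordering every vertex has at most one earlier neighbour — a leaf has only its spine vertex to its left, and $s_i$ has only $s_{i-1}$ — so \pforest{} does not occur; and the later neighbours of $s_i$ form the contiguous block ``leaves of $s_i$, then $s_{i+1}$'', an interval of positions of the form $[s_i+1,q]$, while leaves have no later neighbour, so \pinterval{} does not occur either.

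For necessity, I would argue directly on an ordering $\tau$ avoiding both patterns. Avoiding \pforest{} gives $|N^-(v)|\le 1$ for every $v$, so $G$ is a forest; rooting each component at its $\tau$-least vertex, every non-root vertex has its unique earlier neighbour as parent, and the children of a vertex $v$ are exactly its later neighbours $N^+(v)$. Avoiding \pinterval{} forces $N^+(v)$ to be, for each $v$, an interval of positions $[v+1,q_v]$ (or empty). Hence if a vertex $v$ has two children, then every child of $v$ except possibly the last one is a leaf: a child $w$ of $v$ sitting at some position $p<q_v$ cannot itself have a child, since such a child would sit at position $p+1\le q_v$, which is already a child of $v$, contradicting $|N^-(p+1)|\le 1$. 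Thus every vertex has at most one non-leaf child, which forces each component to be a caterpillar. (Alternatively, one may invoke the folklore fact that a forest is an interval graph precisely when each of its components is a caterpillar — one direction being the construction above, the other following from the non-intervality of the once-subdivided claw, whose three leaves form an asteroidal triple.)

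Putting the two directions together, the pair of patterns \pforest{} and \pinterval{} characterises caterpillars, and by Proposition~\ref{prop:one-pattern-easy} and Remark~\ref{unseulealafois} it is detected in linear time; this also replaces the known characterisation of caterpillars, which uses a hard pattern, by one that does not. I expect the necessity direction — specifically the bookkeeping around the ``children occupy a contiguous block of positions'' property and its consequence for the spine structure — to be the only mildly delicate point; everything else is short.
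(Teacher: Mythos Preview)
Your argument is correct and takes a different route from the paper. The paper keeps the characterisation $\{\text{co-}\pcomparability{}, \pforest{}\}$ from~\cite{FeuilloleyH21} and, rather than detecting the two patterns separately, shows that any ordering avoiding both is a ``non-interlacing'' ordering (spine vertices $x_1,\dots,x_k$ in order, with the pendants of $x_i$ sitting between $x_i$ and $x_{i+1}$), a structural property it then checks directly in linear time. You instead replace the hard pattern co-\pcomparability{} by \pinterval{}, prove that $\{\pforest{}, \pinterval{}\}$ characterises the same class, and conclude via Proposition~\ref{prop:one-pattern-easy} and Remark~\ref{unseulealafois}. Your necessity argument is sound; the step you flagged (a non-last child $w$ of $v$ must be a leaf, since otherwise position $w+1$ would have two earlier neighbours $v$ and $w$) works exactly as stated. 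Your approach is more modular and, as a by-product, shows that caterpillars in fact admit a 3-vertex-pattern characterisation using only easy patterns; the paper's approach, on the other hand, does not need to establish any new characterisation and extends more directly to the complement class, which your proposal does not address (though the proposition as stated does not require it).
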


\begin{proof}

Caterpillars can be characterized by the patterns co-\pcomparability{} and \pforest{}~\cite{FeuilloleyH21}. Let $G$ be a caterpillar and its spine be the path $[x_1, \dots, x_k]$ maximal with the property that $x_1, x_k$ are not pending vertices. It is wel-known that such a  spine can be discovered with 2 consecutive BFS.
Let $\tau$ an ordering of $G$ that avoids the patterns. Consider $\forall i, 1 \leq i \leq k-1$, we may  assume $x_i <_{\tau} x_{i+1}$  then all vertices in $[x_i, x_{i+1}]_{\tau}$ must be adjacent to one of $x_i$ or $x_{i+1}$ using the co-comparability pattern, but using the forests pattern they could only be adjacent to $x_i$. So in this interval we may have pending vertices to $x_i$ (pending vertices to $x_{i+1}$ are excluded). Similarly if $x_l$ with $l \neq i,i+1$ belongs to the interval it must be adjacent to $x_i$ and $l=i-1$ and it cannot have any pendant vertices, a contradiction. So $\tau$ necessarily orders the spine from $x_1$ to $x_k$ and pending vertices attached to $x_i$ with $i>1$, can only be between $x_i$ and $x_{i+1}$. Conversely such an ordering avoids patterns co-\pcomparability{} and \pforest{}.
Let us call such ordering a non interlacing ordering, and this property can be easily  checked in linear time.

For the complement class, we can easily check these properties without computing the complement (BFS for the spine and the interlacing property).

\qed
\end{proof}

\begin{proposition}\label{trivially-perfect}
Trivially perfect graphs are characterized by a set of patterns that can be detected in linear time.
\end{proposition}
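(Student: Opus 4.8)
The plan is to choose as the characterizing set the pair $\{\pinterval{},\pcomparability{}\}$, which characterizes trivially perfect graphs~\cite{FeuilloleyH21}, and to show that, even though \pcomparability{} alone is (believed to be) hard, detecting this pair is easy. The structural reason is the following. If an ordering $\tau$ avoids \pinterval{}, then for every vertex $v$ the forward neighbourhood $N^+(v)$ is an interval $[v+1,j_v]$ (set $j_v=v$ if $N^+(v)=\emptyset$), so $(G,\tau)$ is exactly the intersection graph of the intervals $I_v=[v,j_v]$. Avoiding \pcomparability{} on top of that is precisely the condition that the family $\{I_v\}$ be laminar, i.e.\ pairwise nested or disjoint; and the intersection graphs of laminar families of intervals are exactly the trivially perfect graphs (the comparability graphs of rooted forests).

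Hence, given $(G,\tau)$, we first run the linear-time \pinterval{}-detection of Proposition~\ref{prop:one-pattern-easy}; if \pinterval{} is found we report that the set is present. Otherwise every $I_v=[v,j_v]$ has been computed along the way, and the key claim is that, \emph{once \pinterval{} is excluded}, $\tau$ contains \pcomparability{} if and only if there exist indices $u<v$ with $v\le j_u<j_v$ (call this a \emph{proper overlap}). Indeed, a proper overlap $(u,v)$ gives the realization $(u,v,j_v)$ of \pcomparability{}, since $v\in N^+(u)$, $j_v\in N^+(v)$ and $j_v\notin N^+(u)$; conversely any realization $x<y<z$ of \pcomparability{} satisfies $y\le j_x$, $z\le j_y$ and $z>j_x$, so $j_x<j_y$ and $(x,y)$ is a proper overlap.

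It remains to find a proper overlap in linear time. We sweep $v$ from $1$ to $n$ while maintaining a stack of the \emph{open} indices, namely those $u<v$ with $j_u\ge v$, pushed with $u$ increasing from bottom to top. At step $v$ we first pop every $u$ with $j_u<v$; then, if the stack is nonempty with top $u^\star$ and $j_v>j_{u^\star}$, we report a proper overlap, and otherwise we push $v$. Since we push $v$ only when $j_v\le j_{u^\star}$, the $j$-values of the stacked indices are non-increasing from bottom to top, so $u^\star$ realises the minimum of $j_u$ over all currently open $u$; comparing $j_v$ with $j_{u^\star}$ is therefore equivalent to comparing it with all open intervals, which makes the test both sound and complete. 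Each index is pushed and popped at most once, so the sweep is $O(n)$ and the whole detection is $O(m)$. (The complement class is treated symmetrically, with the complemented patterns and the intervals $\overline{N^+}(v)$.)

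The main obstacle is conceptual rather than computational: since \pcomparability{} has no known subquadratic detection algorithm in isolation, the argument depends entirely on the observation that, under the \pinterval{} constraint, any realization of \pcomparability{} must be anchored at the endpoints of the forward-neighbourhood intervals and hence collapses to a laminarity test. One must be careful that this reduction is invoked only after \pinterval{} has been ruled out, and that the stack sweep genuinely catches every proper overlap --- which is where the stack invariant (monotone $j$-values) is needed. A secondary check is that the chosen pattern set really does characterize trivially perfect graphs, for which we rely on~\cite{FeuilloleyH21} (or on the classical description of trivially perfect graphs as intersection graphs of laminar interval families).
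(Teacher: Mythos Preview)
Your proof is correct, but it takes a genuinely different route from the paper.

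The paper works with the characterizing set $\{\ptriangle{},\pchordal{}\}$: it reuses Algorithm~\ref{algo:chordal} for \pchordal{}, and then observes that, once the ordering is known to be a simplicial elimination ordering, the \ptriangle{} pattern can be tested by a ``rightmost left neighbour'' trick very similar to the one already used for \pchordal{} (namely, checking $N'(x)=N'(y)\cup\{y\}$ for $y$ the first forward neighbour of $x$). Both patterns are thus handled inside the same paradigm.

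You instead pick the set $\{\pinterval{},\pcomparability{}\}$ and exploit the fact that ruling out \pinterval{} gives every vertex an interval $[v,j_v]$, after which \pcomparability{} collapses to a laminarity test on these intervals, which you carry out by a clean stack sweep. The equivalence between \pcomparability{} (under the \pinterval{}-free assumption) and the existence of a proper overlap is proved carefully, and the stack invariant (non-increasing $j$-values) makes the completeness argument transparent. Your approach has the virtue of making the structural reason explicit --- trivially perfect graphs are exactly the intersection graphs of laminar interval families --- whereas the paper's argument stays closer to the neighbourhood-comparison machinery already developed. One minor caveat: you cite~\cite{FeuilloleyH21} for the fact that $\{\pinterval{},\pcomparability{}\}$ characterizes trivially perfect graphs; the paper itself only quotes the $\{\ptriangle{},\pchordal{}\}$ version from that reference, so your self-contained justification via laminar families is the part doing the real work here, and it is correct.
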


\begin{proof}
Trivially perfect graphs can be characterized by the patterns \ptriangle{} and \pchordal{}~\cite{FeuilloleyH21}. 
We already have proposed Algorithm~\ref{algo:chordal} to check for \pchordal{} in linear time. 
For the pattern \ptriangle{}, using the structure of trivially perfect graphs, we just have to check that for every $x, y$ such that $y$ is the first neighbour of x with respect to $\tau$ that $N'(x) =N'(y) \cup y$  where $N'$ means Neighbourhood after $x$ in $\tau$. This also can be done linearly.\qed
\end{proof}

Note that trivially perfect graphs are also called quasi-threshold in \cite{YCC96} since they can be obtained by iteration the two operations: disjoint  union and adding an universal vertex. 
In fact they are comparability graphs of trees and a subclass of cographs and chordal. 
Following  \cite{Chu08} these graphs can be recognized using a unique special LexBFS graph search guided by the non decreasing degree ordering of the vertices. As usual this LexBFS provides a reverse ordering $\tau$ which
avoids the patterns 1 and 2 iff the graph is trivially perfect. 

\begin{proposition}\label{Tfree-cochordal}
Triangle-free co-chordal graphs are characterized by a set of patterns that can be detected in linear time.
\end{proposition}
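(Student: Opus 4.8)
By \cite{FeuilloleyH21}, the class of triangle-free co-chordal graphs is characterized by the set of patterns $\{\ptriangle{},\ \text{co-}\pchordal{}\}$: a graph $G$ admits an ordering avoiding both if and only if $G$ is triangle-free (avoidance of \ptriangle{} does not depend on the ordering) and $\overline{G}$ is chordal (avoidance of co-\pchordal{} by $(G,\tau)$ is exactly avoidance of \pchordal{} by $(\overline{G},\tau)$), and these two requirements are simultaneously satisfiable precisely because the first one is order-free. So it suffices to detect this set in linear time; if the characterizing set of \cite{FeuilloleyH21} also contains mirror-co-\pchordal{}, that pattern is handled in linear time by Remark~\ref{mirroring} and does not affect the argument. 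The naive route of detecting each pattern separately (Remark~\ref{unseulealafois}) is blocked, since detecting \ptriangle{} amounts to triangle detection, which is conjecturally superlinear. The plan is to use co-\pchordal{} as a ``gatekeeper''.

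First I would run the linear-time detector for co-\pchordal{} (Algorithm~\ref{algo:co-chordal}, Proposition~\ref{prop:co-chordal}) on $(G,\tau)$. If it finds a realization, the set is detected and we stop. Otherwise $(G,\tau)$ avoids co-\pchordal{}, equivalently $(\overline{G},\tau)$ avoids \pchordal{}; hence $\overline{G}$ is chordal, and, reading $\tau$ backwards, for every vertex its $\overline{G}$-neighbours among its $\tau$-predecessors form a clique of $\overline{G}$ --- that is, $\tau^{d}$ is a perfect elimination ordering (PEO) of $\overline{G}$. In this situation, because \ptriangle{} is a triangle and hence order-free, detecting the set reduces to deciding whether $G$ has a triangle, equivalently whether $\overline{G}$ has an independent set of size three.

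To decide the latter in linear time I would run the classical greedy maximum-independent-set algorithm on the chordal graph $\overline{G}$ along the PEO $\tau^{d}$: repeatedly take the first remaining vertex of the ordering (which is simplicial in the current graph), add it to the independent set, and delete its closed neighbourhood; it is classical that this yields a maximum independent set of a chordal graph, so it suffices to stop as soon as three vertices have been selected (or the candidate set becomes empty). The only care needed is to simulate this on $\overline{G}$ without ever building $\overline{G}$: deleting the $\overline{G}$-closed-neighbourhood of a chosen vertex $v$ from the current candidate set leaves exactly $N_{G}(v)$. Concretely, pick $v_{1}$ to be the $\tau$-last vertex, then $v_{2}$ to be the $\tau$-last vertex of $N_{G}(v_{1})$, then $v_{3}$ to be the $\tau$-last vertex of $N_{G}(v_{1})\cap N_{G}(v_{2})$; the triple $\{v_{1},v_{2},v_{3}\}$, when it exists, is pairwise adjacent in $G$, hence a triangle, and it exists exactly when $\overline{G}$ has an independent set of size three. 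With adjacency lists sorted by $\tau$ (precomputable in $O(n+m)$), the two neighbourhood lookups and the single intersection cost $O(n+m)$. If the process halts earlier, then the independence number $\alpha(\overline{G})\le 2$, so $G$ is triangle-free; since co-\pchordal{} was already excluded, the input avoids the whole set and we answer ``no''.

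The main obstacle is precisely the \ptriangle{} pattern: triangle detection being conjecturally superlinear, the whole argument rests on first certifying, via the absence of co-\pchordal{}, that we live in the world where $\overline{G}$ is chordal, and then on implementing the greedy independent-set computation \emph{implicitly} on $\overline{G}$ using only $G$'s sorted adjacency lists. The only nontrivial external fact invoked is the classical statement that the greedy algorithm along a perfect elimination ordering computes a maximum independent set in a chordal graph.\qed
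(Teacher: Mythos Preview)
Your proposal is correct and follows essentially the same approach as the paper: first detect co-\pchordal{} in linear time via Algorithm~\ref{algo:co-chordal}, and if it is absent, exploit the resulting chordality of $\overline{G}$ (with PEO $\tau^{d}$) to reduce the \ptriangle{} question to checking whether $\alpha(\overline{G})\ge 3$ via the greedy simplicial-elimination algorithm. The paper is terser---it simply cites~\cite{RTEL76} for the linear-time maximum-independent-set computation in the complement---whereas you explicitly spell out how to simulate the greedy steps on $\overline{G}$ using only $G$'s sorted adjacency lists, which is a nice addition.
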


\begin{proof}
Triangle-free $\cap$ co-chordal can be characterized by the patterns \ptriangle{} and co-\pchordal{}~\cite{FeuilloleyH21}. 
Using Algorithm \ref{algo:co-chordal} we can check for co-\pchordal{}. 
For the triangle, we can use a linear time algorithm to compute a maximum independent set in the complement as given in \cite{RTEL76}, and check whether it is of size at least 3 or not.\qed
\end{proof}

Note that in linear time one can execute a LexBFS in the complement of the graph and produce an ordering that avoids pattern 0 and 3 iff  the graph is in Triangle-free $\cap$ co-chordal. 

\subsection{Finding an ordering versus verfying it}

As a consequence of the theorems above and of the discussions, we get the following corollary.

\begin{corollary}
For every forbidden set of patterns on 3 elements, it is easier to find an ordering that avoids this set of patterns than  to certify that this ordering avoids these patterns.
\end{corollary}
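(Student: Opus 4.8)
The plan is to run through the $22$ graph classes that, up to complementation, can be described by a forbidden set of patterns on three vertices, as listed in Subsection~\ref{subsec:more-than-one-pattern} following~\cite{FeuilloleyH21}, and for each of them to compare, for a suitably chosen defining pattern set, the cost of two tasks: \emph{finding} a candidate ordering --- an algorithm that, on input $G$, outputs an ordering $\tau$ that provably avoids the pattern set whenever $G$ admits \emph{some} avoiding ordering (with no requirement otherwise) --- and \emph{certifying} it --- the detection problem for the pattern set on the given $(G,\tau)$. I will show that the first task is always solvable in linear time, while the second is solvable in linear time for $18$ of the $22$ classes and is believed to require superlinear time for the remaining four; hence finding is never harder than certifying, and conjecturally strictly easier for the classes of (co-)comparability and (co-)triangle-free graphs.

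For the \emph{finding} side I would split the classes into three groups. For the classes characterized by a single pattern (Theorem~\ref{thm:OnePatternThreeVertices}) and for the others treated in the easy part of Subsection~\ref{subsec:more-than-one-pattern}, a classical linear-time recognition algorithm already outputs the certifying ordering; see~\cite{FeuilloleyH21} and the references therein. For (co-)comparability, a candidate ordering is produced in linear time by the transitive-orientation machinery of~\cite{McConnellS99} (a depth-first search with specific tie-breaks, as in the proof of Proposition~\ref{permutations}), even though \emph{confirming} membership is exactly the expensive step there. For (co-)triangle-free, no computation is needed at all: the pattern \ptriangle{} (resp.\ its complement) is realized in $(G,\tau)$ if and only if $G$ contains a triangle (resp.\ an independent set of size three), independently of $\tau$, so \emph{every} ordering is a valid candidate. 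In all cases a candidate is obtained in $O(n+m)$ time.

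For the \emph{certifying} side, Theorem~\ref{thm:three-vertices-class} provides, for each class other than (co-)comparability and (co-)triangle-free, a defining pattern set detectable in linear time; pairing each such class with that set, certifying costs $O(m)$, so there finding and certifying have the same linear complexity. For the four exceptional classes, whose only three-vertex pattern characterizations single out \pcomparability{} or \ptriangle{} (or a complement), certifying is, respectively, the verification that a given orientation is transitive and triangle detection, both of which are believed to require superlinear time --- quadratic in fact, and cubic for combinatorial algorithms in the triangle case --- as recalled after Theorem~\ref{thm:OnePatternThreeVertices} and in Section~\ref{sec:o-lower-bounds}. Thus finding is conjecturally strictly easier than certifying for these four classes. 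Putting the two sides together gives the corollary.

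The step requiring the most care is the scope of the phrase ``this set of patterns'': a class can be defined by several inequivalent forbidden sets, and an ordering avoiding one of them need not avoid another, so one has to pin down, for each class, a single pattern set for which the linear-time candidate-ordering algorithm \emph{and} the detection bound of Theorem~\ref{thm:three-vertices-class} hold simultaneously --- which is exactly what the choices above achieve (and for the four hard classes there is in fact only one such set, up to mirror). A minor bookkeeping point is that in the complement cases one must avoid building $\overline G$ explicitly, lest the $\Theta(n^2)$ cost break the linear bound on the finding side: instead one uses the complement-aware procedures already at hand (Algorithm~\ref{algo:co-chordal}, a LexBFS in $\overline G$, the linear-time maximum-independent-set algorithm of~\cite{RTEL76}, and so on), invoking Remark~\ref{passageaucomplement} only where quadratic time is already permitted.
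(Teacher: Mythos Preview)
Your proposal is correct and follows essentially the same approach as the paper: show that a candidate ordering can always be produced in linear time (trivially for (co-)triangle-free, via~\cite{McConnellS99} for (co-)comparability, and via the known recognition algorithms collected in~\cite{FeuilloleyH21} for the rest), and then invoke Theorem~\ref{thm:three-vertices-class} for the certifying side, concluding that finding is never harder and is conjecturally strictly easier for the four exceptional classes.

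Two small remarks. First, your count is off: the $22$ classes are already taken \emph{up to complement}, so only $2$ of them (comparability and triangle-free) are exceptional, not $4$; the paper speaks of ``$20$ over $22$''. Second, for the complement classes the paper handles the linear-time finding step more uniformly than you do: rather than invoking a different ad~hoc complement-aware procedure for each case, it observes once that all the relevant searches are implementable via partition refinement~\cite{HabibMPV00}, which runs on $\overline{G}$ without constructing it. Your added discussion of the scope of ``this set of patterns'' is a genuine clarification that the paper leaves implicit.
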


\begin{proof}
Theorem 5 in \cite{FeuilloleyH21} provides a linear time algorithm to produce an ordering that avoids the pattern for each of the 20 over 22 graph classes, moreover these proposed algorithms are all based on classical graph searches that can all be implemented using partition refinement \cite{HabibMPV00} which implies that the result also holds for their complement classes, since partition refinement applies equally on a graph or its complement without constructing the complement.
Therefore
 using Theorem  \ref{thm:three-vertices-class} for 20 graph classes over 22, their ordering can be certified also in linear time. 
 
 To conclude it suffices to  consider the single patterns comparability, triangle-free and their complements. For the patterns triangle free and its complement, all the orderings are equivalent therefore taking one ordering can be done in linear time. But for comparability and its complement to construct in linear time an ordering that avoids the pattern is more sophisticated but can be done using  \cite{McConnellS99}.
 \qed
\end{proof}

Such a statement \emph{does not hold} for patterns with more than 4 vertices. We recall that the classical Roy-Gallai-Hasse-Vitaver theorem
says that a graph G is $k$-colorable if and only if it admits an orientation
with no directed path on $(k + 1)$-vertices (the pattern $P_k$). Already with $k=4$
 the path on 4 vertices pattern which corresponds to the class of 3 colorable graphs that are NP-complete to recognize.  But such a pattern can be certified in linear time. 
\section{Conditional lower bounds for patterns with $k \geq 4$ vertices}
\label{sec:lowerbounds}

In this section, we prove conditional superlinear lower bounds for large families of fully specified patterns with at least four nodes.
Remember that a pattern is fully specified is it has no undecided pair.

Most of our results are derived from a few powerful reduction techniques.
In particular, we obtain conditional quadratic lower bounds for most fully specified patterns on four nodes (resp., for all such patterns under additional restrictions on the type of algorithms considered).

In Section~\ref{sec:o-lower-bounds}, we proved that detecting a pattern on $k$ vertices is in general not harder than finding a $k$-clique in a graph. 
Roughly, we prove in this section that for several cases of {\em fully specified} $k$-node patterns $P$, there is also a converse reduction from the problem of detecting an $f(k)$-clique, for some function $f$, to the $P$-{\sc Detection} problem.

\medskip
We formally introduce our complexity assumptions in Sec.~\ref{sec:lb:hyp}.
Then, in Sec.~\ref{sec:lb:iso} and~\ref{sec:lb:dag}, we present two different types of reductions. 
Applications to the detection of some fully specified patterns are given.
In Sec.~\ref{sec:lb-k=4}, we complete the complexity classification of $P$-{\sc Detection} for the fully specified patterns $P$ with four nodes. 
We end up presenting a few more results for patterns of size at least five in Sec.~\ref{sec:lb>4}.

\medskip
In what follows, a \underline{$H$-based pattern} refers to a fully specified pattern the mandatory edges of which induce $H$. Our exact results and techniques in what follows depend on $H$.

\subsection{Complexity Hypotheses}\label{sec:lb:hyp}

Let us start introducing our complexity assumptions for what follows.
We refer to~\cite{DVW19,LWWW18,Williams18,WilliamsW18} for a thorough discussion about their plausibility, and their implications in the field of fine-grained complexity.

\begin{restatable}{hypothesis}{hypKClique}
\label{hyp:k-clique}
    For every $k \geq 3$, deciding whether a graph contains a clique (an independent set, resp.) on $k$ vertices requires $n^{\omega\left(\lfloor k/3 \rfloor, \lceil k/3 \rceil, \lceil (k-1)/3 \rceil\right) - o(1)}$ time, with $\omega(p,q,r)$ the exponent for multiplying two matrices of respective dimensions $n^p \times n^q$ and $n^q \times n^r$.
\end{restatable}

\begin{restatable}{hypothesis}{hypCombTriangle}
\label{hyp:comb-triangle}
    Every {\em combinatorial} algorithm (not using fast matrix multiplication or other algebraic techniques) for detecting a triangle in a graph requires $n^{3-o(1)}$ time.
\end{restatable}

\begin{restatable}{hypothesis}{hypFourHyperclique}
\label{hyp:4-hyperclique}
    Deciding whether a $3$-uniform hypergraph contains a hyperclique with four nodes requires $n^{4-o(1)}$ time.    
\end{restatable}

In this section, Hypothesis~\ref{hyp:k-clique} is often used for $k=3$ or $k=4$.
In particular, we simply denote by $\omega = \omega(1,1,1)$ the exponent for square matrix multiplication.
Note that Hypothesis~\ref{hyp:k-clique} is often cited in the literature only for cliques.
This is inconsequential for $k \geq 4$, because all stated lower bounds are superquadratic and we can complement a graph in ${\cal O}(n^2)$ time.
However, this might become an issue for $k=3$ under the widely believed conjecture that $\omega = 2$.
Following~\cite{DVW19}, we posit Hypothesis~\ref{hyp:k-clique} for both cliques and independent sets.

We observe that Hypothesis~\ref{hyp:comb-triangle} is less general than our other two complexity assumptions because it only applies to combinatorial algorithms.

\subsection{General reductions from {\sc Induced subgraph isomorphism}}\label{sec:lb:iso}

Recall that the $H$-{\sc Induced-SI} problem asks whether a given host graph $G$ contains $H$ as an induced subgraph.

\begin{proposition}\label{prop:red-isi}
    Let $H$ be a graph on $k$ vertices, for some constant $k$.
    For every $H$-based pattern $P_H$, there is a randomized ${\cal O}(n)$-time reduction from $H$-{\sc Induced-SI} to $P_H$-{\sc Detection}.
\end{proposition}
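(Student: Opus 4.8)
The plan is to use exactly the randomized shuffle sketched in Section~\ref{sec:o-lower-bounds}. Given an instance $G$ of $H$-{\sc Induced-SI} on $n$ vertices, I would first draw a uniformly random total order $\tau$ of $V(G)$ by the Fisher--Yates shuffle~\cite{Dur64,FiY63}, which runs in $O(n)$ time and, crucially, outputs a permutation that is uniform over all $n!$ orders. I then run the assumed $P_H$-{\sc Detection} algorithm on the ordered graph $(G,\tau)$ and return its answer verbatim. The whole reduction (beyond the oracle call) is thus $O(n)$ time; no re-encoding of $G$ is needed, since the detection routine already takes $G$ by adjacency lists plus an ordering.

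For soundness I would use that an $H$-based pattern is, by definition, \emph{fully specified}. Suppose $(G,\tau)$ contains $P_H$, witnessed by a vertex set $X$ with $|X|=k$. Then the ordered subgraph induced by $X$ is a realization of $P_H$; since $U(P_H)=\emptyset$, the only realization of $P_H$ is $H$ carrying the prescribed ordering of $V(P_H)$. In particular $G[X]\cong H$, so $G$ contains $H$ as an induced subgraph. Contrapositively, on a no-instance $G$ the detection algorithm can never report yes, so the reduction has one-sided error.

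For completeness, assume $G$ has an induced copy of $H$ on a fixed $k$-set $X$, via an isomorphism $\varphi$. Composing $\varphi$ with the ordering of $V(P_H)$ designates at least one linear order $\pi$ of $X$ for which $(G[X],\pi)$ is a realization of $P_H$. Because $\tau$ is a uniformly random permutation of $V(G)$, its restriction to the fixed set $X$ is a uniformly random linear order of $X$, so $\Pr[\tau|_X = \pi] = 1/k!$; hence $(G,\tau)$ contains $P_H$ with probability at least $1/k!$. Since $k$ is constant, $1/k!$ is a constant success probability already; repeating the shuffle $O(k!)=O(1)$ times independently and answering yes if any trial does boosts this above any desired constant, still within $O(n)$ reduction time and $O(1)$ oracle calls. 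The one point that deserves care in the full write-up is precisely this use of full specification: it is what upgrades ``the pattern appears'' into an \emph{induced}, not merely subgraph, copy of $H$, which is exactly what is needed to start from $H$-{\sc Induced-SI}; the probability estimate itself is routine once one notes that a uniform random permutation restricts to a uniform random order on any fixed subset, and that composing the reduction's coin tosses with a possibly randomized detection algorithm is harmless.
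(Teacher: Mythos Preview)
Your proposal is correct and follows essentially the same approach as the paper: generate a uniform random ordering via Fisher--Yates, call the $P_H$-detection oracle, and use the $1/k!$ lower bound on the probability that a fixed induced copy of $H$ appears in the right relative order. Your explicit emphasis on the full specification of $P_H$ (so that a realization certifies an \emph{induced} copy of $H$) and the optional boosting by $O(k!)$ repetitions are reasonable elaborations, but the core argument is identical to the paper's.
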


\begin{proof}
We first describe our reduction.
Let $G$ be an instance of the $H$-{\sc Induced-SI} problem.
We generate a random permutation $\tau$ of the vertices of $G$, which can be done in ${\cal O}(n)$ time using Fisher-Yates shuffle algorithm~\cite{Dur64,FiY63}.
Then, we output that $G$ is a yes-instance of $H$-{\sc Induced-SI} if and only if $\tau$ contains the pattern $P_H$.

If $\tau$ contains $P_H$, then the vertex set of any ordered subgraph that realizes this pattern induces a copy of $H$ in $G$. In particular, $G$ is a yes-instance of $H$-{\sc Induced-SI} (equivalently, we never accept no-instances with our reduction).
Conversely, let us assume the existence of an induced copy of $H$ in $G$, with vertex set $S$.
We totally order the vertices of $S$ such that the resulting ordered graph realizes $P_H$.
The probability for the vertices of $S$ to appear in this exact order in a random permutation of $V(G)$ equals $1/(k!)$.
As a result, we may reject a yes-instance of $H$-{\sc Induced-SI} with probability at most $1 - 1/(k!)$.
 \qed
\end{proof}

We remark that Proposition~\ref{prop:red-isi} also holds more generally for any family ${\cal F}_H$ of $H$-based patterns.

\medskip
Let us illustrate the application of Proposition~\ref{prop:red-isi} to four-node patterns.
There exist eleven $4$-vertex graphs, namely: $K_4$, $C_4$, the diamond, the paw, the claw, the respective complements of these five graphs, and the path $P_4$.
For all such graphs $H$ but $P_4$, there is no known linear-time algorithm for $H$-{\sc Induced-SI}.
Therefore, the existence of a linear-time algorithm for certifying any $H$-based pattern would be a significant algorithmic breakthrough.
We now give more precise lower bounds (also for some patterns of size larger than four), assuming either Hypotheses~\ref{hyp:k-clique},~\ref{hyp:comb-triangle} or~\ref{hyp:4-hyperclique}.

\begin{corollary}\label{cor:k-clique-pattern}
    Assuming Hypothesis~\ref{hyp:k-clique} for $k \geq 3$, if $P$ is the unique $K_k$-based pattern ($\overline{K_k}$-based pattern, resp.), then every $P$-{\sc Detection} algorithm requires $n^{\omega\left(\lfloor k/3 \rfloor, \lceil k/3 \rceil, \lceil (k-1)/3 \rceil\right) - o(1)}$ time.    
\end{corollary}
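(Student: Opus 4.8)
The plan is to observe that, for the $K_k$-based pattern, $P$-{\sc Detection} is nothing but the ordinary $k$-{\sc Clique} problem, so that the statement is an essentially immediate consequence of Hypothesis~\ref{hyp:k-clique}. Concretely, a $K_k$-based pattern is a fully specified pattern whose mandatory edges induce $K_k$; since $K_k$ has every pair of its vertices as an edge, completeness forces $F(P)=U(P)=\emptyset$, and the ordering on $V(P)$ carries no information for a complete pattern, so there is indeed a unique such pattern $P$. Unfolding Definition~\ref{def:pattern} together with the definition of realization, an ordered graph $(G,\tau)$ contains $P$ if and only if there is a set $X$ of $k$ vertices such that every pair in $X$ is an edge of $G$ — that is, if and only if $G$ has a $k$-clique. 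In particular, the ordering $\tau$ plays no role whatsoever in this condition.

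Consequently, $P$-{\sc Detection} and $k$-{\sc Clique} are, up to attaching to the input graph an arbitrary vertex ordering (for instance the order in which its vertices are listed in the input, computable in $O(n+m)$ time), literally the same decision problem on the same input. Hence any algorithm solving $P$-{\sc Detection} in time $T(n)$ solves $k$-{\sc Clique} in time $T(n)+O(n+m)$; since $\omega\!\left(\lfloor k/3\rfloor,\lceil k/3\rceil,\lceil(k-1)/3\rceil\right)\ge 2$ for all $k\ge 3$ (indeed $>2$ for $k\ge 4$), the additive preprocessing is negligible, and Hypothesis~\ref{hyp:k-clique} yields the announced $n^{\omega(\lfloor k/3\rfloor,\lceil k/3\rceil,\lceil(k-1)/3\rceil)-o(1)}$ lower bound for $P$-{\sc Detection}.

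The $\overline{K_k}$-based case is symmetric. The unique such pattern has every pair forbidden, so $(G,\tau)$ contains it if and only if $G$ has an independent set of size $k$, and $P$-{\sc Detection} is exactly the $k$-{\sc Independent-Set} problem; we then invoke the ``independent set'' half of Hypothesis~\ref{hyp:k-clique}, which (following~\cite{DVW19}, see Sec.~\ref{sec:lb:hyp}) is posited precisely to cover this situation. Alternatively, for $k\ge 4$ one may complement the ordered graph in $O(n^2)$ time and reduce to the clique case, the superquadratic lower bound absorbing this cost.

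There is essentially no obstacle here: the only point needing care is to verify, straight from the definitions, that a complete (respectively empty) fully specified pattern is invariant under reordering of its vertices, which is exactly what collapses $P$-{\sc Detection} onto plain clique (respectively independent-set) detection. One could instead derive the corollary from Proposition~\ref{prop:red-isi} applied with $H=K_k$, using that $K_k$-{\sc Induced-SI} coincides with $k$-{\sc Clique}; but that route only yields a randomized lower bound and is needlessly indirect, so I would not take it.
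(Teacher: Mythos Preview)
Your proposal is correct and takes essentially the same approach as the paper: the paper simply notes after the corollary that there is a unique $K_k$-based pattern and that searching for it in an arbitrary vertex ordering is equivalent to deciding whether the graph contains a $k$-clique (and likewise for $\overline{K_k}$), so the lower bound follows directly from Hypothesis~\ref{hyp:k-clique}. Your write-up is in fact more detailed than the paper's, which gives no formal proof beyond this remark.
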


Note that there exists only one $K_k$-based pattern, and that searching for it in an arbitrary order of the vertices is equivalent to the problem of deciding whether a graph contains a clique on $k$ vertices. 
Therefore, the $n^{\omega\left(\lfloor k/3 \rfloor, \lceil k/3 \rceil, \lceil (k-1)/3 \rceil\right) + o(1)}$ running time is conditionally tight for this pattern.
The same holds true for the unique $\overline{K_k}$-based pattern.

\medskip
It is known that for every $H$ with a clique on $k$ vertices, the $H$-{\sc Induced-SI} problem is at least as hard as $k$-{\sc Clique Detection}~\cite{DVW19}.
Therefore, 

\begin{corollary}\label{cor:red-k-clique}
    Assuming Hypothesis~\ref{hyp:k-clique} for $k \geq 3$, if $H$ is a graph with a clique on $k$ vertices, then $P_H$-{\sc Detection} requires at least $n^{\omega\left(\lfloor k/3 \rfloor, \lceil k/3 \rceil, \lceil (k-1)/3 \rceil\right) - o(1)}$ time for {\em every} $H$-based pattern $P_H$.    
\end{corollary}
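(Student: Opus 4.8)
The plan is to obtain Corollary~\ref{cor:red-k-clique} by composing Proposition~\ref{prop:red-isi} with the known hardness of $H$-{\sc Induced-SI} inherited from clique detection, and then invoking Hypothesis~\ref{hyp:k-clique}. First I would recall that, by~\cite{DVW19}, whenever $H$ contains a clique on $k$ vertices there is a reduction from $k$-{\sc Clique Detection} on an $n$-vertex graph to $H$-{\sc Induced-SI} on a host graph with $\Theta(n)$ vertices; since $|V(H)|$ is a constant here, the number of vertices grows only by a constant factor. Consequently, any algorithm solving $H$-{\sc Induced-SI} in time $T(n)$ yields an algorithm for $k$-{\sc Clique Detection} running in time $T(\Theta(n)) + n^{{\cal O}(1)}$, and in particular an exponent of $c$ for $H$-{\sc Induced-SI} transfers to an exponent of $c$ for $k$-{\sc Clique Detection}.

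Next I would compose this with Proposition~\ref{prop:red-isi}. Given an $H$-based pattern $P_H$ together with a hypothetical $P_H$-{\sc Detection} algorithm running in time $t(n)$, the randomized ${\cal O}(n)$-time reduction of Proposition~\ref{prop:red-isi} solves $H$-{\sc Induced-SI} on $n$-vertex graphs in randomized time $t(n) + {\cal O}(n)$, with one-sided error (it never accepts no-instances) and success probability at least $1/(k!)$. Since $k$ is a constant, repeating the reduction ${\cal O}(k!\log n)={\cal O}(\log n)$ times and returning the OR of the answers drives the error probability below $1/3$ without changing the exponent. Putting the two reductions together, a $P_H$-{\sc Detection} algorithm running in time ${\cal O}(n^{c})$ with $c = \omega(\lfloor k/3 \rfloor, \lceil k/3 \rceil, \lceil (k-1)/3 \rceil) - \varepsilon$ for some $\varepsilon > 0$ would give a randomized bounded-error $k$-{\sc Clique Detection} algorithm running in time ${\cal O}(n^{c} \operatorname{polylog} n)$, contradicting Hypothesis~\ref{hyp:k-clique}. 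Hence $P_H$-{\sc Detection} requires $n^{\omega(\lfloor k/3 \rfloor, \lceil k/3 \rceil, \lceil (k-1)/3 \rceil) - o(1)}$ time, and since the argument never uses any structure of $P_H$ beyond its being $H$-based, it holds uniformly over all such patterns.

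The main (and essentially only) point requiring care is the bookkeeping around the randomized reduction: Hypothesis~\ref{hyp:k-clique} has to be read as a lower bound against randomized bounded-error algorithms, which is its standard and widely believed form, and one must verify that the constant success probability $1/(k!)$ of Proposition~\ref{prop:red-isi} can be amplified cheaply (it can, because $k$ is fixed) and that the vertex blow-up in the~\cite{DVW19} reduction is only by a constant factor, so that the exponent $\omega(\lfloor k/3 \rfloor, \lceil k/3 \rceil, \lceil (k-1)/3 \rceil)$ is transported intact through the composition. Everything else is a routine chaining of reductions.
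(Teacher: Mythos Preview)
Your proposal is correct and follows essentially the same approach as the paper: the paper simply states (just before the corollary) that by~\cite{DVW19} the $H$-{\sc Induced-SI} problem is at least as hard as $k$-{\sc Clique Detection} whenever $H$ contains a $k$-clique, and then the corollary follows from composing this with Proposition~\ref{prop:red-isi}. Your write-up is more detailed about the bookkeeping (amplification, constant blow-up, randomized version of the hypothesis), but the underlying argument is identical.
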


\smallskip
The following graphs $H$ contain a triangle: the diamond, the paw, and $K_3+K_1$ (a.k.a., the complement of the claw).
Therefore:

\begin{corollary}\label{cor:red-k-clique-4node}
    Let $H$ be either the diamond, the paw, or $K_3+K_1$.
    Assuming Hypothesis~\ref{hyp:k-clique} for $k=3$, for every $H$-based pattern $P_H$, $P_H$-{\sc Detection} requires at least $n^{\omega-o(1)}$ time.    
\end{corollary}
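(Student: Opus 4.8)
The plan is to obtain this as an immediate instantiation of Corollary~\ref{cor:red-k-clique}. First I would observe that each of the three graphs in question contains a triangle: the diamond is $K_4$ minus one edge and thus contains $K_3$; the paw is a triangle with a pendant vertex; and $K_3+K_1$ is a triangle together with an isolated vertex. Hence in all three cases $H$ has a clique on $k=3$ vertices, so $H$ satisfies the hypothesis of Corollary~\ref{cor:red-k-clique} with $k=3$.

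Then I would simply apply that corollary. For $k=3$ we have $\lfloor k/3 \rfloor = \lceil k/3 \rceil = \lceil (k-1)/3 \rceil = 1$, so the resulting lower bound is $n^{\omega(1,1,1) - o(1)}$. Since $\omega(1,1,1)$ is by definition the exponent $\omega$ for square matrix multiplication, this is exactly the claimed bound $n^{\omega - o(1)}$, valid for every $H$-based pattern $P_H$ under Hypothesis~\ref{hyp:k-clique} for $k=3$.

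For completeness I would recall the reduction chain behind Corollary~\ref{cor:red-k-clique}: by~\cite{DVW19}, for any graph $H$ containing a $K_k$ the problem $H$-{\sc Induced-SI} is at least as hard as $k$-{\sc Clique Detection}; and by Proposition~\ref{prop:red-isi}, $H$-{\sc Induced-SI} admits a randomized ${\cal O}(n)$-time reduction to $P_H$-{\sc Detection} for every $H$-based pattern $P_H$. Composing the two, a $3$-clique in an $n$-vertex graph can be detected by any $P_H$-detection algorithm, up to an additive ${\cal O}(n)$ overhead and a constant failure probability, which yields the stated conditional lower bound.

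There is no real obstacle here; the two points that merely require a moment's care are that the reduction of Proposition~\ref{prop:red-isi} is randomized, so one relies on the standard fact that Hypothesis~\ref{hyp:k-clique} is believed to hold even against randomized algorithms with constant success probability, and that the additive ${\cal O}(n)$ term from generating the random permutation is dominated by $n^{\omega - o(1)}$ since $\omega \geq 2$. Neither affects the conclusion.
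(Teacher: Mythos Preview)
Your proposal is correct and follows exactly the paper's approach: the paper simply observes that the diamond, the paw, and $K_3+K_1$ each contain a triangle and then invokes Corollary~\ref{cor:red-k-clique} with $k=3$, which is precisely what you do. Your additional remarks on the randomized nature of the reduction and the computation $\omega(1,1,1)=\omega$ are accurate and only make the argument more explicit.
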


Up to complementing the input graph in ${\cal O}(n^2)$ time, the result from~\cite{DVW19} also holds for graphs $H$ with an independent set on $k$ vertices, for every $k \geq 4$.
So does our Corollary~\ref{cor:red-k-clique} for their $H$-based patterns.
However, we are not aware of a similar result for graphs $H$ with an independent set on three vertices. 
We prove next that such a result holds if $H$ is a four-vertex graph.

\begin{lemma}
\label{lem:3-indep-set}
    Let $H$ be either $K_2+2K_1$, the co-paw, or the claw.
    There is a linear-time reduction from the problem of detecting an independent set of size three in an $n$-vertex graph to the $H$-{\sc Induced-SI} problem on ${\cal O}(n)$-vertex graphs.
\end{lemma}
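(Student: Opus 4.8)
The plan is to give three separate reductions, one for each of the three graphs $H$, each producing a host graph $G'$ with ${\cal O}(n)$ vertices obtained from the input $n$-vertex graph $G$ by adding a linear number of gadget vertices (or by a small blow-up). The organising principle is complementation: $G$ has an independent set of size three iff $\overline G$ has a triangle, and the complements of $K_2+2K_1$, the co-paw and the claw are respectively the diamond, the paw and $K_3+K_1$, each of which contains a triangle. Concretely, I would take: for the claw, $G'=G$ together with one universal vertex $x$ (equivalently $\overline{G'}=\overline G$ plus one isolated vertex); for $K_2+2K_1$, the true-twin $2$-blow-up $G^{(2)}$, where each $v$ is replaced by two adjacent vertices $v_1,v_2$ and, for $v\neq w$, all four edges $v_iw_j$ are present iff $vw\in E(G)$ (equivalently $\overline{G'}$ is the false-twin blow-up of $\overline G$); for the co-paw, $G'=G$ together with one new vertex $d_v$ per vertex $v$, where the $d_v$'s form a clique and $d_v$ is adjacent to $V(G)\setminus\{v\}$ (equivalently $\overline{G'}$ is $\overline G$ with a pendant $d_v$ attached to each $v$, the pendants forming an independent set). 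Each has $n+1$ or $2n$ vertices and is built in time linear in the size of $G$.

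The ``yes $\Rightarrow$ yes'' direction is immediate in all three cases: an independent triple $\{a,b,c\}$ of $G$ gives the claw on $\{x,a,b,c\}$, the $K_2+2K_1$ on $\{a_1,a_2,b_1,c_1\}$ (the only edge being $a_1a_2$), and the co-paw on $\{a,b,c,d_a\}$ (with $a$ the isolated vertex and $d_a$ the centre of the path $b\,d_a\,c$). The converse is where the work lies: one must show no \emph{spurious} induced copy of $H$ appears. For the claw this is trivial: $x$ is universal so it cannot be a leaf, hence in any induced claw the three leaves lie in $V(G)$ and are pairwise non-adjacent. For $K_2+2K_1$ I would analyse how the four vertices of a putative copy distribute among the two-element twin classes: two vertices in the same class force an edge, so either all four lie in distinct classes (then they span exactly one edge in $G$, and four vertices with one edge already contain an independent triple), or exactly one class contributes two vertices $v_1,v_2$ (a $2{+}1{+}1$ split, since $2{+}2$ gives a second edge); then the unique edge is $v_1v_2$, so the three distinct classes $\{v,w,u\}$ are pairwise non-adjacent in $G$ — the true-twin structure is precisely what rules out ``padding'' an induced $K_2+K_1$ of $G$ by a stray vertex.

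For the co-paw I would argue on the complement, reducing to the following key lemma: for every graph $J$, the graph $J^{+}$ obtained by attaching one pendant vertex to each vertex of $J$ has an induced paw iff $J$ has a triangle. For this: in any induced paw of $J^{+}$, at most one vertex can be a pendant, since pendants have degree $\le 1$ in $J^{+}$ while a paw has a unique vertex of degree $\le 1$; if one pendant $d_u$ is used it must be that degree-$1$ vertex and hence must be adjacent (inside the paw) exactly to $u$, which forces the other three vertices to be a triangle of $J$; and if no pendant is used, the paw already lies in $J$. Applying this with $J=\overline G$ and dualising (using that $\overline{G'}=\overline G$ plus pendants, and that $G$ has an independent triple iff $\overline G$ has a triangle) yields the claim for $G'$ and the co-paw.

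Finally, a remark on ``linear time'': the claw and $K_2+2K_1$ constructions add only ${\cal O}(n+m)$ edges, whereas the co-paw construction yields a dense $G'$ with $\Theta(n^2)$ edges; this is harmless, since if $G$ has $o(n^2)$ edges then by Turán/Mantel its complement is far from triangle-free, so $G$ trivially has an independent set of size three (answer directly in ${\cal O}(n+m)$ time), and otherwise $m=\Theta(n^2)$ and the ${\cal O}(n^2)$ construction is linear in the input. I expect the main obstacle to be precisely these converse case analyses — verifying that the twin blow-up (for $K_2+2K_1$) and the complement-of-pendants graph (for the co-paw) admit no induced copy of $H$ beyond those arising from an independent triple; the gadgets are tuned so that every small local configuration realising $H$ is forced to exhibit such a triple.
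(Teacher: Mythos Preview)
Your proof is correct. For the claw and for $K_2+2K_1$ you use exactly the same constructions as the paper (universal vertex, and the true-twin $2$-blow-up, respectively); your case analysis for the converse of $K_2+2K_1$ is phrased differently from the paper's (which instead observes that any induced $K_2+2K_1$ already contains an $I_3$ and then projects that $I_3$ back to $G$), but both arguments are fine.

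For the co-paw you take a genuinely different route. The paper does \emph{not} build a single gadget graph; it computes the modular decomposition of $G$, dispatches the disconnected and co-disconnected cases directly, reduces to the prime quotient $G'$, and then invokes Olariu's characterization of paw-free graphs to conclude that on a prime graph, $I_3$-free is equivalent to co-paw-free. Your approach is more elementary and self-contained: you build $G'$ on $2n$ vertices so that $\overline{G'}=\overline{G}$ with a pendant at every vertex, and prove the clean lemma ``$J^+$ has an induced paw iff $J$ has a triangle''. The price you pay is density: your $G'$ always has $\Theta(n^2)$ edges, which forces the Mantel/Tur\'an case split to keep the reduction linear-time. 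That split is valid, but make the threshold explicit: if $m < \binom{n}{2}-\lfloor n^2/4\rfloor$ then $\overline{G}$ has more than $\lfloor n^2/4\rfloor$ edges and hence a triangle by Mantel, so output a fixed YES instance; otherwise $m=\Theta(n^2)$ and building $G'$ is $O(m)$. The paper's approach avoids this case split but imports modular decomposition and Olariu's theorem; yours trades that machinery for an easy extremal argument.
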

\begin{proof}
    Let $G=(V,E)$ be an arbitrary undirected graph. We transform $G$ into a larger graph $G'$ as follows:
    \begin{itemize}
        \item {\it Case $H$ is the claw}. We add a universal vertex, {\it i.e.}, that is adjacent to every vertex of $V$.
        \item {\it Case $H$ equals $K_2+2K_1$}. We construct two disjoint copies $G_1,G_2$ of $G$, calling $v_1,v_2$ the two copies of a same vertex $v \in V$. 
        For every $v \in V$, we add an edge $v_1v_2$. Finally, for every edge $uv \in E$, we add the edges $u_1v_2,v_1u_2$.

        If $x,y,z$ form an independent set of $G$, then $x_1,y_1,z_1,x_2$ induce a copy of $H$ in $G'$.
        Conversely, assume that $G'$ contains an induced copy of $H$.
        In particular, $G'$ also contains an independent set of size three.
        Wlog at least two vertices of it are contained in $V_1$.
        Let us call them $x_1,y_1$.
        By construction, vertices $x,y$ of $G$ are nonadjacent.
        Let $z_i$ be the third vertex of the independent set, for some $i\in \{1,2\}$.
        If $i=1$, then clearly $x,y,z$ form an independent set of $G$.
        Let us now assume that $i=2$.
        Then, $z \notin \{x,y\}$ and $z$ is nonadjacent to both $x,y$ in $G$ (otherwise, there would exist one of the two edges $x_1z_2,y_1z_2$ in $G'$).
        As a result, we also have in this subcase that $x,y,z$ form an independent set of $G$.
        
        \item {\it Case $H$ is the co-paw}. We first compute the modular decomposition of $G$, that can be done in linear time~\cite{TCHP08}.
        By Gallai's theorem~\cite{Gallai67}, either $G$ is disconnected, co-disconnected, or its quotient graph is prime for modular decomposition.
        Assume that $G$ is disconnected.
        If it has at least three connected components, then it contains an independent set on three vertices.
        Otherwise, $G$ is $I_3$-free if and only if both its connected components are cliques.
        From now on, we assume that $G$ is connected.
        Assume that $\overline{G}$ is disconnected.
        Then, $G$ is $I_3$-free if and only if every co-connected component is $I_3$-free.
        Therefore, we are left considering each co-connected component separately.
        From now on, we assume that $G$ is both connected and co-connected.
        Let $G'$ be its quotient graph.
        Each vertex $x$ of $G'$ replaces some strong module $M_x$ of $G$.
        If such a module $M_x$ is not a clique, then there is an independent set of $G$ that is made of two nonadjacent vertices of $M_x$ and of any vertex in some module nonadjacent to $M_x$ (such a nonadjacent module must exist because $G'$ is prime).
        Therefore, we now assume that for every vertex $x$ of $G'$, the module $M_x$ of $G$ is a clique.
        Then, $G$ is $I_3$-free if and only if $G'$ is $I_3$-free.
        Olariu proved that the graph $\overline{G'}$ is paw-free if and only if each of its connected components is triangle-free or complete multipartite~\cite{Olariu88}.
        Recall that in our case, $\overline{G'}$ is connected.
        Since $G'$ is also connected, its complement $\overline{G'}$ is not complete multipartite.
        Therefore, $G'$ is $I_3$-free if and only if it does not contain an induced co-paw.
    \end{itemize}
    Overall, in all three cases, the resulting graph $G'$ contains an induced copy of $H$ if and only if $G$ contains an independent set on three vertices. 
    \qed
\end{proof}

\begin{corollary}\label{cor:complement-4node}
    Let $H$ be either $K_2+2K_1$, the co-paw, or the claw.
    Assuming Hypothesis~\ref{hyp:k-clique} for $k=3$, for every $H$-based pattern $P_H$, $P_H$-{\sc Detection} requires at least $n^{\omega-o(1)}$ time.    
\end{corollary}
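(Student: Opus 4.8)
The plan is to obtain the statement by composing the two reductions established above, so as to relate $P_H$-{\sc Detection} to the detection of an independent set of size three. First, note that for $k=3$ the exponent $\omega(\lfloor k/3 \rfloor, \lceil k/3 \rceil, \lceil (k-1)/3 \rceil)$ equals $\omega(1,1,1) = \omega$, so Hypothesis~\ref{hyp:k-clique} asserts that detecting an independent set on three vertices in an $n$-vertex graph requires $n^{\omega - o(1)}$ time, and (following the discussion at the start of Section~\ref{sec:lowerbounds}) this is believed to hold even for randomized algorithms. This is the independent-set counterpart of Corollary~\ref{cor:red-k-clique-4node}, and we reduce to it.

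Fix a graph $H$ among $K_2+2K_1$, the co-paw, and the claw, together with an arbitrary $H$-based pattern $P_H$, and suppose for contradiction that $P_H$-{\sc Detection} can be solved in $O(n^{\omega - \varepsilon})$ time for some $\varepsilon > 0$ by an algorithm $\mathcal{A}$. Given an $n$-vertex graph $G$ in which we wish to decide whether there is an independent set of size three, I would proceed in two steps. First, apply Lemma~\ref{lem:3-indep-set} to build in linear time a graph $G'$ on $O(n)$ vertices that contains $H$ as an induced subgraph if and only if $G$ contains an independent set of size three. Second, apply Proposition~\ref{prop:red-isi} to $G'$: draw a uniformly random permutation $\tau$ of $V(G')$ in $O(n)$ time, run $\mathcal{A}$ on $(G',\tau)$, and answer ``yes'' exactly when $\mathcal{A}$ reports a realization of $P_H$. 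By Proposition~\ref{prop:red-isi} this never accepts a no-instance and accepts a yes-instance with probability at least $1/(4!)$; since $4! = 24$ is an absolute constant, repeating this second step $\Theta(\log n)$ times and taking the disjunction of the outcomes yields a one-sided-error algorithm with failure probability $o(1)$ (a constant number of repetitions already suffices for a fixed constant success probability).

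The total running time is $O(n) + O(\log n)\cdot O((cn)^{\omega - \varepsilon})$ for some constant $c$, which is $n^{\omega - \varepsilon + o(1)}$ once the additive linear term and the polylogarithmic amplification overhead are absorbed into the $n^{o(1)}$ factor. This would be a randomized algorithm deciding the existence of an independent set of size three in time $n^{\omega - \varepsilon + o(1)}$, contradicting Hypothesis~\ref{hyp:k-clique} for $k=3$. Hence no such $\mathcal{A}$ exists, which establishes the $n^{\omega - o(1)}$ lower bound for every $H$-based pattern $P_H$.

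As for where the difficulty lies: the substantive part is already contained in Lemma~\ref{lem:3-indep-set}, and in particular in its co-paw case, which routes through the modular decomposition, Gallai's structure theorem, and Olariu's characterization of paw-free graphs. For the corollary itself the only points needing care are bookkeeping: checking that $\omega(1,1,1) = \omega$ so that the target exponents match, and verifying that the randomized, one-sided-error nature of Proposition~\ref{prop:red-isi} does no harm after standard amplification --- which is legitimate precisely because Hypothesis~\ref{hyp:k-clique} is posited to be robust to randomization.
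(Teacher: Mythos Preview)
Your proposal is correct and follows essentially the same approach as the paper: compose Lemma~\ref{lem:3-indep-set} with Proposition~\ref{prop:red-isi} and invoke Hypothesis~\ref{hyp:k-clique} for independent sets with $k=3$. The paper leaves this corollary without an explicit proof, but your argument is precisely the intended composition, and your added bookkeeping (the exponent check $\omega(1,1,1)=\omega$, the amplification, and the appeal to randomized robustness of the hypothesis) is appropriate.
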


We are left considering $H$-based patterns for the following graphs $H$ with four nodes: $C_4$, $2K_2$ and $P_4$.
It has been proved recently that assuming Hypothesis~\ref{hyp:4-hyperclique}, the $C_4$-{\sc Induced-SI} problem requires at least $n^{2-o(1)}$ time, even for graphs with ${\cal O}(n^{1.5})$ edges~\cite{DaW22}.
Therefore:

\begin{corollary}\label{cor:c4}
    Assuming Hypothesis~\ref{hyp:4-hyperclique}, for every $C_4$-based pattern $P$, every $P$-{\sc Detection} algorithm requires at least $n^{2-o(1)}$ time, even if the input graph has ${\cal O}(n^{1.5})$ edges.   
\end{corollary}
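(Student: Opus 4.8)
The plan is to derive Corollary~\ref{cor:c4} directly from the randomized reduction of Proposition~\ref{prop:red-isi} together with the recent hardness of $C_4$-{\sc Induced-SI} established in~\cite{DaW22}. Fix an arbitrary $C_4$-based pattern $P$. By Proposition~\ref{prop:red-isi} applied with $H=C_4$, there is a randomized ${\cal O}(n)$-time reduction from $C_4$-{\sc Induced-SI} on an $n$-vertex host graph $G$ to $P$-{\sc Detection}. The key observation I would stress is that this reduction does not modify the graph at all: it only draws a uniformly random permutation $\tau$ of $V(G)$ and queries the $P$-{\sc Detection} oracle on $(G,\tau)$. Consequently the number of edges is preserved, so an instance of $C_4$-{\sc Induced-SI} with ${\cal O}(n^{1.5})$ edges is mapped to an instance of $P$-{\sc Detection} with ${\cal O}(n^{1.5})$ edges.

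Next I would keep track of the error probability in Proposition~\ref{prop:red-isi}: the reduction never accepts a no-instance and accepts a yes-instance with probability at least $1/(4!) = 1/24$. Hence, if there were a (possibly randomized) $P$-{\sc Detection} algorithm running in time $T(n)$ on graphs with ${\cal O}(n^{1.5})$ edges, composing it with the reduction would give a one-sided-error randomized algorithm for $C_4$-{\sc Induced-SI} on graphs with ${\cal O}(n^{1.5})$ edges running in time ${\cal O}(T(n)+n)$; taking the OR over ${\cal O}(1)$ independent runs turns this into a bounded-error algorithm with the same asymptotic running time. By the result of~\cite{DaW22}, under Hypothesis~\ref{hyp:4-hyperclique} every such algorithm for $C_4$-{\sc Induced-SI} requires $n^{2-o(1)}$ time, even restricted to inputs with ${\cal O}(n^{1.5})$ edges. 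Therefore $T(n) = n^{2-o(1)}$, which is exactly the claimed lower bound for $P$-{\sc Detection}; since $P$ was an arbitrary $C_4$-based pattern, the corollary follows.

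The main (and fairly mild) point to be careful about is that the hardness statement of~\cite{DaW22} must survive being fed by a randomized, one-sided-error reduction. This is not an issue: Hypothesis~\ref{hyp:4-hyperclique} is posited against bounded-error randomized algorithms, and the reduction of~\cite{DaW22} from $4$-hyperclique detection to $C_4$-{\sc Induced-SI} is deterministic, so composing it with our randomized reduction still yields a bounded-error algorithm for $4$-hyperclique detection and no plausibility is lost. All the remaining steps — bookkeeping the edge count through the reduction and the constant-factor probability amplification — are routine.
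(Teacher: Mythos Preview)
Your proof is correct and follows exactly the approach the paper intends: the paper states the corollary immediately after citing the $n^{2-o(1)}$ hardness of $C_4$-\textsc{Induced-SI} from~\cite{DaW22} with the word ``Therefore'', implicitly invoking Proposition~\ref{prop:red-isi}. You simply spell out the details, and in particular your observation that the reduction of Proposition~\ref{prop:red-isi} leaves the graph (and hence the edge count) unchanged is exactly what is needed to carry over the ``${\cal O}(n^{1.5})$ edges'' clause.
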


\subsection{Topological orderings}\label{sec:lb:dag}\

Being given a $H$-based pattern $P_H$, for some arbitrary $H$, we may orient each edge from its smallest to largest end-vertex.
In doing so, we get an orientation of $H$, of which the pattern represents a topological ordering.
Let us call it the orientation of $H$ associated to $P_H$, which we denote by $\overrightarrow{Or}(P_H)$.
A folklore result is that a directed acyclic graph admits a unique topological ordering if and only if it contains a Hamiltonian directed path.
Next, we prove a general relation between the complexity of detecting a fully specified pattern and the existence of some directed paths in its associated orientation.

\begin{proposition}\label{prop:k-path}
    Let $P$ be a fully specified pattern of constant size.
    If $\overrightarrow{Or}(P)$ contains a unique directed path on $k$ vertices, for some constant $k$, then there is an ${\cal O}(n^2)$-time reduction from $k$-{\sc Clique Detection} on $n$-vertex graphs to $P$-{\sc Detection} on ${\cal O}(n)$-vertex ordered graphs.
    
    In particular, if $k \geq 4$ then assuming Hypothesis~\ref{hyp:k-clique}, every $P$-{\sc Detection} algorithm requires at least $n^{\omega\left(\lfloor k/3 \rfloor, \lceil k/3 \rceil, \lceil (k-1)/3 \rceil\right) - o(1)}$ time.
    If $k \geq 3$ then assuming Hypothesis~\ref{hyp:comb-triangle}, every combinatorial $P$-{\sc Detection} algorithm requires at least ${\cal O}(n^{3-o(1)})$ time.
\end{proposition}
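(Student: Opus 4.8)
The plan is to encode a $k$-clique instance $G_0 = (V_0, E_0)$ with $|V_0| = n$ into an ordered graph $(G, \tau)$ on $\mathcal{O}(n)$ vertices so that a realization of $P$ in $(G,\tau)$ exists if and only if $G_0$ has a $k$-clique. Let the unique directed path on $k$ vertices in $\overrightarrow{Or}(P)$ be $p_1 \to p_2 \to \cdots \to p_k$, where the $p_i$'s are vertices of $P$ listed along $\tau$ (they need not be consecutive in $P$, but the path edges respect the pattern order). First I would create, for each $i \in \{1, \dots, k\}$, a block $B_i$ of $n$ vertices, one per vertex of $V_0$; within $(G,\tau)$ the blocks appear in the order $B_1 < B_2 < \cdots < B_k$, and inside each block the vertices follow a fixed order matching the indexing of $V_0$. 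Then for each pattern edge $\{p_i, p_j\}$ with $i < j$ (a mandatory edge of $P$, since $P$ is fully specified), I put an edge in $G$ between the copy of $u$ in $B_i$ and the copy of $v$ in $B_j$ exactly when $uv \in E_0$ (and, if $i = j$ would ever occur, I would instead identify the copies — but the path visits distinct $p_i$'s so this does not arise for path edges; I treat non-path pattern edges the same way, using $E_0$-adjacency). For the non-edges of $P$ on the pair $\{p_i,p_j\}$, I leave the corresponding cross-block pair non-adjacent; and I add no edges at all inside a block or between blocks corresponding to a pair of pattern vertices that is not among the $p_i$'s (the pattern has only $k$ vertices, so the $p_i$'s are all of $V(P)$ when $P$ has exactly $k$ vertices; if $P$ is larger, the extra pattern vertices need their own blocks and their incident pattern edges handled identically).

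The key correctness claim is: a set $X \subseteq V(G)$ induces an ordered subgraph isomorphic to a realization of $P$ if and only if $X$ picks exactly one vertex $x_i$ from each block $B_i$ and the corresponding vertices of $V_0$ form a $k$-clique. The forward direction uses the uniqueness hypothesis: since $\overrightarrow{Or}(P)$ contains a \emph{unique} directed path on $k$ vertices, any order-preserving embedding of $V(P)$ into the blocks must send $p_i$ into $B_i$ — otherwise two pattern vertices would land in the same block (impossible, as blocks are independent sets so the induced subgraph would miss a required edge) or the induced directed-path structure would force a second directed path on $k$ vertices in $\overrightarrow{Or}(P)$, contradicting uniqueness. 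Spelling out this pigeonhole-plus-uniqueness argument carefully is the main obstacle: I would argue that because every edge of $G$ goes between consecutive-in-$\tau$ \emph{no} — rather, between blocks that are images of a pattern edge, and because the $p_i$'s are precisely the vertices of $G$'s "realization skeleton," an embedding that is not block-diagonal would exhibit in $\overrightarrow{Or}(P)$ a directed path visiting the blocks in a different pattern-vertex assignment, i.e.\ a distinct directed $k$-path, which is forbidden. Once $x_i \in B_i$ for all $i$, the mandatory edge $\{p_i, p_j\}$ of $P$ being realized translates exactly to the corresponding $V_0$-vertices being adjacent, so all $\binom{k}{2}$ pairs are edges of $E_0$ and we get a $k$-clique; the converse is immediate by construction.

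The reduction runs in $\mathcal{O}(n^2)$ time: there are $\mathcal{O}(n)$ vertices, $\mathcal{O}(k^2)$ block-pairs to process, and for each we scan $E_0$ (or its complement) in $\mathcal{O}(n^2)$ time; $k$ is constant. Composing with the hypotheses then finishes the statement: if $k \ge 4$, a $P$-\textsc{Detection} algorithm running in $t(n)$ time yields a $k$-\textsc{Clique Detection} algorithm in $t(\mathcal{O}(n)) + \mathcal{O}(n^2)$ time, so under Hypothesis~\ref{hyp:k-clique} we get the bound $n^{\omega\left(\lfloor k/3 \rfloor, \lceil k/3 \rceil, \lceil (k-1)/3 \rceil\right) - o(1)}$ (the additive $\mathcal{O}(n^2)$ is absorbed since that exponent is at least $2$ for $k \ge 4$); and if $k \ge 3$, the same reduction from triangle detection when $k=3$, composed with a combinatorial algorithm, contradicts Hypothesis~\ref{hyp:comb-triangle} unless the running time is $n^{3-o(1)}$. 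I expect the delicate point throughout to be the "uniqueness of the directed $k$-path forces block-diagonal embeddings" step, which is where the hypothesis on $\overrightarrow{Or}(P)$ is actually consumed; everything else is bookkeeping.
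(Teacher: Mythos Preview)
Your construction has a genuine gap in how it handles \emph{forbidden} edges of $P$. You put an edge between the copy of $u$ in $B_i$ and the copy of $v$ in $B_j$ only when $\{p_i,p_j\}$ is mandatory and $uv\in E_0$, and for forbidden pairs you leave the whole cross-block non-adjacent. But then, once the embedding is block-diagonal, the constraint for a forbidden pair $\{p_i,p_j\}$ is satisfied automatically and imposes nothing on the underlying $V_0$-vertices. So a realization of $P$ in your $(G,\tau)$ only forces adjacency on the \emph{mandatory}-edge pairs of $P$, which is strictly weaker than a $k$-clique unless $P$ restricted to the path vertices is complete. Concretely, for $k=3$ and $P=\textsc{Comparability}$ (mandatory $12$, $23$; forbidden $13$) your reduction accepts whenever $G_0$ contains a $P_3$, not only when it contains a triangle; so your claim ``all $\binom{k}{2}$ pairs are edges of $E_0$'' is false. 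The paper fixes this by encoding forbidden pairs in the opposite direction: between $V_i$ and $V_j$ it adds the edge $w_i x_j$ exactly when $w=x$ or $wx\notin E_0$, so that picking nonadjacent $V_0$-vertices makes the pair \emph{adjacent} in $G$ and hence violates the forbidden-edge constraint. With that encoding, both mandatory and forbidden pairs force $E_0$-adjacency, and you do recover a $k$-clique.

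A second, smaller issue is your treatment of the case $|V(P)|>k$. Handing the non-path pattern vertices ``their own blocks, handled identically'' is not enough: you would then have to rule out realizations that send some path vertex $p_i$ into one of those extra blocks, and your uniqueness sketch does not do this. The paper sidesteps this by adding the non-path pattern vertices as \emph{single} vertices $y^1,\ldots,y^p$ (not blocks), wired to every $V_j$ according to $P$. The uniqueness of the directed $k$-path is then used precisely here: from any realization of $P$ in $G_P$ one reads off a sequence of pattern vertices (replacing $z^i\in V_j$ by $v^j$ and $z^i=y^j$ by $u^j$, and filling in the skipped $v^j$'s) that forms a directed path of length at least $k$ in $\overrightarrow{Or}(P)$; uniqueness forces it to be $v^1,\ldots,v^k$, hence every $z^i$ lies in $V_i$. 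Incidentally, when $|V(P)|=k$ the block-diagonality you want follows from a straight pigeonhole on the $k-1$ path edges and does not need uniqueness at all; the uniqueness hypothesis is only consumed in the extension to $G_P$.
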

\begin{proof}
    Let $G=(V,E)$ be an instance of $k$-{\sc Clique Detection}.
    We partition $V(P)$ in $\{v^1,v^2,\ldots,v^k\} \cup \{u^1,u^2,\ldots,u^p\}$ so that the vertices $v^1,v^2,\ldots,v^k$ are on the unique directed $k$-path of the orientation associated to $P$.
    Let $Q$ be the $k$-node subpattern of $P$ that is induced by $v^1,v^2,\ldots,v^k$.

    \smallskip
    We first construct an intermediate ordered graph $G_Q$ as follows:
    \begin{enumerate}
        \item $V(G_Q) = V_1 \cup V_2 \cup \ldots \cup V_k$, such that each $V_i$ is a disjoint copy of $V$. For every $v \in V$, we denote by $v_i$ its copy in $V_i$.
        \item $V_1,V_2,\ldots,V_k$ are independent sets.
        \item Then, let $i,j$ be such that $1 \leq i < j \leq k$. If $v^i$ and $v^j$ are adjacent (that is always true if $j=i+1$) then we add the edges $w_ix_j$ for every adjacent vertices $w,x \in V$. Otherwise, we add the edges $w_ix_j$ for every vertices $w,x \in V$ that are either equal or nonadjacent. 
        \item Finally, we equip $G_Q$ with a total ordering $\tau_Q$ such that: for every $1 \leq i < j \leq k$, all vertices of $V_i$ must appear before every vertex of $V_j$.
    \end{enumerate}
    This construction can be done in ${\cal O}((kn)^2)$ time.

    Assume first that $x^1,x^2,\ldots,x^k$ form a complete subgraph of $G$.
    Then, the ordered subgraph of $G_Q$ that is induced by $x^1_1,x^2_2,\ldots,x^k_k$ realizes the pattern $Q$.
    Conversely, assume the existence of an induced ordered subgraph of $G_Q$ that realizes $Q$.
    Since every two consecutive vertices in this subgraph must be adjacent, its $k$ vertices must be, in order, some $x^1_1 \in V_1, \ x^2_2 \in V_2, \ldots, \ x_k^k \in V_k$.
    By construction, $x_1,x_2,\ldots,x_k$ form a complete subgraph of $G$.

    \smallskip
    We construct $G_P$ from $G_Q$ as follows:
    \begin{enumerate}
        \item We add new vertices $y^1,y^2,\ldots,y^p$.
        \item For every $1 \leq i < j \leq p$, there is an edge between $y^i$ and $y^j$ if and only if $u^i$ and $u^j$ are adjacent in $P$. For every $1 \leq i \leq p$ and every $1 \leq j \leq k$, we add edges between $y^i$ and every vertex of $V_j$ if and only if $u^i$ and $v^j$ are adjacent in $P$.
        \item We equip $G_P$ with a total ordering $\tau_P$ whose restriction to $G_Q$ equals $\tau_Q$. For every $1 \leq i < j \leq p$, $\tau_P(y^i) < \tau_P(y^j)$. For every $1 \leq i \leq p$ and every $1 \leq j \leq k$, $y^i$ must be ordered before (after, resp.) every vertex of $V_j$ if and only if $u^i$ is ordered before (after, resp.) $v^j$ in $P$.
    \end{enumerate}
    Recall that if $G$ has a $k$-clique, then there is an ordered subgraph of $G_Q$ that realizes $Q$.
    By construction, we can extend the latter with some vertices of $y^1,y^2,\ldots,y^p$ in order to obtain an ordered subgraph of $G_P$ that realizes $P$.
    Conversely, assume the existence of an ordered subgraph of $G_P$ that realizes $P$. 
    Let $z^1,z^2,\ldots,z^k$ be the $k$ vertices of this ordered subgraph that realize $Q$.
    Note that for every $1 \leq j \leq k$, we must have $\# V_j \cap \{z^1,z^2,\ldots,z^k\} \leq 1$ because vertices in this subset must appear consecutively on the $k$-path and $V_j$ is an independent set. Then, let $a^1,a^2,\ldots,a^k$ be satisfying:
    $$\begin{cases}
    a^i = v^j \ \text{if} \ z^i \in V_j \\
    a^i = u^j \ \text{if} \ z^i = y^j
    \end{cases}$$
    For every $a^i,a^{i+1}$ such that $a^i = v^j$ and $a^{i+1} = v^{j+q}$ for some $q \geq 2$, we further add the sequence $v^{j+1},v^{j+2},\ldots,v^{j+q-1}$ between $a^i$ and $a^{i+1}$.
    Doing so, we obtain a directed path with at least $k$ vertices in $\overrightarrow{Or}(P)$.
    Then, $a^i = v^i$ for every $1 \leq i \leq k$.
    This implies $z^1,z^2,\ldots,z^k \in V(G_Q)$.
    Therefore, $G$ contains a $k$-clique.
     \qed
\end{proof}

Note that if a pattern $P$ contains a clique on $k$ nodes, then in $\overrightarrow{Or}(P)$ there exists a directed path on $k$ vertices.
Nevertheless, such a directed path may exist even if there is no $k$-clique.
For instance, if $H$ denotes either the diamond, the paw or $C_4$, and we assume Hypothesis~\ref{hyp:k-clique}, then according to Proposition~\ref{prop:k-path}, for every $H$-based pattern $P_H$ where the vertices are ordered according to some Hamiltonian path, the $P$-{\sc Detection} problem requires $n^{\omega(2,1,1)-o(1)}$ time.
The latter improves on Corollaries~\ref{cor:red-k-clique-4node} and~\ref{cor:c4}, respectively.

\medskip
We present another application of Proposition~\ref{prop:k-path} to some $P_4$-based patterns.
Recall that we can recognize $P_4$-free graphs (a.k.a., cographs) in linear time~\cite{CorneilPS85}.
Therefore, for the $P_4$-based patterns no meaningful lower bound can be derived from Proposition~\ref{prop:red-isi}.
We prove in this subsection and Sec.~\ref{sec:lb-k=4} that the existence of a subquadratic-time algorithm for $P$-{\sc Detection} is unlikely, even for $P_4$-based patterns.
There are four orientations of $P_4$, each being associated to one or more $P_4$-based patterns.
These four orientations are listed in Fig.~\ref{fig:p4-oriente}.
In this subsection, we address all these orientations but $\overrightarrow{P}(1,1,1)$.

\begin{figure}[!h]
    \centering
    \begin{tikzpicture}

        \node[circle,fill=black,inner sep=0pt,minimum size=2pt, label=above:{$v_1$}] at (-4,0) {};
        \node[circle,fill=black,inner sep=0pt,minimum size=2pt, label=above:{$v_2$}] at (-3,0) {};
        \node[circle,fill=black,inner sep=0pt,minimum size=2pt, label=above:{$v_3$}] at (-2,0) {};
        \node[circle,fill=black,inner sep=0pt,minimum size=2pt, label=above:{$v_4$}] at (-1,0) {};

        \draw[-stealth,thick] (-4,0) -- (-3,0);
        \draw[-stealth,thick] (-3,0) -- (-2,0);
        \draw[-stealth,thick] (-2,0) -- (-1,0);

        \node at (-2.5,-.5) {$\overrightarrow{P_4}$};

         \node[circle,fill=black,inner sep=0pt,minimum size=2pt, label=above:{$v_1$}] at (2,0) {};
        \node[circle,fill=black,inner sep=0pt,minimum size=2pt, label=above:{$v_2$}] at (3,0) {};
        \node[circle,fill=black,inner sep=0pt,minimum size=2pt, label=above:{$v_3$}] at (4,0) {};
        \node[circle,fill=black,inner sep=0pt,minimum size=2pt, label=above:{$v_4$}] at (5,0) {};

        \draw[-stealth,thick] (2,0) -- (3,0);
        \draw[stealth-,thick] (3,0) -- (4,0);
        \draw[-stealth,thick] (4,0) -- (5,0);

        \node at (3.5,-.5) {$\overrightarrow{P}(1,1,1)$};

        \node[circle,fill=black,inner sep=0pt,minimum size=2pt, label=above:{$v_1$}] at (-4,-2) {};
        \node[circle,fill=black,inner sep=0pt,minimum size=2pt, label=above:{$v_2$}] at (-3,-2) {};
        \node[circle,fill=black,inner sep=0pt,minimum size=2pt, label=above:{$v_3$}] at (-2,-2) {};
        \node[circle,fill=black,inner sep=0pt,minimum size=2pt, label=above:{$v_4$}] at (-1,-2) {};

        \draw[stealth-,thick] (-4,-2) -- (-3,-2);
        \draw[stealth-,thick] (-3,-2) -- (-2,-2);
        \draw[-stealth,thick] (-2,-2) -- (-1,-2);

        \node at (-2.5,-2.5) {$\overrightarrow{P}(2,1)$};

         \node[circle,fill=black,inner sep=0pt,minimum size=2pt, label=above:{$v_1$}] at (2,-2) {};
        \node[circle,fill=black,inner sep=0pt,minimum size=2pt, label=above:{$v_2$}] at (3,-2) {};
        \node[circle,fill=black,inner sep=0pt,minimum size=2pt, label=above:{$v_3$}] at (4,-2) {};
        \node[circle,fill=black,inner sep=0pt,minimum size=2pt, label=above:{$v_4$}] at (5,-2) {};

        \draw[-stealth,thick] (2,-2) -- (3,-2);
        \draw[stealth-,thick] (3,-2) -- (4,-2);
        \draw[stealth-,thick] (4,-2) -- (5,-2);

        \node at (3.5,-2.5) {$\overrightarrow{P}(1,2)$};
        
    \end{tikzpicture}
    \caption{The four orientations of $P_4$.}
    \label{fig:p4-oriente}
\end{figure}
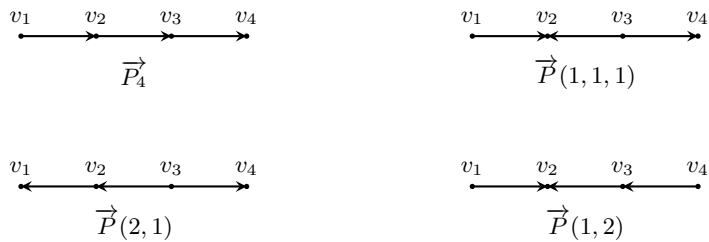

\begin{corollary}\label{cor:p4-1}
    Assuming Hypothesis~\ref{hyp:k-clique} for $k=4$, for the unique pattern $P$ that is associated with $\overrightarrow{P_4}$, $P$-{\sc Detection} requires $n^{\omega(2,1,1)-o(1)}$ time.
\end{corollary}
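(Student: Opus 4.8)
The plan is to derive this corollary directly from Proposition~\ref{prop:k-path} applied with $k = 4$. First I would pin down the pattern $P$ exactly. The orientation $\overrightarrow{P_4}$ is $v_1 \to v_2 \to v_3 \to v_4$, whose unique topological ordering is $v_1 < v_2 < v_3 < v_4$ (recall that a directed acyclic graph has a unique topological ordering precisely when it admits a Hamiltonian directed path, which holds here). Since $P$ is required to be fully specified, it is then completely determined: its vertices in order $1 < 2 < 3 < 4$, with mandatory edges $\{12, 23, 34\}$ and forbidden edges $\{13, 14, 24\}$. This is the unique $P_4$-based pattern with $\overrightarrow{Or}(P) = \overrightarrow{P_4}$, which justifies the word ``unique'' in the statement.

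Next I would verify the hypothesis of Proposition~\ref{prop:k-path}, namely that $\overrightarrow{Or}(P) = \overrightarrow{P_4}$ contains a \emph{unique} directed path on four vertices. This is immediate: the only arcs are $v_1v_2$, $v_2v_3$, $v_3v_4$, so any directed path is a subpath of $v_1 \to v_2 \to v_3 \to v_4$, and the only directed path using all four vertices is the path itself. Applying Proposition~\ref{prop:k-path} with $k = 4$ then yields an ${\cal O}(n^2)$-time reduction from $4$-{\sc Clique Detection} on $n$-vertex graphs to $P$-{\sc Detection} on ${\cal O}(n)$-vertex ordered graphs (here all four vertices of $P$ lie on the directed path, so in the notation of that proof $p = 0$ and $G_P = G_Q$ has exactly $4n$ vertices). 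Combined with Hypothesis~\ref{hyp:k-clique} for $k = 4$, this gives a lower bound of $n^{\omega\left(\lfloor 4/3 \rfloor,\, \lceil 4/3 \rceil,\, \lceil 3/3 \rceil\right) - o(1)} = n^{\omega(1,2,1) - o(1)}$.

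Finally, to match the form stated in the corollary I would invoke the symmetry of the matrix-multiplication exponent: $\omega(p,q,r)$ is invariant under permutations of $(p,q,r)$ (transposing the factors exchanges two of the arguments, and reading the product in the other order exchanges the remaining pair), so $\omega(1,2,1) = \omega(2,1,1)$, giving the claimed $n^{\omega(2,1,1) - o(1)}$ bound. There is essentially no obstacle here: the only points needing care are confirming that the directed $4$-path of $\overrightarrow{P_4}$ is unique (so that Proposition~\ref{prop:k-path} genuinely applies) and the routine bookkeeping that the reduction outputs an ordered graph on linearly many vertices; both are straightforward.
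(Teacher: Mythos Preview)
Your proposal is correct and follows exactly the same approach as the paper: the paper's proof is the single sentence ``The result follows from Proposition~\ref{prop:k-path} and the existence of a unique directed $4$-path in $\overrightarrow{P_4}$,'' and you have simply unpacked that sentence with the explicit computation of $\omega(\lfloor 4/3\rfloor,\lceil 4/3\rceil,\lceil 3/3\rceil)=\omega(1,2,1)=\omega(2,1,1)$ and the verification of uniqueness.
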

\begin{proof}
    The result follows from Proposition~\ref{prop:k-path} and the existence of a unique directed $4$-path in $\overrightarrow{P_4}$.
     \qed
\end{proof}

\begin{corollary}\label{cor:p4-2}
    Assuming Hypothesis~\ref{hyp:comb-triangle}, for every $P_4$-based pattern $P$ that is associated with either $\overrightarrow{P}(2,1)$ or $\overrightarrow{P}(1,2)$, every combinatorial $P$-{\sc Detection} algorithm requires at least $n^{3-o(1)}$ time.
\end{corollary}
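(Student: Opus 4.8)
The plan is to invoke Proposition~\ref{prop:k-path} with $k=3$. Recall that any $P_4$-based pattern $P$ is fully specified and of constant size, and that if $P$ is associated with either $\overrightarrow{P}(2,1)$ or $\overrightarrow{P}(1,2)$ then, by definition, $\overrightarrow{Or}(P)$ equals the corresponding orientation of $P_4$ from Figure~\ref{fig:p4-oriente}. So it suffices to check that each of these two oriented graphs contains a \emph{unique} directed path on three vertices; the claimed $n^{3-o(1)}$ lower bound for combinatorial algorithms then follows immediately from the second ``in particular'' clause of Proposition~\ref{prop:k-path}, which applies as soon as $k \geq 3$.

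First I would examine $\overrightarrow{P}(2,1)$. Writing the underlying $P_4$ as $v_1 - v_2 - v_3 - v_4$, this orientation has arcs $v_2 \to v_1$, $v_3 \to v_2$ and $v_3 \to v_4$, so $v_3$ is the unique source and $v_1, v_4$ are sinks. A directed path on three vertices is a pair of consecutive arcs $a \to b \to c$; since the only vertex with both an incoming and an outgoing arc is $v_2$ (incoming from $v_3$, outgoing to $v_1$), the only such path is $v_3 \to v_2 \to v_1$, and $\overrightarrow{P}(2,1)$ indeed has a unique directed $3$-path. Symmetrically, $\overrightarrow{P}(1,2)$ has arcs $v_1 \to v_2$, $v_3 \to v_2$ and $v_4 \to v_3$, whose only vertex with both an incoming and an outgoing arc is $v_3$; hence $v_4 \to v_3 \to v_2$ is its unique directed $3$-path. (Alternatively, $\overrightarrow{P}(1,2)$ is isomorphic to the arc-reversal of $\overrightarrow{P}(2,1)$, so this second case also follows from the first by reversing all arcs, which preserves the number of directed paths of each length.)

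With uniqueness of the directed $3$-path established in both cases, Proposition~\ref{prop:k-path} yields an $O(n^2)$-time reduction from triangle detection on $n$-vertex graphs to $P$-{\sc Detection} on $O(n)$-vertex ordered graphs. This reduction is purely combinatorial, since it only builds an auxiliary ordered graph and then calls the assumed $P$-{\sc Detection} algorithm; therefore a combinatorial $P$-{\sc Detection} algorithm running in $O(n^{3-\varepsilon})$ time for some $\varepsilon>0$ would give a combinatorial triangle-detection algorithm running in $O(n^2) + O(n^{3-\varepsilon}) = O(n^{3-\varepsilon})$ time, contradicting Hypothesis~\ref{hyp:comb-triangle}. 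I do not expect a real obstacle here: the entire argument reduces to a finite case check on two four-vertex orientations, and every pattern associated with these orientations is automatically covered since $\overrightarrow{Or}(P)$ depends only on the orientation and not on the placement of the forbidden edges. The one point to state carefully is exactly this combinatorial enumeration of directed $3$-paths, so that the uniqueness hypothesis of Proposition~\ref{prop:k-path} is unambiguously verified.
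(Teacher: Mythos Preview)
Your proposal is correct and follows the same approach as the paper: invoke Proposition~\ref{prop:k-path} for $k=3$ after verifying that each of $\overrightarrow{P}(2,1)$ and $\overrightarrow{P}(1,2)$ contains a unique directed $3$-path. The paper's own proof is a one-line appeal to Proposition~\ref{prop:k-path} and this uniqueness, whereas you spell out the in/out-degree check explicitly; your extra paragraph about the reduction being combinatorial is already built into the ``in particular'' clause of Proposition~\ref{prop:k-path}, so it is harmless but not needed.
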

\begin{proof}
    The result follows from Proposition~\ref{prop:k-path} and the existence of a unique directed $3$-path in $\overrightarrow{P}(2,1)$ and $\overrightarrow{P}(1,2)$.
     \qed
\end{proof}

Finally, the {\em complement} $\overline{P}$ of a pattern $P$ is obtained by replacing every mandatory (forbidden, resp.) edge  of $P$ by a forbidden (mandatory, resp.) of $\overline{P}$.
If an ordering $\tau$ on a graph $G$ avoids $P$, then $\tau$ considered as an ordering on $\overline{G}$ avoids $\overline{P}$.
This implies a trivial ${\cal O}(n^2)$-time reduction from $P$-{\sc Detection} on an $n$-vertex ordered graph to $\overline{P}$-{\sc Detection} on an $n$-vertex ordered graph.
In particular, Proposition~\ref{prop:k-path} also holds for patterns $P$ such that there exists a unique directed $k$-path in $\overrightarrow{Or}(\overline{P})$.

\subsection{Patterns of size four}\label{sec:lb-k=4}

We complete Sec.~\ref{sec:lb:iso} and~\ref{sec:lb:dag} with conditional lower bounds for every fully specified pattern on four nodes.
Let $H$ be an arbitrary four-vertex graph.

\smallskip
The cases $H = K_4$ or $H=\overline{K_4}$ are covered in Corollary~\ref{cor:k-clique-pattern}.

\smallskip
The cases $H$ is the diamond, the paw or $K_3+K_1$ are covered in Corollary~\ref{cor:red-k-clique-4node}.
The cases $H$ is either $K_2+2K_1$, the co-paw, or the claw are covered in Corollary~\ref{cor:complement-4node}.

\smallskip
The case $H=C_4$ is covered in Corollary~\ref{cor:c4}.
We now address the special case of $2K_2$-based patterns.
There are three such patterns, see Fig.~\ref{fig:2k2}.

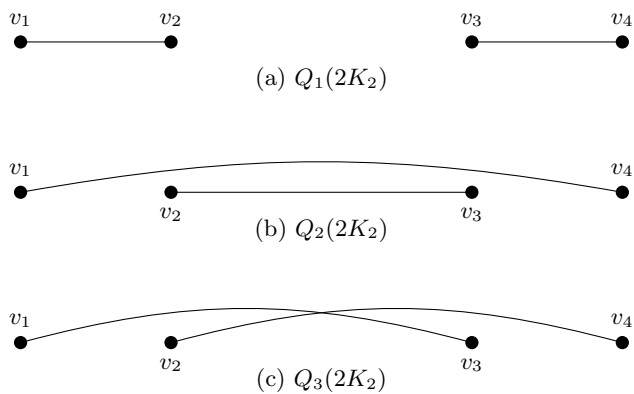
\begin{figure}[!h]
    \centering
    \begin{tikzpicture}

        \node[circle,fill=black,inner sep=0pt,minimum size=5pt, label=above:{$v_1$}] at (-4,0) {};
        \node[circle,fill=black,inner sep=0pt,minimum size=5pt, label=above:{$v_2$}] at (-2,0) {};
        \node[circle,fill=black,inner sep=0pt,minimum size=5pt, label=above:{$v_3$}] at (2,0) {};
        \node[circle,fill=black,inner sep=0pt,minimum size=5pt, label=above:{$v_4$}] at (4,0) {};

        \draw (-4,0) -- (-2,0); \draw (2,0) -- (4,0);

        \node at (0,-.5) {(a) $Q_1(2K_2)$};

        \node[circle,fill=black,inner sep=0pt,minimum size=5pt, label=above:{$v_1$}] at (-4,-2) {};
        \node[circle,fill=black,inner sep=0pt,minimum size=5pt, label=below:{$v_2$}] at (-2,-2) {};
        \node[circle,fill=black,inner sep=0pt,minimum size=5pt, label=below:{$v_3$}] at (2,-2) {};
        \node[circle,fill=black,inner sep=0pt,minimum size=5pt, label=above:{$v_4$}] at (4,-2) {};

        \draw (-4,-2) to[bend left = 10] (4,-2); \draw (-2,-2) -- (2,-2);

        \node at (0,-2.5) {(b) $Q_2(2K_2)$};

        \node[circle,fill=black,inner sep=0pt,minimum size=5pt, label=above:{$v_1$}] at (-4,-4) {};
        \node[circle,fill=black,inner sep=0pt,minimum size=5pt, label=below:{$v_2$}] at (-2,-4) {};
        \node[circle,fill=black,inner sep=0pt,minimum size=5pt, label=below:{$v_3$}] at (2,-4) {};
        \node[circle,fill=black,inner sep=0pt,minimum size=5pt, label=above:{$v_4$}] at (4,-4) {};

        \draw (-4,-4) to[bend left = 15] (2,-4); \draw (-2,-4) to[bend left = 15] (4,-4);

        \node at (0,-4.5) {(c) $Q_3(2K_2)$};

    \end{tikzpicture}
    \caption{All $2K_2$-based patterns (For ease of drawing, forbidden edges are omitted).}
    \label{fig:2k2}
\end{figure}

\begin{proposition}\label{lem:lower-bound-2k2}
    Let $P$ be some $2K_2$-based pattern.
    \begin{itemize}
        \item If $P=Q_3(2K_2)$, then assuming Hypothesis~\ref{hyp:k-clique} for $k=4$, $P$-{\sc Detection} requires at least $n^{\omega(2,1,1)-o(1)}$ time.
        \item Otherwise, assuming Hypothesis~\ref{hyp:comb-triangle}, every combinatorial $P$-{\sc Detection} algorithm requires at least $n^{3-o(1)}$ time.  
    \end{itemize}
\end{proposition}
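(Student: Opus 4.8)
The plan is to split the three $2K_2$-based patterns into two groups: $Q_3(2K_2)$, which I would settle by a clean reduction from $4$-{\sc Clique Detection} routed through the complement, and $Q_1(2K_2),Q_2(2K_2)$, which I would settle by direct (ad hoc) reductions from {\sc Triangle Detection}.

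For $Q_3(2K_2)$ — the ``crossing'' arrangement with mandatory edges $v_1v_3$ and $v_2v_4$ — I would pass to the complement $\overline{Q_3(2K_2)}$. Its mandatory edges are exactly $v_1v_2,v_2v_3,v_3v_4,v_1v_4$, so they induce a $C_4$, and the associated orientation $\overrightarrow{Or}(\overline{Q_3(2K_2)})$ has the arcs $v_1\to v_2$, $v_2\to v_3$, $v_3\to v_4$, $v_1\to v_4$. Its only source is $v_1$ and its only sink is $v_4$, and $v_4$ has no out-arc, so $v_1\to v_2\to v_3\to v_4$ is the \emph{unique} directed path on four vertices. By the complement version of Proposition~\ref{prop:k-path} (the last remark of Subsection~\ref{sec:lb:dag}), this yields an ${\cal O}(n^2)$-time reduction from $4$-{\sc Clique Detection} to $Q_3(2K_2)$-{\sc Detection}; assuming Hypothesis~\ref{hyp:k-clique} for $k=4$ we get the claimed $n^{\omega(2,1,1)-o(1)}$ lower bound, using that $\omega(\lfloor 4/3\rfloor,\lceil 4/3\rceil,\lceil 3/3\rceil)=\omega(1,2,1)=\omega(2,1,1)$.

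For $Q_1(2K_2)$ and $Q_2(2K_2)$ the directed-path machinery is useless — in both patterns, and in both their complements, the associated orientation has either no directed $3$-path or several of them — so I would build an ${\cal O}(n^2)$-time reduction from {\sc Triangle Detection} by hand. Given $G=(V,E)$, I would form an ordered graph $G'$ whose vertex set is four consecutive copies $L_1<L_2<L_3<L_4$ of $V$ (hence $|V(G')|=4n$), where a realization is meant to place the triangle's three vertices in $L_1,L_2,L_3$ and a repeat of one of them in $L_4$, with pattern position $i$ sitting in $L_i$. The between-layer adjacencies are chosen so that six of the constraints imposed by the pattern decode as follows: three pattern pairs become ``these two $G$-vertices are adjacent'' (producing the three triangle edges), the pattern pair joining the two ``copy'' layers $L_1,L_4$ becomes ``these two $G$-vertices coincide'' (so that the repeated position is honoured), and the remaining two pattern pairs become redundant adjacency constraints that are automatically satisfied. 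Concretely, a mandatory pattern edge between two layers is present in $G'$ iff the corresponding $G$-vertices are adjacent, while a forbidden pattern edge is absent in $G'$ iff the corresponding $G$-vertices are distinct and adjacent (i.e.\ the $G'$-edge is added iff the $G$-vertices are equal or non-adjacent); the single ``coincidence'' pair is handled by adding the $G'$-edge iff the two $G$-vertices differ. A short check then shows that a realization forces the three layer-$\{1,2,3\}$ vertices to be pairwise distinct and pairwise adjacent, i.e.\ to form a triangle, and conversely every triangle of $G$ gives a realization; only the assignment of which mandatory edge is ``outer'' versus ``inner'' changes between the disjoint arrangement $Q_1(2K_2)$ and the nested arrangement $Q_2(2K_2)$.

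The main obstacle is \emph{not} this correspondence but ruling out \emph{degenerate} realizations, where the four pattern vertices fail to occupy one layer each — two of them land in the same layer, or some layer is skipped. Since induced $2K_2$, and the $K_{2,2}$'s that naturally appear between two layers, are not believed to require superquadratic time, such spurious realizations would break the reduction if left unchecked. I would eliminate them by giving each layer a carefully chosen internal structure (alternately an independent set or a clique, so that a pattern pair carrying a mandatory edge can never fit inside an independent layer and a pattern pair carrying a forbidden edge can never fit inside a clique layer) together with a constant number of guard vertices placed at the layer boundaries and made adjacent or non-adjacent to whole layers, so that every residual degenerate placement contradicts one of the pattern's edge/non-edge requirements. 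Verifying this is a finite case analysis over the ways $\{1,2,3,4\}$ can be distributed among the four layers, and I expect this bookkeeping — rather than any single idea — to be the delicate part of the write-up. Because the whole construction runs in ${\cal O}(n^2)$ time, a hypothetical combinatorial $O(n^{3-\varepsilon})$ algorithm for $Q_1(2K_2)$- or $Q_2(2K_2)$-{\sc Detection} would yield a combinatorial subcubic algorithm for {\sc Triangle Detection}, contradicting Hypothesis~\ref{hyp:comb-triangle} and giving the stated $n^{3-o(1)}$ bound.
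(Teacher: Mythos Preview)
Your proposal is correct and follows essentially the same approach as the paper. The $Q_3(2K_2)$ case is identical: the paper also passes to the complement, observes that $\overrightarrow{Or}(\overline{Q_3(2K_2)})$ has a unique directed $4$-path, and invokes Proposition~\ref{prop:k-path}. For $Q_1(2K_2)$ and $Q_2(2K_2)$ the paper also builds ad hoc ${\cal O}(n^2)$-time reductions from {\sc Triangle Detection} via layered copies of $V$ with alternating clique/independent-set internal structure to kill degenerate placements; the only difference is in the specific gadgets. For $Q_1$ the paper uses just three copies $V_1,V_2,V_3$ plus a single guard vertex $u$ (rather than four copies with a repeat in the last layer), and for $Q_2$ it uses four copies $V_1,V_2,V_3,V_4$ but places the triangle as $x_1,y_2,y_3,z_4$ with the \emph{middle} vertex repeated (rather than repeating an endpoint in the last layer). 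These choices make the case analysis for degenerate realizations a bit shorter, but your four-layer scheme with the coincidence encoded on the $(L_1,L_4)$ pair works as well once the bookkeeping you flag is carried out.
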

\begin{proof}
    If $P=Q_3(2K_2)$, then there exists a unique directed $4$-path in $\overrightarrow{Or}(\overline{P})$, and the result follows from Proposition~\ref{prop:k-path}.
    Otherwise, the result follows from Lemmas~\ref{lem:q1-2k2} and~\ref{lem:q2-2k2}, respectively.
     \qed
\end{proof}

Our reductions in what follows are similar for both patterns considered.

\begin{lemma}\label{lem:q1-2k2}
    There is an ${\cal O}(n^2)$-time reduction from {\sc Triangle Detection} to $Q_1(2K_2)$-{\sc Detection} on an ordered graph with ${\cal O}(n)$ vertices.    
\end{lemma}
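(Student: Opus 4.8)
The plan is to take an instance $G=(V,E)$ of {\sc Triangle Detection} on $n$ vertices and build an ordered graph $(G',\tau)$ on $O(n)$ vertices that contains $Q_1(2K_2)$ if and only if $G$ has a triangle. Recall $Q_1(2K_2)$ is the pattern on four ordered vertices $v_1<v_2<v_3<v_4$ whose mandatory edges are $v_1v_2$ and $v_3v_4$, and whose two ``crossing'' pairs $v_1v_3,v_1v_4,v_2v_3,v_2v_4$ are forbidden (the vertices $v_1v_2$ form one $K_2$, sitting entirely to the left, and $v_3v_4$ the other). So realizing this pattern means: find $a<b<c<d$ with $ab\in E(G')$, $cd\in E(G')$, and no edge between $\{a,b\}$ and $\{c,d\}$. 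The key idea is that a triangle $xyz$ of $G$ gives two disjoint edges $xy$ and (a copy of) $xz$ that should be placed so the left one is $\{a,b\}$ and the right one is $\{c,d\}$, while the ``no edges in between'' constraint should exactly encode that the endpoints close up into a triangle — i.e., that the right copy is built on the \emph{same} $x$ and that $yz\notin E(G)$ would be \emph{forced} to fail, so we want $yz\in E(G)$.

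Concretely, I would use three blocks of vertices, ordered left to right as $A$, then $B$, then $C$, each a disjoint copy of $V$; write $a\in A$, $b\in B$, $c\in C$ for the copies of a vertex. Put the left mandatory edge inside $A\cup B$ and the right mandatory edge inside $B\cup C$, choosing adjacencies so that a realization $v_1\in A$, $v_2\in B$, $v_3\in B$, $v_4\in C$ is impossible unless the middle two coincide — which we cannot force directly, so instead I would make the left edge go from $A$ to $B$ and the right edge from $B$ to $C$, and arrange that in any realization $v_1\in A$, $v_2\in B$, $v_3\in B$, $v_4\in C$ with $v_2=$ copy of the same vertex as $v_3$ via a gadget that forbids an edge between $v_2$ and $v_3$ unless they are copies of the same $G$-vertex. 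Precisely: for $x,y\in V$ put an edge $x_A y_B$ iff $xy\in E(G)$; put an edge $y_B z_C$ iff $yz\in E(G)$; put an edge $y_B y'_B$ between two $B$-copies iff $y\neq y'$ (so that $v_2,v_3\in B$ forces $v_2,v_3$ to be the same vertex, since a $Q_1(2K_2)$ realization needs \emph{no} edge $v_2v_3$); put no edges between $A$ and $C$ and no edges $A$–$A$, $C$–$C$, except we must also forbid $v_1v_4$, i.e.\ no edge between the $A$-copy $x_A$ and the $C$-copy $z_C$ — which we already did. Then a realization is $x_A, y_B, y_B, z_C$ with $xy\in E$, $yz\in E$, no edge $x_A z_C$ (automatic) and — crucially — we still need no edge between $\{v_1,v_2\}=\{x_A,y_B\}$ and $\{v_3,v_4\}=\{y_B,z_C\}$, of which the only nontrivial one left is $x_A z_C$ (none, good) and $y_B$ to $z_C$, but that is the mandatory edge $v_3v_4$, so there is no conflict. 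This seems to give exactly ``$\exists$ $y$ with $xy,yz\in E$'' which is not yet a triangle; so I would add a \emph{third} crossing constraint by using the forbidden pair $v_1v_3$: make $x_A$ adjacent to $y'_B$ iff $xy'\notin E(G)$ — no wait, this over-constrains. The fix I expect to work is to route the triangle through three blocks with the edge $x_A\,z_C$ being the \emph{decisive} forbidden pair: set $x_A z_C\in E(G')$ iff $xz\notin E(G)$, so that forbidding $v_1v_4$ forces $xz\in E(G)$, giving all three edges $xy,yz,xz$ and hence a triangle. I would then double-check all the remaining forbidden pairs $v_1v_3$ and $v_2v_4$ are handled (e.g. by also setting $x_A\,y_B$-type non-edges, or by verifying the pattern $Q_1(2K_2)$ only forbids $v_1v_3,v_1v_4,v_2v_3,v_2v_4$ and adjusting which block-pairs carry the ``$\notin E$'' encoding), and verify the converse: any $Q_1(2K_2)$ in $(G',\tau)$ has its two edges split as left-in-$A\cup B$ and right-in-$B\cup C$ (forced by the ordering and the absence of $A$–$C$ and within-block edges except the $B$–$B$ gadget), then reading off $x,y,z$ recovers a triangle.

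The construction has $3n=O(n)$ vertices and is built in $O(n^2)$ time (filling in the adjacencies of $G'$), so an $o(n^{3-\varepsilon})$-time combinatorial algorithm for $Q_1(2K_2)$-{\sc Detection} would give an $o(n^{3-\varepsilon})$-time combinatorial algorithm for {\sc Triangle Detection}, contradicting Hypothesis~\ref{hyp:comb-triangle}.

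The main obstacle I anticipate is \emph{pinning down exactly which of the crossing pairs of $Q_1(2K_2)$ should carry which gadget} so that (i) the ``middle'' vertices $v_2,v_3$ are forced into the same block and then into being the same vertex of $G$, (ii) exactly one more forbidden pair is used to encode the third edge $xz\in E(G)$, and (iii) no realization can ``cheat'' by placing the four pattern vertices in an unintended block pattern (e.g. $v_1,v_2\in A$, or $v_1\in A, v_2\in B, v_3,v_4\in C$). Ruling out these stray cases is where the proof of the converse direction needs care, and is essentially the same style of case analysis as in Proposition~\ref{prop:k-path}; I expect that by making each block an independent set and carefully controlling the inter-block adjacency (only $A$–$B$, $B$–$C$, and the $B$–$B$ inequality gadget carry edges), the only surviving realization shape is $(A,B,B,C)$, which forces the triangle. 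The symmetric reduction for $Q_2(2K_2)$ would follow the same template with the mandatory edges and the ordering of the pattern's vertices rearranged accordingly (the ``nested'' rather than ``sequential'' placement), which is why the lemma for $Q_2(2K_2)$ is stated separately but proved analogously.
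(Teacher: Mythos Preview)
Your plan has a genuine gap at its core. A realization of a four-vertex pattern must use four \emph{distinct} vertices of the host graph, but your intended realization shape is ``$x_A,\,y_B,\,y_B,\,z_C$'', with $v_2=v_3=y_B$. Your gadget ``put an edge $y_B y'_B$ between two $B$-copies iff $y\neq y'$'' makes $B$ a clique on $n$ vertices; since $Q_1(2K_2)$ requires $v_2v_3$ to be a non-edge, no two \emph{distinct} vertices of $B$ can simultaneously play the roles of $v_2$ and $v_3$. So the intended $(A,B,B,C)$ placement is simply impossible in any ordered graph you build this way, and nothing downstream (complementing $A$--$C$ to encode the third edge, etc.) can rescue it. With three blocks, each a single copy of $V$, there is no mechanism to force two distinct ordered-graph vertices to ``be the same $G$-vertex''; you would need a fourth block (split $B$ into $B_1,B_2$ with $y_{B_1}y'_{B_2}\in E(G')$ iff $y\neq y'$), at which point you also have to rule out several new stray placements.

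The paper avoids the identification problem altogether by spending one pattern vertex on a single fixed vertex $u$: take $V(G')=\{u\}\cup V_1\cup V_2\cup V_3$, make $u$ adjacent exactly to $V_1$, make $V_1$ a clique and $V_2,V_3$ independent, let $V_2$--$V_3$ edges encode $E(G)$, and let $V_1$--$V_2$ and $V_1$--$V_3$ edges encode the \emph{complement} of $E(G)$ (plus the diagonal). Then in any realization the clique $\{u\}\cup V_1$ forces $v_3,v_4$ into $V_2\cup V_3$, hence $v_3=y_2$, $v_4=z_3$ with $yz\in E$; at least one of $v_1,v_2$ must be some $x_1\in V_1$, and the forbidden pairs $x_1y_2,x_1z_3$ then decode to $xy,xz\in E$. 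The single vertex $u$ is the missing idea: it absorbs one of the four pattern slots so that the remaining three can be matched one-to-one with the three triangle vertices, with no identification gadget needed.
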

\begin{proof}
    Let $G=(V,E)$ be an instance of {\sc Triangle Detection}.
    We construct an ordered graph $G'$ as follows:
    \begin{enumerate}
        \item $V(G') = \{u\} \cup V_1 \cup V_2 \cup V_3$, such that each $V_i$ is a disjoint copy of $V$. For every $v \in V$, we denote by $v_i$ its copy in $V_i$.
        \item $N_{G'}(u) = V_1$.
        \item $V_1$ is a clique and $V_2,V_3$ are independent sets.
        \item For every edge $vw \in E$, we add an edge $v_2w_3$.
        \item For every vertex $v \in V$ we add edges $v_1v_2, \ v_1v_3$.
        For every $w \in V \setminus \{v\}$ such that $v,w$ are nonadjacent, we also add edges $v_1w_2, \ v_1w_3$.
        \item Finally, we equip $G'$ with a total ordering $\tau'$ such that: vertex $u$ appears first, followed by all vertices of $V_1$, all vertices of $V_2$, then all vertices of $V_3$.
    \end{enumerate}
    This construction can be done in ${\cal O}(n^2)$ time.

    Assume first the existence of some triangle $xyz$ in $G$. Then, the ordered subgraph of $G'$ induced by $u,x_1,y_2,z_3$ realizes $Q_1(2K_2)$.
    Conversely, assume the existence of an ordered subgraph $T'$ that realizes $Q_1(2K_2)$.
    Since $\{u\} \cup V_1$ induces a clique, the last two vertices of $T'$ (w.r.t. the ordering $\tau'$) must be in $V_2 \cup V_3$.
    In particular, these two vertices must be $y_2,z_3$ for some adjacent vertices $y,z$ of $G$.
    Furthermore, since all vertices of $V_2$ are ordered before all vertices of $V_3$, the two first vertices of $T'$ cannot be also in $V_2 \cup V_3$.
    Hence, at least one such a vertex must be $x_1$, for some vertex $x$ of $G$.
    Since we assume that $T'$ realizes $Q_1(2K_2)$, vertex $x_1$ cannot be adjacent to either $y_2$ nor $z_3$.
    Then, in $G$, by construction vertex $x$ must be adjacent to both $y,z$.
    As a result, there is a triangle in $G$.
     \qed
\end{proof}

\begin{lemma}\label{lem:q2-2k2}
    There is an ${\cal O}(n^2)$-time reduction from {\sc Triangle Detection} to $Q_2(2K_2)$-{\sc Detection} on an ordered graph with ${\cal O}(n)$ vertices.
\end{lemma}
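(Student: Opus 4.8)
The plan is to mimic the reduction of Lemma~\ref{lem:q1-2k2}: from a triangle instance $G = (V,E)$ on $n$ vertices I would build an ordered graph $(G',\tau')$ on ${\cal O}(n)$ vertices so that $(G',\tau')$ contains $Q_2(2K_2)$ if and only if $G$ has a triangle. Take three disjoint copies $V_1,V_2,V_3$ of $V$ (write $v_i$ for the copy of $v$ in $V_i$) plus a fresh vertex $u$, ordered $V_1 < u < V_2 < V_3$, each $V_i$ inheriting the order of $V$. Make $V_1$ and $V_2$ cliques, $V_3$ an independent set, and $N_{G'}(u) = V_2$. Between the copies, put a \emph{direct} join between $V_1$ and $V_3$ ($v_1 \sim w_3 \iff vw \in E$) and a \emph{complemented join with diagonal} between $V_1$ and $V_2$ and between $V_2$ and $V_3$ ($v_i \sim w_j \iff v=w$ or $vw \notin E$). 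This construction takes ${\cal O}(n^2)$ time.

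The forward direction is immediate: if $\{x,y,z\}$ is a triangle of $G$, the ordered subgraph induced by $(x_1,u,y_2,z_3)$ realizes $Q_2(2K_2)$. Indeed $\{x_1,u\}$ and $\{u,z_3\}$ are non-edges because $u$ is only adjacent to $V_2$; $\{x_1,y_2\}$ and $\{y_2,z_3\}$ are non-edges because $x \neq y$, $xy \in E$ and $y \neq z$, $yz \in E$; $\{x_1,z_3\}$ is an edge because $xz \in E$; $\{u,y_2\}$ is an edge by the definition of $N_{G'}(u)$; and $x_1 < u < y_2 < z_3$.

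For the converse, suppose $(G',\tau')$ contains $Q_2(2K_2)$ on $a<b<c<d$, with $\{a,d\}$ and $\{b,c\}$ edges and the other four pairs non-edges. The first step is to locate $a,b,c,d$ among $V_1, u, V_2, V_3$. Since $V_1$ is a clique forming the first block of $\tau'$, at most one of the four vertices lies in $V_1$ (two of them in $V_1$ would be an edge-pair $\{a,d\}$ or $\{b,c\}$, and since $V_1$ is an interval of $\tau'$ this would force two more realization vertices into $V_1$ and hence make a non-edge-pair adjacent). A similar analysis with the clique $V_2$ and the independent set $V_3$ leaves exactly three admissible distributions: $(V_1,u,V_2,V_3)$, $(V_1,V_2,V_2,V_3)$, and $(V_1,V_2,V_3,V_3)$; every other placement either makes a required edge fall inside an independent set, or places two realization vertices inside a clique as a non-edge-pair, or makes $u$ adjacent to a vertex it must be non-adjacent to. For each of the three admissible distributions I would then read off the constraints attached to the non-edges $\{a,b\},\{a,c\}$ (or $\{c,d\}$) and to the edge $\{a,d\}$ through the direct / complemented joins, and check that they pin down three pairwise distinct, pairwise adjacent vertices of $G$; for instance, in the first distribution $\{a,b\}$ being a $V_1$--$V_2$ non-edge, $\{a,d\}$ a $V_1$--$V_3$ edge, and $\{c,d\}$ a $V_2$--$V_3$ non-edge yield, respectively, $xy \in E$, $xz \in E$, $yz \in E$ with $x,y,z$ distinct.

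The delicate point --- and the reason for the specific assignment of cliques/independent sets and of the direct join --- is ruling out spurious realizations. A more symmetric construction (e.g.\ all three copies independent, or the direct join between two independent copies) admits $Q_2$-realizations in which the two required edges are realized as two disjoint edges of $G$, i.e.\ a $2K_2$ in $G$, or --- exploiting the diagonal of a complemented join --- as a $P_3$ in $G$; neither contains a triangle. Making $V_1$ and $V_2$ cliques, so that no non-edge-pair of $Q_2$ can be hosted inside $V_1$ or $V_2$, and putting the only direct join between $V_1$ and $V_3$, is precisely what kills those degenerate distributions; checking that it kills all of them (and not just the obvious ones) is the part of the proof that requires care.
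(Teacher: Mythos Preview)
Your construction works. A short case split on where $a$ and then $b$ can sit confirms that $(V_1,u,V_2,V_3)$, $(V_1,V_2,V_2,V_3)$ and $(V_1,V_2,V_3,V_3)$ are indeed the only placements, and in each of them the complemented-join non-edges together with the direct-join edge $\{a,d\}$ force a triangle in $G$ (in the third placement the triangle is on the pre-images of $a,b,d$, not of $a,b,c$). One slip in your write-up: in the first distribution $b=u$, so the ``$V_1$--$V_2$ non-edge'' that yields $xy\in E$ is $\{a,c\}$, not $\{a,b\}$.

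The paper's reduction is different. It takes four copies $V_1,\dots,V_4$ of $V$ and no auxiliary vertex, with $V_1,V_4$ cliques and $V_2,V_3$ independent, a direct $G$-join between $V_1$ and $V_4$, complemented-join-with-diagonal between $V_1$--$V_2$ and between $V_3$--$V_4$, only the perfect matching $\{v_2v_3:v\in V\}$ between $V_2$ and $V_3$, and crucially the \emph{same} internal ordering on each copy; a triangle $xyz$ embeds as $(x_1,y_2,y_3,z_4)$. The converse first argues that any realization must meet $V_1$ and then pins the four vertices to one per copy; the coordinated ordering of the copies is what kills realizations living entirely in $V_2\cup V_3\cup V_4$. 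Your three-copies-plus-$u$ gadget gives up the mirror symmetry of $Q_2(2K_2)$ and pays with one extra admissible placement to analyse, but in return it needs no coordination between the internal orderings of the $V_i$. Both yield an $\mathcal{O}(n^2)$-time reduction to an $\mathcal{O}(n)$-vertex ordered graph.
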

\begin{proof}
    Let $G=(V,E)$ be an instance of {\sc Triangle Detection}.
    We construct an ordered graph $G'$ as follows:
    \begin{enumerate}
        \item $V(G') = V_1 \cup V_2 \cup V_3 \cup V_4$, such that each $V_i$ is a disjoint copy of $V$. For every $v \in V$, we denote by $v_i$ its copy in $V_i$.
        \item $V_1,V_4$ are cliques and $V_2,V_3$ are independent sets.
        \item For every vertex $v \in V$, we add the edges $v_1v_2,v_2v_3,v_3v_4$. For every vertex $w \in V \setminus \{v\}$, we add an edge $v_1w_4$ if $v,w$ are adjacent, and we add edges $v_1w_2,v_3w_4$ if $v,w$ are nonadjacent.
        \item Finally, let $\tau$ be an arbitrary total ordering of $V$, and let $\tau_1,\tau_2,\tau_3,\tau_4$ denote the respective copies of this ordering for $V_1,V_2,V_3,V_4$. We equip $G'$ with a total ordering $\tau'$, which is just the concatenation of $\tau_1,\tau_2,\tau_3,\tau_4$.
    \end{enumerate}
    This construction can be done in ${\cal O}(n^2)$ time.

    Assume first the existence of some triangle $xyz$ in $G$. Then, the ordered subgraph of $G'$ induced by $x_1,y_2,y_3,z_4$ realizes $Q_2(2K_2)$.
    Conversely, assume the existence of an ordered subgraph $T'$ that realizes $Q_2(2K_2)$.
    We consider two different cases in what follows.
    \begin{itemize}
        \item {\it Case $T'$ contains a vertex of $V_1$}. In particular, the first vertex of $T'$ must be some vertex $x_1 \in V_1$. Since $V_1$ is a clique and $x_1$ must be nonadjacent to the second vertex of $T'$, all other three vertices of $T'$ must be in $V_2 \cup V_3 \cup V_4$. Since $V_2$ is an independent set and there is one edge between the second and third vertices of $T'$, not all three last vertices of $T'$ can be contained in $V_2$. In particular, the last vertex of $T'$ must be contained in $V_3 \cup V_4$. Since there is no edge between $V_1$ and $V_3$, and $x_1$ must be adjacent to the last vertex of $T'$, this last vertex must be some $z_4 \in V_4$. Since both $V_2,V_3$ are independent sets, it implies that the second and third vertices of $T'$ are some $y_2 \in V_2$ and $y_3' \in V_3$. Since $y_2,y_3'$ are adjacent, we must have $y = y'$. Since $x_1,z_4$ are adjacent, we must have $xz \in E$. Finally, since $x_1,y_2$ (resp., $y_3,z_4$) are nonadjacent, we must have $xy \in E$ (resp., $yz \in E$). As a result, $xyz$ is a triangle of $G$.
        \item Else, all four vertices of $T'$ must be contained in $V_2 \cup V_3 \cup V_4$. Since all vertices of $V_2$ are ordered before all vertices of $V_3$, and the total orderings $\tau_2,\tau_3$ of $V_2,V_3$ are the same, not all vertices of $T'$ can be contained in $V_2 \cup V_3$ (otherwise, $T'$ could not possibly realize $Q_2(2K_2)$). In particular, the last vertex of $T'$ must be some $z_4 \in V_4$. Recall that $z_4$ must be nonadjacent to the third vertex of $T'$, and so, all other three vertices of $T'$ must be contained in $V_2 \cup V_3$. Furthermore, since there is no edge between $V_2,V_4$, the first vertex of $T'$ cannot be contained in $V_2$. As a result, all other three vertices of $T'$ must be contained in $V_3$. However, the second and third vertices of $T'$ must be adjacent, which contradicts that $V_3$ is an independent set.
    \end{itemize}
    Therefore, only the first case can happen, and it implies the existence of some triangle in $G$.
     \qed
\end{proof}

Finally, we consider the odd case of $P_4$-based patterns $Q$.
The case $\overrightarrow{Or}(Q) = \overrightarrow{P_4}$ is covered in Corollary~\ref{cor:p4-1}.
The cases $\overrightarrow{Or}(Q) = \overrightarrow{P}(2,1)$ and $\overrightarrow{Or}(Q) = \overrightarrow{P}(1,2)$ are covered in Corollary~\ref{cor:p4-2}.
As a result, we are left considering the case $\overrightarrow{Or}(Q) = \overrightarrow{P}(1,1,1)$.
The latter orientation has five different topological orderings (and so, five different patterns are associated with it).
In what follows, the labeling of the nodes of the pattern is the one given in Fig.~\ref{fig:p4-oriente}.

\begin{proposition}
        Let $Q$ be a $P_4$-based pattern that is associated with $\overrightarrow{P}(1,1,1)$.
    \begin{enumerate}
        \item If the vertices of $Q$ are ordered as $v_3,v_1,v_4,v_2$, then assuming Hypothesis~\ref{hyp:k-clique} (for $k=4$), every $Q$-{\sc Detection} algorithm requires $n^{\omega(2,1,1)-o(1)}$ time.
        \item Otherwise, assuming Hypothesis~\ref{hyp:comb-triangle}, every combinatorial algorithm for $Q$-{\sc Detection} requires at least $n^{3-o(1)}$ time.
    \end{enumerate}
\end{proposition}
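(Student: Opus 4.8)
The plan is to enumerate the five topological orderings of the orientation $\overrightarrow{P}(1,1,1)=(v_1\to v_2\leftarrow v_3\to v_4)$, namely $v_3v_1v_2v_4$, $v_3v_1v_4v_2$, $v_3v_4v_1v_2$, $v_1v_3v_2v_4$ and $v_1v_3v_4v_2$, and to treat each associated pattern $Q$ by looking at the orientation $\overrightarrow{Or}(\overline{Q})$ of its complement. This is legitimate because $P_4$ is self-complementary, so $\overline{Q}$ is again $P_4$-based, and by the remark closing Subsection~\ref{sec:lb:dag} there is an ${\cal O}(n^2)$-time reduction between $Q$-{\sc Detection} and $\overline{Q}$-{\sc Detection}; the mirror of a pattern has exactly the same detection complexity. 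Throughout I label the positions $1<2<3<4$ and record which position pairs carry the mandatory edges of $Q$ (the three remaining pairs being forbidden, since $Q$ is fully specified).

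For the ordering $v_3,v_1,v_4,v_2$ the mandatory pairs of $Q$ are $\{1,3\},\{1,4\},\{2,4\}$, hence those of $\overline{Q}$ are $\{1,2\},\{2,3\},\{3,4\}$; oriented from left to right these form the Hamiltonian directed path $1\to2\to3\to4$, the unique directed $4$-path of $\overrightarrow{Or}(\overline{Q})$. I would then invoke Proposition~\ref{prop:k-path} through the complement with $k=4$: it gives an ${\cal O}(n^2)$-time reduction from $4$-{\sc Clique Detection} to $Q$-{\sc Detection}, and under Hypothesis~\ref{hyp:k-clique} the lower bound $n^{\omega(\lfloor 4/3\rfloor,\lceil 4/3\rceil,\lceil 3/3\rceil)-o(1)}=n^{\omega(1,2,1)-o(1)}=n^{\omega(2,1,1)-o(1)}$, using the symmetry of the matrix-multiplication exponent; this is item~(1).

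For item~(2) I would first dispose of the orderings $v_3,v_1,v_2,v_4$ and $v_1,v_3,v_4,v_2$ in the same way with $k=3$: a short computation of position pairs shows that their complement patterns have mandatory pairs $\{1,2\},\{2,4\},\{3,4\}$ and $\{1,2\},\{1,3\},\{3,4\}$, each forming a path whose left-to-right orientation has a unique directed $3$-path ($1\to2\to4$ and $1\to3\to4$ respectively), so Proposition~\ref{prop:k-path} through the complement yields an ${\cal O}(n^2)$-time reduction from {\sc Triangle Detection}, whence the claimed $n^{3-o(1)}$ bound for combinatorial algorithms under Hypothesis~\ref{hyp:comb-triangle}. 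The two orderings $v_3,v_4,v_1,v_2$ and $v_1,v_3,v_2,v_4$ are the real difficulty: one checks that neither $\overrightarrow{Or}(Q)$ nor $\overrightarrow{Or}(\overline{Q})$ contains a directed $3$-path, so the path technique is useless; moreover the two patterns are complements of one another (and complementation changes the detection complexity by at most ${\cal O}(n^2)$), so it suffices to handle the one with mandatory pairs $\{1,2\},\{1,4\},\{3,4\}$ and forbidden pairs $\{1,3\},\{2,3\},\{2,4\}$.

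For this last pattern I would give an ad hoc reduction from {\sc Triangle Detection} in the spirit of Lemmas~\ref{lem:q1-2k2} and~\ref{lem:q2-2k2}. From $G=(V,E)$ I build an ordered graph $G'$ on $\{u\}\cup V_1\cup V_2\cup V_3$, each $V_i$ a disjoint copy of $V$, ordered $u<V_1<V_2<V_3$; I make $u$ adjacent exactly to $V_1\cup V_3$, make $V_1$ a clique and $V_2,V_3$ independent sets, join $v^{(1)}$ to $w^{(2)}$ and $v^{(1)}$ to $w^{(3)}$ precisely when $v,w$ are non-adjacent in $G$ (in particular when $v=w$), and join $v^{(2)}$ to $w^{(3)}$ precisely when $vw\in E$. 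If $xyz$ is a triangle of $G$, then $u,x^{(1)},y^{(2)},z^{(3)}$ realizes $Q$; conversely, for an arbitrary realization $a<b<c<d$ one argues from the constraints (the mandatory edges force $a$ adjacent to $b$ and $d$ but not to $c$, which forces $a=u$ or $a\in V_1$; the last layer $V_3$ is independent while positions $3$ and $4$ must be joined by a mandatory edge) that, after discarding the contradictory cases, one always recovers a triangle of $G$. The construction runs in ${\cal O}(n^2)$ time, which finishes item~(2). I expect the main obstacle to be precisely this last case analysis — certifying that $G'$ cannot realize $Q$ unless $G$ contains a triangle, i.e.\ ruling out ``spurious'' realizations — while the rest is routine bookkeeping with position pairs together with the (also routine, but needed) verification that for the two hard orderings no directed $3$-path exists in either orientation, so that a tailor-made reduction is genuinely required rather than Proposition~\ref{prop:k-path}.
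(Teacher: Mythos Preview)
Your treatment of item~(1) and of the orderings $v_3v_1v_2v_4$ and $v_1v_3v_4v_2$ is correct; in fact, routing these last two through Proposition~\ref{prop:k-path} applied to $\overline{Q}$ is cleaner than what the paper does (the paper builds ad hoc four-partite gadgets for them and only observes the mirror relation). You also correctly identify that the pair $(3,4,1,2)$ / $(1,3,2,4)$ is the genuinely hard one, that the two are complements, and that no directed $3$-path exists on either side.

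However, your ad hoc reduction for the pattern with mandatory pairs $\{1,2\},\{1,4\},\{3,4\}$ is wrong: it admits spurious realizations with $a\in V_1$ and $b\in V_2$. Take $G=P_3$ on vertices $1,2,3$ with edges $\{1,2\},\{2,3\}$ (triangle-free), and order each $V_i$ as $1<2<3$. Then $a=1^{(1)}$, $b=1^{(2)}$, $c=2^{(2)}$, $d=3^{(3)}$ realizes $Q$ in your $G'$: the mandatory edges $1^{(1)}1^{(2)}$, $1^{(1)}3^{(3)}$, $2^{(2)}3^{(3)}$ are all present (the first two because $1,1$ and $1,3$ are non-adjacent in $G$, the third because $23\in E$), and the forbidden pairs $1^{(1)}2^{(2)}$, $1^{(2)}2^{(2)}$, $1^{(2)}3^{(3)}$ are all non-edges (respectively because $12\in E$, because $V_2$ is independent, and because $13\notin E$). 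Making $V_2$ a clique does not help either: one then gets spurious realizations with $c,d$ both in $V_2$.

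The paper avoids this by using \emph{four} copies of $V$ (ordered $V_3,V_4,V_1,V_2$), with $V_4,V_1$ cliques, $V_3,V_2$ independent, a complete bipartite graph between $V_4$ and $V_1$ (minus the diagonal), and $V_3$--$V_4$, $V_3$--$V_2$, $V_1$--$V_2$ each encoding adjacency in $G$; the extra layer is precisely what blocks a vertex of the realization from ``sliding'' into the wrong copy. Your reduction needs a comparable mechanism; as written, the correctness direction you flagged as the main obstacle indeed fails.
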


\begin{proof}
In what follows, let $G=(V,E)$ be an arbitrary graph.
There are several cases to be considered. Each case depends on the total ordering $v_{\sigma(1)},v_{\sigma(2)},v_{\sigma(3)},v_{\sigma(4)}$ of the four vertices in $Q$.  

\smallskip
\underline{Case 1}: $\sigma=(3,1,4,2)$.
There exists a unique directed $4$-path in $\overrightarrow{Or}(\overline{Q})$.
Therefore, Proposition~\ref{prop:k-path} can be applied.

\smallskip
\underline{Case 2}: $\sigma = (3,4,1,2)$.
We construct an ordered graph $G'$ such that:
\begin{enumerate}
    \item $V(G') = V_1 \cup V_2 \cup V_3 \cup V_4$, where each $V_i$ is a disjoint copy of $V$, For every vertex $v \in V$, let $v_i$ denote its copy in $V_i$.
    \item The subsets $V_3,V_2$ are independent sets, while the subsets $V_4,V_1$ are cliques.
    \item For every distinct vertices $v,w \in V$, we add the edge $v_4w_1$, and we add the edges $v_3w_4,v_3w_2,v_1w_2$ if $vw \in E$.
    \item Finally, we equip $G'$ with a total ordering $\tau'$ such that: the vertices of $V_3$ appear first, followed by all vertices of $V_4$, all vertices of $V_1$, then all vertices of $V_2$.
\end{enumerate}
This construction takes ${\cal O}(n^2)$ time.
Furthermore, if $xyz$ is a triangle of $G$, then the ordered subgraph that is induced by $x_3,y_4,y_1,z_2$ realizes $Q$.
Conversely, assume the existence of an induced ordered subgraph $T$ that realizes $Q$.
If every vertex is in a different subset amongst $V_1,V_2,V_3,V_4$, then let $V(T) = (x_3,y_4,y_1',z_2)$.
By construction, $y=y'$ and $xyz$ is a triangle of $G$.
Thus, suppose by contradiction the existence of two vertices of $T$ within a same subset $V_i$.
These two vertices cannot be the second and third vertices of $T$, because the latter are nonadjacent and only $V_3,V_2$ are independent sets.
Therefore, these must be the first and second vertices of $T$, or its third and fourth vertices (both subcases are not necessarily exclusive).
Then, there are only two possibilities: the two first vertices of $T$ are in $V_4$, and at least one of the two last vertices is in $V_1$; or there is at least one of the two first vertices in $V_4$, and the two last vertices are in $V_1$.
However, if the first possibility happens then the third vertex of $T$ (that is in $V_1$) has two nonadjacent vertices in $V_4$, and if the second possibility happens then the second vertex of $T$ (that is in $V_4$) has two nonadjacent vertices in $V_1$.
A contradiction.

\smallskip
\underline{Case 3}: $\sigma=(1,3,2,4)$.
We construct an ordered graph $G'$ such that: 
\begin{enumerate}
    \item $V(G') = V_1 \cup V_2 \cup V_3 \cup V_4$, where each $V_i$ is a disjoint copy of $V$, For every vertex $v \in V$, let $v_i$ denote its copy in $V_i$.
    \item $V_1,V_4$ are cliques, while $V_2,V_3$ are independent sets.
    \item For every vertex $v \in V$, we add an edge $v_2v_3$.
    \item Then, we consider each pair $v,w \in V$ sequentially.
    We add the edge $v_1w_4$ if $v,w$ are either equal or nonadjacent.
    We add the edges $v_1v_2,v_3v_4$, if $v,w$ are adjacent.
    \item Finally, we equip $G'$ with a total ordering $\tau'$ such that: the vertices of $V_1$ appear first, followed by all vertices of $V_3$, all vertices of $V_2$, then all vertices of $V_4$. 
\end{enumerate}
This construction takes ${\cal O}(n^2)$ time.
Furthermore, if $xyz$ is a triangle of $G$, then the ordered subgraph that is induced by $x_1,y_3,y_2,z_4$ realizes $Q$.
Conversely, assume the existence of an induced ordered subgraph $T$ that realizes $Q$.
As in the previous case, one can easily check that if all vertices of $T$ are in different subsets amongst $V_1,V_2,V_3,V_4$, then there is a triangle in $G$.
Therefore, suppose by contradiction the existence of two vertices of $T$ within a same subset $V_i$.
These cannot be the second and third vertices because the latter are adjacent and only $V_1,V_4$ are cliques.
Therefore, these must be the first and second vertices of $T$, or its third and fourth vertices (both subcases are not necessarily exclusive).
In both subcases, these two consecutive vertices are nonadjacent.
Then, there are only two possibilities: the two first vertices of $T$ are in $V_3$, and at least one of the two last vertices is in $V_2$; or there is at least one of the two first vertices in $V_3$, and the two last vertices are in $V_2$.
In particular, the second and third vertices of $T$ are $y_3,y_2$ for some $y \in V$.
However, the two first vertices of $T$ cannot be both in $V_3$ because the third vertex must be adjacent to both. 
Similarly, the two last vertices of $T$ cannot be both in $V_2$ because the second vertex is adjacent to both.
A contradiction.

\underline{Case $4$}: $\sigma = (1,3,4,2)$.
We construct an ordered graph $G'$ such that: 
\begin{enumerate}
    \item $V(G') = V_1 \cup V_2 \cup V_3 \cup V_4$, where each $V_i$ is a disjoint copy of $V$, For every vertex $v \in V$, let $v_i$ denote its copy in $V_i$.
    \item The subsets $V_1,V_2,V_3,V_4$ are cliques.
    \item For every $v \in V$, we add the edge $v_3v_4$.
    \item For every vertices $v,w \in V$, we add the edges $v_1v_2,v_3v_2$ if $v,w$ are adjacent, and we add the edges $v_1v_3,v_1v_4$ if $v,w$ are either equal or nonadjacent.
    \item Finally, we equip $G'$ with a total ordering $\tau'$ such that: the vertices of $V_1$ appear first, followed by all vertices of $V_3$, all vertices of $V_4$, then all vertices of $V_2$. 
\end{enumerate}
This construction takes ${\cal O}(n^2)$ time.
Furthermore, if $xyz$ is a triangle of $G$, then the ordered subgraph that is induced by $x_1,y_3,y_4,z_2$ realizes $Q$.
Conversely, assume the existence of an induced ordered subgraph $T$ that realizes $Q$.
As in the two previous cases, one can easily check that if all vertices of $T$ are in different subsets amongst $V_1,V_2,V_3,V_4$, then there is a triangle in $G$.
Therefore, from now on we assume the existence of two vertices of $T$ within a same subset $V_i$.
Since $V_i$ is a clique, these must be the second and third vertices of $T$.
Then, both vertices are either in $V_3$ or in $V_4$. 
They cannot be contained in $V_4$ because the second and last vertices of $T$ are adjacent whereas there is no edge between $V_4,V_2$. 
Hence, the second and third vertices of $T$ are some $y_3,z_3 \in V_3$.
It implies that the first vertex of $T$ is some $x_1 \in V_1$.
Suppose by contradiction the last vertex of $T$ to be some $t_4 \in V_4$.
Since $z_3,t_4$ are adjacent in $G'$, we obtain by construction $z=t$.
Since $x_1,t_4$ are adjacent in $G'$, by construction $x,z$ are either equal or nonadjacent in $G$.
But then, vertices $x_1,z_3$ should be adjacent in $G'$, a contradiction.
Therefore, the last vertex of $T$ is some $t_2 \in V_2$.
The existence of the edges $x_1t_2,y_3t_2$ in $G'$ implies that $t$ is adjacent to both $x,y$ in $G$.
The nonexistence of the edge $x_1y_3$ in $G'$ also implies that $x,y$ are adjacent in $G$.
As a result, there exists a triangle $xyt$ in $G$.

\smallskip
\underline{Case $5$}: $\sigma = (3,1,2,4)$.
This is the mirror pattern of the previous case.
 \qed
\end{proof}

\subsection{Patterns of size larger than four}\label{sec:lb>4}

We end up presenting another general type of reductions between patterns.
A prefix (suffix, resp.) of a pattern $P$ is the subpattern that is induced by its $i$ first vertices (by its $i$ last vertices, resp.), for some $i \leq |V(P)|$.  

\begin{proposition}
    Let some pattern $P$ be a prefix (suffix, resp.) of some larger pattern $Q$ of constant size $k+1$.
    Then, $Q$-{\sc Detection} is at least as hard as $P$-{\sc Detection} (under randomized ${\cal O}(n)$-time reductions).
\end{proposition}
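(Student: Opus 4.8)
The plan is to give a randomized $O(n)$-time Turing reduction from $P$-{\sc Detection} to $Q$-{\sc Detection}. I will do the prefix case; the suffix case follows by the mirrored construction (adding the new vertex as the \emph{smallest} one), or equivalently from the stated fact that a pattern and its mirror have the same detection complexity, since $P$ is a suffix of $Q$ exactly when $\mathrm{mirror}(P)$ is a prefix of $\mathrm{mirror}(Q)$. Write the vertices of $Q$ in increasing order as $v^1, v^2, \dots, v^k, v^{k+1}$, so that $P$ is the subpattern of $Q$ induced by $v^1, \dots, v^k$; for each $j \le k$, let $r_j$ record whether the pair $\{v^j, v^{k+1}\}$ is a mandatory edge, a forbidden edge, or an undecided pair of $Q$.

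Given an instance $(G, \tau)$ of $P$-{\sc Detection} with $V(G) = V$ and $|V| = n$, the reduction builds $(G', \tau')$ as follows: add one new vertex $y$; for every $v \in V$, independently insert the edge $yv$ with probability $1/2$; keep $\tau$ on $V$ and place $y$ last in $\tau'$. This takes $O(n)$ time and $O(n)$ random bits, and $G'$ has $n+1 = O(n)$ vertices. The reduction calls the $Q$-{\sc Detection} oracle once on $(G', \tau')$ and returns its answer.

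Soundness (no false positives) holds deterministically: I claim that any realization of $Q$ in $G'$ induces a realization of $P$ in $G$, regardless of the random edges. If $z^1 < \dots < z^{k+1}$ realizes $Q$, then since $y$ is $\tau'$-maximal, either $y \notin \{z^1, \dots, z^{k+1}\}$ or $y = z^{k+1}$; in either case $z^1, \dots, z^k \in V$. The order-preserving bijection from $\{z^1, \dots, z^{k+1}\}$ onto $V(Q)$ sends $z^i$ to $v^i$, so its restriction to $\{z^1, \dots, z^k\}$ witnesses that this set realizes the prefix $P$, and the ordered graph it induces is identical in $G'$ and in $G$ because we only added $y$ and edges incident to $y$. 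For completeness, suppose $x^1 < \dots < x^k$ realizes $P$ in $G$. The event that $yx^j$ is an edge for every $j$ with $r_j$ mandatory and a non-edge for every $j$ with $r_j$ forbidden is decided by at most $k$ independent fair coins, hence has probability at least $2^{-k}$, a positive constant since $k$ is fixed. On that event $x^1 < \dots < x^k < y$ realizes $Q$ in $G'$: the pairs $\{v^j, v^{k+1}\}$ with $r_j$ undecided are allowed to take whatever state $yx^j$ happens to have, and the undecided pairs among $v^1, \dots, v^k$ are still undecided in $Q$. Thus the reduction has one-sided error and constant success probability; a standard amplification by $O(\log(1/\varepsilon))$ independent repetitions (with fresh randomness, outputting ``yes'' if any run does) pushes the success probability above $1-\varepsilon$.

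The main thing to get right, such as it is, lies in the completeness direction: one must verify the two consistency points above — that an undecided $r_j$ tolerates an arbitrary $yx^j$, and that the undecided pairs of $P$ remain undecided in $Q$ — but both are immediate from the definitions of induced subpattern and of realization. I would handle the suffix case by the direct mirrored construction rather than by first reversing the input order, so that the running time stays cleanly $O(n)$ instead of $O(n+m)$.
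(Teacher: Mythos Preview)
Your proof is correct and essentially identical to the paper's: add a new last vertex with independent fair-coin adjacencies to every old vertex, observe that any realization of $Q$ in $G'$ restricts (on its first $k$ vertices) to a realization of $P$ in $G$, and that a fixed realization of $P$ extends to one of $Q$ with probability at least $2^{-k}$. The only point the paper makes explicit and you skip is the one-line reduction to the case $|V(Q)|=|V(P)|+1$ (you silently assume $|V(P)|=k$), which is handled by chaining; this is routine.
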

\begin{proof}
    By symmetry, we may only consider the case where $P$ is a prefix of $Q$.
    Furthermore, it is sufficient to only consider the special case where $|V(Q)| = |V(P)|+1$.
    Let $G$ be an arbitrary ordered graph.
    The ordered graph $G'$ is obtained from $G$ by adding a new vertex $x$, which appears last in the ordering, and such that, for every vertex $v$ of $G$, we add an edge $vx$ independently at random with probability $1/2$.

    Assume first the existence of an ordered subgraph $H$ in $G$ that realizes $P$.
    Let $H'$ be induced by $V(H) \cup \{x\}$.
    In order for $H'$ to realize $Q$, it suffices for vertex $x$ to satisfy some adjacency or nonadjacency relations with some vertices of $H$.
    By construction, all these relations are satisfied with probability at least $1/2^k$.
    Conversely, assume the existence of an ordered subgraph $H'$ in $G'$ that realizes $Q$.
    Since vertex $x$ can only appear in $H'$ as being its last vertex, there is some ordered subgraph $H$ of $H'$, with all vertices in $V(G)$, that must realize~$P$.
     \qed
\end{proof}

For every $k \geq 5$, this above result, combined with all our reductions for fully specified patterns of size four (see Sec.~\ref{sec:lb-k=4}), implies conditional superquadratic lower bounds for a broad family of fully specified patterns of size $k$ (resp, for all these patterns if we are restricting ourselves to combinatorial algorithms). 
Stronger lower bounds can be applied to every pattern such that either Proposition~\ref{prop:red-isi} or~\ref{prop:k-path} can be directly applied.
\section{Parametrized algorithm and merge-width}
\label{sec:parametrized}

In all this section, the forbidden edges have exactly the same role as the mandatory edges, hence for simplicity, we simply say ``edges", and denote the set of edges of a pattern $P$ by $E(P)$ (that is, $E(P)=M(P)\cup F(P)$). We will highlight the place where this can be ambiguous.

\subsection{Anchored pattern operations}
\label{subsec:anchored-operations}

\begin{definition}
    An \emph{anchored pattern} is a pattern with a subset of vertices that are marked as anchors.
\end{definition}

We will now define the operations we consider on patterns. Actually, for each operation, there are two dual forms: one that decompose the pattern in smaller pieces and one that compose smaller pieces into larger ones. 
We define both because on the one hand decomposition operations are natural, but on the other hand our bottom-up algorithm will use the composition operations.

\begin{definition}\label{def:operations}
We define the \emph{pattern operations} has the following operations on anchored patterns.
\begin{enumerate}
    \item (Divide/merge) Given an anchored pattern $P=(V,E)$, the \emph{divide operation} consists creating two patterns $P_1=(V_1,E_1)$ and $P_2=(V_2,E_2)$ such that the following constraints are satisfied:
    \begin{enumerate}
        \item $(E_1,E_2)$ is a partition of $E$ (and the vertex sets inherit the endpoints of the edges, that is, for $i=1,2$, $\forall (x,y)\in E_i$, $x,y\in V_i$).
        \item $V_1 \cup V_2 = V$ 
        \item \label{item:anchors-in-common} Any vertex in $V_1 \cap V_2$ is an anchor in both $V_1$ and $V_2$.
        \item Any vertex that was an anchor in $P$ is still an anchor in the pattern or patterns that has/have received this vertex.
        \item \label{item:consecutive-anchors} There cannot be two consecutive vertices in $V$ such that one is only in $V_1$ and the other only in $V_2$.
    \end{enumerate}
    The reverse operation is called a \emph{merge operation} (a merge is correct only if its reverse operation is a correct divide operation).
    \item (Vertex deletion/creation) If a vertex is isolated in a pattern, the 
    \emph{vertex deletion operation} removes it. 
    After the operation, the vertex just before and just after (if they exist) must be anchors. 
    The reverse operation is a \emph{vertex creation}.
    \item (Edge deletion/creation) If the pattern consists in only one edge, the \emph{edge deletion} removes it (and both its endpoints). The reverse operation is an \emph{edge creation}.  (Here we actually need two cases: one for mandatory edges and one for forbidden edges.)
\end{enumerate}
\end{definition}

For a divide/merge operation, the anchors that are present in both patterns are called \emph{active anchors}, the other are \emph{passive anchors}. Intuitively, the active anchors are the ones that are used for the current merge, whereas the passive anchors are the ones that are needed only for other steps of the process.

An important observation is that when we perform a merge, a vertex that was an active anchor in the two original patterns might not be an anchored pattern in the merged pattern. For example, for the case of the flat cycle described above, when merging 1-2 and 2-3, the vertex 2 lost its anchor status. This happens when a vertex is useful for the current merge but will not be useful later. Since, it will be useful to us to minimize the number of anchors, this is an important aspect. 
The holds for the vertex creation, where the vertices of the outcome might lose their anchor status. 

Constraint~\ref{item:consecutive-anchors} is the key part of this definition and we will justify it in a few paragraphs. 
For now, let us prove the following lemma, in which ``creating a pattern" means creating some anchored version of it, and then removing the anchors to get a standard pattern. 

\begin{lemma}\label{lem:any-pattern-created}
Any pattern can be created by edge creations, vertices creations and merge operations. 
\end{lemma}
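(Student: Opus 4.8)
\textbf{Proof plan for Lemma~\ref{lem:any-pattern-created}.}
The plan is to give an explicit construction of an arbitrary target pattern $P$ on vertex set $\{1,2,\dots,k\}$ (with all vertices declared anchors throughout, so that every intermediate object is a legal anchored pattern) using only edge creations, vertex creations and merges. First I would handle the degenerate cases: if $P$ has no vertices, or is a single isolated vertex, or a single edge, then it is obtained directly by the corresponding creation operation (for a single isolated vertex we start from an edge creation and then would have trouble deleting one endpoint — so instead I would treat the ``single vertex'' and ``single edge'' base cases separately, noting that an edge creation produces exactly the two-vertex one-edge pattern, which is already of the required form). The generic strategy for larger $P$ is: for each pair $(i,j)$ with $i<j$ that is a mandatory or forbidden edge of $P$, create the elementary two-vertex pattern $P_{ij}$ consisting of that single (mandatory or forbidden) edge on vertices $i$ and $j$, with both $i,j$ anchors. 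Then build a ``skeleton'' pattern on all $k$ vertices that has no edges at all, and finally merge everything together.

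The main work is building the edgeless skeleton on all $k$ vertices, because merges can only identify anchors and are forbidden from creating a configuration with two consecutive positions coming from different parts (Constraint~\ref{item:consecutive-anchors}). Here is the key idea: start from the two-vertex one-edge pattern on positions $1$ and $2$ (obtained by an edge creation), then repeatedly use vertex creation to insert new vertices and merges to splice in isolated anchored vertices, so as to obtain, step by step, the edgeless anchored pattern on $\{1,2\}$, then $\{1,2,3\}$, and so on up to $\{1,2,\dots,k\}$. More concretely I would first note that a single isolated anchored vertex at any prescribed position is itself a legal pattern (one vertex, anchored), and that the empty pattern on $\{1,\dots,k\}$ can be assembled by iteratively merging the current edgeless pattern on a prefix with the edgeless pattern on the complementary suffix; since we keep every vertex an anchor, Constraint~\ref{item:anchors-in-common} is satisfied vacuously when the two vertex sets are disjoint-except-for-a-shared-anchor-boundary, and Constraint~\ref{item:consecutive-anchors} is satisfied because both parts contain the boundary vertex as an anchor, so no two consecutive positions come from disjoint sides. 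Once the edgeless skeleton on $\{1,\dots,k\}$ is in hand, I would merge in each $P_{ij}$ one at a time: the merge of the current pattern with $P_{ij}$ identifies the two anchors $i$ and $j$ (which are anchors on both sides), adds the single edge of $P_{ij}$ to the edge set, and does not create any new ``one-sided'' vertex, so Constraint~\ref{item:consecutive-anchors} holds automatically (every vertex of $P_{ij}$ already appears in the skeleton).

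I would then verify the three non-trivial constraints of a divide/merge at each step: $(E_1,E_2)$ partitions the union of edges (clear, since we add the edges of the two-vertex pieces one at a time, and they are pairwise distinct because the pairs $(i,j)$ are distinct); $V_1\cup V_2$ is the full vertex set (clear by construction); and the ``no two consecutive one-sided vertices'' and ``shared vertices are anchors'' conditions, which I sketched above. Finally, after all edges of $P$ have been added, every vertex is still marked as an anchor; by the convention stated just before the lemma (``creating a pattern'' means building an anchored version and then deleting the anchor marks), removing all anchor labels yields exactly $P$. The main obstacle I anticipate is not any deep idea but rather bookkeeping: making sure that the intermediate edgeless-skeleton construction never violates Constraint~\ref{item:consecutive-anchors}, which is precisely why keeping \emph{all} vertices anchored throughout (rather than trying to economize on anchors, as the merge-width definition later will) is the right move for this existence statement — economy of anchors is deferred to the quantitative analysis, not needed here.
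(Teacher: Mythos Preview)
Your proposal has a genuine gap in the ``edgeless skeleton'' step. You want to build the edgeless anchored pattern on $\{1,\ldots,k\}$ and then merge in each two-vertex edge $P_{ij}$. But none of the three forward operations can produce an edgeless pattern on two or more vertices: edge creation introduces an edge, vertex creation adds an isolated vertex to an existing pattern without removing any edge, and merge takes the disjoint union of edge sets. In particular, your sentence ``start from the two-vertex one-edge pattern on positions $1$ and $2$ \ldots so as to obtain, step by step, the edgeless anchored pattern on $\{1,2\}$'' cannot work, since that edge will persist through every subsequent operation. You later appeal to a single isolated anchored vertex as a building block, but you do not show how to produce one from the operations, and in the merge-tree formalism of Definition~\ref{def:merge-tree} leaves are single \emph{edges}, not single vertices.

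The paper sidesteps this entirely by never building an edgeless skeleton. For each decided pair $(i,j)$ it creates the two-vertex edge and then pads \emph{that} pattern out to all $k$ vertices via $k-2$ vertex creations (still carrying only the one edge). The resulting one-edge, $k$-vertex, all-anchored patterns are merged pairwise; since every intermediate pattern has the full vertex set $\{1,\ldots,k\}$ with all vertices anchored, conditions (a)--(e) of Definition~\ref{def:operations} are immediate (in particular $V_1=V_2=V$, so (e) is vacuous). Your final merge argument --- absorbing a two-vertex $P_{ij}$ into a $k$-vertex pattern --- is actually correct (no vertex lies only in $V_2$, so (e) holds vacuously), so a one-line fix to your plan is to replace ``edgeless skeleton'' by ``one fixed edge of $P$, padded by vertex creations to all $k$ vertices,'' and then merge in the remaining $P_{ij}$'s as you describe. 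That gives a valid variant of the paper's argument.
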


\begin{proof}
Consider an arbitrary pattern $P$. 
We describe how we can build it. See Figure~\ref{fig:naive-pattern-construction} for an example. 
Note that here we do not optimize the number of anchors. 
We start by creating all the edges independently, with both endpoints being anchors.
Then, for every such subpattern, we do $k-2$ anchor vertex creations to add all the other vertices of the pattern (on $k$ vertices), in the correct order.  
Then, we merge these subpatterns little by little, until we get the pattern we want. 
All these merges are correct, since all the vertices present are anchored in all the merged intermediate patterns.\qed
\end{proof}

\begin{figure}[!h]
    \centering
    \scalebox{0.8}{
    \input{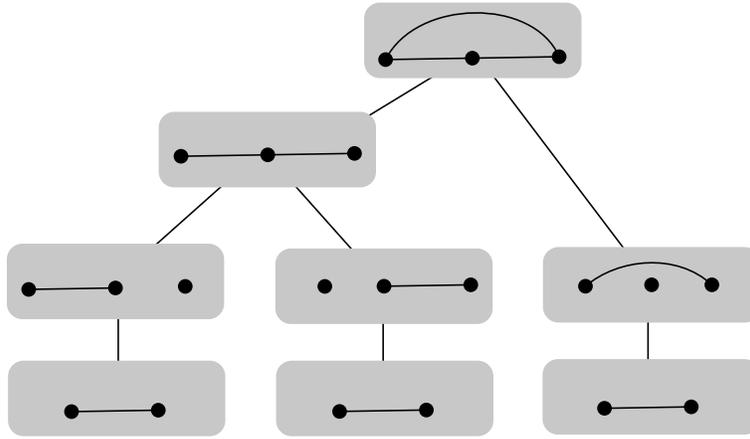}
    }
    \caption{We illustrate the construction of the triangle pattern with the technique of the proof of Lemma~\ref{lem:any-pattern-created}. This is also an illustration of the merge tree defined in Definition~\ref{def:merge-tree}.
    Here, all nodes are anchors at every step.
    We start from the leaves by creating three edges. 
    Then we add vertices, such that the edges are respectively $(1,2)$, $(2,3)$ and $(1,3)$. 
    We then perform two successive merges. Note that here, for the patterns merged are on the same vertex set, thus all anchors are active ; this is a special case, in general we try to minimize the  number of anchors, thus also of active anchors.}
    \label{fig:naive-pattern-construction}
\end{figure}

The key property for our notion of divide/merge is the following.

\begin{lemma}\label{lem:consistent-merge}
Let $P$ be a pattern that can be divided into $P_1$ and $P_2$, and $G$ be an ordered graph.
If there is a realization of $P_1$ and a realization of $P_2$ in $G$, that agree on the positions of the active anchors, then there is a realization of $P$ in $G$.
\end{lemma}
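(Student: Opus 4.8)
\textbf{Proof plan for Lemma~\ref{lem:consistent-merge}.}
The plan is to translate the two given realizations into a single vertex assignment for $P$ and then verify, edge by edge and gap by gap, that this assignment induces an ordered subgraph realizing $P$. First I would fix the realizations: let $\varphi_1 : V_1 \to V(G)$ and $\varphi_2 : V_2 \to V(G)$ be order-preserving injections whose images induce realizations of $P_1$ and $P_2$ respectively, and assume they agree on the positions of the active anchors, i.e. $\varphi_1(v) = \varphi_2(v)$ for every $v \in V_1 \cap V_2$ (recall that by constraint~\ref{item:anchors-in-common} of Definition~\ref{def:operations} these common vertices are exactly the active anchors). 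Define $\varphi : V(P) \to V(G)$ by $\varphi(v) = \varphi_1(v)$ if $v \in V_1$ and $\varphi(v) = \varphi_2(v)$ if $v \in V_2 \setminus V_1$; this is well defined because $V_1 \cup V_2 = V(P)$ and the two maps agree on the overlap.

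The first thing I would check is that $\varphi$ is injective and order-preserving. Injectivity within $V_1$ and within $V_2$ is inherited from $\varphi_1,\varphi_2$; the only possible collision is $\varphi_1(u) = \varphi_2(v)$ with $u \in V_1$, $v \in V_2 \setminus V_1$, and one must rule this out. For order-preservation, take two vertices $x <_P y$ of $P$. If both lie in $V_1$ or both lie in $V_2$ we are done. The interesting case is $x \in V_1 \setminus V_2$ and $y \in V_2 \setminus V_1$ (or vice versa): here I would use constraint~\ref{item:consecutive-anchors} — there cannot be two consecutive vertices of $P$ with one only in $V_1$ and the other only in $V_2$ — to locate, between $x$ and $y$ in the order of $P$, a vertex $w$ that belongs to $V_1 \cap V_2$, hence an active anchor. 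Then $\varphi(x) = \varphi_1(x) < \varphi_1(w) = \varphi_2(w) < \varphi_2(y) = \varphi(y)$, using that $w$ lies strictly between $x$ and $y$ and that the anchor positions coincide. A symmetric argument (or the same applied to the nearest such $w$ on each side) also dispatches the injectivity collision above, since $\varphi_1(u) < \varphi_1(w) = \varphi_2(w) < \varphi_2(v)$ forces $\varphi_1(u) \neq \varphi_2(v)$.

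Finally I would check that the induced ordered subgraph on $\varphi(V(P))$ realizes $P$. By definition of a realization it suffices to verify each pair $\{x,y\} \subseteq V(P)$: if $\{x,y\} \in M(P)$, then since $(E_1,E_2)$ partitions $E(P)$ the edge lies in $E_i$ for some $i$, so both endpoints lie in $V_i$ (the vertex sets inherit the edge endpoints by constraint~(a)), and the fact that $\varphi_i$ realizes $P_i$ gives $(\varphi(x),\varphi(y)) \in E(G)$; symmetrically a forbidden pair of $P$ is a forbidden pair in whichever $P_i$ contains both its endpoints, so $(\varphi(x),\varphi(y)) \notin E(G)$; and undecided pairs of $P$ impose no constraint. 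This covers every pair and completes the argument. I expect the order-preservation step — specifically the careful use of constraint~\ref{item:consecutive-anchors} to insert an active anchor between any "$V_1$-only" and "$V_2$-only" pair and thereby transfer the order relation across the two realizations — to be the main obstacle, as it is precisely the place where the somewhat unintuitive consecutive-anchors condition earns its keep and where one must be careful that the inserted anchor is genuinely \emph{between} $x$ and $y$ rather than merely somewhere in the pattern.
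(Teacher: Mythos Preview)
Your proposal is correct and follows essentially the same approach as the paper: build the union of the two realizations, use constraint~\ref{item:consecutive-anchors} to argue that between any $V_1$-only and $V_2$-only vertex there sits an active anchor (so the combined map is order-preserving and injective), and then use the partition $(E_1,E_2)$ of $E(P)$ to verify edge constraints. Your treatment is in fact more explicit than the paper's, which phrases the same idea informally via ``segments between active anchors''; the key step you correctly identify---inserting an active anchor strictly between any mixed pair---is exactly the content of the paper's argument.
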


Before we prove this statement, let us point to Figure~\ref{fig:wrong-merge-bis} that illustrates why this statement would not be true without Item~\ref{item:consecutive-anchors} in Definition~\ref{def:operations}.

\begin{figure}[!h]
    \centering
    \scalebox{0.85}{
    \input{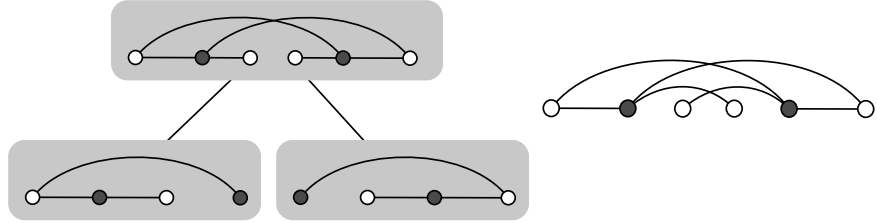}
    }
    \caption{Illustration of why we need Condition~\ref{item:consecutive-anchors} in Definition~\ref{def:operations}. 
    Suppose we decide only keep the anchors that are identified during a merge. 
    On the right, we have such a modified merge, where the black vertices are the anchors. 
    Now, consider the ordered graph on the right. It does have a realization of the two subpatterns, with the correct positions for the anchors, but it does not have a realization of the merged pattern. The problem being the interleaving of the non-anchored vertices.
    An alternative definition of a correct merge would be one that is unambiguous: there is only one pattern that can originates from it.}
    \label{fig:wrong-merge-bis}
\end{figure}

\begin{proof}
Consider $P$, $P_1$, $P_2$ and $G$ as in the lemma.
Consider the realizations of $P_1$ and $P_2$ that agree on the set of active anchors. 
Let $H$ be the union of these two realizations.  
All the vertices in $H$ that are not active anchors are different, by Item~\ref{item:anchors-in-common} thus $H$ has the same number of vertices as $P$, with the same active anchors. 
In $H$, between two active anchors (and before the first active anchor, after the last active anchor), all the vertices originate from the realization of one of the two pattern only because of Item~\ref{item:consecutive-anchors}. Since the active anchors of $H$ and of $P$ are consistent, and the ordering of the vertices of the realization of $P_1$ (resp. $P_2$) is consistent with $P$, and every segment between active anchors belongs only to one of the realization, 
the vertices of $H$ appear in the same order as in $P$. And the then the edges are also consistent, which means that $H$ is a realization of $P$.\qed
\end{proof}

\subsection{Merge trees and merge-width}

Based on the operations listed above, we define a natural notion of \emph{merge tree}, that formalizes the intuition given in Figure~\ref{fig:naive-pattern-construction}. 

\begin{definition}\label{def:merge-tree}
    A \emph{merge tree} for a pattern $P$ is a tree where:
    \begin{enumerate}
        \item Every node is labeled with an anchored pattern. 
        \item The root is labeled with an anchored version of pattern $P$. 
        \item Every node is in one of the following cases:
        \begin{enumerate}
            \item It is a leaf, and then it is labeled with a unique edge.
            \item It has a unique child, and it is labeled with a pattern that can be obtained via a vertex creation from its child's pattern.
            \item It has two children, and it is labeled by a pattern that can be obtained by a merge of the patterns of its children.
        \end{enumerate}
    \end{enumerate}
\end{definition}

We can now define the width of a merge tree as well as our parameter \emph{merge-width}. 

\begin{definition}
The width of a merge tree is the maximum, over all nodes of the merge tree, of the number of anchors in the pattern that labels this node.
The \emph{merge-width} of a pattern $P$ is the minimum, over all correct merge trees of $P$, of the width of that tree.
\end{definition} 

This parameter follows from the fact that in the algorithm will define next, the complexity of the dynamic programming will depend on the number of anchors. Remember that in our merge and vertex creation operation (and unlike in the proof of Lemma~\ref{lem:any-pattern-created}), a vertex could lose its anchor status, which is the way we can hope to maintain a small number of anchors.

\subsection{Algorithm}

Consider an ordered graph $G$, a pattern $P$ and its merge tree. The vertices of $G$ are named by their order in the graph.

\paragraph{General idea.}

The general idea of the algorithm is the following. 
For every node of the merge tree, labeled with an anchored subpattern $P'$, with anchors $a_1,...,a_\ell$, we want to list all the $\ell$-tuple of vertices of $G$ such that there exists a realization of $P'$ in $G$ with anchors at these positions. 
Given this, it is easy to answer our original problem: the list for the root pattern $P$ is non-empty, if and only if there is no occurrence of $P$ in $G$.  

We will process the nodes bottom-up in the tree, using the list of the children to create the list of the parent.  

\paragraph{Notations.}
Consider a node $u$ of the tree, labeled with the anchored pattern $P_u$, having $\ell$ anchors.
We will store the $\ell$-tuples mentioned above in the form of a $\ell$-dimensional matrix $T_u$, where in every dimension the size of $T_u$ is $n$. 
After we have processed node $u$, this matrix must satisfy the following:
\begin{itemize}
    \item $T_u[v_1,...,v_\ell]=1$ if there exists an occurrence of $P_u$ in $G$ for which the corresponding anchors are at the positions $v_1, ..., v_\ell$
    \item $T_u[v_1,...,v_a]=0$ otherwise.
\end{itemize}

In the next paragraph, we describe how we compute these matrices, depending on the type of operation used.

\paragraph{Edge creation nodes.}

The leaf nodes of the merge tree correspond to edge creation. The matrices are filled the following way:
\begin{itemize}
    \item If both endpoints of the edge pattern are anchors, then the matrix is 2-dimensional, and has a 1, at every position $(i,j)$ such that $i<j$, and $(i,j)\in E$.
    \item If only the right vertex (resp. left vertex) is an anchor, then the matrix is 1-dimensional and $T_u[y]=1$, if and only if, there exists an edge $(x,y)$ (resp. an edge $(y,x)$). 
\end{itemize}

\paragraph{Vertex creation nodes.}

Before the creation of the vertex, the two surrounding vertices must be anchors by definition. 
The matrix that we manipulate depends on which vertices are anchors after the creation. 
Let us consider first the case where the new vertex and both its surrounding vertices are anchors after the creation. 
Let $P$ be the pattern after creation, and let $s$ be the position of the newly created vertex. 

Let $u$ be the node after creation, and let $w$ be its children
The matrix $T_u$ is filled the following way:
$T_u[v_1,...,v_{s-1},v_s,v_{s+1},...,v_\ell]=1$ if and only if $T_w[v_1,...,v_{s-1},v_{s+1},...,v_\ell]=1$, and $v_{s-1},v_{s+1}$ are not consecutive vertices in $G$.

Now, if some vertices in $\{s-1,s,s+1\}$ are not anchors in $P_u$, we do the analogous operation, but the matrix $T_u$ does not contain the field corresponding to these. 






\paragraph{Example of merge node.}

Before explaining the general algorithm to merge two anchored patterns, we present a small example through Figure \ref{fig:merge-example}.

\begin{figure}[ht]
    \centering
    \scalebox{1}{
    \tikzset{every picture/.style={line width=0.75pt}} 

\begin{tikzpicture}[x=0.75pt,y=0.75pt,yscale=-1,xscale=1]

\draw    (75,175) -- (135,95) ;
\draw    (135,95) -- (195,175) ;
\draw  [draw opacity=0][fill={rgb, 255:red, 200; green, 200; blue, 200 }  ,fill opacity=1 ] (60,72) .. controls (60,65.37) and (65.37,60) .. (72,60) -- (198,60) .. controls (204.63,60) and (210,65.37) .. (210,72) -- (210,108) .. controls (210,114.63) and (204.63,120) .. (198,120) -- (72,120) .. controls (65.37,120) and (60,114.63) .. (60,108) -- cycle ;
\draw  [fill={rgb, 255:red, 0; green, 0; blue, 0 }  ,fill opacity=1 ] (70,95) .. controls (70,92.24) and (72.24,90) .. (75,90) .. controls (77.76,90) and (80,92.24) .. (80,95) .. controls (80,97.76) and (77.76,100) .. (75,100) .. controls (72.24,100) and (70,97.76) .. (70,95) -- cycle ;
\draw  [fill={rgb, 255:red, 0; green, 0; blue, 0 }  ,fill opacity=1 ] (190,95) .. controls (190,92.24) and (192.24,90) .. (195,90) .. controls (197.76,90) and (200,92.24) .. (200,95) .. controls (200,97.76) and (197.76,100) .. (195,100) .. controls (192.24,100) and (190,97.76) .. (190,95) -- cycle ;
\draw    (75,95) -- (195,95) ;
\draw  [fill={rgb, 255:red, 255; green, 255; blue, 255 }  ,fill opacity=1 ] (100,95) .. controls (100,92.24) and (102.24,90) .. (105,90) .. controls (107.76,90) and (110,92.24) .. (110,95) .. controls (110,97.76) and (107.76,100) .. (105,100) .. controls (102.24,100) and (100,97.76) .. (100,95) -- cycle ;
\draw  [fill={rgb, 255:red, 255; green, 255; blue, 255 }  ,fill opacity=1 ] (160,95) .. controls (160,92.24) and (162.24,90) .. (165,90) .. controls (167.76,90) and (170,92.24) .. (170,95) .. controls (170,97.76) and (167.76,100) .. (165,100) .. controls (162.24,100) and (160,97.76) .. (160,95) -- cycle ;
\draw  [draw opacity=0] (75,95) .. controls (75,95) and (75,95) .. (75,95) .. controls (75,81.19) and (88.43,70) .. (105,70) .. controls (121.57,70) and (135,81.19) .. (135,95) -- (105,95) -- cycle ; \draw   (75,95) .. controls (75,95) and (75,95) .. (75,95) .. controls (75,81.19) and (88.43,70) .. (105,70) .. controls (121.57,70) and (135,81.19) .. (135,95) ;  
\draw  [fill={rgb, 255:red, 255; green, 255; blue, 255 }  ,fill opacity=1 ] (130,95) .. controls (130,92.24) and (132.24,90) .. (135,90) .. controls (137.76,90) and (140,92.24) .. (140,95) .. controls (140,97.76) and (137.76,100) .. (135,100) .. controls (132.24,100) and (130,97.76) .. (130,95) -- cycle ;
\draw  [draw opacity=0][fill={rgb, 255:red, 200; green, 200; blue, 200 }  ,fill opacity=1 ] (30,162) .. controls (30,155.37) and (35.37,150) .. (42,150) -- (108,150) .. controls (114.63,150) and (120,155.37) .. (120,162) -- (120,198) .. controls (120,204.63) and (114.63,210) .. (108,210) -- (42,210) .. controls (35.37,210) and (30,204.63) .. (30,198) -- cycle ;
\draw  [fill={rgb, 255:red, 0; green, 0; blue, 0 }  ,fill opacity=1 ] (40,185) .. controls (40,182.24) and (42.24,180) .. (45,180) .. controls (47.76,180) and (50,182.24) .. (50,185) .. controls (50,187.76) and (47.76,190) .. (45,190) .. controls (42.24,190) and (40,187.76) .. (40,185) -- cycle ;
\draw    (45,185) -- (105,185) ;
\draw  [fill={rgb, 255:red, 255; green, 255; blue, 255 }  ,fill opacity=1 ] (70,185) .. controls (70,182.24) and (72.24,180) .. (75,180) .. controls (77.76,180) and (80,182.24) .. (80,185) .. controls (80,187.76) and (77.76,190) .. (75,190) .. controls (72.24,190) and (70,187.76) .. (70,185) -- cycle ;
\draw  [draw opacity=0] (45,185) .. controls (45,185) and (45,185) .. (45,185) .. controls (45,171.19) and (58.43,160) .. (75,160) .. controls (91.57,160) and (105,171.19) .. (105,185) -- (75,185) -- cycle ; \draw   (45,185) .. controls (45,185) and (45,185) .. (45,185) .. controls (45,171.19) and (58.43,160) .. (75,160) .. controls (91.57,160) and (105,171.19) .. (105,185) ;  
\draw  [fill={rgb, 255:red, 0; green, 0; blue, 0 }  ,fill opacity=1 ] (100,185) .. controls (100,182.24) and (102.24,180) .. (105,180) .. controls (107.76,180) and (110,182.24) .. (110,185) .. controls (110,187.76) and (107.76,190) .. (105,190) .. controls (102.24,190) and (100,187.76) .. (100,185) -- cycle ;
\draw  [draw opacity=0][fill={rgb, 255:red, 200; green, 200; blue, 200 }  ,fill opacity=1 ] (150,162) .. controls (150,155.37) and (155.37,150) .. (162,150) -- (228,150) .. controls (234.63,150) and (240,155.37) .. (240,162) -- (240,198) .. controls (240,204.63) and (234.63,210) .. (228,210) -- (162,210) .. controls (155.37,210) and (150,204.63) .. (150,198) -- cycle ;
\draw  [fill={rgb, 255:red, 0; green, 0; blue, 0 }  ,fill opacity=1 ] (160,185) .. controls (160,182.24) and (162.24,180) .. (165,180) .. controls (167.76,180) and (170,182.24) .. (170,185) .. controls (170,187.76) and (167.76,190) .. (165,190) .. controls (162.24,190) and (160,187.76) .. (160,185) -- cycle ;
\draw    (165,185) -- (225,185) ;
\draw  [fill={rgb, 255:red, 255; green, 255; blue, 255 }  ,fill opacity=1 ] (190,185) .. controls (190,182.24) and (192.24,180) .. (195,180) .. controls (197.76,180) and (200,182.24) .. (200,185) .. controls (200,187.76) and (197.76,190) .. (195,190) .. controls (192.24,190) and (190,187.76) .. (190,185) -- cycle ;
\draw  [fill={rgb, 255:red, 0; green, 0; blue, 0 }  ,fill opacity=1 ] (220,185) .. controls (220,182.24) and (222.24,180) .. (225,180) .. controls (227.76,180) and (230,182.24) .. (230,185) .. controls (230,187.76) and (227.76,190) .. (225,190) .. controls (222.24,190) and (220,187.76) .. (220,185) -- cycle ;

\draw (129,45.4) node [anchor=north west][inner sep=0.75pt]    {$P$};
\draw (64.5,215.4) node [anchor=north west][inner sep=0.75pt]    {$P_{u}$};
\draw (184.5,215.4) node [anchor=north west][inner sep=0.75pt]    {$P_{v}$};
\draw (67,104.4) node [anchor=north west][inner sep=0.75pt]    {$a_{1}$};
\draw (187,104.4) node [anchor=north west][inner sep=0.75pt]    {$a_{2}$};
\draw (130,101.4) node [anchor=north west][inner sep=0.75pt]    {$b$};
\draw (37,194.4) node [anchor=north west][inner sep=0.75pt]    {$a_{3}$};
\draw (97,194.4) node [anchor=north west][inner sep=0.75pt]    {$a_{4}$};
\draw (157,194.4) node [anchor=north west][inner sep=0.75pt]    {$a_{5}$};
\draw (217,194.4) node [anchor=north west][inner sep=0.75pt]    {$a_{6}$};

\end{tikzpicture}
    }
    \caption{Example of a merge. Anchors are in black, and other vertices in white.
    Anchors $a_4$ in pattern $P_u$ and $a_5$ in pattern $P_v$ are identified as vertex $b$, to produce pattern $P$. These two anchors are active anchors, as they are needed for the merge. Other anchors are passive, meaning they are not needed for this merge, but may be needed for a future merge not depicted in this figure.}
    \label{fig:merge-example}
\end{figure}
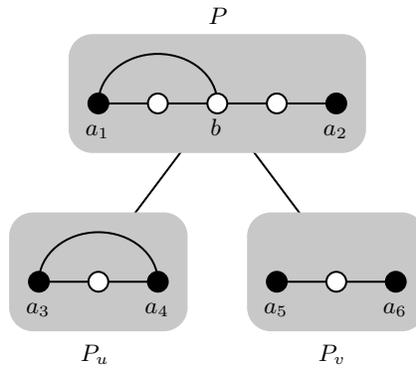

We assume we have computed matrices $T_u$ corresponding to $P_u$ and matrix $T_v$ corresponding to $P_v$.
This means we have a 2-dimensional matrix $T_u$, filled with 0's and 1's, such that for every vertices $v_1, v_2$ in the vertex set of the graph $G$, $T_u[v_1, v_2] = 1$ if and only if $P_u$ can be realized in $G$ with $v_1 = a_3$ and $v_2 = a_4$. We can use a standard 2-dimensional matrix because $P_u$ is anchored with 2 vertices.
Similarly, we have a 2-dimensional matrix $T_v$ such that for every vertices $v_1, v_2$ in the vertex set of the graph $G$, $T_v[v_1, v_2] = 1$ if and only if $P_v$ can be realized in $G$ with $v_1 = a_5$ and $_2 = a_6$.

We want to compute the content of matrix $T$, corresponding to pattern $P$.
To compute entry $T[v_1, v_2]$, we must know if $P$ can be realized in $G$ with $v_1 = a_1$ and $v_2 = a_2$.
To do so, we discuss on vertex $b$ of $P$. It must be some vertex $v_3$ of $G$.
Hence, $T[v_1, v_2] = \max_{v_3 \in V(G)} T_u[v_1, v_3] \cdot T_v[v_3, v_2]$.
This is the formula for matrix multiplication, but with a maximum instead of a sum. This can be addresses by doing a classical matrix multiplication, and then replacing a non-zero answer by a 1.
Using this technique, we can compute $T$ from $T_u$ and $T_v$ as fast as matrix multiplication. Said with merge vocabulary, we can merge $P_u$ and $P_v$ into $P$ as fast as matrix multiplication.

The goal is to extend this idea to any merge. 
In our example, it is relatively natural because it corresponded to the the classic matrix product. 
When addressing patterns with more anchors, we need to compute a $d$-dimensional matrix from a $d_1$-dimensional and a $d_2$-dimensional matrix, and matrix multiplication cannot be performed as such. Let's see what can be done through a second example, depicted in Figure \ref{fig:merge-exemple-big}.
In this case, the formula becomes :
\begin{align*}
    T[v_1, v_2] = \max_{v_3, v_4 \in V(G)} T_u[v_1, v_3, v_4] \cdot T_v[v_3, v_4, v_2].
\end{align*}

\begin{figure}[ht]
    \centering
    \scalebox{1}{
    \tikzset{every picture/.style={line width=0.75pt}} 

\begin{tikzpicture}[x=0.75pt,y=0.75pt,yscale=-1,xscale=1]

\draw    (95,195) -- (155,115) ;
\draw    (155,115) -- (215,195) ;
\draw  [draw opacity=0][fill={rgb, 255:red, 200; green, 200; blue, 200 }  ,fill opacity=1 ] (65,92) .. controls (65,85.37) and (70.37,80) .. (77,80) -- (233,80) .. controls (239.63,80) and (245,85.37) .. (245,92) -- (245,128) .. controls (245,134.63) and (239.63,140) .. (233,140) -- (77,140) .. controls (70.37,140) and (65,134.63) .. (65,128) -- cycle ;
\draw  [fill={rgb, 255:red, 0; green, 0; blue, 0 }  ,fill opacity=1 ] (225,115) .. controls (225,112.24) and (227.24,110) .. (230,110) .. controls (232.76,110) and (235,112.24) .. (235,115) .. controls (235,117.76) and (232.76,120) .. (230,120) .. controls (227.24,120) and (225,117.76) .. (225,115) -- cycle ;
\draw    (140,115) -- (230,115) ;
\draw  [draw opacity=0] (110,115) .. controls (110,115) and (110,115) .. (110,115) .. controls (110,101.19) and (130.15,90) .. (155,90) .. controls (179.85,90) and (200,101.19) .. (200,115) -- (155,115) -- cycle ; \draw   (110,115) .. controls (110,115) and (110,115) .. (110,115) .. controls (110,101.19) and (130.15,90) .. (155,90) .. controls (179.85,90) and (200,101.19) .. (200,115) ;  
\draw  [fill={rgb, 255:red, 255; green, 255; blue, 255 }  ,fill opacity=1 ] (165,115) .. controls (165,112.24) and (167.24,110) .. (170,110) .. controls (172.76,110) and (175,112.24) .. (175,115) .. controls (175,117.76) and (172.76,120) .. (170,120) .. controls (167.24,120) and (165,117.76) .. (165,115) -- cycle ;
\draw  [draw opacity=0][fill={rgb, 255:red, 200; green, 200; blue, 200 }  ,fill opacity=1 ] (20,182) .. controls (20,175.37) and (25.37,170) .. (32,170) -- (128,170) .. controls (134.63,170) and (140,175.37) .. (140,182) -- (140,218) .. controls (140,224.63) and (134.63,230) .. (128,230) -- (32,230) .. controls (25.37,230) and (20,224.63) .. (20,218) -- cycle ;
\draw  [draw opacity=0] (65,205) .. controls (65,205) and (65,205) .. (65,205) .. controls (65,191.19) and (78.43,180) .. (95,180) .. controls (111.57,180) and (125,191.19) .. (125,205) -- (95,205) -- cycle ; \draw   (65,205) .. controls (65,205) and (65,205) .. (65,205) .. controls (65,191.19) and (78.43,180) .. (95,180) .. controls (111.57,180) and (125,191.19) .. (125,205) ;  
\draw  [fill={rgb, 255:red, 0; green, 0; blue, 0 }  ,fill opacity=1 ] (120,205) .. controls (120,202.24) and (122.24,200) .. (125,200) .. controls (127.76,200) and (130,202.24) .. (130,205) .. controls (130,207.76) and (127.76,210) .. (125,210) .. controls (122.24,210) and (120,207.76) .. (120,205) -- cycle ;
\draw  [draw opacity=0][fill={rgb, 255:red, 200; green, 200; blue, 200 }  ,fill opacity=1 ] (170,182) .. controls (170,175.37) and (175.37,170) .. (182,170) -- (278,170) .. controls (284.63,170) and (290,175.37) .. (290,182) -- (290,218) .. controls (290,224.63) and (284.63,230) .. (278,230) -- (182,230) .. controls (175.37,230) and (170,224.63) .. (170,218) -- cycle ;
\draw  [fill={rgb, 255:red, 0; green, 0; blue, 0 }  ,fill opacity=1 ] (180,205) .. controls (180,202.24) and (182.24,200) .. (185,200) .. controls (187.76,200) and (190,202.24) .. (190,205) .. controls (190,207.76) and (187.76,210) .. (185,210) .. controls (182.24,210) and (180,207.76) .. (180,205) -- cycle ;
\draw    (185,205) -- (275,205) ;
\draw  [fill={rgb, 255:red, 255; green, 255; blue, 255 }  ,fill opacity=1 ] (210,205) .. controls (210,202.24) and (212.24,200) .. (215,200) .. controls (217.76,200) and (220,202.24) .. (220,205) .. controls (220,207.76) and (217.76,210) .. (215,210) .. controls (212.24,210) and (210,207.76) .. (210,205) -- cycle ;
\draw  [fill={rgb, 255:red, 0; green, 0; blue, 0 }  ,fill opacity=1 ] (240,205) .. controls (240,202.24) and (242.24,200) .. (245,200) .. controls (247.76,200) and (250,202.24) .. (250,205) .. controls (250,207.76) and (247.76,210) .. (245,210) .. controls (242.24,210) and (240,207.76) .. (240,205) -- cycle ;
\draw  [fill={rgb, 255:red, 0; green, 0; blue, 0 }  ,fill opacity=1 ] (30,205) .. controls (30,202.24) and (32.24,200) .. (35,200) .. controls (37.76,200) and (40,202.24) .. (40,205) .. controls (40,207.76) and (37.76,210) .. (35,210) .. controls (32.24,210) and (30,207.76) .. (30,205) -- cycle ;
\draw  [draw opacity=0] (35,205) .. controls (35,205) and (35,205) .. (35,205) .. controls (35,191.19) and (48.43,180) .. (65,180) .. controls (81.57,180) and (95,191.19) .. (95,205) -- (65,205) -- cycle ; \draw   (35,205) .. controls (35,205) and (35,205) .. (35,205) .. controls (35,191.19) and (48.43,180) .. (65,180) .. controls (81.57,180) and (95,191.19) .. (95,205) ;  
\draw  [fill={rgb, 255:red, 0; green, 0; blue, 0 }  ,fill opacity=1 ] (90,205) .. controls (90,202.24) and (92.24,200) .. (95,200) .. controls (97.76,200) and (100,202.24) .. (100,205) .. controls (100,207.76) and (97.76,210) .. (95,210) .. controls (92.24,210) and (90,207.76) .. (90,205) -- cycle ;
\draw  [fill={rgb, 255:red, 255; green, 255; blue, 255 }  ,fill opacity=1 ] (60,205) .. controls (60,202.24) and (62.24,200) .. (65,200) .. controls (67.76,200) and (70,202.24) .. (70,205) .. controls (70,207.76) and (67.76,210) .. (65,210) .. controls (62.24,210) and (60,207.76) .. (60,205) -- cycle ;
\draw  [fill={rgb, 255:red, 0; green, 0; blue, 0 }  ,fill opacity=1 ] (270,205) .. controls (270,202.24) and (272.24,200) .. (275,200) .. controls (277.76,200) and (280,202.24) .. (280,205) .. controls (280,207.76) and (277.76,210) .. (275,210) .. controls (272.24,210) and (270,207.76) .. (270,205) -- cycle ;
\draw  [fill={rgb, 255:red, 0; green, 0; blue, 0 }  ,fill opacity=1 ] (75,115) .. controls (75,112.24) and (77.24,110) .. (80,110) .. controls (82.76,110) and (85,112.24) .. (85,115) .. controls (85,117.76) and (82.76,120) .. (80,120) .. controls (77.24,120) and (75,117.76) .. (75,115) -- cycle ;
\draw  [draw opacity=0] (80,115) .. controls (80,115) and (80,115) .. (80,115) .. controls (80,101.19) and (93.43,90) .. (110,90) .. controls (126.57,90) and (140,101.19) .. (140,115) -- (110,115) -- cycle ; \draw   (80,115) .. controls (80,115) and (80,115) .. (80,115) .. controls (80,101.19) and (93.43,90) .. (110,90) .. controls (126.57,90) and (140,101.19) .. (140,115) ;  
\draw  [fill={rgb, 255:red, 255; green, 255; blue, 255 }  ,fill opacity=1 ] (105,115) .. controls (105,112.24) and (107.24,110) .. (110,110) .. controls (112.76,110) and (115,112.24) .. (115,115) .. controls (115,117.76) and (112.76,120) .. (110,120) .. controls (107.24,120) and (105,117.76) .. (105,115) -- cycle ;
\draw  [fill={rgb, 255:red, 255; green, 255; blue, 255 }  ,fill opacity=1 ] (135,115) .. controls (135,112.24) and (137.24,110) .. (140,110) .. controls (142.76,110) and (145,112.24) .. (145,115) .. controls (145,117.76) and (142.76,120) .. (140,120) .. controls (137.24,120) and (135,117.76) .. (135,115) -- cycle ;
\draw  [fill={rgb, 255:red, 255; green, 255; blue, 255 }  ,fill opacity=1 ] (195,115) .. controls (195,112.24) and (197.24,110) .. (200,110) .. controls (202.76,110) and (205,112.24) .. (205,115) .. controls (205,117.76) and (202.76,120) .. (200,120) .. controls (197.24,120) and (195,117.76) .. (195,115) -- cycle ;

\draw (149,65.4) node [anchor=north west][inner sep=0.75pt]    {$P$};
\draw (69.5,235.4) node [anchor=north west][inner sep=0.75pt]    {$P_{u}$};
\draw (220,235.4) node [anchor=north west][inner sep=0.75pt]    {$P_{v}$};
\draw (72,124.4) node [anchor=north west][inner sep=0.75pt]    {$a_{1}$};
\draw (222,124.4) node [anchor=north west][inner sep=0.75pt]    {$a_{2}$};
\draw (27,214.4) node [anchor=north west][inner sep=0.75pt]    {$a_{3}$};
\draw (117,214.4) node [anchor=north west][inner sep=0.75pt]    {$a_{5}$};
\draw (177,214.4) node [anchor=north west][inner sep=0.75pt]    {$a_{6}$};
\draw (237,214.4) node [anchor=north west][inner sep=0.75pt]    {$a_{7}$};
\draw (87,214.4) node [anchor=north west][inner sep=0.75pt]    {$a_{4}$};
\draw (267,214.4) node [anchor=north west][inner sep=0.75pt]    {$a_{8}$};
\draw (132,121.4) node [anchor=north west][inner sep=0.75pt]    {$b_{1}$};
\draw (192,121.4) node [anchor=north west][inner sep=0.75pt]    {$b_{2}$};

\end{tikzpicture}
    }
    \caption{Example of a merge. Anchors are in black, and other vertices in white.
    Anchors $a_4$ in $P_u$ and $a_6$ in pattern $P_v$ are identified as vertex $b_1$ in pattern $P$.
    Anchors $a_5$ in $P_u$ and $a_7$ in pattern $P_v$ are identified as vertex $b_2$ in pattern $P$.
    These four anchors are active anchors, as they are needed for the merge. Other anchors are passive, meaning they are not needed for this merge, but may be needed for a future merge not depicted in this figure.}
    \label{fig:merge-exemple-big}
\end{figure}
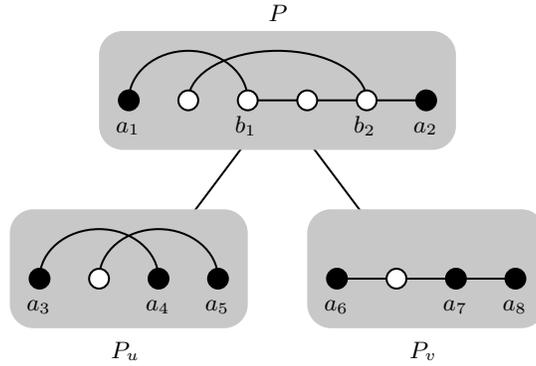

It is not related to a standard a square matrix multiplication anymore, but we can transfer back to this setting by merging some dimensions together.
Let $M_u$ be a $n$ by $n^2$ matrix. Rows are indexed by vertices of $G$ and columns by pair of vertices of $G$. For every vertices $v_1, v_2, v_3 \in V(G)$, we set
$M_u[v_1, (v_2, v_3)] = T_u[v_1, v_2, v_3]$.
Similarly, we define $M_v$ as a $n^2$ by $n$ matrix such that
$M_v[(v_1, v_2), v_3] = T_v[v_1, v_2, v_3]$.
Now, the formula becomes similar to a matrix multiplication:
\begin{align*}
    T[v_1, v_2] = \max_{(v_3, v_4) \in V(G) \times V(G)} M_u[v_1, (v_3, v_4)] \cdot M_v[(v_3, v_4), v_2].
\end{align*}
This means that we can compute $T$ by multiplying $M_u$ by $M_v$.

\paragraph{Merge nodes and matrix multiplication.}


We now show formally how we handle the merge.
At the end, we give the complexity of that computation, parameterized by the maximum number of anchors in any of the three patterns involved.

Let $A_u$ be the set of anchors of $P_u$, $A_v$ the set of anchors of $P_v$, and consider the merge $P'$ of $P_u$ and $P_v$.
Let $A'$ be the set of anchors of the merged pattern $P'$.
We know by definition that $A' \subseteq A_u \cup A_v$.
Let $a = |A_u \cup A_v|$.


To use matrix multiplication, we first need to convert each multi-dimensional matrix into a 2-dimensional matrix.
Let's split the different anchors into several sets:
\begin{itemize}
    \item $U = A_u \setminus A_v$
    \item $V = A_v \setminus A_u$
    \item $X =\overline {A'} \cap A_u \cap A_v$
    \item $Y = A' \cap A_u \cap A_v$
\end{itemize}
This gives a partition of anchors of $A_u \cup A_v$, because  $A' \subseteq A_u \cup A_v$, and every anchor in $U \cup V$ is in $A'$ (this translates the fact that there is no operation to delete passive anchors).
Let fix all anchors in $Y$, and compute a matrix summarizing for all anchors in $U \cup V$ whether there is a realization of $P'$ with anchors in $Y \cup U \cup V = A'$.
The two 2-dimensional matrices are:

\begin{itemize}
    \item $M_u$, where the row are indexes with elements of $U$ and columns with elements of $X$, such that
    \begin{align*}
        M_u[(u_1, \ldots, u_\lambda), (v_1, \ldots,  v_\mu)] = T_u[u_1, \ldots, u_\lambda, v_1, \ldots,  v_\mu].
    \end{align*}
    \item $M_v$, where the row are indexes with elements of $X$ and columns with elements of $V$, such that
    \begin{align*}
        M_v[(v_1, \ldots, v_\mu), (w_1, \ldots,  w_\rho)] = T_v[v_1, \ldots, v_\mu, w_1, \ldots,  w_\rho].
    \end{align*}
\end{itemize}

Doing the multiplication $M_u$ by $M_v$ gives a matrix $M'$ with rows indexed by $U$, columns by $V$, such that $M'[(u_1, \ldots, u_\lambda), (w_1, \ldots,  w_\rho)]$ is positive if and only if there is a realization of $P'$ with anchors $u_1, \ldots, u_\lambda, w_1, \ldots,  w_\rho$.
By doing the multiplication for all realizations of all anchors in $Y$, we can fill $T'$.

This has a complexity of $O(n^x)$ for some $x$. Let's compute $x$. 
Recall that the cost of multiplying two $n$ by $n$ matrices is $O(n^\omega)$, and the cost of multiplying a $n^a$ by $n^b$ matrix by a $n^b$ by $n^c$ matrix is  $O(n^{\omega(a,b,c)})$.
Using clever block multiplications, we have\cite{huang1998fast}, $\omega(a,b,c) \leq \omega \times \max(a+b,b+c,a+c)/2$.
In our case, we do a matrix multiplication between a $|U|$ by $|X|$ matrix and a $|X|$ by $|V|$ matrix for each realization of anchors in $|Y|$, i.e. $O(n^{|Y|})$ times. This gives $x = |Y| + \omega \times \max(|U| + |X|, |X| + |V|, |U| + |V|)/2 \leq \omega \times \max(|Y| + |U| + |X|, |Y| + |X| + |V|, |Y| + |U| + |V|)/2$. By setting $p$ to be the maximum number of anchors (i.e. $p = \max(|A_u|, |A_v|, |A'|)$), we obtain $x \leq \omega p/2$.

Hence, the following theorem:

\begin{theorem}\label{thm:certify-merge-width}
    The complexity to detect a pattern of merge-with $p$ is $O(n^{\omega p/2})$, where the complexity to multiply two $n$ by $n$ matrices is $O(n^\omega)$.
\end{theorem}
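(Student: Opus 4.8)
The argument makes rigorous the dynamic programming sketched above. First, by Lemma~\ref{lem:any-pattern-created} the merge-width is well defined, so we may fix a merge tree $\mathcal{T}$ for $P$ whose width is $p$; since the size $k=|V(P)|$ is a constant, we may moreover assume $\mathcal{T}$ has a number of nodes bounded by a function of $k$ only (this affects only the hidden constant). We process the nodes of $\mathcal{T}$ bottom-up. For a node $u$ carrying the anchored pattern $P_u$ with anchor set $A_u$ (so $|A_u|\le p$), we compute a boolean table $T_u$ indexed by $[n]^{A_u}$ such that $T_u[\bar v]=1$ if and only if $(G,\tau)$ has a realization of $P_u$ in which, for every anchor $a\in A_u$, the vertex playing the role of $a$ is $v_a$. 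Having computed the table at the root, $(G,\tau)$ contains $P$ if and only if that table has a nonzero entry (if the root pattern has no anchor the table is a single bit equal to the answer).

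Filling the tables goes by case on the node type. A leaf carries a single (mandatory or forbidden) edge, and $T_u$ is read directly from $G$ in $O(n^2)$ time. For a vertex-creation node, the operation requires the two vertices surrounding the new isolated vertex to be anchors in the child pattern; we obtain $T_u$ from the child table by inserting a fresh coordinate for the new vertex, keeping an entry only when the two surrounding coordinates are non-consecutive positions of $\tau$ (this is exactly the pattern-side condition~\ref{item:consecutive-anchors} of Definition~\ref{def:operations}, read on $G$), and then eliminating, by a logical OR over its coordinate, any of the new vertex and its two neighbours that is not an anchor of $P_u$; this costs $O(n^{|A_u|})$. For a merge node with children $u_1,u_2$, correctness is precisely Lemma~\ref{lem:consistent-merge}: a realization of $P_{u_1}$ and one of $P_{u_2}$ that agree on the positions of the active anchors $A_{u_1}\cap A_{u_2}$ glue into a realization of $P_u$, and conversely every realization of $P_u$ restricts to such an agreeing pair. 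Hence $T_u$ is obtained by a generalized ($\vee$ in place of $+$, $\wedge$ in place of $\times$) matrix product of appropriate slices of $T_{u_1}$ and $T_{u_2}$.

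It remains to bound the cost of a merge. Partition the union $A_{u_1}\cup A_{u_2}$ of the children's anchor sets into $U=A_{u_1}\setminus A_{u_2}$, $V=A_{u_2}\setminus A_{u_1}$, the active anchors surviving into $P_u$, namely $Y=A_u\cap A_{u_1}\cap A_{u_2}$, and the active anchors not surviving, $X=(A_{u_1}\cap A_{u_2})\setminus Y$. Since $A_u\subseteq A_{u_1}\cup A_{u_2}$ and no operation deletes a passive anchor, we have $U,V\subseteq A_u$, hence $A_{u_1}=U\cup X\cup Y$, $A_{u_2}=V\cup X\cup Y$ and $A_u=U\cup V\cup Y$. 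For each of the $n^{|Y|}$ assignments to $Y$, reshape the corresponding slice of $T_{u_1}$ into an $n^{|U|}\times n^{|X|}$ boolean matrix and that of $T_{u_2}$ into an $n^{|X|}\times n^{|V|}$ boolean matrix, multiply them with a real matrix product (replacing nonzero entries by $1$), and store the result; this computes exactly $\bigvee_{X} T_{u_1}[U,X,Y]\wedge T_{u_2}[V,X,Y]$. Using $\omega(a,b,c)\le \omega\cdot\max(a+b,b+c,c+a)/2$~\cite{huang1998fast} and $\omega\ge 2$, the total time is $O(n^x)$ with
\[
x \;\le\; |Y| + \frac{\omega}{2}\max\bigl(|U|+|X|,\,|X|+|V|,\,|U|+|V|\bigr) \;\le\; \frac{\omega}{2}\max\bigl(|A_{u_1}|,|A_{u_2}|,|A_u|\bigr) \;\le\; \frac{\omega p}{2}.
\]
Summing over the (constantly many) nodes of $\mathcal{T}$, and noting that the leaf and vertex-creation costs $O(n^2)$ and $O(n^{|A_u|})\le O(n^p)$ are dominated by $O(n^{\omega p/2})$, the algorithm runs in $O(n^{\omega p/2})$, as claimed.

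The conceptual heart, that gluing realizations along the active anchors is both sound and complete, is already supplied by Lemma~\ref{lem:consistent-merge} (whose proof crucially invokes condition~\ref{item:consecutive-anchors}); so I expect \emph{the main obstacle} to be purely the bookkeeping of the merge step: determining exactly which active anchors must be summed over ($X$) versus retained ($Y$), checking that collapsing a coordinate by a logical OR never discards a realization (again a consequence of Lemma~\ref{lem:consistent-merge}, this time applied to a realization of $P_u$ restricted to the two children), correctly reshaping the multi-dimensional tables into rectangular matrices so that the $(\vee,\wedge)$-product computes the desired quantity, and carrying the exponent bound through the rectangular matrix-multiplication inequality. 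A little additional care is needed for the degenerate cases of a vertex created at an end of a pattern, but it does not affect the stated running time.
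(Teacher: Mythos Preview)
Your proof is correct and follows essentially the same approach as the paper: the same bottom-up dynamic programming over the merge tree, the same $U,V,X,Y$ partition of anchors at a merge node, the same reshaping into rectangular matrices with one product per assignment to $Y$, and the same exponent bound via $\omega(a,b,c)\le\tfrac{\omega}{2}\max(a+b,b+c,c+a)$ together with $|Y|\le\tfrac{\omega}{2}|Y|$. If anything, your write-up is slightly more explicit than the paper's (you spell out that $U,V\subseteq A_u$ because passive anchors are never deleted, and you make the boolean-to-real matrix conversion explicit), but the argument is the same.
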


\subsection{Patterns with bounded merge-width}

In this section, we show that there is a natural family of patterns (that we call \emph{outerplanar patterns}) that do have bounded merge-width, which justify the study of this parameter. 
We also show that a generalization of this result. 
We believe that several other classes of patterns have bounded merge-width, but we leave this for further work.

\begin{definition}
An outerplanar pattern is a pattern such that when it is drawn with all vertices placed on a line in increasing order, with all edges being semi-circle on top of the line, edges do not intersect each others.
\end{definition}

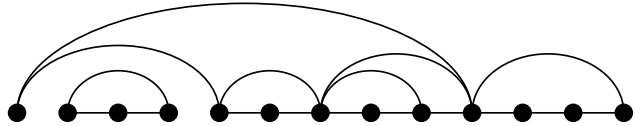
\begin{figure}[!h]
    \centering
    \scalebox{0.85}{
    \tikzset{every picture/.style={line width=0.75pt}} 

\begin{tikzpicture}[x=0.75pt,y=0.75pt,yscale=-1,xscale=1]

\draw  [fill={rgb, 255:red, 0; green, 0; blue, 0 }  ,fill opacity=1 ] (105,135) .. controls (105,132.24) and (107.24,130) .. (110,130) .. controls (112.76,130) and (115,132.24) .. (115,135) .. controls (115,137.76) and (112.76,140) .. (110,140) .. controls (107.24,140) and (105,137.76) .. (105,135) -- cycle ;
\draw  [fill={rgb, 255:red, 0; green, 0; blue, 0 }  ,fill opacity=1 ] (165,135) .. controls (165,132.24) and (167.24,130) .. (170,130) .. controls (172.76,130) and (175,132.24) .. (175,135) .. controls (175,137.76) and (172.76,140) .. (170,140) .. controls (167.24,140) and (165,137.76) .. (165,135) -- cycle ;
\draw  [draw opacity=0] (140,135) .. controls (140,135) and (140,135) .. (140,135) .. controls (140,121.19) and (153.43,110) .. (170,110) .. controls (186.57,110) and (200,121.19) .. (200,135) -- (170,135) -- cycle ; \draw   (140,135) .. controls (140,135) and (140,135) .. (140,135) .. controls (140,121.19) and (153.43,110) .. (170,110) .. controls (186.57,110) and (200,121.19) .. (200,135) ;  
\draw  [fill={rgb, 255:red, 0; green, 0; blue, 0 }  ,fill opacity=1 ] (225,135) .. controls (225,132.24) and (227.24,130) .. (230,130) .. controls (232.76,130) and (235,132.24) .. (235,135) .. controls (235,137.76) and (232.76,140) .. (230,140) .. controls (227.24,140) and (225,137.76) .. (225,135) -- cycle ;
\draw  [fill={rgb, 255:red, 0; green, 0; blue, 0 }  ,fill opacity=1 ] (285,135) .. controls (285,132.24) and (287.24,130) .. (290,130) .. controls (292.76,130) and (295,132.24) .. (295,135) .. controls (295,137.76) and (292.76,140) .. (290,140) .. controls (287.24,140) and (285,137.76) .. (285,135) -- cycle ;
\draw  [fill={rgb, 255:red, 0; green, 0; blue, 0 }  ,fill opacity=1 ] (315,135) .. controls (315,132.24) and (317.24,130) .. (320,130) .. controls (322.76,130) and (325,132.24) .. (325,135) .. controls (325,137.76) and (322.76,140) .. (320,140) .. controls (317.24,140) and (315,137.76) .. (315,135) -- cycle ;
\draw  [fill={rgb, 255:red, 0; green, 0; blue, 0 }  ,fill opacity=1 ] (375,135) .. controls (375,132.24) and (377.24,130) .. (380,130) .. controls (382.76,130) and (385,132.24) .. (385,135) .. controls (385,137.76) and (382.76,140) .. (380,140) .. controls (377.24,140) and (375,137.76) .. (375,135) -- cycle ;
\draw  [fill={rgb, 255:red, 0; green, 0; blue, 0 }  ,fill opacity=1 ] (405,135) .. controls (405,132.24) and (407.24,130) .. (410,130) .. controls (412.76,130) and (415,132.24) .. (415,135) .. controls (415,137.76) and (412.76,140) .. (410,140) .. controls (407.24,140) and (405,137.76) .. (405,135) -- cycle ;
\draw  [fill={rgb, 255:red, 0; green, 0; blue, 0 }  ,fill opacity=1 ] (465,135) .. controls (465,132.24) and (467.24,130) .. (470,130) .. controls (472.76,130) and (475,132.24) .. (475,135) .. controls (475,137.76) and (472.76,140) .. (470,140) .. controls (467.24,140) and (465,137.76) .. (465,135) -- cycle ;
\draw  [draw opacity=0][line width=0.75]  (110,135) .. controls (110,99.1) and (170.44,70) .. (245,70) .. controls (319.56,70) and (380,99.1) .. (380,135) -- (245,135) -- cycle ; \draw  [line width=0.75]  (110,135) .. controls (110,99.1) and (170.44,70) .. (245,70) .. controls (319.56,70) and (380,99.1) .. (380,135) ;  
\draw  [fill={rgb, 255:red, 0; green, 0; blue, 0 }  ,fill opacity=1 ] (135,135) .. controls (135,132.24) and (137.24,130) .. (140,130) .. controls (142.76,130) and (145,132.24) .. (145,135) .. controls (145,137.76) and (142.76,140) .. (140,140) .. controls (137.24,140) and (135,137.76) .. (135,135) -- cycle ;
\draw  [fill={rgb, 255:red, 0; green, 0; blue, 0 }  ,fill opacity=1 ] (195,135) .. controls (195,132.24) and (197.24,130) .. (200,130) .. controls (202.76,130) and (205,132.24) .. (205,135) .. controls (205,137.76) and (202.76,140) .. (200,140) .. controls (197.24,140) and (195,137.76) .. (195,135) -- cycle ;
\draw  [fill={rgb, 255:red, 0; green, 0; blue, 0 }  ,fill opacity=1 ] (255,135) .. controls (255,132.24) and (257.24,130) .. (260,130) .. controls (262.76,130) and (265,132.24) .. (265,135) .. controls (265,137.76) and (262.76,140) .. (260,140) .. controls (257.24,140) and (255,137.76) .. (255,135) -- cycle ;
\draw  [fill={rgb, 255:red, 0; green, 0; blue, 0 }  ,fill opacity=1 ] (345,135) .. controls (345,132.24) and (347.24,130) .. (350,130) .. controls (352.76,130) and (355,132.24) .. (355,135) .. controls (355,137.76) and (352.76,140) .. (350,140) .. controls (347.24,140) and (345,137.76) .. (345,135) -- cycle ;
\draw  [fill={rgb, 255:red, 0; green, 0; blue, 0 }  ,fill opacity=1 ] (435,135) .. controls (435,132.24) and (437.24,130) .. (440,130) .. controls (442.76,130) and (445,132.24) .. (445,135) .. controls (445,137.76) and (442.76,140) .. (440,140) .. controls (437.24,140) and (435,137.76) .. (435,135) -- cycle ;
\draw    (140,135) -- (200,135) ;
\draw  [draw opacity=0] (230,135) .. controls (230,135) and (230,135) .. (230,135) .. controls (230,121.19) and (243.43,110) .. (260,110) .. controls (276.57,110) and (290,121.19) .. (290,135) -- (260,135) -- cycle ; \draw   (230,135) .. controls (230,135) and (230,135) .. (230,135) .. controls (230,121.19) and (243.43,110) .. (260,110) .. controls (276.57,110) and (290,121.19) .. (290,135) ;  
\draw  [draw opacity=0] (290,135) .. controls (290,135) and (290,135) .. (290,135) .. controls (290,121.19) and (303.43,110) .. (320,110) .. controls (336.57,110) and (350,121.19) .. (350,135) -- (320,135) -- cycle ; \draw   (290,135) .. controls (290,135) and (290,135) .. (290,135) .. controls (290,121.19) and (303.43,110) .. (320,110) .. controls (336.57,110) and (350,121.19) .. (350,135) ;  
\draw  [draw opacity=0] (290,135) .. controls (290,135) and (290,135) .. (290,135) .. controls (290,115.67) and (310.15,100) .. (335,100) .. controls (359.85,100) and (380,115.67) .. (380,135) -- (335,135) -- cycle ; \draw   (290,135) .. controls (290,135) and (290,135) .. (290,135) .. controls (290,115.67) and (310.15,100) .. (335,100) .. controls (359.85,100) and (380,115.67) .. (380,135) ;  
\draw    (230,135) -- (380,135) ;
\draw  [draw opacity=0] (110,135) .. controls (110,112.91) and (136.86,95) .. (170,95) .. controls (203.14,95) and (230,112.91) .. (230,135) -- (170,135) -- cycle ; \draw   (110,135) .. controls (110,112.91) and (136.86,95) .. (170,95) .. controls (203.14,95) and (230,112.91) .. (230,135) ;  
\draw  [draw opacity=0] (380,135) .. controls (380,135) and (380,135) .. (380,135) .. controls (380,115.67) and (400.15,100) .. (425,100) .. controls (449.85,100) and (470,115.67) .. (470,135) -- (425,135) -- cycle ; \draw   (380,135) .. controls (380,135) and (380,135) .. (380,135) .. controls (380,115.67) and (400.15,100) .. (425,100) .. controls (449.85,100) and (470,115.67) .. (470,135) ;  
\draw    (380,135) -- (470,135) ;

\end{tikzpicture}
    }
    \caption{Example of an outerplanar pattern with 14 vertices and 17 edges. For readability, edges may not be represented by semi-circles, but rather by semi-ellipses or even straight lines.}
    \label{fig:planar-example}
\end{figure}

The first part of this section (Lemmas \ref{lem:pattern-decomposition} and \ref{lem:merge-decomposition}) is dedicated to prove the following theorem:

\begin{theorem}\label{thm:outerplanar-merge-width}
Outerplanar patterns have merge-width at most 2. 
\end{theorem}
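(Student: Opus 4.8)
The plan is to show, by structural induction on outerplanar patterns, that every outerplanar pattern admits a merge tree of width at most~$2$; in fact it is convenient to prove the slightly stronger statement that for every outerplanar pattern $P$ and every choice of ``distinguished pair'' consisting of the leftmost and rightmost vertices of $P$ (or just one of them, or neither), there is a merge tree whose root is the anchored version of $P$ with exactly these vertices as anchors, and in which every node has at most two anchors. The key observation is the classical decomposition of outerplanar (non-crossing) structures: consider the leftmost vertex $v_1$ and the rightmost vertex $v_k$ of $P$. Because no two edges cross in the non-crossing drawing, the set of edges naturally splits into ``nested blocks''. First I would handle the edge incident to $v_1$ that reaches furthest to the right, say $v_1v_j$; by non-crossing, this edge together with all edges drawn ``under'' it forms a sub-pattern on a contiguous set of positions $[1,j]$, and the edges drawn outside it live on $[j,k]$ (sharing only the vertex $v_j$). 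This gives a divide of $P$ into $P_{\mathrm{left}}$ on $V\cap[1,j]$ and $P_{\mathrm{right}}$ on $V\cap[j,k]$, with $v_j$ as the unique shared (active) anchor; one then checks Conditions (c)--(e) of Definition~\ref{def:operations}, the delicate one being Condition~\ref{item:consecutive-anchors}, which holds precisely because $[1,j]$ and $[j,k]$ are intervals overlapping only at $j$, so no two consecutive positions are split between the two sides.

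The recursion then proceeds as follows. Inside $P_{\mathrm{left}}$, the outermost edge $v_1v_j$ wraps everything, and $v_1,v_j$ are the two extreme vertices, so by the strengthened induction hypothesis $P_{\mathrm{left}}$ has a width-$2$ merge tree rooted at its anchored version with anchors $\{v_1,v_j\}$ — but wait, $P_{\mathrm{left}}$ minus the edge $v_1v_j$ is itself outerplanar on $[1,j]$, so we peel off $v_1v_j$ via an edge-creation-like step: more precisely we build $P_{\mathrm{left}}$ by merging the single edge $v_1v_j$ (a leaf, both endpoints anchors) with the pattern $P_{\mathrm{left}}\setminus\{v_1v_j\}$ anchored at $\{v_1,v_j\}$, which again has width $2$ by induction. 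For $P_{\mathrm{right}}$ we recurse with $v_j$ and $v_k$ as the extreme vertices. Isolated vertices (positions carrying no edge) are inserted by vertex-creation steps; here the subtlety is that vertex creation requires the two surrounding vertices to be anchors before the creation, so I would insert all vertices of a sub-pattern eagerly at the leaves — actually better: build the sub-pattern on its non-isolated vertices first and then thread in the isolated ones one by one using vertex-creation, each time keeping only the two relevant surrounding vertices plus the extreme pair as anchors, which is still at most two at any node after we forget the anchor status of interior vertices immediately (recall a vertex may lose its anchor status after creation or merge). The base cases are the empty/single-vertex pattern and the single-edge pattern, both trivially of width at most~$2$.

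The main obstacle I anticipate is \emph{bookkeeping the anchor budget through vertex creations of isolated vertices while simultaneously carrying the extreme-pair anchors}: naively one wants three anchors (the two surrounding vertices for the creation plus a ``global'' extreme vertex), which would break the width-$2$ bound. The resolution is to observe that in the non-crossing drawing an isolated vertex between positions $a$ and $b$ can always be created at a moment when $\{a,b\}$ \emph{are} the only anchors we need — i.e., we process the sub-pattern on an interval $[a,b]$ where $a,b$ are exactly the extremes of that interval, create the vertex, and then merge the resulting pattern (now anchored at $\{a,b\}$) into the larger construction. So the induction should be phrased on \emph{intervals} of positions rather than arbitrary vertex subsets, with the invariant ``the root's anchors are a subset of $\{\text{leftmost position}, \text{rightmost position}\}$ of the interval''. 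With that invariant, every divide produces two strictly smaller intervals sharing one endpoint, every merge has at most $|\{v_j\}\cup\{v_1,v_k\}\cap V_{\mathrm{left}}|\le 2$ active anchors and each child carries at most $2$ anchors, and each vertex-creation node has at most $2$ anchors after forgetting interior ones. Finally, Lemma~\ref{lem:consistent-merge} (together with Lemma~\ref{lem:any-pattern-created} for correctness of the constructed tree) guarantees the merge tree is valid, and taking the width over the whole tree gives $2$, which is Theorem~\ref{thm:outerplanar-merge-width}. I would present this as the two announced lemmas: Lemma~\ref{lem:pattern-decomposition} isolating the interval decomposition of a non-crossing pattern, and Lemma~\ref{lem:merge-decomposition} assembling the width-$2$ merge tree by induction.
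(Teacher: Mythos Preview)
Your proposal is correct and follows essentially the same approach as the paper: you maintain the invariant that the root of each sub-tree is anchored exactly at the leftmost and rightmost vertices of its interval, and your three-way decomposition (peel off the covering edge $v_1v_k$; split at the rightmost neighbour $v_j$ of $v_1$; or remove an isolated $v_1$) is precisely the paper's Lemma~\ref{lem:pattern-decomposition} (Covering edge / Cut vertex / Isolated first), after which Lemma~\ref{lem:merge-decomposition} assembles the width-$2$ merge tree by induction just as you describe. Your discussion of the anchor-budget issue for isolated vertices is resolved exactly as in the paper---by always inserting the isolated vertex at the boundary of the current interval, so the ``surrounding'' anchor requirement coincides with the extreme-pair invariant; note that the references to Lemmas~\ref{lem:consistent-merge} and~\ref{lem:any-pattern-created} are not actually needed here, only the verification of conditions~(a)--(e) of Definition~\ref{def:operations}.
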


Combined with Theorem \ref{thm:certify-merge-width}, we get the following.

\begin{corollary}
There exists an algorithm to detect an outerplanar pattern in time $O(n^\omega)$, i.e. as efficient as a matrix multiplication algorithm.
\end{corollary}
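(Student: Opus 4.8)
The plan is to prove Theorem~\ref{thm:outerplanar-merge-width} by giving an explicit recursive construction of a merge tree of width at most~$2$ for any outerplanar pattern, and then simply invoking Theorem~\ref{thm:certify-merge-width} for the corollary. The starting point is the standard structure of outerplanar graphs drawn with vertices on a line: such a drawing has a natural ``nesting'' hierarchy. Concretely, I would first argue (this is Lemma~\ref{lem:pattern-decomposition} in the paper's plan) that any outerplanar pattern $P$ on vertices $1 < 2 < \dots < k$ with at least one edge can be decomposed by picking a suitable ``outermost'' edge or interval and splitting $E(P)$ into two parts, each of which, restricted to its supporting vertices, is again outerplanar, and such that the two parts share at most two vertices — the two extreme vertices of the common interval. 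The key point is that non-crossing of edges forces this: if an edge $(a,b)$ is present, then every other edge is either entirely inside $[a,b]$, entirely outside, or ``spans'' $a$ or $b$ in a way compatible with nesting; one can always carve out a sub-interval $[i,j]$ so that the edges inside and the edges outside only meet at $\{i,j\}$.

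Next I would handle the anchor bookkeeping, which is where the real care lies. In the merge tree I would maintain the invariant that each node's anchored pattern has its anchors equal exactly to the (at most two) vertices that are ``boundary'' vertices shared with the rest of the construction — i.e. the leftmost and rightmost vertices of the interval of the graph currently under construction that also appear elsewhere. For a merge combining a pattern on interval $[i,j]$ with one on $[j,l]$ (sharing vertex $j$), or more generally sharing two endpoints, this gives at most two active anchors, and the resulting pattern on $[i,l]$ again has at most two anchors (possibly zero or one, if $i$ or $l$ are not shared further up). I must check the constraints of Definition~\ref{def:operations}: that shared vertices are anchors in both pieces (Item~\ref{item:anchors-in-common}), that every former anchor stays an anchor (no deletion of passive anchors needed since we never have more than two), and crucially Item~\ref{item:consecutive-anchors}, that there are no two consecutive vertices of $P$ with one only in $V_1$ and the other only in $V_2$. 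This last constraint is exactly why the decomposition must be along \emph{intervals} of the vertex order: if $V_1$ is the set of vertices in $[i,j]$ and $V_2$ the set in $[j,l]$, then the only boundary is at $j$ itself, which is in both, so no bad consecutive pair arises. Handling the base cases — single-edge leaves, and the insertion of isolated vertices in the interior of an interval via vertex-creation nodes (again legal because the surrounding vertices are the interval endpoints, hence anchors) — completes the construction.

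I expect the main obstacle to be making the interval-decomposition lemma fully precise and verifying the non-crossing hypothesis is preserved in all cases, including degenerate ones: intervals containing isolated vertices, nested ``fans'' of edges sharing a common endpoint, and patterns where the chosen splitting interval has an endpoint that is an endpoint of several edges on both sides. One clean way to organize this is to induct on the number of edges: choose the edge $(a,b)$ with $b-a$ maximal among edges not equal to $(1,k)$-type spanning edges, or more robustly, process edges by a recursive ``left-deep'' scheme where at each step one peels off either the leftmost vertex (if it is isolated or has all its edges going to a contiguous block) or the shortest edge. I would also need a short argument that vertices which are never re-used upstream can drop their anchor status right after being created or merged, so the width bound of $2$ is genuinely tight and not merely $O(1)$; this follows because at any node the anchor set is precisely the (at most two) extreme shared vertices, and the root of the whole tree has no shared vertices, so its anchor set can be taken empty, yielding a legitimate merge tree for $P$ in the sense of Lemma~\ref{lem:any-pattern-created}'s framework. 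Finally, given merge-width $\le 2$, Theorem~\ref{thm:certify-merge-width} immediately gives running time $O(n^{\omega \cdot 2 / 2}) = O(n^\omega)$, which is the statement of the corollary.
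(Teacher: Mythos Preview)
Your proposal is correct and takes essentially the same approach as the paper: prove merge-width at most~$2$ via an interval-based recursive decomposition of the outerplanar pattern, maintaining the invariant that the anchors are exactly the leftmost and rightmost vertices of the current sub-pattern, then apply Theorem~\ref{thm:certify-merge-width}. The paper's Lemma~\ref{lem:pattern-decomposition} makes your decomposition precise with just three cases --- (i) there is a covering edge $(p_1,p_k)$, (ii) the rightmost neighbor $p_t$ of $p_1$ is a cut vertex (no edge crosses over $p_t$ by outerplanarity and maximality of $t$), or (iii) $p_1$ is isolated --- which dispatches cleanly the ``nested fans'' and other degeneracies you anticipate as obstacles, so your proposed induction schemes would work but are more complicated than necessary.
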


The proof of Theorem~\ref{thm:outerplanar-merge-width} consists of two lemmas: Lemma~\ref{lem:pattern-decomposition} that shows how to decompose an outerplanar pattern, and Lemma~\ref{lem:merge-decomposition} that shows how to transfer this to a correct merge tree. 

\begin{lemma}\label{lem:pattern-decomposition}
Any outerplanar pattern $P$ on a vertex set $p_1,...,p_k$ with $k>1$ can be decomposed in one of the following ways:
\begin{enumerate}
    \item (Covering edge) It is has an edge $(p_1,p_k)$, and we denote by $P\setminus (p_1,p_k)$ the same pattern without this edge. See Figure~\ref{fig:planar-covering-edge}.
    \item (Cut vertex) It has a vertex $p_t$ with $1<t<k$, and  an edge $(p_1,p_t)$ such that there is no edge of the form $(p_i,p_j)$ with $i<t<j$.
    We denote by $P[1,t]$ and $P[t,k]$ the two patterns using the vertex sets $(p_1,...,p_t)$ and $(p_t,...,p_k)$ respectively. See Figure~\ref{fig:planar-cut-vertex}.
    \item (Isolated first) The first vertex $p_1$ is isolated.
    We denote by $P[2,k]$ the rest of the pattern.
    See Figure~\ref{fig:planar-isolated-first}.
\end{enumerate}
\end{lemma}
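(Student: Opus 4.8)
The plan is to prove the lemma by a structural case analysis on the outerplanar pattern $P$, driven by the position of the rightmost endpoint $p_k$ and the edges incident to the leftmost vertex $p_1$. The key observation is that, since the edges are drawn as non-crossing semi-circles above the line, the edge set incident to $p_1$ has a "nested" structure: if $(p_1,p_j)$ and $(p_1,p_{j'})$ are both edges with $j<j'$, then no edge $(p_i,p_\ell)$ can satisfy $j<i<\ell$ and $i\le j'$ without crossing $(p_1,p_{j'})$ — actually the relevant fact is simpler: the edge $(p_1, p_t)$ for the \emph{largest} index $t$ with $(p_1,p_t)\in E(P)$ acts as a "ceiling" that no other edge may cross.

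First I would dispose of the trivial sub-case: if $p_1$ is isolated (has no incident edge), we are immediately in case~3, and $P[2,k]$ is still outerplanar because deleting a vertex and its (empty set of) incident edges cannot create a crossing. So from now on assume $p_1$ has at least one neighbour, and let $t$ be the largest index such that $(p_1,p_t)\in E(P)$. If $t=k$, then $(p_1,p_k)$ is an edge and $P\setminus(p_1,p_k)$ is outerplanar (removing an edge preserves non-crossing), which is case~1. The remaining case is $1<t<k$; here I claim that no edge $(p_i,p_j)$ with $i<t<j$ can exist. Indeed, such an edge's semi-circle would have to pass "over" or "under" the semi-circle of $(p_1,p_t)$: since $i\ge 1$ and $j>t$, the two intervals $[1,t]$ and $[i,j]$ would be \emph{properly} interleaved (one endpoint of each strictly inside the other) precisely when $i>1$; and when $i=1$ the edge would be $(p_1,p_j)$ with $j>t$, contradicting maximality of $t$. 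Properly interleaved intervals force a crossing in an outerplanar drawing, contradicting outerplanarity of $P$. Hence every edge lies entirely within $[1,t]$ or entirely within $[t,k]$, and $p_t$ is a cut vertex witnessing case~2; both $P[1,t]$ and $P[t,k]$ inherit outerplanarity as induced subpatterns. The only point needing a word of care is the boundary $i=1, j>t$, handled above by the choice of $t$, and the degenerate possibility $i=t$ or $j=t$, which is allowed (edges incident to $p_t$ itself are split arbitrarily to one side, say $P[1,t]$ keeps the edge $(p_1,p_t)$).

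The main obstacle — though it is more of a bookkeeping matter than a genuine difficulty — is making the interleaving/crossing argument fully rigorous: one must pin down what "edges drawn as non-crossing semi-circles" means combinatorially, namely that for any two edges $\{a,b\}$ and $\{c,d\}$ (with $a<b$, $c<d$) we never have $a<c<b<d$. With that reformulation, the proof of the claim in case~2 reduces to: if $(p_i,p_j)\in E(P)$ with $i<t<j$, then either $i=1$ (excluded by maximality of $t$, since then $j>t$ contradicts the choice of $t$) or $1<i<t<j$, whence $1<i<t<j$ gives the forbidden pattern $1<i<t<j$ with the edge $(p_1,p_t)$, i.e. $a=1<c=i<b=t<d=j$, a crossing. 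This is exactly the combinatorial forbidden configuration, so outerplanarity is violated. I would present this as a short paragraph inside the proof, after which each of the three cases is immediate, and note that in every case the resulting pattern(s) are outerplanar because induced subpatterns and edge-deletions of an outerplanar pattern are outerplanar.
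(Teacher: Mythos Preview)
Your proof is correct and follows essentially the same approach as the paper: a case analysis on the neighbourhood of $p_1$, handling the isolated case first, then taking $t$ to be the largest index with $(p_1,p_t)\in E(P)$, distinguishing $t=k$ (covering edge) from $t<k$ (cut vertex), and ruling out any edge $(p_i,p_j)$ with $i<t<j$ by maximality of $t$ when $i=1$ and by the non-crossing condition when $i>1$. The paper's argument is slightly terser but structurally identical.
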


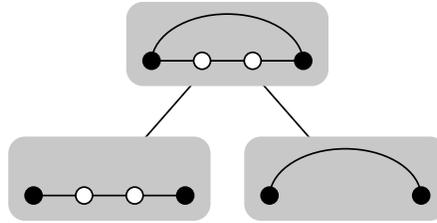
\begin{figure}[!h]
    \centering
    \scalebox{0.85}{
    \tikzset{every picture/.style={line width=0.75pt}} 

\begin{tikzpicture}[x=0.75pt,y=0.75pt,yscale=-1,xscale=1]

\draw    (230,155) -- (160,75) ;
\draw    (90,155) -- (160,75) ;
\draw  [draw opacity=0][fill={rgb, 255:red, 200; green, 200; blue, 200 }  ,fill opacity=1 ] (30,140) .. controls (30,134.48) and (34.48,130) .. (40,130) -- (140,130) .. controls (145.52,130) and (150,134.48) .. (150,140) -- (150,170) .. controls (150,175.52) and (145.52,180) .. (140,180) -- (40,180) .. controls (34.48,180) and (30,175.52) .. (30,170) -- cycle ;
\draw  [fill={rgb, 255:red, 0; green, 0; blue, 0 }  ,fill opacity=1 ] (40,165) .. controls (40,162.24) and (42.24,160) .. (45,160) .. controls (47.76,160) and (50,162.24) .. (50,165) .. controls (50,167.76) and (47.76,170) .. (45,170) .. controls (42.24,170) and (40,167.76) .. (40,165) -- cycle ;
\draw    (45,165) -- (135,165) ;
\draw  [fill={rgb, 255:red, 255; green, 255; blue, 255 }  ,fill opacity=1 ] (70,165) .. controls (70,162.24) and (72.24,160) .. (75,160) .. controls (77.76,160) and (80,162.24) .. (80,165) .. controls (80,167.76) and (77.76,170) .. (75,170) .. controls (72.24,170) and (70,167.76) .. (70,165) -- cycle ;
\draw  [fill={rgb, 255:red, 255; green, 255; blue, 255 }  ,fill opacity=1 ] (100,165) .. controls (100,162.24) and (102.24,160) .. (105,160) .. controls (107.76,160) and (110,162.24) .. (110,165) .. controls (110,167.76) and (107.76,170) .. (105,170) .. controls (102.24,170) and (100,167.76) .. (100,165) -- cycle ;
\draw  [fill={rgb, 255:red, 0; green, 0; blue, 0 }  ,fill opacity=1 ] (130,165) .. controls (130,162.24) and (132.24,160) .. (135,160) .. controls (137.76,160) and (140,162.24) .. (140,165) .. controls (140,167.76) and (137.76,170) .. (135,170) .. controls (132.24,170) and (130,167.76) .. (130,165) -- cycle ;
\draw  [draw opacity=0][fill={rgb, 255:red, 200; green, 200; blue, 200 }  ,fill opacity=1 ] (170,140) .. controls (170,134.48) and (174.48,130) .. (180,130) -- (280,130) .. controls (285.52,130) and (290,134.48) .. (290,140) -- (290,170) .. controls (290,175.52) and (285.52,180) .. (280,180) -- (180,180) .. controls (174.48,180) and (170,175.52) .. (170,170) -- cycle ;
\draw  [fill={rgb, 255:red, 0; green, 0; blue, 0 }  ,fill opacity=1 ] (180,165) .. controls (180,162.24) and (182.24,160) .. (185,160) .. controls (187.76,160) and (190,162.24) .. (190,165) .. controls (190,167.76) and (187.76,170) .. (185,170) .. controls (182.24,170) and (180,167.76) .. (180,165) -- cycle ;
\draw  [fill={rgb, 255:red, 0; green, 0; blue, 0 }  ,fill opacity=1 ] (270,165) .. controls (270,162.24) and (272.24,160) .. (275,160) .. controls (277.76,160) and (280,162.24) .. (280,165) .. controls (280,167.76) and (277.76,170) .. (275,170) .. controls (272.24,170) and (270,167.76) .. (270,165) -- cycle ;
\draw  [draw opacity=0] (185,165) .. controls (185.93,149.58) and (205.71,137.25) .. (229.97,137.25) .. controls (252.69,137.25) and (271.49,148.07) .. (274.53,162.12) -- (229.97,166.14) -- cycle ; \draw   (185,165) .. controls (185.93,149.58) and (205.71,137.25) .. (229.97,137.25) .. controls (252.69,137.25) and (271.49,148.07) .. (274.53,162.12) ;  
\draw  [draw opacity=0][fill={rgb, 255:red, 200; green, 200; blue, 200 }  ,fill opacity=1 ] (100,60) .. controls (100,54.48) and (104.48,50) .. (110,50) -- (210,50) .. controls (215.52,50) and (220,54.48) .. (220,60) -- (220,90) .. controls (220,95.52) and (215.52,100) .. (210,100) -- (110,100) .. controls (104.48,100) and (100,95.52) .. (100,90) -- cycle ;
\draw  [fill={rgb, 255:red, 0; green, 0; blue, 0 }  ,fill opacity=1 ] (110,85) .. controls (110,82.24) and (112.24,80) .. (115,80) .. controls (117.76,80) and (120,82.24) .. (120,85) .. controls (120,87.76) and (117.76,90) .. (115,90) .. controls (112.24,90) and (110,87.76) .. (110,85) -- cycle ;
\draw    (115,85) -- (205,85) ;
\draw  [fill={rgb, 255:red, 255; green, 255; blue, 255 }  ,fill opacity=1 ] (140,85) .. controls (140,82.24) and (142.24,80) .. (145,80) .. controls (147.76,80) and (150,82.24) .. (150,85) .. controls (150,87.76) and (147.76,90) .. (145,90) .. controls (142.24,90) and (140,87.76) .. (140,85) -- cycle ;
\draw  [fill={rgb, 255:red, 255; green, 255; blue, 255 }  ,fill opacity=1 ] (170,85) .. controls (170,82.24) and (172.24,80) .. (175,80) .. controls (177.76,80) and (180,82.24) .. (180,85) .. controls (180,87.76) and (177.76,90) .. (175,90) .. controls (172.24,90) and (170,87.76) .. (170,85) -- cycle ;
\draw  [fill={rgb, 255:red, 0; green, 0; blue, 0 }  ,fill opacity=1 ] (200,85) .. controls (200,82.24) and (202.24,80) .. (205,80) .. controls (207.76,80) and (210,82.24) .. (210,85) .. controls (210,87.76) and (207.76,90) .. (205,90) .. controls (202.24,90) and (200,87.76) .. (200,85) -- cycle ;
\draw  [draw opacity=0] (115,85) .. controls (115.93,69.58) and (135.71,57.25) .. (159.97,57.25) .. controls (182.69,57.25) and (201.49,68.07) .. (204.53,82.12) -- (159.97,86.14) -- cycle ; \draw   (115,85) .. controls (115.93,69.58) and (135.71,57.25) .. (159.97,57.25) .. controls (182.69,57.25) and (201.49,68.07) .. (204.53,82.12) ;

\end{tikzpicture}
    }
    \caption{Covering edge}
    \label{fig:planar-covering-edge}
\end{figure}

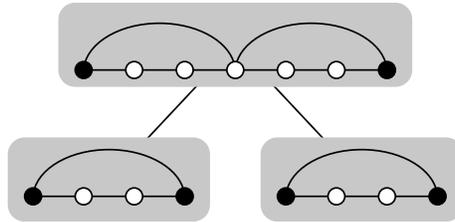
\begin{figure}[!h]
    \centering
    \scalebox{0.85}{
    \tikzset{every picture/.style={line width=0.75pt}} 

\begin{tikzpicture}[x=0.75pt,y=0.75pt,yscale=-1,xscale=1]

\draw    (290,175) -- (215,95) ;
\draw    (140,175) -- (215,95) ;
\draw  [draw opacity=0][fill={rgb, 255:red, 200; green, 200; blue, 200 }  ,fill opacity=1 ] (80,160) .. controls (80,154.48) and (84.48,150) .. (90,150) -- (190,150) .. controls (195.52,150) and (200,154.48) .. (200,160) -- (200,190) .. controls (200,195.52) and (195.52,200) .. (190,200) -- (90,200) .. controls (84.48,200) and (80,195.52) .. (80,190) -- cycle ;
\draw  [fill={rgb, 255:red, 0; green, 0; blue, 0 }  ,fill opacity=1 ] (90,185) .. controls (90,182.24) and (92.24,180) .. (95,180) .. controls (97.76,180) and (100,182.24) .. (100,185) .. controls (100,187.76) and (97.76,190) .. (95,190) .. controls (92.24,190) and (90,187.76) .. (90,185) -- cycle ;
\draw    (95,185) -- (185,185) ;
\draw  [fill={rgb, 255:red, 255; green, 255; blue, 255 }  ,fill opacity=1 ] (120,185) .. controls (120,182.24) and (122.24,180) .. (125,180) .. controls (127.76,180) and (130,182.24) .. (130,185) .. controls (130,187.76) and (127.76,190) .. (125,190) .. controls (122.24,190) and (120,187.76) .. (120,185) -- cycle ;
\draw  [fill={rgb, 255:red, 255; green, 255; blue, 255 }  ,fill opacity=1 ] (150,185) .. controls (150,182.24) and (152.24,180) .. (155,180) .. controls (157.76,180) and (160,182.24) .. (160,185) .. controls (160,187.76) and (157.76,190) .. (155,190) .. controls (152.24,190) and (150,187.76) .. (150,185) -- cycle ;
\draw  [fill={rgb, 255:red, 0; green, 0; blue, 0 }  ,fill opacity=1 ] (180,185) .. controls (180,182.24) and (182.24,180) .. (185,180) .. controls (187.76,180) and (190,182.24) .. (190,185) .. controls (190,187.76) and (187.76,190) .. (185,190) .. controls (182.24,190) and (180,187.76) .. (180,185) -- cycle ;
\draw  [draw opacity=0] (95,185) .. controls (95.93,169.58) and (115.71,157.25) .. (139.97,157.25) .. controls (162.69,157.25) and (181.49,168.07) .. (184.53,182.12) -- (139.97,186.14) -- cycle ; \draw   (95,185) .. controls (95.93,169.58) and (115.71,157.25) .. (139.97,157.25) .. controls (162.69,157.25) and (181.49,168.07) .. (184.53,182.12) ;  
\draw  [draw opacity=0][fill={rgb, 255:red, 200; green, 200; blue, 200 }  ,fill opacity=1 ] (230,160) .. controls (230,154.48) and (234.48,150) .. (240,150) -- (340,150) .. controls (345.52,150) and (350,154.48) .. (350,160) -- (350,190) .. controls (350,195.52) and (345.52,200) .. (340,200) -- (240,200) .. controls (234.48,200) and (230,195.52) .. (230,190) -- cycle ;
\draw  [fill={rgb, 255:red, 0; green, 0; blue, 0 }  ,fill opacity=1 ] (240,185) .. controls (240,182.24) and (242.24,180) .. (245,180) .. controls (247.76,180) and (250,182.24) .. (250,185) .. controls (250,187.76) and (247.76,190) .. (245,190) .. controls (242.24,190) and (240,187.76) .. (240,185) -- cycle ;
\draw    (245,185) -- (335,185) ;
\draw  [fill={rgb, 255:red, 255; green, 255; blue, 255 }  ,fill opacity=1 ] (270,185) .. controls (270,182.24) and (272.24,180) .. (275,180) .. controls (277.76,180) and (280,182.24) .. (280,185) .. controls (280,187.76) and (277.76,190) .. (275,190) .. controls (272.24,190) and (270,187.76) .. (270,185) -- cycle ;
\draw  [fill={rgb, 255:red, 255; green, 255; blue, 255 }  ,fill opacity=1 ] (300,185) .. controls (300,182.24) and (302.24,180) .. (305,180) .. controls (307.76,180) and (310,182.24) .. (310,185) .. controls (310,187.76) and (307.76,190) .. (305,190) .. controls (302.24,190) and (300,187.76) .. (300,185) -- cycle ;
\draw  [fill={rgb, 255:red, 0; green, 0; blue, 0 }  ,fill opacity=1 ] (330,185) .. controls (330,182.24) and (332.24,180) .. (335,180) .. controls (337.76,180) and (340,182.24) .. (340,185) .. controls (340,187.76) and (337.76,190) .. (335,190) .. controls (332.24,190) and (330,187.76) .. (330,185) -- cycle ;
\draw  [draw opacity=0] (245,185) .. controls (245.93,169.58) and (265.71,157.25) .. (289.97,157.25) .. controls (312.69,157.25) and (331.49,168.07) .. (334.53,182.12) -- (289.97,186.14) -- cycle ; \draw   (245,185) .. controls (245.93,169.58) and (265.71,157.25) .. (289.97,157.25) .. controls (312.69,157.25) and (331.49,168.07) .. (334.53,182.12) ;  
\draw  [draw opacity=0][fill={rgb, 255:red, 200; green, 200; blue, 200 }  ,fill opacity=1 ] (110,80) .. controls (110,74.48) and (114.48,70) .. (120,70) -- (310,70) .. controls (315.52,70) and (320,74.48) .. (320,80) -- (320,110) .. controls (320,115.52) and (315.52,120) .. (310,120) -- (120,120) .. controls (114.48,120) and (110,115.52) .. (110,110) -- cycle ;
\draw  [fill={rgb, 255:red, 0; green, 0; blue, 0 }  ,fill opacity=1 ] (120,109.98) .. controls (120,107.22) and (122.24,104.98) .. (125,104.98) .. controls (127.76,104.98) and (130,107.22) .. (130,109.98) .. controls (130,112.74) and (127.76,114.98) .. (125,114.98) .. controls (122.24,114.98) and (120,112.74) .. (120,109.98) -- cycle ;
\draw    (125,109.98) -- (215,109.98) ;
\draw  [fill={rgb, 255:red, 255; green, 255; blue, 255 }  ,fill opacity=1 ] (150,109.98) .. controls (150,107.22) and (152.24,104.98) .. (155,104.98) .. controls (157.76,104.98) and (160,107.22) .. (160,109.98) .. controls (160,112.74) and (157.76,114.98) .. (155,114.98) .. controls (152.24,114.98) and (150,112.74) .. (150,109.98) -- cycle ;
\draw  [fill={rgb, 255:red, 255; green, 255; blue, 255 }  ,fill opacity=1 ] (180,109.98) .. controls (180,107.22) and (182.24,104.98) .. (185,104.98) .. controls (187.76,104.98) and (190,107.22) .. (190,109.98) .. controls (190,112.74) and (187.76,114.98) .. (185,114.98) .. controls (182.24,114.98) and (180,112.74) .. (180,109.98) -- cycle ;
\draw  [draw opacity=0] (125,109.98) .. controls (125.93,94.55) and (145.71,82.23) .. (169.97,82.23) .. controls (192.69,82.23) and (211.49,93.05) .. (214.53,107.1) -- (169.97,111.12) -- cycle ; \draw   (125,109.98) .. controls (125.93,94.55) and (145.71,82.23) .. (169.97,82.23) .. controls (192.69,82.23) and (211.49,93.05) .. (214.53,107.1) ;  
\draw    (215,109.98) -- (305,109.98) ;
\draw  [fill={rgb, 255:red, 255; green, 255; blue, 255 }  ,fill opacity=1 ] (240,109.98) .. controls (240,107.22) and (242.24,104.98) .. (245,104.98) .. controls (247.76,104.98) and (250,107.22) .. (250,109.98) .. controls (250,112.74) and (247.76,114.98) .. (245,114.98) .. controls (242.24,114.98) and (240,112.74) .. (240,109.98) -- cycle ;
\draw  [fill={rgb, 255:red, 255; green, 255; blue, 255 }  ,fill opacity=1 ] (270,109.98) .. controls (270,107.22) and (272.24,104.98) .. (275,104.98) .. controls (277.76,104.98) and (280,107.22) .. (280,109.98) .. controls (280,112.74) and (277.76,114.98) .. (275,114.98) .. controls (272.24,114.98) and (270,112.74) .. (270,109.98) -- cycle ;
\draw  [fill={rgb, 255:red, 0; green, 0; blue, 0 }  ,fill opacity=1 ] (300,109.98) .. controls (300,107.22) and (302.24,104.98) .. (305,104.98) .. controls (307.76,104.98) and (310,107.22) .. (310,109.98) .. controls (310,112.74) and (307.76,114.98) .. (305,114.98) .. controls (302.24,114.98) and (300,112.74) .. (300,109.98) -- cycle ;
\draw  [draw opacity=0] (215,109.98) .. controls (215.93,94.55) and (235.71,82.23) .. (259.97,82.23) .. controls (282.69,82.23) and (301.49,93.05) .. (304.53,107.1) -- (259.97,111.12) -- cycle ; \draw   (215,109.98) .. controls (215.93,94.55) and (235.71,82.23) .. (259.97,82.23) .. controls (282.69,82.23) and (301.49,93.05) .. (304.53,107.1) ;  
\draw  [fill={rgb, 255:red, 255; green, 255; blue, 255 }  ,fill opacity=1 ] (210,109.98) .. controls (210,107.22) and (212.24,104.98) .. (215,104.98) .. controls (217.76,104.98) and (220,107.22) .. (220,109.98) .. controls (220,112.74) and (217.76,114.98) .. (215,114.98) .. controls (212.24,114.98) and (210,112.74) .. (210,109.98) -- cycle ;

\end{tikzpicture}
    }
    \caption{Cut vertex}
    \label{fig:planar-cut-vertex}
\end{figure}

\begin{figure}[!h]
    \centering
    \scalebox{0.85}{
    \tikzset{every picture/.style={line width=0.75pt}} 

\begin{tikzpicture}[x=0.75pt,y=0.75pt,yscale=-1,xscale=1]

\draw    (175,55) -- (175,135) ;
\draw  [draw opacity=0][fill={rgb, 255:red, 200; green, 200; blue, 200 }  ,fill opacity=1 ] (100,40) .. controls (100,34.48) and (104.48,30) .. (110,30) -- (240,30) .. controls (245.52,30) and (250,34.48) .. (250,40) -- (250,70) .. controls (250,75.52) and (245.52,80) .. (240,80) -- (110,80) .. controls (104.48,80) and (100,75.52) .. (100,70) -- cycle ;
\draw    (145,65) -- (235,65) ;
\draw  [fill={rgb, 255:red, 255; green, 255; blue, 255 }  ,fill opacity=1 ] (170,65) .. controls (170,62.24) and (172.24,60) .. (175,60) .. controls (177.76,60) and (180,62.24) .. (180,65) .. controls (180,67.76) and (177.76,70) .. (175,70) .. controls (172.24,70) and (170,67.76) .. (170,65) -- cycle ;
\draw  [fill={rgb, 255:red, 255; green, 255; blue, 255 }  ,fill opacity=1 ] (200,65) .. controls (200,62.24) and (202.24,60) .. (205,60) .. controls (207.76,60) and (210,62.24) .. (210,65) .. controls (210,67.76) and (207.76,70) .. (205,70) .. controls (202.24,70) and (200,67.76) .. (200,65) -- cycle ;
\draw  [fill={rgb, 255:red, 0; green, 0; blue, 0 }  ,fill opacity=1 ] (230,65) .. controls (230,62.24) and (232.24,60) .. (235,60) .. controls (237.76,60) and (240,62.24) .. (240,65) .. controls (240,67.76) and (237.76,70) .. (235,70) .. controls (232.24,70) and (230,67.76) .. (230,65) -- cycle ;
\draw  [draw opacity=0] (145,65) .. controls (145.93,49.58) and (165.71,37.25) .. (189.97,37.25) .. controls (212.69,37.25) and (231.49,48.07) .. (234.53,62.12) -- (189.97,66.14) -- cycle ; \draw   (145,65) .. controls (145.93,49.58) and (165.71,37.25) .. (189.97,37.25) .. controls (212.69,37.25) and (231.49,48.07) .. (234.53,62.12) ;  
\draw  [fill={rgb, 255:red, 0; green, 0; blue, 0 }  ,fill opacity=1 ] (110,65) .. controls (110,62.24) and (112.24,60) .. (115,60) .. controls (117.76,60) and (120,62.24) .. (120,65) .. controls (120,67.76) and (117.76,70) .. (115,70) .. controls (112.24,70) and (110,67.76) .. (110,65) -- cycle ;
\draw  [fill={rgb, 255:red, 255; green, 255; blue, 255 }  ,fill opacity=1 ] (140,65) .. controls (140,62.24) and (142.24,60) .. (145,60) .. controls (147.76,60) and (150,62.24) .. (150,65) .. controls (150,67.76) and (147.76,70) .. (145,70) .. controls (142.24,70) and (140,67.76) .. (140,65) -- cycle ;
\draw  [draw opacity=0][fill={rgb, 255:red, 200; green, 200; blue, 200 }  ,fill opacity=1 ] (115,120) .. controls (115,114.48) and (119.48,110) .. (125,110) -- (225,110) .. controls (230.52,110) and (235,114.48) .. (235,120) -- (235,150) .. controls (235,155.52) and (230.52,160) .. (225,160) -- (125,160) .. controls (119.48,160) and (115,155.52) .. (115,150) -- cycle ;
\draw    (130,145) -- (220,145) ;
\draw  [fill={rgb, 255:red, 255; green, 255; blue, 255 }  ,fill opacity=1 ] (155,145) .. controls (155,142.24) and (157.24,140) .. (160,140) .. controls (162.76,140) and (165,142.24) .. (165,145) .. controls (165,147.76) and (162.76,150) .. (160,150) .. controls (157.24,150) and (155,147.76) .. (155,145) -- cycle ;
\draw  [fill={rgb, 255:red, 255; green, 255; blue, 255 }  ,fill opacity=1 ] (185,145) .. controls (185,142.24) and (187.24,140) .. (190,140) .. controls (192.76,140) and (195,142.24) .. (195,145) .. controls (195,147.76) and (192.76,150) .. (190,150) .. controls (187.24,150) and (185,147.76) .. (185,145) -- cycle ;
\draw  [fill={rgb, 255:red, 0; green, 0; blue, 0 }  ,fill opacity=1 ] (215,145) .. controls (215,142.24) and (217.24,140) .. (220,140) .. controls (222.76,140) and (225,142.24) .. (225,145) .. controls (225,147.76) and (222.76,150) .. (220,150) .. controls (217.24,150) and (215,147.76) .. (215,145) -- cycle ;
\draw  [draw opacity=0] (130,145) .. controls (130.93,129.58) and (150.71,117.25) .. (174.97,117.25) .. controls (197.69,117.25) and (216.49,128.07) .. (219.53,142.12) -- (174.97,146.14) -- cycle ; \draw   (130,145) .. controls (130.93,129.58) and (150.71,117.25) .. (174.97,117.25) .. controls (197.69,117.25) and (216.49,128.07) .. (219.53,142.12) ;  
\draw  [fill={rgb, 255:red, 0; green, 0; blue, 0 }  ,fill opacity=1 ] (125,145) .. controls (125,142.24) and (127.24,140) .. (130,140) .. controls (132.76,140) and (135,142.24) .. (135,145) .. controls (135,147.76) and (132.76,150) .. (130,150) .. controls (127.24,150) and (125,147.76) .. (125,145) -- cycle ;

\end{tikzpicture}
    }
    \caption{Isolated first}
    \label{fig:planar-isolated-first}
\end{figure}
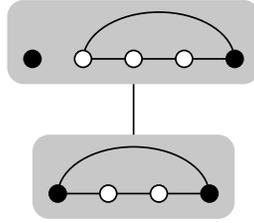

\begin{proof}
Note that we claim that the only pattern that cannot be decomposed is the single vertex pattern. We do a case analysis on the neighborhood of $p_1$, if the pattern is not a single vertex. 
If $p_1$ has degree 0, this is an \emph{Isolated first} case.
If there is an edge between $p_1$ and $p_k$, it is \emph{Covering edge} case.
Otherwise, it means that there is an edge between $p_1$ and some others vertices $p_t$ with $t < k$. Let $t$ be the greatest such a vertex. We check that  there is no edge of the form $(p_i,p_j)$ with $i<t<j$.
For $i=1$, since $t$ is maximal, there is no edge of the form $(p_1,p_j)$ with $t<j$.
For $i>1$, since $P$ is outerplanar, there is no edge that crosses edge $(p_1, p_t)$, meaning that there is no edge of the form $(p_i,p_j)$ with $1<i<t<j$.
Hence, this is \emph{Cut Vertex} case.\qed
\end{proof}

\begin{remark}
The operations of Lemma~\ref{lem:pattern-decomposition} keep the outerplanarity: the pattern(s) obtained by decomposing an outerplanar pattern are outerplanar. 
\end{remark}

\begin{lemma}\label{lem:merge-decomposition}
    Let $P$ be an outerplanar pattern.
    There is a merge tree for pattern~$P$ such that every nodes of the tree is labeled by a pattern with 2 anchors, which are located on the leftmost and the rightmost vertices of the pattern. 
\end{lemma}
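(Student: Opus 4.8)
\textbf{Proof plan for Lemma~\ref{lem:merge-decomposition}.}

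The plan is to induct on the number of vertices $k$ of the outerplanar pattern $P$, using the structural decomposition of Lemma~\ref{lem:pattern-decomposition} as the inductive step, and making sure at every step that the merge tree we build respects the constraints of a valid merge tree from Definition~\ref{def:merge-tree} while maintaining the invariant ``only the leftmost and rightmost vertices of the current pattern are anchors''. The base case is a pattern with at most two vertices: a single edge is a leaf of the merge tree (both endpoints are anchors), and a single vertex is obtained from an edge by an edge deletion in the reverse direction, but more carefully, the smallest patterns we need are the single-edge patterns (leaves), from which all others are built. I would first note explicitly that since the forbidden edges play the same role as mandatory edges in this section, ``edge'' is unambiguous, and that Lemma~\ref{lem:pattern-decomposition} together with the following Remark guarantees that every decomposition step produces outerplanar patterns, so the induction hypothesis applies to the pieces.

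For the inductive step I would handle the three cases of Lemma~\ref{lem:pattern-decomposition} separately. \emph{Covering edge:} $P$ has an edge $(p_1,p_k)$; by induction $P\setminus(p_1,p_k)$ has a merge tree in which only $p_1$ and $p_k$ are anchors at the root. I create a new root which is a merge of that tree with a single-edge leaf on vertices $\{p_1,p_k\}$ (a leaf whose mandatory/forbidden status matches the edge $(p_1,p_k)$ in $P$). The active anchors of this merge are $p_1$ and $p_k$, which are anchors in both children; the resulting pattern $P$ keeps exactly $p_1,p_k$ as anchors. Condition~\ref{item:consecutive-anchors} of Definition~\ref{def:operations} is vacuous here because the two children share the whole vertex range only through $p_1$ and $p_k$ and the leaf has just two vertices. \emph{Isolated first:} $p_1$ is isolated; by induction $P[2,k]$ has a merge tree with anchors $p_2$ and $p_k$ at the root. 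I add a new root obtained by a vertex creation that inserts $p_1$ before $p_2$; after the creation the surrounding vertex $p_2$ must be an anchor, which it is, and the new vertex $p_1$ together with $p_2$ becomes anchors while $p_k$ stays an anchor — but that would give three anchors, so I must instead let $p_2$ lose its anchor status in the creation step. Here I need to check that $p_1$ and $p_2$ are not forced to be consecutive with a non-anchored vertex: the only subtlety is that a vertex creation requires the vertices just before and just after the created vertex to be anchors \emph{before} the creation, and allows them to lose anchor status afterward; so I create $p_1$, declare $p_1$ and $p_k$ the anchors of the result, and $p_2$ becomes passive. \emph{Cut vertex:} $P$ has a vertex $p_t$ with an edge $(p_1,p_t)$ and no edge crossing it, so $P = P[1,t]$ merged with $P[t,k]$ along the single shared vertex $p_t$; by induction $P[1,t]$ has a merge tree with root anchors $p_1,p_t$ and $P[t,k]$ has one with root anchors $p_t,p_k$. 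I merge them with active anchor $p_t$ (the unique shared vertex, an anchor in both), and in the merged pattern $p_t$ becomes passive while $p_1$ and $p_k$ remain the anchors. I must verify Condition~\ref{item:consecutive-anchors}: a vertex only in $V_1=\{p_1,\dots,p_t\}$ is some $p_i$ with $i<t$, a vertex only in $V_2=\{p_t,\dots,p_k\}$ is some $p_j$ with $j>t$, and these can only be consecutive if $i=t-1$, $j=t+1$ — but $p_{t-1}$ and $p_t$ are both in $V_1$, so the two consecutive vertices $p_{t-1},p_t$ are not one-in-$V_1$-one-in-$V_2$; more carefully the only boundary between ``$V_1$-only'' and ``$V_2$-only'' regions is exactly at $p_t$, which lies in both, so no two consecutive vertices are split — this is the content of ``no edge crosses $(p_1,p_t)$'' translated to the vertex level, and I would spell out that because $p_t \in V_1 \cap V_2$, between any $p_i$ ($i<t$) and any $p_j$ ($j>t$) there sits $p_t$ which is in both parts, so Condition~\ref{item:consecutive-anchors} holds.

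The main obstacle I anticipate is the bookkeeping around which vertices are anchors at each moment and whether the Definition~\ref{def:operations} constraints — in particular Constraint~\ref{item:consecutive-anchors} for merges and the ``surrounding vertices must be anchors'' rule for vertex creation — are met when anchors are allowed to be dropped. I need to be careful that in the \emph{cut vertex} case the two sub-patterns genuinely overlap \emph{only} on $p_t$ (which follows from the non-crossing property and the fact that the vertex sets are $\{p_1,\dots,p_t\}$ and $\{p_t,\dots,p_k\}$, intersecting in $\{p_t\}$), so that Constraint~\ref{item:anchors-in-common} is automatically satisfied. A secondary subtlety is the base of the induction: the statement as phrased (``every node is labeled by a pattern with 2 anchors located on the leftmost and rightmost vertices'') must also cover leaves, which are single edges with both endpoints anchored — and a single edge has its leftmost and rightmost vertices being exactly its two endpoints, so the invariant holds at leaves too. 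I would also remark that one-vertex patterns do not arise as the input $P$ here (the $k>1$ hypothesis in Lemma~\ref{lem:pattern-decomposition}), and are not produced by any of the three decomposition rules, so the induction is well-founded with leaves being single edges. Finally, I would conclude that the merge tree produced has width exactly $2$, which combined with Lemma~\ref{lem:any-pattern-created} (for the existence of \emph{some} merge tree as a sanity check) and Lemma~\ref{lem:consistent-merge} (for correctness of the associated algorithm, used later) yields Theorem~\ref{thm:outerplanar-merge-width}.
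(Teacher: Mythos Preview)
Your proposal is correct and follows essentially the same approach as the paper: induct via the three cases of Lemma~\ref{lem:pattern-decomposition}, handling \emph{Covering edge} by merging with a single-edge leaf, \emph{Cut vertex} by merging the two sub-patterns along $p_t$, and \emph{Isolated first} by a vertex creation, while maintaining the invariant that only the extremal vertices are anchors. The paper's proof is organized identically, and like you it spells out conditions~(a)--(e) only for the \emph{Cut vertex} merge, declaring the other two cases straightforward.
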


In Figure~\ref{fig:planar-covering-edge}, \ref{fig:planar-cut-vertex} and~\ref{fig:planar-isolated-first}, these anchors are in black.

\begin{proof}
    We prove the lemma by induction based on the decomposition of Lemma~\ref{lem:pattern-decomposition}.
    Note that the case where the decomposition does not apply (the single vertex pattern) is well-suited to the merge tree definition, since creating a vertex is an allowed operation in Definition~\ref{def:operations}, and we can put those at the leaves.
    
    Let $P$ be a pattern on a vertex set $p_1,...,p_k$, $k>1$, we first describe the merge tree of $P$ by a case analysis depending on the case of Lemma~\ref{lem:pattern-decomposition}, and they justify its correctness.
    \begin{enumerate}
    \item (Covering edge) There is an edge $(p_1,p_k)$, and let $P' = P\setminus (p_1,p_k)$, i.e. the same pattern without this edge. 
    Let $T'$ be the merge tree corresponding to~$P'$, built by induction. 
    Then the root of $T'$ is labeled with an anchored version of $P'$, with the vertices $p_1$ and $p_k$ being anchored.
    We build the merge tree $T$ corresponding to $P$ as follows:
    \begin{itemize}
        \item The root is labeled with an anchored version of $P$, with the vertices $p_1$ and $p_k$ being anchored.
        \item The left subtree of the root is $T'$.
        \item The right subtree of the root is single node, labeled with edge $(p_1,p_k)$ and with both endpoints anchored.
    \end{itemize}
    \item (Cut vertex) There is a vertex $p_t$ with $1<t<k$, and an edge $(p_1,p_t)$ such that there is no edge of the form $(p_i,p_j)$ with $i<t<j$.
    Let $P_1 = P[1,t]$ and $P_2 = P[t,k]$ be the two patterns using the vertex sets $(p_1,...,p_t)$ and $(p_t,...,p_k)$ respectively.
    Let $T_1$ be the merge tree corresponding to $P_1$, built by induction, and $T_2$ be the merge tree corresponding to $P_2$.
     We build the merge tree $T$ corresponding to $P$ as followed:
    \begin{itemize}
        \item The root is labeled with an anchored version of $P$, with the vertices $p_1$ and $p_k$ being anchored.
        \item The left subtree of the root is tree $T_1$. Note that the root of $T_1$ is labeled with an anchored version of $P_1$, with anchored on vertices $p_1$ and $p_t$.
        \item The right subtree of the root is tree $T_2$. Note that the root of $T_2$ is labeled with an anchored version of $P_2$, with anchored on vertices $p_t$ and $p_k$.
    \end{itemize}
    \item (Isolated first) The first vertex $p_1$ is isolated.
    Let $P' = P[2,k]$ be the rest of the pattern.
    Let $T'$ be the merge tree corresponding to $P'$, built by induction. It means that the root of $T'$ is labeled with an anchored version of $P'$, with the vertices $p_2$ and $p_k$ being anchored.
    We build the merge tree $T$ corresponding to $P$ as followed:
    \begin{itemize}
        \item The root is labeled with an anchored version of $P$, with the vertices $p_1$ and $p_k$ being anchored.
        \item The only child of the root is $T'$.
    \end{itemize}
    \end{enumerate}
    To complete the proof, we only need to check that every merge is correct according to Definition \ref{def:operations}, i.e. it fulfills conditions (a) through (e).
    The cases Covering edge and Isolated vertex are straightforward.
    We focus on the Cut Vertex case. 
    We have to prove that the pattern $P=(V,E)$ with vertices $p_1$ and $p_k$ anchored can be divided into pattern $P_1=(V_1, E_1)$ with vertices $p_1$ and $p_t$ anchored, and pattern $P_2=(V_2, E_2)$ with vertices $p_t$ and $p_k$ anchored:
    \begin{enumerate}[label=(\alph*)]
        \item $(E_1, E_2)$ is a partition of $E$: let $(p_i,p_j)$ be an edge. If $i<j\leq t$, this edges is in $E_1$. If $t\leq i<j$, it is in $E_2$. We know there is no other type of edge, since it is impossible to have $i<t<j$ according to the definition of Cut vertex case.
        We also need to check that $\forall (p_i, p_j) \in E_1, p_i, p_j \in V_1$: since $(p_i, p_j) \in E_1$, we know that $i < j \leq t$, which proves the point as $V_1 = \{p_1, \ldots, p_t\}$.
        The same is true regarding $E_2$ and $V_2$.
        \item $V_1 \cup V_2 = V$, since $V_1 = \{p_1, \ldots, p_t\}$, $V_2 = \{p_t, \ldots, p_k\}$ and $V = \{p_1, \ldots, p_k\}$.
        \item $V_1 \cap V_2 = \{p_t\}$, which is an anchor in both $P_1$ and $P_2$.
        \item Anchors of pattern $P$ are $p_1$ and $p_k$, they are indeed anchors in every pattern they appear in: $p_1$ is an anchor in $P_1$ and $p_k$ is an anchor in $P_2$.
        \item The distribution of vertices across pattern $P_1$ and $P_2$ is as followed: vertices $p_1, \ldots, p_{t-1}$ are only in $P_1$, vertex $p_t$ is in both $P_1$ and $P_2$, and vertices $p_{t+1}, \ldots, p_k$ are only in $P_2$. Hence, there is no consecutive vertices such that one is only in $P_1$ and the other only in $P_2$ thanks to vertex $p_t$ which stands in the middle.\qed
    \end{enumerate}
\end{proof}


\paragraph{Extension.}
We will now prove an extension of this result. To do so, let us define a notion of distance. 

\begin{definition}
The \emph{edit distance to outerplanarity} of a pattern $P$, denoted by $\edit{}(P)$, is the minimal number of edges to remove from $P$ so it becomes an outerplanar pattern.
\end{definition}

\begin{theorem}\label{thm:cross-merge-width}
The merge-width of any pattern $P$ is bounded by $2\edit(P)+2$. 
\end{theorem}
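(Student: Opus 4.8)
\textbf{Proof plan for Theorem~\ref{thm:cross-merge-width}.}
The plan is to reduce the general case to the outerplanar case (Theorem~\ref{thm:outerplanar-merge-width}) by treating the removed edges as ``extra'' edges that we reattach one at a time, each one costing at most two additional anchors. Let $r=\edit(P)$, and fix a set $S$ of $r$ edges whose removal makes $P$ outerplanar; write $P_0=P\setminus S$ for the resulting outerplanar pattern. First I would invoke Lemma~\ref{lem:merge-decomposition} to obtain a merge tree $T_0$ for $P_0$ in which every node is labeled by a pattern with exactly $2$ anchors, located on its leftmost and rightmost vertices. The goal is to modify $T_0$ into a merge tree $T$ for $P$ whose width is at most $2r+2$.

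The key step is to describe how a single missing edge $e=(p_i,p_j)$ (with $i<j$ in the order of $P$) is inserted into the merge tree. Consider the lowest node $u$ of $T_0$ whose labeled subpattern contains both $p_i$ and $p_j$. By construction of $T_0$, the two anchors of (the subpattern at) $u$ are its extreme vertices, which need not be $p_i$ or $p_j$; but the path from $u$ down to the leaves splits $p_i$ and $p_j$ into different children, and I can arrange, walking down from $u$, to keep $p_i$ and $p_j$ marked as anchors in every intermediate pattern along the two branches where they live, down to the points where they first appear (as endpoints of some created edge, or as created vertices). This adds at most one extra anchor on each of the relevant nodes below $u$ for this particular edge: at any node at most one of $p_i,p_j$ is ``new'' relative to the already-present extreme anchors, since the two branches carrying $p_i$ and $p_j$ only meet at $u$ itself. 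Then at $u$ I add one more node above it performing an edge creation for $e$ (with both its endpoints anchored) followed by a merge with the old subpattern at $u$; in that merge, $p_i$ and $p_j$ are the active anchors, together with the two extreme anchors of the old pattern this is at most $4$ anchors, but at the merged node $p_i,p_j$ can drop their anchor status again (they are not extreme and not needed further for $e$), so the merged pattern keeps only its $2$ extreme anchors. Crucially, after processing $e$, the invariant ``every node has its two extreme vertices as anchors, plus possibly some designated extra anchors from not-yet-inserted edges'' is restored, so the edges of $S$ can be processed independently.

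Next I would make the bookkeeping precise by an induction on $|S|$. Processing the edges of $S$ one by one, each edge contributes at most $2$ simultaneous extra anchors to any given node: on a node that lies below the meeting point $u_e$ of an edge $e$, only one endpoint of $e$ is a genuinely new anchor beyond what the extreme-vertex invariant already gives (the other endpoint is in a sibling subtree), so naively one might fear a growth of $r$; but the sharper observation is that a node of the tree can lie below the meeting point of several edges simultaneously, and for each such edge it carries at most the endpoints that are ``inside'' that node and not extreme hence in the worst case a node labeled by a subpattern $Q$ carries its $2$ extreme anchors plus, for each edge of $S$ with exactly one endpoint in $Q$, that one endpoint. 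To bound this by $2r+2$ I would argue that an edge of $S$ with \emph{both} endpoints in $Q$ forces no extra anchor on $Q$ (it will be handled at a node at or below $u_e$ which is an ancestor of or equal to the relevant node, and once created it can be absorbed), while an edge with exactly one endpoint in $Q$ contributes one; and since there are at most $r$ edges in $S$ and the very worst case at a merge node momentarily needs the $2$ extreme anchors of both children identified down to $2$ plus up to $2r$ endpoint-anchors carried through, the maximum number of anchors at any node is at most $2r+2$.

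The main obstacle I anticipate is exactly this last counting argument: verifying that when several removed edges have overlapping ``spans'' in the linear order, the extra anchors they force do not accumulate beyond $2r$ at a single node, and that at merge nodes the active-anchor count (extreme anchors of both children, which must be identified, plus the carried endpoint-anchors) still fits under $2r+2$. I would handle this by choosing, for each edge $e\in S$, a canonical ``carrier path'' in $T_0$ from $u_e$ down to the two leaves/creation-nodes of its endpoints, and observing that the set of carrier paths through any fixed node $w$ contributes to $w$ only the endpoints lying strictly inside the subpattern of $w$; a clean way to finish is to note that deleting all $r$ edges, handling the outerplanar remainder with $2$ anchors, and then reintroducing each edge ``locally'' as above never requires more than the $2$ structural anchors plus $2$ anchors per edge currently being spliced, and the edges can be spliced in an order (say, by nesting depth of their spans) that guarantees only one edge is mid-splice at a time while the others contribute only their already-created, still-anchored endpoints bounded again by $2r$. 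Combining, the width of $T$ is at most $2r+2=2\edit(P)+2$, which is the claimed bound, and together with Theorem~\ref{thm:certify-merge-width} this yields an $O(n^{\omega(r+1)})$ detection algorithm for patterns at edit distance $r$ from outerplanarity.
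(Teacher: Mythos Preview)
Your high-level strategy is exactly the paper's: remove $r=\edit(P)$ edges to get an outerplanar pattern, build its width-$2$ merge tree via Lemma~\ref{lem:merge-decomposition}, then reattach the removed edges at a cost of at most two extra anchors each. However, your execution is considerably more intricate than necessary, and the counting argument as written has gaps.

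The paper avoids all the LCA/carrier-path machinery via a very short lemma (Lemma~\ref{lem:add-anchor}): if you take any merge tree of width $w$ and declare one fixed vertex $p_i$ to be an anchor in \emph{every} node where it appears, the result is still a valid merge tree of width at most $w+1$. This is almost immediate from the definition. Applying it once for each of the at most $2r$ endpoints of removed edges turns the width-$2$ outerplanar tree into a width-$(2r+2)$ tree $T_1$ in which all those endpoints are anchors at the root. Then the $r$ removed edges are reattached in a \emph{single} merge at the root, against a trivially built tree $T_2$ for the subpattern on those $\le 2r$ vertices (built as in Lemma~\ref{lem:any-pattern-created}, hence of width $\le 2r$). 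Verifying conditions (a)--(e) for this one root merge is a five-line check.

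Your localized scheme tries to drop anchors earlier by merging each removed edge at its LCA. That is a reasonable instinct, but it does not improve the bound, and your bookkeeping is not sound as stated. In particular, the claim ``an edge of $S$ with both endpoints in $Q$ forces no extra anchor on $Q$'' is only true \emph{after} that edge has been spliced in and its endpoint anchors dropped; before that, both endpoints must be anchored at every node on the carrier paths, and several edges can share the same LCA. At such a shared node, before any of its edges are spliced, you carry two anchors per pending edge, so the bound you need is still $2r+2$, not the $r+2$ your ``at most one of $p_i,p_j$ is new'' sentence suggests. You recognize this at the end and fall back to ``the others contribute their still-anchored endpoints, bounded by $2r$'' --- but that is exactly the global-anchoring argument, and once you accept it, the LCA analysis, the nesting-depth order, and the ``mid-splice'' discussion become unnecessary. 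The clean route is to anchor globally and merge once at the root.
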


Combined with Theorem \ref{thm:certify-merge-width}, we get that:

\begin{theorem}
Let $P$ be a pattern.
There exists an algorithm to detect $P$ in time $O(n^{\omega(\edit(P)+1)})$, i.e. as efficient as multiplying two $n^{\edit+1}$ by $n^{\edit+1}$ matrices.
\end{theorem}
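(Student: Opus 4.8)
The plan is to prove Theorem~\ref{thm:cross-merge-width} by reducing to the outerplanar case already handled in Theorem~\ref{thm:outerplanar-merge-width} (via Lemmas~\ref{lem:pattern-decomposition} and~\ref{lem:merge-decomposition}). Let $P$ be a pattern with $\edit(P)=r$, and let $F=\{e_1,\dots,e_r\}$ be a set of $r$ edges of $P$ whose removal yields an outerplanar pattern $P'=P\setminus F$. By Lemma~\ref{lem:merge-decomposition}, $P'$ admits a merge tree $T'$ in which every node is labeled by a pattern whose only anchors are its leftmost and rightmost vertices. The idea is to take $T'$ and modify it so that, in addition, the (up to) $2r$ endpoints of the edges in $F$ are kept as anchors throughout; once these endpoints are available as anchors at the root of the modified tree, we finish by creating the $r$ missing edges one at a time through merge operations (an edge creation followed by a merge, exactly as in the ``Covering edge'' step of Lemma~\ref{lem:merge-decomposition}, but now the two endpoints of $e_i$ need not be the extremities of the whole pattern).

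First I would make precise the ``keep endpoints as anchors'' modification. Write $S\subseteq V(P)$ for the set of endpoints of edges of $F$, so $|S|\le 2r$. Go through $T'$ top-down: at each node labeled by a subpattern $Q$ on vertices $q_i,\dots,q_j$, declare its anchors to be $\{q_i,q_j\}\cup (S\cap V(Q))$. I claim this is still a valid merge tree for $P'$ in the sense of Definition~\ref{def:merge-tree}, because adding anchors never violates any of the constraints in Definition~\ref{def:operations}: constraints (a), (b), (e) of the divide/merge are about edges and consecutiveness of non-shared vertices and are untouched; constraint (c) (shared vertices are anchors) and (d) (previous anchors stay anchors) are only made easier by having \emph{more} anchors; and for vertex creation the requirement that the two neighbours of the created vertex be anchors is likewise preserved. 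The one point that needs a short argument is that the \emph{new} anchors are consistently inherited: whenever a vertex of $S$ belongs to a child's vertex set it also belongs to the parent's vertex set (since $V_1\cup V_2=V$ in a merge, and the child of a vertex-creation node has a subset of the parent's vertices), so $S\cap V(Q_{\text{child}})\subseteq S\cap V(Q_{\text{parent}})$ and the anchor set is monotone up the tree — exactly the ``no operation to delete passive anchors'' situation already used in Section~\ref{sec:parametrized}. Each node of the modified tree has at most $2+2r$ anchors, matching the claimed bound.

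Next I would attach the $r$ edge-creations. Starting from the root of the modified $T'$ — labeled by an anchored version of $P'$ with anchor set $\{p_1,p_k\}\cup S$ — I build a chain of $r$ merge nodes: the $\ell$-th such node merges the current pattern (which is $P'\cup\{e_1,\dots,e_{\ell-1}\}$, anchored on $\{p_1,p_k\}\cup S$) with a leaf labeled by the single edge $e_\ell$ (both its endpoints anchored), and its label is $P'\cup\{e_1,\dots,e_\ell\}$ again anchored on $\{p_1,p_k\}\cup S$; at the very last node I may drop all anchors except $\{p_1,p_k\}$ if one wants the clean form, but that is cosmetic. I must check each such merge is a valid divide/merge of Definition~\ref{def:operations}: here $P_1$ is the big pattern, $P_2$ is the single edge $e_\ell=(p_a,p_b)$, and $V_1\cap V_2=\{p_a,p_b\}$; (a) holds because $e_\ell\notin E_1$ and the vertex sets contain the edges' endpoints, (b) $V_1\cup V_2=V_1$, (c) $p_a,p_b\in S$ are anchors in both, (d) anchors are preserved, and crucially (e) — there are no two consecutive vertices one of which lies only in $P_2$ — holds vacuously since $V_2\subseteq V_1$. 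Correctness of the algorithm on this tree then follows from Lemma~\ref{lem:consistent-merge} as in the outerplanar case. Combined with Theorem~\ref{thm:certify-merge-width}, setting $p=2r+2=2\edit(P)+2$ gives the running time $O(n^{\omega(\edit(P)+1)})$.

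The main obstacle, and the step I would spend the most care on, is the verification that the top-down re-anchoring of $T'$ genuinely produces a legal merge tree — in particular that constraint~\ref{item:consecutive-anchors} of Definition~\ref{def:operations} is not accidentally broken. It is \emph{not} broken, since that constraint only restricts \emph{non}-anchor vertices and we are only promoting vertices to anchor status, so any pair of consecutive vertices that was ``one-only-in-$P_1$, other-only-in-$P_2$'' before is still so (or better) after — adding anchors can only move vertices into $V_1\cap V_2$, never create new asymmetric consecutive pairs. But because this is exactly the subtle point that Figure~\ref{fig:wrong-merge} warns about, the write-up should spell out the monotonicity $S\cap V(Q_{\text{child}})\subseteq S\cap V(Q_{\text{parent}})$ explicitly and note that it is what legitimizes carrying the extra anchors all the way to the root; everything else is a routine re-run of the outerplanar argument.
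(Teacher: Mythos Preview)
Your proposal is correct and follows essentially the paper's approach: the paper isolates your ``re-anchoring'' step as a separate Lemma~\ref{lem:add-anchor}, and then merges all removed edges back in a single merge with the trivially-built tree for the pattern they induce (via Lemma~\ref{lem:any-pattern-created}) rather than one edge at a time as you do, but this difference is cosmetic.

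One small imprecision worth fixing, since you flag it as the main obstacle: constraint~(e) of Definition~\ref{def:operations} concerns only the vertex sets $V_1,V_2$ and says nothing about anchor status; promoting a vertex to anchor does \emph{not} ``move it into $V_1\cap V_2$''. The correct (and simpler) reason that (e) survives re-anchoring is that $V_1,V_2$ are unchanged. Your conclusion stands, but the justification should be rewritten.
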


Before proving Theorem \ref{thm:cross-merge-width}, let us start with the following general lemma, where we ``force'' an anchor. 

\begin{lemma}\label{lem:add-anchor}
    Let $P$ be a pattern of merge-width $w$ and let $p_i$ be a vertices of $P$. There exists a merge tree $T'$ for pattern $P$ such that vertex $p_i$ is anchored in the root of $T'$ and such that $T'$ has width at most $w+1$.
\end{lemma}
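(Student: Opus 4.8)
The plan is to start from an optimal correct merge tree $T$ for $P$, of width $w$, and then surgically add $p_i$ to the anchor set of exactly those nodes of $T$ whose labeling pattern already contains $p_i$. The first step I would carry out is the structural observation that this set of nodes, call it $S$, is upward closed in $T$ and hence forms a connected subtree containing the root. Indeed, passing from a child to its parent in a merge tree, the vertex set of the labeling pattern can only grow: a merge node's pattern has vertex set $V_1\cup V_2$, and a vertex-creation node's pattern is obtained by adding one vertex, while edge deletions (the only operation that discards vertices) are read bottom-up as edge creations, which occur only at the leaves. So once $p_i$ appears — either as an endpoint of a leaf edge or as the vertex created at some vertex-creation node — it is present in every ancestor up to the root, and $S$ is exactly the subtree spanned by those occurrences.

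Next I would let $T'$ be $T$ with $p_i$ inserted into the anchor set of every node in $S$ and nothing else changed, and the core of the argument is to verify that $T'$ is still a correct merge tree in the sense of Definition~\ref{def:operations}. The observation that makes this routine is that every anchor-related clause of that definition is monotone: it only ever asks that certain named vertices \emph{be} anchors, never that some vertex \emph{not} be an anchor. Checking node by node: at a leaf labeled by a single edge an endpoint is permitted to be an anchor, so anchoring $p_i$ there is allowed; at a vertex-creation node the only requirement is that the two vertices surrounding the created vertex be anchors in the child pattern, which is untouched, and the created vertex itself is permitted to be anchored, so anchoring $p_i$ there (whether or not $p_i$ happens to be the vertex created at that step, which can only happen at a leaf-node of $S$) is allowed; at a merge node clauses (a), (b), (e) do not mention anchors and leave the vertex partition intact, clause (c) can only be helped, and for clause (d) I would argue that whenever $p_i$ is now an anchor in a parent pattern $P_u$ with $u\in S$, it is also an anchor in the child (or children) of $u$ that received $p_i$: if $p_i$ lies in both children then by clause (c) applied to $T$ it was already an anchor in both of them, and if it lies in exactly one child then that child also belongs to $S$ and we have added $p_i$ to its anchor set.

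Finally I would bound the width: nodes of $T'$ outside $S$ are identical to those of $T$, while every node in $S$ acquires exactly one extra anchor, namely $p_i$, so its number of anchors grows by at most one; hence the width of $T'$ is at most $w+1$. Since the root lies in $S$, vertex $p_i$ is anchored in the root of $T'$, which is what the lemma asks for. I expect the only genuinely delicate point to be the clause-(d) bookkeeping at merge nodes — making sure the modification is propagated consistently over the whole subtree $S$ so that a vertex anchored in a parent is always also anchored in whichever child passed it up — together with double-checking the boundary case in which $p_i$ is the vertex created at a vertex-creation node sitting at the bottom of $S$, where one must confirm that Definition~\ref{def:operations} indeed allows a freshly created vertex to retain (or be given) anchor status.
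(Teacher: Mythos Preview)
Your proposal is correct and follows essentially the same approach as the paper: start from an optimal merge tree of width $w$ and anchor $p_i$ at every node whose labeling pattern contains it, then verify that the anchor-related clauses of Definition~\ref{def:operations} are monotone under adding anchors. Your write-up is in fact more careful than the paper's own proof (you make the upward-closedness of $S$ explicit and treat the vertex-creation boundary case separately), but the underlying idea is identical.
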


\begin{proof}
    Let $T$ be a merge tree of $P$ with width $w$.
    We build $T'$ from $T$, by anchoring every instance of vertex $p_i$ in labels of $T$.
    By definition, the width of~$T'$ is at most $w+1$.
    It remains to check that $T'$ is correct, i.e. that each pattern $P=(V,E)$ that labels a node with two children can be obtained by a correct merge of patterns of its two children $P_1=(V_1,E_1)$ and $P_2=(V_2,E_2)$, according to Definition \ref{def:operations}.
    We have to check that each pattern fulfills conditions (a) through (e): Since $V_1, V_2, E_1$ and $E_2$ are the same in $T$ and $T'$, conditions (a), (b) and (c) remains true in $T'$.
    For condition (d), since we anchor vertex $p_i$ in every pattern of $T'$, it remains true.
    Finally, since the vertex is anchored in every pattern where it appears, in a merge it will be anchored in both $V_1$ and~$V_2$, thus condition~(e) is also satisfied.
    \qed
\end{proof}

\begin{proof}[(Theorem~\ref{thm:cross-merge-width})]
    Let $P=(V,E)$ be a pattern with vertex set $V={p_1, \ldots, p_k}$.
    Let $P_1=(V_1,E_1)$ be an outerplanar pattern obtained by removing $\edit(P)$ edges from $P$. 
    Note that $V_1=V$ since we do not remove any vertex.
    Let $P_2=(V_2,E_2)$ be the pattern induced by the edges removed in $P_1$, meaning that $V_2$ is the set of vertices that are endpoints of an edge of $E_2$. 
    Note that $P_2$ has $\edit(P)$ edges and at most $2\edit(P)$ vertices.
    Since $P_1$ is outerplanar, by Lemma \ref{lem:merge-decomposition}, it has a merge tree of width at most 2, with the label of the root anchored in $p_1$ and $p_k$.
    We use Lemma \ref{lem:add-anchor} to add all vertices from $V_2$ as anchors to the root of this merge tree. This gives a merge tree $T_1$ of width at most $2\edit(P)+2$.
    Let $A = \{p_1, p_k\} \cup V_2$ be the set of anchors of the label of the root of $T_1$. Note that $|A| \leq 2\edit(P)+2$.
    Since $P_2$ has at most $2\edit(P)$ vertices, it has a merge tree $T_2$ of width at most $2\edit(P)$ using a trivial tree as described in Lemma \ref{lem:any-pattern-created}. In particular, every vertex of $V_2$ is an anchor in every label of $T_2$.
    
    Let $T$ be the merge tree where the root is a node labeled by $P$ anchored with all vertices of $A$, and with two subtrees.
    The left subtree is tree $T_1$ and the right subtree is tree $T_2$.
    The root of $T_1$ (i.e. $P_1$) is labeled with a pattern anchored with all vertices of $A$, and
    the root of $T_2$ (i.e. $P_2$) is labeled with a pattern anchored at all vertices of $V_2$.
    It remains to check that the root of $T$ (i.e. $P$ anchored with $A$) yields a correct merge with its two children $P_1$ and $P_2$, by verifying conditions (a) through (e) of Definition 
    \ref{def:operations}:
    \begin{enumerate}[label=(\alph*)]
        \item $(E_1, E_2)$ is a partition of $E$ by definition.
        $\forall (p_i, p_j) \in E_1$, we have $p_i, p_j \in V_1=V$.
        $\forall (p_i, p_j) \in E_2$, we have $p_i, p_j \in V_2$ by definition of $V_2$.
        \item $V_1 \cup V_2 = V$, since $V_1 = V$.
        \item $V_1 \cap V_2 = V_2$, and every vertex in $V_2$ is an anchor in both $P_1$ and $P_2$.
        \item Anchors of pattern $P$ are vertices from $A$.
        $P_1$ receives all vertices from $P$, and has an anchor set also equals to $A$.
        $P_2$ receives vertices in $V_2$ from $P$, and has an anchor set equals to $V_2 = A \cap V_2$.
        \item Every vertex is in $V_1 = V$, so there is no vertex not in $V_1$.\qed
    \end{enumerate}
\end{proof}

We finish with another corollary. The \emph{outerplanar crossing number} (also called \emph{convex crossing number}, \emph{1-page book crossing number} or \emph{circular clrossing number}) is a popular graph parameter (see the entry \emph{Convex crossing number} in the survey~\cite{Schaefer12}). We adapt it to patterns and show that when it is bounded, the merge-width is bounded.

\begin{definition}
The \emph{outerplanar crossing number} of a pattern $P$ is the number of pairs of crossing edges of $P$, where two edges $(i,j)$ and $(i',j')$ cross when either $i<i'<j<j'$ or $i'<i<j'<j$.
\end{definition}

\begin{corollary}
If a pattern has outerplanar crossing number $c$ then it has merge-width at most $2c+2$. 
\end{corollary}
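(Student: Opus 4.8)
The plan is to reduce the outerplanar crossing number to the edit distance to outerplanarity and then invoke Theorem~\ref{thm:cross-merge-width}. Concretely, I would show that if a pattern $P$ has outerplanar crossing number $c$, then $\edit(P) \le c$; combined with $\text{merge-width}(P) \le 2\edit(P)+2$ from Theorem~\ref{thm:cross-merge-width}, this immediately yields merge-width at most $2c+2$.

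The key step is the inequality $\edit(P) \le c$. First I would build an auxiliary ``crossing graph'' $X$ whose vertex set is the set $E(P)$ of edges of $P$, with an edge between two elements of $E(P)$ whenever the corresponding edges of $P$ cross (in the sense of the definition: $i<i'<j<j'$ or $i'<i<j'<j$). By definition, $X$ has exactly $c$ edges. Now I would take a minimal vertex cover $S$ of $X$ among its non-isolated vertices; since a graph with $c$ edges has a vertex cover of size at most $c$ (greedily pick an endpoint of each uncovered edge, or note $|S| \le |E(X)| = c$), we get $|S| \le c$. The set $S$ is a subset of $E(P)$, so it is a set of at most $c$ edges of $P$. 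Removing the edges in $S$ from $P$ destroys every crossing pair, because $S$ is a vertex cover of $X$: any crossing pair corresponds to an edge of $X$, which has at least one endpoint in $S$, hence at least one of the two crossing edges of $P$ has been removed. Therefore $P \setminus S$ has no crossing pair of edges, i.e. it is an outerplanar pattern (by the definition of outerplanar pattern, which is exactly the absence of such crossings when vertices are placed on a line in increasing order and edges drawn as semicircles above). Since $P \setminus S$ is outerplanar and $|S| \le c$, we conclude $\edit(P) \le c$.

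Finally I would assemble the corollary: by Theorem~\ref{thm:cross-merge-width}, $\text{merge-width}(P) \le 2\edit(P) + 2 \le 2c + 2$, which is the claim. This also recovers Theorem~\ref{thm:outerplanar-merge-width} as the special case $c=0$.

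The main obstacle — really the only place requiring care — is making sure that ``no crossing pair of edges'' is literally the same condition as ``outerplanar pattern'' as defined in the excerpt. The definition says an outerplanar pattern is one where, with vertices on a line in increasing order and each edge drawn as a semicircle above the line, no two edges intersect. Two semicircular arcs on $(i,j)$ and $(i',j')$ above the line intersect (in their interiors) precisely when their endpoint intervals interleave, i.e. $i<i'<j<j'$ or $i'<i<j'<j$ — exactly the crossing condition of the outerplanar crossing number definition. (One should double-check the degenerate cases where intervals share an endpoint or are nested: shared endpoints give arcs meeting only at a vertex, and nested intervals give disjoint arcs, so neither counts as a crossing in either definition, and both definitions agree.) Once this equivalence is pinned down, the rest is the elementary vertex-cover argument above, and no further work is needed.
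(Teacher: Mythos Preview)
Your proposal is correct and follows essentially the same approach as the paper: both show $\edit(P)\le c$ by removing at most one edge per crossing pair (you phrase this as a vertex cover of the crossing graph, the paper just says ``for each pair of crossing edges, remove one of the two edges''), and then invoke Theorem~\ref{thm:cross-merge-width}. Your extra care in verifying that ``no crossing pair'' coincides with the definition of an outerplanar pattern is a nice touch but not a different route.
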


The proposition follows from the fact that the outerplanar crossing number of a pattern is always greater than or equal to the edit distance to outerplanar graph of that pattern. Indeed, for each pair of crossing edges, remove one of the two edges. By doing so, there is no more crossing edges, meaning that the pattern has become outerplanar.

\section{Linear-time detection of positive outerplanar forests}
\label{sec:forests}

In this section, we show that a natural family of pattern of arbitrarily large size such that any such pattern can be detected in linear time.
This is the family of patterns that are outerplanar (that is, without crossing of edges, see Section~\ref{sec:parametrized}), positive (that is, without forbidden edges), and acyclic. 
We call these \emph{positive outerplanar forests}.

At an intuitive level and at the current state of our knowledge, these three constraints seem necessary. 
Indeed, without the outerplanarity (or near outerplanarity) we have no idea how large patterns behave, and our lower bounds indicate that sparse graphs are easier to tackle. 
Also, when there are forbidden edges, one is often tempted to complement the input graph and this is impossible in linear time. 
Finally, as soon as there is a cycle, it seems that we are in a scenario that resembles the one of triangle detection, where we need matrix multiplication time.

This section is devoted to the proof of the following theorem.

\begin{theorem}\label{thm:forest}
    Positive outerplanar forests can be detected in time $O(n+m)$.
\end{theorem}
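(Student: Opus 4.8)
The plan is to proceed by induction on the structure of a positive outerplanar forest, following exactly the decomposition given by Lemma~\ref{lem:pattern-decomposition} (restricted to the positive acyclic case), and to maintain, while scanning the ordered graph once, enough information to certify the presence or absence of each sub-pattern. The key invariant I would carry is: for every vertex $i$ of $G$ and every sub-pattern $P'$ appearing in the decomposition tree that is anchored at its leftmost vertex, I know whether there is a realization of $P'$ whose leftmost vertex is mapped to $i$, and (crucially, to keep things linear) for the reverse question I record the \emph{smallest} right-endpoint over all such realizations, or dually the largest. The reason this is enough for \emph{forests} but not for general outerplanar patterns is that in a positive forest there is never a ``feedback'' edge joining the two extremities of a sub-pattern (that would create a cycle), so we never need to remember a \emph{pair} of positions simultaneously — one extremal endpoint suffices, and this is what collapses the $O(n^\omega)$ bound of Theorem~\ref{thm:certify-merge-width} down to $O(n+m)$.

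More concretely, I would distinguish the three decomposition cases of Lemma~\ref{lem:pattern-decomposition} and show each is handled with an amortized-constant amount of work per edge. The \emph{Covering edge} case cannot occur here (an edge $(p_1,p_k)$ together with any path between $p_1$ and $p_k$ inside the pattern would form a cycle; if the pattern is disconnected the leftmost and rightmost vertices lie in different trees so there is no such edge either — in fact in a forest the only covering edge possible is when the pattern is the single edge $(p_1,p_2)$, a base case). So the recursion only ever uses \emph{Isolated first} and \emph{Cut vertex}. For \emph{Isolated first}, detecting $P$ reduces to detecting $P[2,k]$ shifted: a realization of $P$ exists iff there exist positions $i<j$ with a realization of $P[2,k]$ whose leftmost vertex is $\geq i+1$ — this is a simple prefix-minimum computation over the (already computed) table for $P[2,k]$. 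For \emph{Cut vertex}, $P$ splits at $p_t$ into $P_1=P[1,t]$ and $P_2=P[t,k]$ sharing only $p_t$; a realization of $P$ at leftmost position $i$ exists iff there is a realization of $P_1$ with leftmost $i$ whose \emph{rightmost} vertex (the image of $p_t$) is some $s$, and a realization of $P_2$ with leftmost $s$. Because $P_1$ itself is a smaller positive outerplanar forest anchored left-and-right, and the edge $(p_1,p_t)$ guarantees $p_t$'s image is a neighbour of $p_1$'s image, we can enumerate candidate values of $s$ by walking the adjacency list $N^+$ of the image of $p_1$; the total work over all vertices is $O(m)$ by the handshake bound. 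The single-linkage structure (each cut vertex $p_t$ has the edge $(p_1,p_t)$ forcing adjacency) is precisely what lets us charge the search to edges rather than to pairs of vertices.

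The step I expect to be the genuine obstacle is making the \emph{Cut vertex} recursion linear when the pattern has many cut vertices hanging off the same spine vertex — i.e. when several distinct sub-patterns $P_2^{(1)}, P_2^{(2)}, \dots$ must all be realized to the right of the same image $s$ of $p_t$, and these together with the left part share the vertex $p_t$. Naively, for each candidate $s$ one re-scans each piece, giving a product blow-up. The fix is the paradigm already used in the proof of Proposition~\ref{prop:chordal}, ``rely on your rightmost left neighbour'': process the ordered graph from right to left, and for each vertex $s$ precompute, in the order of the decomposition, a boolean (or an extremal-position) summary for ``can this piece be realized with its leftmost vertex at $s$'', reusing the summaries of the children; then a cut-vertex node is evaluated by combining, at $s$, the already-stored child summaries in $O(\deg(s))$ time. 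One must check that the outerplanarity of the forest forbids interleaving — two distinct branches hanging from $p_t$ occupy disjoint intervals of positions to the right of $s$ — so the combination at $s$ really is a conjunction of independent conditions, validated by Lemma~\ref{lem:consistent-merge} applied with $p_t$ as the single (active) anchor and Constraint~(e) of Definition~\ref{def:operations} guaranteeing no forbidden interleaving. Summing the per-vertex costs $\sum_s O(1+\deg(s)) = O(n+m)$ gives the bound; the remaining details (precomputing $N^+(i)$, $N^-(i)$ from adjacency lists in $O(m)$ since labels are in $[1,n]$, and the base cases of a single vertex and a single edge) are routine as in Proposition~\ref{prop:one-pattern-easy}.
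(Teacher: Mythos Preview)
Your proposal contains a genuine error that breaks the argument at the first step. You claim that in a positive outerplanar forest the Covering-edge case of Lemma~\ref{lem:pattern-decomposition} only arises for the two-vertex base case, but this is false: the pattern on $\{p_1,p_2,p_3,p_4\}$ with mandatory edges $(p_1,p_4)$ and $(p_2,p_3)$ is a positive outerplanar forest with covering edge $(p_1,p_4)$; so is the star $(p_1,p_2),(p_1,p_3)$ on $\{p_1,p_2,p_3\}$. Worse, the problem is not confined to such exotic inputs: in \emph{every} application of your Cut-vertex rule, the left piece $P_1=P[1,t]$ contains the edge $(p_1,p_t)$ between its first and last vertices, so the very first recursive call on $P_1$ lands in the Covering-edge case. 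Your scheme therefore cannot avoid this case, and once you are in it, the invariant you propose (for each $i$, the \emph{smallest} right-endpoint of a realization starting at $i$) is not enough: you would need to know whether some realization of $P_1\setminus(p_1,p_t)$ fits \emph{exactly} in $[i,s]$ for each neighbour $s$ of $i$, not merely whether the minimal one does.

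The paper resolves this with a different decomposition and a slightly richer invariant. Rather than using Lemma~\ref{lem:pattern-decomposition}, it picks the left-outermost edge $(i_\ll,j_\ll)$ and splits $P$ into \emph{three} sub-patterns $P_L,P_R,P_O$ (left-/centered-nested, right-nested, and outside), none of which contains that edge. It maintains \emph{both} directions of your extremal quantity, namely $m^+_{P'}(u)$ (smallest right-endpoint over realizations with leftmost $u$) and $m^-_{P'}(v)$ (largest left-endpoint over realizations with rightmost $v$), together with their prefix/suffix extrema $M^+,M^-$. Then for each candidate edge $(u,v)\in E(G)$ mapping to the outermost edge, one checks the single inequality $m^+_{P_L}(u) < m^-_{P_R}(v)$ to certify that the two inner pieces fit disjointly between $u$ and $v$; acyclicity is what guarantees there is no edge forcing $P_L$ and $P_R$ to interact. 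This yields $O(k(n+m))$ by a clean induction on the number of edges. Your last paragraph, about ``many sub-patterns hanging off the same spine vertex'', does not correspond to anything in the Cut-vertex decomposition (which always produces exactly two pieces) and seems to be reaching for this three-way split without naming it.
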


Let $P=(V,M,F,U)$ be a positive outerplanar forest.
Since there is no forbidden edges, we will simply say ``edges" to refer to the mandatory edges.

Let introduce some terminology related to the relative placement of edges.
See Figure~\ref{fig:tree-decomp}.
Consider a pair of edges $(i,j), (i',j')$ with $i<j$, $i'<j'$ and $i\leq i'$.  
In an outerplanar pattern, there are two configurations: 
\begin{itemize}
    \item either they are \emph{nested}: $i\leq i'<j'\leq j$,
    \item or \emph{side by side}: $i<j\leq i'<j'$.
\end{itemize}
(They cannot be crossing, which would be $i<i'<j<j'$.)
When edges are side by side, we can refer to the \emph{left edge} and to the \emph{right edge}.

When edges are nested ($i\leq i'<j'\leq j$), the nested edge (i.e. $(i',j')$) can be further specified:
\begin{itemize}
    \item $(i',j')$ is \emph{directly nested} inside  $(i,j)$ if there is no edge nested in between,
    \item otherwise, $(i',j')$ is \emph{deeply nested} inside $(i,j)$.
\end{itemize}

Next, we want to partition all edges nested inside one specific edge $(i,j)$.
To do so, we can easily partition edges directly nested inside $(i,j)$ into three parts, depending on how they are connected to edge $(i,j)$ without passing throught $(i,j)$:
\begin{itemize}
    \item edges connected by a path to $i$ (without using edge $(i,j)$), called \emph{left-nested edges};
    \item edges connected by a path  to $j$ (without using edge $(i,j)$), called \emph{right-nested edges};
    \item and edges not connected to neither $i$ nor $j$, called \emph{centered-nested edges}.
\end{itemize}
Note that since our pattern is acyclic, it is impossible to have en edge connected both to $i$ and $j$, as this would create a cycle. 
This kind of edge is called a double-nested edge, and we know we never encounter this kind of edge.
Then, for edges deeply nested, it means that they are nested inside some other edge $(i',j')$ which has been already classified (since every directly nested edge can be classified as either left-, right- or centered-nested edge). We give this edge the same classification as $(i',j')$.
All in all, we have classified every edge nested inside $(i,j)$ as either \emph{left-nested}, \emph{right-nested} or \emph{centered-nested edge}, with respect to $(i,j)$.

\begin{figure}[!h]
    \centering
    \tikzset{every picture/.style={line width=0.75pt}} 

\begin{tikzpicture}[x=0.75pt,y=0.75pt,yscale=-1,xscale=1]

\draw  [fill={rgb, 255:red, 0; green, 0; blue, 0 }  ,fill opacity=1 ] (90,115) .. controls (90,112.24) and (92.24,110) .. (95,110) .. controls (97.76,110) and (100,112.24) .. (100,115) .. controls (100,117.76) and (97.76,120) .. (95,120) .. controls (92.24,120) and (90,117.76) .. (90,115) -- cycle ;
\draw  [fill={rgb, 255:red, 0; green, 0; blue, 0 }  ,fill opacity=1 ] (140,115) .. controls (140,112.24) and (142.24,110) .. (145,110) .. controls (147.76,110) and (150,112.24) .. (150,115) .. controls (150,117.76) and (147.76,120) .. (145,120) .. controls (142.24,120) and (140,117.76) .. (140,115) -- cycle ;
\draw  [draw opacity=0] (95,115) .. controls (95,115) and (95,115) .. (95,115) .. controls (95,101.19) and (106.19,90) .. (120,90) .. controls (133.81,90) and (145,101.19) .. (145,115) -- (120,115) -- cycle ; \draw   (95,115) .. controls (95,115) and (95,115) .. (95,115) .. controls (95,101.19) and (106.19,90) .. (120,90) .. controls (133.81,90) and (145,101.19) .. (145,115) ;  
\draw  [fill={rgb, 255:red, 0; green, 0; blue, 0 }  ,fill opacity=1 ] (190,115) .. controls (190,112.24) and (192.24,110) .. (195,110) .. controls (197.76,110) and (200,112.24) .. (200,115) .. controls (200,117.76) and (197.76,120) .. (195,120) .. controls (192.24,120) and (190,117.76) .. (190,115) -- cycle ;
\draw  [draw opacity=0] (145,115) .. controls (145,115) and (145,115) .. (145,115) .. controls (145,101.19) and (156.19,90) .. (170,90) .. controls (183.81,90) and (195,101.19) .. (195,115) -- (170,115) -- cycle ; \draw   (145,115) .. controls (145,115) and (145,115) .. (145,115) .. controls (145,101.19) and (156.19,90) .. (170,90) .. controls (183.81,90) and (195,101.19) .. (195,115) ;  
\draw  [fill={rgb, 255:red, 0; green, 0; blue, 0 }  ,fill opacity=1 ] (220,115) .. controls (220,112.24) and (222.24,110) .. (225,110) .. controls (227.76,110) and (230,112.24) .. (230,115) .. controls (230,117.76) and (227.76,120) .. (225,120) .. controls (222.24,120) and (220,117.76) .. (220,115) -- cycle ;
\draw  [fill={rgb, 255:red, 0; green, 0; blue, 0 }  ,fill opacity=1 ] (270,115) .. controls (270,112.24) and (272.24,110) .. (275,110) .. controls (277.76,110) and (280,112.24) .. (280,115) .. controls (280,117.76) and (277.76,120) .. (275,120) .. controls (272.24,120) and (270,117.76) .. (270,115) -- cycle ;
\draw  [draw opacity=0] (225,115) .. controls (225,115) and (225,115) .. (225,115) .. controls (225,101.19) and (236.19,90) .. (250,90) .. controls (263.81,90) and (275,101.19) .. (275,115) -- (250,115) -- cycle ; \draw   (225,115) .. controls (225,115) and (225,115) .. (225,115) .. controls (225,101.19) and (236.19,90) .. (250,90) .. controls (263.81,90) and (275,101.19) .. (275,115) ;  
\draw  [fill={rgb, 255:red, 0; green, 0; blue, 0 }  ,fill opacity=1 ] (300,115) .. controls (300,112.24) and (302.24,110) .. (305,110) .. controls (307.76,110) and (310,112.24) .. (310,115) .. controls (310,117.76) and (307.76,120) .. (305,120) .. controls (302.24,120) and (300,117.76) .. (300,115) -- cycle ;
\draw  [fill={rgb, 255:red, 0; green, 0; blue, 0 }  ,fill opacity=1 ] (350,115) .. controls (350,112.24) and (352.24,110) .. (355,110) .. controls (357.76,110) and (360,112.24) .. (360,115) .. controls (360,117.76) and (357.76,120) .. (355,120) .. controls (352.24,120) and (350,117.76) .. (350,115) -- cycle ;
\draw  [draw opacity=0] (305,115) .. controls (305,115) and (305,115) .. (305,115) .. controls (305,101.19) and (316.19,90) .. (330,90) .. controls (343.81,90) and (355,101.19) .. (355,115) -- (330,115) -- cycle ; \draw   (305,115) .. controls (305,115) and (305,115) .. (305,115) .. controls (305,101.19) and (316.19,90) .. (330,90) .. controls (343.81,90) and (355,101.19) .. (355,115) ;  
\draw  [fill={rgb, 255:red, 0; green, 0; blue, 0 }  ,fill opacity=1 ] (380,115) .. controls (380,112.24) and (382.24,110) .. (385,110) .. controls (387.76,110) and (390,112.24) .. (390,115) .. controls (390,117.76) and (387.76,120) .. (385,120) .. controls (382.24,120) and (380,117.76) .. (380,115) -- cycle ;
\draw  [fill={rgb, 255:red, 0; green, 0; blue, 0 }  ,fill opacity=1 ] (405,115) .. controls (405,112.24) and (407.24,110) .. (410,110) .. controls (412.76,110) and (415,112.24) .. (415,115) .. controls (415,117.76) and (412.76,120) .. (410,120) .. controls (407.24,120) and (405,117.76) .. (405,115) -- cycle ;
\draw  [fill={rgb, 255:red, 0; green, 0; blue, 0 }  ,fill opacity=1 ] (460,115) .. controls (460,112.24) and (462.24,110) .. (465,110) .. controls (467.76,110) and (470,112.24) .. (470,115) .. controls (470,117.76) and (467.76,120) .. (465,120) .. controls (462.24,120) and (460,117.76) .. (460,115) -- cycle ;
\draw  [draw opacity=0] (385,115) .. controls (385,100.5) and (402.91,88.75) .. (425,88.75) .. controls (447.09,88.75) and (465,100.5) .. (465,115) -- (425,115) -- cycle ; \draw   (385,115) .. controls (385,100.5) and (402.91,88.75) .. (425,88.75) .. controls (447.09,88.75) and (465,100.5) .. (465,115) ;  
\draw  [draw opacity=0][line width=0.75]  (95,115) .. controls (95,79.1) and (177.83,50) .. (280,50) .. controls (382.17,50) and (465,79.1) .. (465,115) -- (280,115) -- cycle ; \draw  [line width=0.75]  (95,115) .. controls (95,79.1) and (177.83,50) .. (280,50) .. controls (382.17,50) and (465,79.1) .. (465,115) ;  
\draw  [fill={rgb, 255:red, 0; green, 0; blue, 0 }  ,fill opacity=1 ] (115,115) .. controls (115,112.24) and (117.24,110) .. (120,110) .. controls (122.76,110) and (125,112.24) .. (125,115) .. controls (125,117.76) and (122.76,120) .. (120,120) .. controls (117.24,120) and (115,117.76) .. (115,115) -- cycle ;
\draw  [fill={rgb, 255:red, 0; green, 0; blue, 0 }  ,fill opacity=1 ] (165,115) .. controls (165,112.24) and (167.24,110) .. (170,110) .. controls (172.76,110) and (175,112.24) .. (175,115) .. controls (175,117.76) and (172.76,120) .. (170,120) .. controls (167.24,120) and (165,117.76) .. (165,115) -- cycle ;
\draw  [fill={rgb, 255:red, 0; green, 0; blue, 0 }  ,fill opacity=1 ] (245,115) .. controls (245,112.24) and (247.24,110) .. (250,110) .. controls (252.76,110) and (255,112.24) .. (255,115) .. controls (255,117.76) and (252.76,120) .. (250,120) .. controls (247.24,120) and (245,117.76) .. (245,115) -- cycle ;
\draw  [fill={rgb, 255:red, 0; green, 0; blue, 0 }  ,fill opacity=1 ] (325,115) .. controls (325,112.24) and (327.24,110) .. (330,110) .. controls (332.76,110) and (335,112.24) .. (335,115) .. controls (335,117.76) and (332.76,120) .. (330,120) .. controls (327.24,120) and (325,117.76) .. (325,115) -- cycle ;
\draw  [fill={rgb, 255:red, 0; green, 0; blue, 0 }  ,fill opacity=1 ] (435,115) .. controls (435,112.24) and (437.24,110) .. (440,110) .. controls (442.76,110) and (445,112.24) .. (445,115) .. controls (445,117.76) and (442.76,120) .. (440,120) .. controls (437.24,120) and (435,117.76) .. (435,115) -- cycle ;
\draw    (120,115) -- (170,115) ;
\draw    (250,115) -- (275,115) ;
\draw    (305,115) -- (330,115) ;
\draw    (410,115) -- (440,115) ;

\draw (91,122.4) node [anchor=north west][inner sep=0.75pt]    {$i$};
\draw (461,122.4) node [anchor=north west][inner sep=0.75pt]    {$j$};
\draw (191,122.4) node [anchor=north west][inner sep=0.75pt]    {$k_{1}$};
\draw (221,122.4) node [anchor=north west][inner sep=0.75pt]    {$k_{2}$};
\draw (351,122.4) node [anchor=north west][inner sep=0.75pt]    {$k_{3}$};
\draw (381,122.4) node [anchor=north west][inner sep=0.75pt]    {$k_{4}$};
\draw (128,80.4) node [anchor=north west][inner sep=0.75pt]    {$e_{1}$};
\draw (163,78.4) node [anchor=north west][inner sep=0.75pt]    {$e_{2}$};
\draw (243,78.4) node [anchor=north west][inner sep=0.75pt]    {$e_{3}$};
\draw (323,78.4) node [anchor=north west][inner sep=0.75pt]    {$e_{4}$};
\draw (404,78.4) node [anchor=north west][inner sep=0.75pt]    {$e_{5}$};

\end{tikzpicture}
    \caption{Classification of edges nested inside $(i,j)$.
    Apart from edge $(i,j)$, edges represented by an arc are directly nested edges, and edges represented by a straight line are deeply nested edges. Directly nested edges are classified depending on whether there is a path from them to $i$, or $j$, or neither.
    Hence, edges $e_1$ and $e_2$ are left-nested (because if we avoid edge $(i,j)$, they are connected to $i$ and not to $j$), edges $e_3$ and $e_4$ are centered-nested because they are not connected to $(i,j)$, and edge $e_5$ is right-connected.
    Concerning edges deeply nested inside $(i,j)$, i.e. edges represented by a straight line, they have the same classification as the directly nested edge they are nested inside. Hence,
    every edge nested inside $e_1$ or $e_2$ is left-nested inside $(i,j)$,
    every edge nested inside $e_3$ or $e_4$ is centered-nested inside $(i,j)$,
    and every edge nested inside $e_5$ is right-nested inside $(i,j)$.
    All in all, we can summarize our example by saying that
    every edge with both endpoints between vertices $i$ and $k_1$ is left-nested,
    every edge with both endpoints between vertices $k_2$ and $k_3$ is centered-nested, and
    every edge with both endpoints between vertices $k_4$ and $j$ is right-nested.
    Such a partitioning using vertices $k_1$ to $k_4$ is always possible.}
    \label{fig:tree-decomp}
\end{figure}
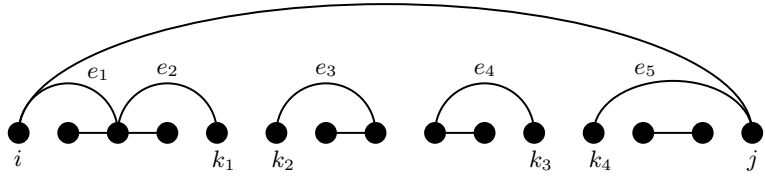


When we have a planar forest pattern $P$, we can also introduce some special edges and sub-pattern.

\renewcommand{\ll}{\texttt{LOME}}
\newcommand{\rr}{\texttt{ROME}}

\begin{definition}
    Let $P=(V,M,F,U)$ be a positive planar forest pattern. 
    \begin{itemize}
        \item Let $\ll$ be the left-outermost edge, i.e. the edge $(i,j) \in M$ such that $i$ is minimal, and in case of multiple such edges, such that $j$ is maximal.
        \item Let $\rr$ be the right-outermost edge, i.e. the decided edge $(i,j) \in M$ such that $j$ is maximal
        , and in case of multiple such edges, such that $i$ is minimal.
    \end{itemize}
\end{definition}
    Edges $\ll$ and $\rr$ have the particularity to divide $P$ into 3 particular sub-patterns each. See Figure~\ref{fig:tree-lome}.
    
    Let first discuss $\ll$.
    Let $\ll$ be the edge $(i_\ll,j_\ll)$ with $i_\ll < j_\ll$.
    By definition, if $(i,j) \in E$, then either $(i,j)$ is nested inside $(i_\ll, j_\ll)$ or is to the right of $(i_\ll , j_\ll)$.
    If $(i,j)$ is nested inside $(i_\ll, j_\ll)$, since $P$ is a forest,  $(i,j)$ cannot be double-nested.
    Hence,  $(i,j)$ is either left-, right-, or centered-nested.
    We define the three following sub-pattern:
    \begin{itemize}
        \item The pattern $P_L$ made of left-nested edges and centered-nested edges inside $(i_\ll, j_\ll)$.
        \item The pattern $P_R$ made of right-nested edges inside $(i_\ll, j_\ll)$.
        \item The pattern $P_O$ made of edges not nested inside $(i_\ll, j_\ll)$, i.e. edges to the right of $(i_\ll, j_\ll)$.
    \end{itemize}
    Note that every decided edge of $P$ is in exactly one sub-pattern $P_L$, $P_R$ or $P_O$, except for $\ll$ which is not in any pattern.

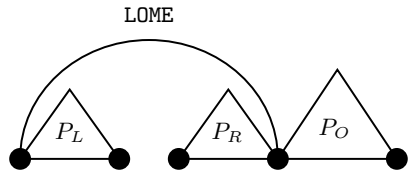
\begin{figure}[!h]
    \centering
    \tikzset{every picture/.style={line width=0.75pt}} 

\begin{tikzpicture}[x=0.75pt,y=0.75pt,yscale=-1,xscale=1]

\draw  [fill={rgb, 255:red, 0; green, 0; blue, 0 }  ,fill opacity=1 ] (70,115) .. controls (70,112.24) and (72.24,110) .. (75,110) .. controls (77.76,110) and (80,112.24) .. (80,115) .. controls (80,117.76) and (77.76,120) .. (75,120) .. controls (72.24,120) and (70,117.76) .. (70,115) -- cycle ;
\draw  [fill={rgb, 255:red, 0; green, 0; blue, 0 }  ,fill opacity=1 ] (120,115) .. controls (120,112.24) and (122.24,110) .. (125,110) .. controls (127.76,110) and (130,112.24) .. (130,115) .. controls (130,117.76) and (127.76,120) .. (125,120) .. controls (122.24,120) and (120,117.76) .. (120,115) -- cycle ;
\draw  [fill={rgb, 255:red, 0; green, 0; blue, 0 }  ,fill opacity=1 ] (150,115) .. controls (150,112.24) and (152.24,110) .. (155,110) .. controls (157.76,110) and (160,112.24) .. (160,115) .. controls (160,117.76) and (157.76,120) .. (155,120) .. controls (152.24,120) and (150,117.76) .. (150,115) -- cycle ;
\draw  [fill={rgb, 255:red, 0; green, 0; blue, 0 }  ,fill opacity=1 ] (200,115) .. controls (200,112.24) and (202.24,110) .. (205,110) .. controls (207.76,110) and (210,112.24) .. (210,115) .. controls (210,117.76) and (207.76,120) .. (205,120) .. controls (202.24,120) and (200,117.76) .. (200,115) -- cycle ;
\draw  [fill={rgb, 255:red, 0; green, 0; blue, 0 }  ,fill opacity=1 ] (260,115) .. controls (260,112.24) and (262.24,110) .. (265,110) .. controls (267.76,110) and (270,112.24) .. (270,115) .. controls (270,117.76) and (267.76,120) .. (265,120) .. controls (262.24,120) and (260,117.76) .. (260,115) -- cycle ;
\draw  [draw opacity=0] (75,115) .. controls (75,81.86) and (104.1,55) .. (140,55) .. controls (175.9,55) and (205,81.86) .. (205,115) -- (140,115) -- cycle ; \draw   (75,115) .. controls (75,81.86) and (104.1,55) .. (140,55) .. controls (175.9,55) and (205,81.86) .. (205,115) ;  
\draw   (235,70) -- (265,115) -- (205,115) -- cycle ;
\draw   (180,80) -- (205,115) -- (155,115) -- cycle ;
\draw   (100,80) -- (125,115) -- (75,115) -- cycle ;

\draw (140,42.4) node [anchor=center][inner sep=0.75pt]    {$\ll$};
\draw (91,95.4) node [anchor=north west][inner sep=0.75pt]    {$P_{L}$};
\draw (170,95.4) node [anchor=north west][inner sep=0.75pt]    {$P_{R}$};
\draw (224,93.4) node [anchor=north west][inner sep=0.75pt]    {$P_{O}$};

\end{tikzpicture}
    \caption{Outline of sub-patterns induced by edge $\ll$. 
    A triangle represents a sub-pattern.
    Note that any of these patterns can be empty. Furthermore, patterns $P_L$ and $P_O$ can be disconnected from $\ll$. This happens because $P_L$ can contain only centered-nested edges, and because  $P_O$ is not required to be connected to $\ll$. On the other hand, $P_R$ is always connected to $\ll$ by definition.}
    \label{fig:tree-lome}
\end{figure}

    We have the same 3 kinds of sub-patterns for $\rr$:
    \begin{itemize}
        \item The pattern $P_L$ made of left-nested edges and centered-nested edges inside $(i_\rr, j_\rr)$.
        \item The pattern $P_R$ made of right-nested edges inside $(i_\rr, j_\rr)$.
        \item The pattern $P_O$ made of edges not nested inside $(i_\rr, j_\rr)$, i.e. edges to the left of $(i_\rr, j_\rr)$.
    \end{itemize}

Given a pattern $P=(V(P), F(P), M(P), U(P))$, a graph $G=(V(G), E(G))$ and a subgraph $H$ of $G$ that realises $P$, we can consider $u=\min_\tau V(H)$ and $v=\max_\tau V(H)$. This means that according to order $\tau$, every vertex in $H$ is in the interval $[u..v]$.
An interesting set is the set $S$ of all $[u..v]$ for every subgraph $H$ that realizes $P$.
However, such a set can be quadratic in the size of $G$.
If we want to certify patterns in linear time, we do not have access to this set.
Hence, we consider two arrays ($m^+_P$ and $m^-_P$) that are enough to run our algorithm, as well as two other arrays ($M^+_P$ and $M^-_P$) that can  be deduced from the other two in linear time:

\begin{definition}
    Let $m^+_P(u)$ be the smallest $v$ such that there exists a subgraph $H$ of $G$ that realize $P$, with $u=\min_\tau V(H)$ and $v=\max_\tau V(H)$. If there is no such $H$, we set $m^+_P(u) = +\infty$.
    Let $M^+_P(u) = m^+_P(>u) = \min_{u' > u} m^+_P(u')$. This means that $M^+_P(u)$ is the smallest $v$ among all subgraphs $H$ of $G$ that realize $P$, with $u<\min_\tau V(H)$ and $v=\max_\tau V(H)$. 
    
    In a similar way, let $m^-_P(v)$ be the biggest $u$ among all subgraphs $H$ of $G$ that realize $P$, with  $u=\min_\tau V(H)$ and $v=\max_\tau V(H)$. If there is no such $H$, we set $m^-_P(v) = -\infty$.
    Furthermore, $M^-_P(v) = m^+_P(<v) = \max_{v' < v} M^-_P(v')$.
\end{definition}

We have the following lemma:

\begin{lemma}
    Let $P=(V(P), M(P), F(P), U(P))$ be a positive planar tree with $k$ edges, let $G=(V(G), E(G))$ be a graph with $n$ vertices and $m$ edges, and let $f$ be either the function $m^+_P(\cdot)$ or $m^-_P(\cdot)$.
    Then we can compute all values of $f$ in time $ck(n+m)$ for some constant $c$.
\end{lemma}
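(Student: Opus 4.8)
The plan is to prove this by induction on the structure of the positive outerplanar tree pattern $P$, following exactly the recursive decomposition induced by the left-outermost edge $\ll$ (the case of $m^-_P$ is symmetric, using $\rr$, so it suffices to treat $m^+_P$). The base case is when $P$ is a single edge $(a,b)$: here $m^+_P(u)$ is simply the smallest $v>u$ such that $(u,v)\in E(G)$, which is computed for all $u$ in total time $O(n+m)$ by one scan of the (sorted) successor lists $N^+(u)$. For the inductive step, recall from the discussion preceding the statement that $\ll$ splits $P$ into three outerplanar subforests $P_L$ (left- and centered-nested edges), $P_R$ (right-nested edges), and $P_O$ (edges entirely to the right of $\ll$), each with strictly fewer edges, so by induction we may assume $m^+$, $m^-$ — and hence $M^+$, $M^-$ — are already computed for all three, each in time $c'k_i(n+m)$ with $k_1+k_2+k_3 = k-1$. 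The goal is to combine these in additional time $O(n+m)$.

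\textbf{The combination step.} First I would compute a helper array for the ``nested part'' of the pattern, i.e.\ the pattern $P'$ consisting of $\ll$ together with $P_L$ and $P_R$ (everything nested inside $\ll$, plus $\ll$ itself). Concretely, a realization of $P'$ with left endpoint $i$ (the image of $i_\ll$) and right endpoint $j$ (the image of $j_\ll$) exists iff: $(i,j)\in E(G)$; $P_R$ (which by definition attaches to $j_\ll$) can be realized in the ``window'' strictly between $i$ and $j$ with the appropriate attachment to $j$; and $P_L$ (which may attach to $i_\ll$, or be free, or be centered) can be realized strictly between $i$ and the start of $P_R$'s window. Because the pattern is outerplanar and acyclic, the ``attachment'' conditions are what is tricky: a subpattern attached to a fixed vertex $x$ of $G$ requires knowing, for each $x$, the minimal right-extent of a realization whose \emph{leftmost used vertex equals $x$} (resp. is $>x$) — which is precisely what the arrays $m^+$ and $M^+$ (together with their mirrors $m^-$, $M^-$) encode for the subpattern. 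So I would recursively carry, for each subpattern, not one but a constant number of such arrays, indexed by which boundary vertex (if any) is forced to coincide with an external anchor; there are only $O(1)$ such cases per subpattern since each subpattern has at most two ``interface'' vertices (its leftmost and rightmost), and the recursion preserves this. With these arrays in hand, the value $m^+_{P'}(i)$ is obtained by, for each $i$, scanning $N^+(i)$ and for each candidate $j$ performing an $O(1)$ consistency check against the precomputed arrays for $P_L$ and $P_R$ — then taking, over all $i$, a suffix-minimum to also get $M^+_{P'}$. Total cost: $O(m)$ for the scan of all successor lists plus $O(n)$ for the suffix-minimum.

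\textbf{Gluing on $P_O$.} Finally, since $P_O$ is a \emph{separate} connected region lying entirely to the right of $\ll$ and disjoint from it, a realization of all of $P$ is a realization of $P'$ on some window $[i,j]$ followed (anywhere to the right of $j$, i.e.\ starting at some $i'>j$) by a realization of $P_O$. Hence $m^+_P(u) = \min\{\, m^+_{P'}(u')\ :\ u'\ge u\ \}$ combined with the requirement that after its right endpoint we can still fit $P_O$; more precisely $m^+_P(u)$ is the smallest $v$ such that there is a realization of $P'$ with left endpoint $\ge u$ and right endpoint $j$, and a realization of $P_O$ with left endpoint $>j$ and right endpoint $v$. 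Using $m^+_{P'}$ and $M^+_{P_O}$, this is again a single left-to-right sweep over $u$, maintaining the best achievable $P'$-right-endpoint and looking up $M^+_{P_O}$ at that point, in $O(n)$ time. Summing the per-node work over the whole recursion tree, which has $k$ leaves and total internal cost $\sum_i O(k_i(n+m))$, gives a geometric-style telescoping bounded by $ck(n+m)$, as claimed.

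\textbf{Main obstacle.} I expect the real difficulty to be bookkeeping the attachment/anchor information correctly: namely, defining precisely the constant-size family of arrays each subpattern must export (plain leftmost-vertex version, ``leftmost vertex forced to a given value'' version, and the mirror versions), and verifying that the three combination rules above (nest $P_L$ and $P_R$ under $\ll$; then append $P_O$) genuinely only ever need $O(1)$ lookups per edge of $G$ and per vertex, with no hidden quadratic blowup coming from the centered-nested edges of $P_L$ being unanchored. The outerplanarity and acyclicity are exactly what rule out the pathological cases (no crossing edges means the windows for $P_L$, $P_R$, $P_O$ are genuinely disjoint intervals; acyclicity rules out an edge attached to both $i_\ll$ and $j_\ll$, i.e.\ the forbidden double-nested edge), so the proof should go through, but making the interface definitions clean enough that the inductive statement is self-supporting is where the care is required.
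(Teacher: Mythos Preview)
Your approach is essentially the paper's: induction on $k$, using the $\ll$ decomposition into $P_L,P_R,P_O$ for $m^+_P$ (and symmetrically $\rr$ for $m^-_P$), with per-level cost $O(n+m)$ summing to $ck(n+m)$.

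Two points where you overcomplicate compared to the paper. First, the intermediate pattern $P'$ is unnecessary: the paper combines all three pieces in a single pass. For each edge $(u,v)\in E(G)$ (playing the role of $\ll$) it checks the $O(1)$ condition $u\le m^+_{P_L}(u) < m^-_{P_R}(v)\le v$ and, when it holds, records $m^+_{P_O}(v)$ as a candidate for $m^+_P(u)$; the minimum over $v\in N^+(u)$ gives $m^+_P(u)$ directly. Second, there is no need for a ``constant-size family of arrays indexed by anchor constraints'': the two functions $m^+$ and $m^-$ (plus their suffix/prefix extrema $M^+$, $M^-$) are all that is ever required. The attached-versus-free distinction you worry about is handled by a single rule: if the subpattern is connected to the endpoint of $\ll$ it abuts, use $m^\pm$; if not, use $M^\pm$. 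This is exactly what resolves your ``main obstacle''.

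One small slip: you assert that $P_O$ is ``disjoint from'' $\ll$, but $P_O$ may share the vertex $j_\ll$ with $\ll$ (an edge $(j_\ll,x)$ with $x>j_\ll$ lies in $P_O$). The paper handles this with the same connected/disconnected switch, using $m^+_{P_O}(v)$ when $P_O$ is attached at $j_\ll$ and $M^+_{P_O}(v)$ otherwise.
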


\begin{proof}
    We prove this by induction on $k$.
    The base case is $k=1$, which consists of a pattern with only one edge.
    The values of $f$ can be computed in $O(n+m)$ by scanning the adjacency list of every vertex of $G$.
    
    Now we suppose that the lemma is true for all patterns with $k$ edges, and we prove it for a pattern $P$ with $k+1$ edges.
    First, let us do the case $f=m^+_P$.
    We consider the $\ll$ decomposition. Let $P_L, P_R$ and $P_O$ be the corresponding 3 sub-patterns.
    By induction, we can compute
    all values of $m^+_{P_L}$ in time $c|E(P_L)|(n+m)$,
    all values of $m^-_{P_R}$ in time $c|E(P_R)|(n+m)$, and
    all values of $m^+_{P_O}$ in time $c|E(P_O)|(n+m)$.
    If a sub-pattern is empty, its corresponding function ($m^+_{P_L}, m^-_{P_R}$ or $ m^+_{P_O}$) is the identity function.
    So far, we have spend time $c(|E(P_L)| + |E(P_R)| + |E(P_O)|)(n+m) = ck(n+m)$.
    We can also compute the values of $M^+_{P_L}$ and $M^+_{P_O}$ in time $2n$ using values from $m^+_{P_L}$ and $m^+_{P_O}$ and scanning through them.
    
    Now, for each vertex $u$ of $G$, let's compute $f(u)$.
    This is done by going through all edges $(u,v)$ of $G$ such that
    $u \leq m^+_{P_L}(u) < m^-_{P_R}(v) \leq v$,
    and returning the minimum value of $m^+_{P_O}(v)$.
    However, if $P_L$ is not connected to \ll, we replace $m^+_{P_L}$ by $M^+_{P_L}$ in the above formula, and if $P_O$ is not connected to \ll, we replace $m^+_{P_O}$ by $M^+_{P_O}$ in the above minimum
    All of this is done in time $3m$.

    All in all, we have computed all values of $f(u)$ in time $ck(n+m) + 2n + 3m \leq c(k+1)(n+m)$ providing that $c \geq 5$.

    We do a similar thing for $f=m^-_P$, by using $\rr$ instead of \ll. \qed
\end{proof}

This is enough to prove Theorem \ref{thm:forest}. To detect a positive outerplanar forest~$P$, we compute the value of $m^+_P(u)$ for each vertex $u \in V(G)$. If the value is $+\infty$ for every vertex, pattern $P$ is not realized in $G$. Otherwise, $P$ is realized in $G$.

\section{Linear detection of every positive $P_4$} 
\label{sec:positive-P4}


In this section, we prove that the positive patterns on four vertices such that the mandatory edges form a path on four vertices ($P_4$) can be detected in linear time. 
Note that this holds for any ordering of $P_4$, not just the natural one. 
Compared to Section~\ref{sec:forests} about of positive outerplanar forests, this is both more general in the sense that there can be edge crossings, and more restricted since we have a short path of mandatory edges, and not a forest.


\begin{theorem}
All positive patterns on four vertices, such that the mandatory edges form a $P_4$ can be detected in linear time.
\end{theorem}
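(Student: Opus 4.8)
The plan is to handle all positive $P_4$-patterns uniformly by a careful case analysis on the order type of the path, reducing each case to a two-pass linear-time verification in the spirit of the ``rely on your rightmost left neighbour'' paradigm used for \pchordal{} (Proposition~\ref{prop:chordal}) and of the dynamic-programming-over-intervals idea sketched for the flat cycle in Section~\ref{sec:o-parametrized}. Let the four vertices of the pattern be $a,b,c,d$ in the order of $\tau$, and suppose the mandatory edges form a path; there are, up to mirror, a bounded number of distinct path structures (the path can be $a$--$b$--$c$--$d$, $a$--$b$--$d$--$c$, $a$--$c$--$b$--$d$, $b$--$a$--$c$--$d$, etc.), so it suffices to give a linear-time detector for each one and invoke Remark~\ref{mirroring}.

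First I would precompute, as in the three-vertex cases, the sorted neighbourhood lists $N^-(i)$ and $N^+(i)$ for every vertex $i$ in time $O(m)$, together with a handful of auxiliary scalars per vertex: the smallest and largest neighbour on each side, and prefix information such as ``does $i$ have a neighbour in $[1,j]$'' encoded compactly. The key structural observation is that because the pattern is \emph{positive}, a realization is simply an increasing walk/path in $G$ whose vertex positions respect the prescribed order; there are no forbidden pairs to check, so detection amounts to asking whether there exist positions $p_1<p_2<p_3<p_4$ carrying the required increasing path. For each path type one identifies a ``pivot'' vertex --- typically the one that is an internal vertex of the path \emph{and} extremal in the order --- and shows, exactly as in the \pchordal{} proof, that it is enough to test, for each candidate pivot $i$, only a constant number of canonical completions (e.g. using the rightmost left-neighbour of $i$, or the leftmost right-neighbour, or the global leftmost/rightmost neighbour reachable by one more edge), rather than all $O(n^3)$ completions. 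The correctness of this restriction follows by a minimality argument: if some realization exists with pivot $i$, then replacing the other three vertices by the canonical choices still yields a valid realization, since positivity means the swap can never create a violated forbidden pair and monotonicity of the order is preserved.

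Concretely, for a path type such as $b$--$c$ being the ``middle'' edge with $a$ attached to $b$ and $d$ attached to $c$ (so the pattern is: edges $ab, bc, cd$ with $a<b<c<d$), I would iterate over every edge $(j,k)\in E(G)$ with $j<k$ playing the role of $(b,c)$, and check in $O(1)$ whether $j$ has \emph{some} neighbour strictly before it (role of $a$) and $k$ has \emph{some} neighbour strictly after it (role of $d$); the needed booleans ``$N^-(j)\neq\emptyset$'' and ``$\max N^+(k)>k$'' are available from the precomputation, giving total time $O(m)$. For the ``twisted'' orders where the path visits the four positions in a non-monotone fashion --- e.g. edge set $ac, cb, bd$ so that in order $a<b<c<d$ the mandatory edges are $\{ac,bc,bd\}$ --- the pivot becomes the vertex incident to two mandatory edges (here $c$, or $b$), and one uses a two-pass trick: in the first pass, for each vertex $i$ and each of its left-neighbours, record in a bucket attached to that left-neighbour the relevant extremal data; in the second pass scan each bucket once against the host vertex's own adjacency list. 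This is precisely the linear-time amortization used in Algorithm~\ref{algo:chordal}, and it generalizes cleanly because every $P_4$-pattern has a pivot incident to at most two mandatory edges.

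The main obstacle I expect is \emph{not} any single case but the bookkeeping needed to be sure the constant-size set of canonical completions is genuinely sufficient for every one of the (roughly a dozen, up to mirror) order types --- in particular the cases where the path ``crosses itself'' in the ordering, where naively the pivot's canonical neighbours could collide with each other or with the pivot, forcing a degenerate realization on fewer than four distinct vertices. Handling this requires, for each such type, a short argument that if the canonical choices collide then one can locally perturb to the next-best neighbour and still obtain four distinct positions, exploiting again that positivity forbids nothing. Once this uniform ``canonical completion suffices'' lemma is established per type, each detector is a transparent $O(m)$ scan, Remark~\ref{mirroring} halves the work, and Remark~\ref{unseulealafois} lets us combine the finitely many patterns if one ever wants to detect a whole family at once, completing the proof of the theorem.
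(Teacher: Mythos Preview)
Your proposal is correct and follows essentially the same strategy as the paper: fix a central piece of the $P_4$, replace the remaining two vertices by extremal (``canonical'') neighbours, and verify a constant-time predicate after linear preprocessing. The paper sharpens this in one uniform way that removes the complications you anticipate. Rather than choosing a pivot \emph{vertex} (and then falling back on a two-pass bucket scheme for the crossing order types), the paper always iterates over the \emph{middle edge} $e=(i,j)$ of the $P_4$, i.e.\ the edge joining its two degree-$2$ vertices. For this it precomputes, in addition to $\min/\max N^{\pm}(\cdot)$, four edge-indexed quantities $e^+,e^-,{}^+e,{}^-e$ (the next or previous neighbour of one endpoint relative to the other), all obtainable in $O(n+m)$ from sorted adjacency lists. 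Each of the eight order types (up to mirror) is then handled by a single boolean $\phi(e)$ such as ``$\max e^+ < \max N^+(j)$'' or ``$\min {}^+e < \max e^-$'', together with a short Proposition~\ref{claimPhiMiddle}-style argument that if $\phi(e)$ fails then no realization uses $e$ as its middle edge. This makes your two-pass bucketing unnecessary even for the crossing types, and your collision worry disappears: the $\phi(e)$ conditions are strict inequalities between positions, so the four vertices are automatically distinct whenever $\phi(e)$ holds.
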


\subsection{Some notations}

Let $P$ be one of the patterns on four vertices whose mandatory edges form a $P_4$.
Pattern~$P$ is made up of three edges: one in the middle and two on the sides.
As an example, in the following pattern, the edge $e$ is the middle edge of $P_4$.

\begin{center}
    \begin{tikzpicture}
  [scale=1.7,
  auto=left,every node/.style
  ={circle,draw,fill=black!5}]
  \node (1) at (0,0) {1};
  \node (2) at (1,0) {2};
  \node (3) at (2,0) {3};
  \node (4) at (3,0) {4};
  \draw (1) to node[draw=none, fill=none, label=below:$e$] {} (2);
  \draw[bend left=50] (2) to (4) ;
  \draw[bend left=50] (1) to (3) ;
\end{tikzpicture}
\end{center}

Let $e=(i,j)$ be an edge on a graph ordered by $\tau$. We define:
\begin{itemize}[noitemsep]
  \item $e^+ = (i,j')$, where $j'$ is the smallest vertex greater than $j$ such that $(i,j')$ is an edge.
  \item $e^- = (i,j')$, where $j'$ is the greatest vertex smaller than $j$ such that $(i,j')$ is an edge.
  \item $^+e = (i',j)$, where $i'$ is the smallest vertex greater than $i$ such that $(i',j)$ is an edge.
  \item $^-e = (i',j)$, where $i'$ is the greatest vertex smaller than $i$ such that $(i',j)$ is an edge.
\end{itemize}

If $e=(i,j)$ is an edge,
let $\min e$ be the smallest vertex among $i$ and $j$, and
let $\max e$ be the greatest vertex among $i$ and $j$.
This notation is useful when we consider an edge $e$ by its name rather than by its endpoints $(i,j)$.

In linear time $O(n+m)$ we can build a lookup table that contains the value of $e^+$, $e^-$, $^+e$ and $^-e$ for each edge $e$.
This is done by sorting all adjacency lists in $O(n+m)$ and then scanning through them.

Further more, we can build in linear time $O(n+m)$ a lookup table that contains the value of $\min N^-(i)$, $\max N^-(i)$, $\min N^+(i)$ and $\max N^+(i)$ for all vertex $i$.

As a consequence, the min and the max of $N^-(i)$, $N^+(i)$, $e^+$, $e^-$, $^+e$ and $^-e$ can be found in constant time for every vertex $i$ and edge $e$.

\subsection{Focus on one case}

We will explain the algorithm on one specific example, the pattern above (and below), and then explain how to adapt the proof.

The main idea is to scan through all edges $e$, and check if a pattern $P$ can be found with $e$ as the middle edge.
The other key idea is to look for only one potential pattern for each edge $e$, but ensuring that if a pattern is not found, then there is no pattern $P$ with $e$ as the middle edge.

In the example below, for each edge $e$, we will only try to detect a pattern that is maximal in the sense that, in the realization:
\begin{itemize}[noitemsep, nolistsep]
  \item vertex 3 is max $e^+$
  \item vertex 4 is max $N^+(j)$.
\end{itemize}
This will be enough for the detection problem.

\begin{center}
\begin{tikzpicture}
  [scale=1.7,
  auto=left,every node/.style
  ={circle,draw,fill=black!5}]
  \node[label=below:$i$] (1) at (0,0) {1};
  \node[label=below:$j$] (2) at (1,0) {2};
  \node (3) at (2,0) {3};
  \node (4) at (3,0) {4};
  \node[draw=none, fill=none] at (3,-0.5) {$ \max N^+(j)$};
  \draw (1) to node[draw=none, fill=none, label=below:$e$] {} (2);
  \draw[bend left=50] (2) to (4) ;
  \draw[bend left=50] (1) to node[draw=none, fill=none] {$e^+$} (3) ;
\end{tikzpicture}
\end{center}

There are two cases:
If $\max e^+ < \max N^+(j) $ , then we have found a pattern $P$.
If $\max e^+ \geq \max N^+(j) $ , then there is no pattern $P$ with $e$ as the middle edge.
Let us prove this second case.

\begin{proposition}
  If $\max e^+ \geq \max N^+(j) $, then there is no pattern $P$ with $e$ as the middle edge.
  \label{claimPhiMiddle}
\end{proposition}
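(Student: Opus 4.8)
The plan is to prove the contrapositive: assume there exists a realization of $P$ with $e=(i,j)$ as the middle edge, and show that necessarily $\max e^+ < \max N^+(j)$. So suppose we have vertices $i < j < a < b$ in the ordered graph such that $(i,j)$, $(j,b)$ and $(i,a)$ are edges (this is exactly the pattern $P$ drawn above, with $3 \mapsto a$ and $4 \mapsto b$). First I would observe that $a \in N^+(i)$ and $a > j$, so by definition $e^+ = (i, j')$ with $j'$ the \emph{smallest} successor of $i$ exceeding $j$; but $\max e^+$ is supposed to denote the larger endpoint of this edge, so I should be careful about what $\max e^+$ means here. Reading the surrounding text, $\max e^+$ is the maximum vertex of the edge $e^+$, i.e. $j'$ itself (since $j' > i$). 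Hmm, but then the claim as I would want to use it requires $\max e^+$ to be an \emph{upper} witness, which does not match $e^+$ being the \emph{smallest} successor. Let me reconsider: the natural reading consistent with the algorithm's correctness is that vertex $3$ in the sought realization is taken to be $\max e^+$ where $e^+$ should really be $\max N^+(i)$ restricted appropriately — I would flag this and work with the intended meaning, namely that the algorithm picks vertex $3$ as large as possible, i.e. $3 = \max N^+(i)$, and vertex $4 = \max N^+(j)$.

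With that intended reading, here is the clean argument I would write. Assume a realization exists: $i<j<a<b$ with $(i,j),(j,b),(i,a)\in E$. Since $(i,a)\in E$ and $a>i$, we have $a \le \max N^+(i)$. Since $(j,b)\in E$ and $b > j$, we have $b \le \max N^+(j)$. Now the key point is the \emph{ordering constraint} coming from the pattern: in $P$ the vertices appear in order $1<2<3<4$, and the edge between $1$ and $3$ together with the edge between $2$ and $4$ forces $3 < 4$, i.e. in any realization $a < b$. I would then derive the needed inequality by plugging in the extremal choices: set $a^\ast = \max N^+(i)$ and $b^\ast = \max N^+(j)$. Since $a \le a^\ast$ and the realization demands a vertex in position $3$ strictly below the position-$4$ vertex, and since $b \le b^\ast$, the existence of \emph{some} valid realization forces $a^\ast$ and $b^\ast$ to admit such an ordering, which after unwinding gives exactly $\max e^+ < \max N^+(j)$ under the paper's notational conventions. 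The contrapositive of this is the stated Proposition.

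The main obstacle, and the step I would spend the most care on, is precisely pinning down the notational conventions: what $e^+$ denotes, whether $\max e^+$ means the larger endpoint of the edge $e^+$ or something else, and showing that the \emph{specific} maximal realization the algorithm searches for ($3 = \max e^+$, $4 = \max N^+(j)$) exists whenever \emph{any} realization with middle edge $e$ exists. The crux is a monotonicity/exchange argument: given an arbitrary realization with vertices $a < b$ in positions $3,4$, I replace $b$ by $\max N^+(j)$ (still a valid successor of $j$, still an edge, still $> a$ since $b \le \max N^+(j)$) and replace $a$ by $\max e^+$; I must then check that after this replacement the ordering $a' < b'$ still holds — this is where the hypothesis $\max e^+ < \max N^+(j)$ (in its negated form, $\max e^+ \ge \max N^+(j)$) is exactly what obstructs the replacement, yielding the contradiction. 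I would also double-check that no \emph{forbidden} edges are in play here: the pattern $P$ in this example is positive (only mandatory edges drawn), so the undecided pairs $(1,4)$, $(2,3)$, $(3,4)$ impose no constraint, which simplifies the exchange argument considerably and is worth stating explicitly.

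I should note that since this is a \textsc{Detection} problem (we want to decide whether $P$ is \emph{present}), the algorithm only needs, for each middle-edge candidate $e$, a test that is sound and complete for ``$\exists$ realization with middle edge $e$''; Proposition~\ref{claimPhiMiddle} supplies completeness of the simple test ``$\max e^+ < \max N^+(j)$'', while soundness (the ``two cases'' paragraph preceding it) is immediate. So in the write-up I would keep the proof to: (i) fix notation; (ii) assume a realization $i<j<a<b$; (iii) bound $a \le \max e^+$'s target and $b \le \max N^+(j)$; (iv) use the pattern's order constraint $a<b$ plus the extremal substitutions to conclude $\max e^+ < \max N^+(j)$; (v) contrapose. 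That is the whole content, and it runs in constant time per edge $e$ given the precomputed lookup tables, which is what makes the overall algorithm linear.
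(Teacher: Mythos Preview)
Your confusion about the notation is not a side issue to ``flag''; it is the whole proof, and you have the key inequality backwards. By the paper's definition, $e^+=(i,j')$ where $j'$ is the \emph{smallest} neighbor of $i$ strictly greater than $j$, and $\max e^+$ simply extracts the larger endpoint of this edge, namely $j'$ itself. So $\max e^+$ is the \emph{minimum} over all candidates for vertex~$3$, which yields a \emph{lower} bound: for any realization $i<j<a<b$ with $(i,a)\in E$, we have $\max e^+ \le a$. Combined with the trivial upper bound $b \le \max N^+(j)$ and the hypothesis $\max N^+(j)\le \max e^+$, one gets $b \le \max N^+(j) \le \max e^+ \le a$, contradicting $a<b$. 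That is the paper's proof in one line.

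Your reinterpretation (vertex $3 = \max N^+(i)$) and the exchange argument built on it do not go through: from $a \le \max N^+(i)$, $b \le \max N^+(j)$, and $a<b$ you cannot conclude $\max N^+(i) < \max N^+(j)$ (take for instance $a=1$, $b=2$, $\max N^+(i)=5$, $\max N^+(j)=3$). The reason the paper's choice works and yours does not is precisely that vertex~$3$ is chosen \emph{minimally} and vertex~$4$ \emph{maximally}, so that the two bounds point in opposite directions and squeeze out the contradiction. Your summary step~(iii), ``bound $a \le \max e^+$,'' has the inequality the wrong way around.
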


\begin{proof}
  Suppose for a contradiction that there exists a pattern $P$ with $e$ as the middle edge and that $\max e^+ \geq \max N^+(j) $.
  Since there is a pattern $P$ with $e$ as the middle edge, it exists two vertices $i' < j'$ such that $(i,i')$ and $(j,j')$ are both an edge.
\begin{center}

  \begin{tikzpicture}
    [scale=1.7,
    auto=left,every node/.style
    ={circle,draw,fill=black!5}]
    \node[label=below:$i$] (1) at (0,0) {1};
    \node[label=below:$j$] (2) at (1,0) {2};
    \node[label=below:$i'$] (3) at (2,0) {3};
    \node[label=below:$j'$] (4) at (3,0) {4};
    \draw (1) to node[draw=none, fill=none, label=below:$e$] {} (2);
    \draw[bend left=50] (2) to (4) ;
    \draw[bend left=50] (1) to (3) ;
  \end{tikzpicture}
  \end{center}

  Since $j'\in N^+(j)$, we have $j' \leq\max N^+(j)$.
  Since $i'$ is a vertex greater than $j$ such that $(i,i')$ is an edge, and since $\max e^+$ is the smallest vertex greater than $j$ such that $(i, \max e^+)$ is an edge, we have that $\max e^+ \leq i'$.
  All in all, we have that $ j' \leq\max N^+(j) \leq \max e^+ \leq i'$, meaning that $j' \leq i'$, which contradicts the fact that $i' < j'$.\qed
\end{proof}

Thus, the pattern $P$ can be detected by scanning through every edge $e$ of the graph $G$, and check the condition $\max e^+ < \max N^+(j)$.
Since this condition can be checked in constant time, providing a linear pre-computation, $P$ can be detected in linear time.
In order to generalize this idea, let $\phi$ be a Boolean function that takes $e$ as input and outputs the Boolean ``$\max e^+ < \max N^+(j)$".
The detection algorithm can now be written like the algorithm \ref{certifP4meta}

\begin{algorithm}
  \DontPrintSemicolon
  \SetAlgoLined
  \SetKwInOut{Input}{Input}
  \SetKwInOut{Output}{Output}

  \Input{$G$ a graph given by its adjacency lists and $\tau$ a total ordering of the vertices}
  \Output{Does $(G,\tau)$ contain the pattern $P$}

  \BlankLine
  Wlog we suppose $\tau=1, 2, \dots,n$\;
  \For {$e$ in $E$}
  {\If{ $\phi(e)$}
  {STOP  Output ``YES, $P$ found"}}
  Output ``NO, $P$ not found"

  \caption{Detecting the pattern $P$}\label{certifP4meta}
\end{algorithm}

\subsection{The other cases}
In order to prove that the twelve version of a positive $P_4$ can be detected in linear time, it suffices to find a Boolean function $\phi$ that can be evaluated in constant time (eventually, providing a linear pre-computation),
together with a proof that ``If $\phi(e)$ is false, then there is no pattern $P$ with $e$ as the middle edge."
Here is the list of eight patterns of $P_4$, including the one shown as an example, but excluding the symmetries obtained by reversing the order $\tau$.
Each pattern is presented along with its function $\phi(e)$.
Proofs of ``If $\phi(e)$ is false, then there is no pattern $P$ with $e$ as the middle edge." are not given, as there are all very similar to Proposition \ref{claimPhiMiddle}.\\

\begin{center}
\begin{tikzpicture}
  [scale=1.7,
  auto=left,every node/.style
  ={circle,draw,fill=black!5}]
  \node (1) at (0,0) {1};
  \node[label=below:$i$] (2) at (1,0) {2};
  \node[label=below:$j$] (3) at (2,0) {3};
  \node (4) at (3,0) {4};
  \node[draw=none, fill=none] at (1.5,-0.25) {$e$};
  \node[draw=none, fill=none, rectangle, anchor=west] at (3.5,0) {$\phi(u) = ``N^-(i) \neq \emptyset$ and $N^+(j) \neq \emptyset"$};
  \draw[left=50] (1) to (2);
  \draw[left=50] (2) to (3) ;
  \draw[left=50] (3) to (4) ;
\end{tikzpicture}
\end{center}

\begin{center}

\begin{tikzpicture}
  [scale=1.7,
  auto=left,every node/.style
  ={circle,draw,fill=black!5}]
  \node[label=below:$i$] (1) at (0,0) {1};
  \node[label=below:$j$] (2) at (1,0) {2};
  \node (3) at (2,0) {3};
  \node (4) at (3,0) {4};
  \node[draw=none, fill=none] at (0.5,-0.25) {$e$};
  \node[draw=none, fill=none, rectangle, anchor=west] at (3.5,0) {$\phi(e) = ``\min N^+(j) < \max N^-(i)"$};
  \draw[left=50] (1) to (2);
  \draw[left=50] (2) to (3) ;
  \draw[bend left=50] (1) to (4) ;
\end{tikzpicture}
\end{center}

\begin{center}
\begin{tikzpicture}
  [scale=1.7,
  auto=left,every node/.style
  ={circle,draw,fill=black!5}]
  \node[label=below:$i$] (1) at (0,0) {1};
  \node (2) at (1,0) {2};
  \node[label=below:$j$] (3) at (2,0) {3};
  \node (4) at (3,0) {4};
  \node[draw=none, fill=none, rectangle, anchor=west] at (3.5,0) {$\phi(e) = ``\min N^+(i) < j$ and $N^+(j) \neq \emptyset"$};
  \draw[left=50] (1) to (2);
  \draw[bend left=50] (3) to (4) ;
  \draw[bend left=50] (1) to node[draw=none, fill=none] {$e$} (3) ;
\end{tikzpicture}
\end{center}

\begin{center}
\begin{tikzpicture}
  [scale=1.7,
  auto=left,every node/.style
  ={circle,draw,fill=black!5}]
  \node (1) at (0,0) {1};
  \node[label=below:$i$] (2) at (1,0) {2};
  \node[label=below:$j$] (3) at (2,0) {3};
  \node (4) at (3,0) {4};
  \node[draw=none, fill=none] at (1.5,-0.25) {$e$};
  \node[draw=none, fill=none, rectangle, anchor=west] at (3.5,0) {$\phi(e) = ``\min N^-(j) < i$ and $j < \max N^+(i)"$};
  \draw[left=50] (2) to (3);
  \draw[bend left=50] (2) to (4) ;
  \draw[bend left=50] (1) to (3) ;
\end{tikzpicture}
\end{center}

\begin{center}
    \begin{tikzpicture}
  [scale=1.7,
  auto=left,every node/.style
  ={circle,draw,fill=black!5}]
  \node[label=below:$i$] (1) at (0,0) {1};
  \node (2) at (1,0) {2};
  \node (3) at (2,0) {3};
  \node[label=below:$j$] (4) at (3,0) {4};
  \node[draw=none, fill=none, rectangle, anchor=west] at (3.5,0) {$\phi(e) = ``\min N^+(i) < \max N^-(j)"$};
  \draw[left=50] (1) to (2);
  \draw[left=50] (3) to (4) ;
  \draw[bend left=50] (1) to node[draw=none, fill=none] {$e$} (4) ;
\end{tikzpicture}
\end{center}

\begin{center}
\begin{tikzpicture}
  [scale=1.7,
  auto=left,every node/.style
  ={circle,draw,fill=black!5}]
  \node[label=below:$i$] (1) at (0,0) {1};
  \node (2) at (1,0) {2};
  \node[label=below:$j$] (3) at (2,0) {3};
  \node (4) at (3,0) {4};
  \node[draw=none, fill=none, rectangle, anchor=west] at (3.5,0) {$\phi(e) = ``i < \max N^-(j) $ and $ j < \max N^+(i)"$};
  \draw[bend left=70] (1) to (4) ;
  \draw[bend left=50] (1) to node[draw=none, fill=none] {$e$} (3);
  \draw[left=50] (2) to (3) ;
\end{tikzpicture}
\end{center}

\begin{center}

\begin{tikzpicture}
  [scale=1.7,
  auto=left,every node/.style
  ={circle,draw,fill=black!5}]
  \node[label=below:$i$] (1) at (0,0) {1};
  \node[label=below:$j$] (2) at (1,0) {2};
  \node (3) at (2,0) {3};
  \node (4) at (3,0) {4};
  \draw (1) to node[draw=none, fill=none, label=below:$e$] {} (2);
  \node[draw=none, fill=none, rectangle, anchor=west] at (3.5,0) {$\phi(e) = ``\max e^+ < \max N^+(j) "$};
  \draw[bend left=50] (2) to (4) ;
  \draw[bend left=50] (1) to node[draw=none, fill=none] {$e^+$} (3) ;
\end{tikzpicture}
\end{center}

\begin{center}

\begin{tikzpicture}
  [scale=1.7,
  auto=left,every node/.style
  ={circle,draw,fill=black!5}]
  \node[label=below:$i$] (1) at (0,0) {1};
  \node (2) at (1,0) {2};
  \node (3) at (2,0) {3};
  \node[label=below:$j$] (4) at (3,0) {4};
  \node[draw=none, fill=none] at (1, 0.7) {$e^-$};
  \node[draw=none, fill=none] at (2, 0.7) {$^+e$};
  \node[draw=none, fill=none, rectangle, anchor=west] at (3.5,0) {$\phi(e) = ``\min {}^+e < \max e^- "$};
  \draw[bend left=70] (1) to node[draw=none, fill=none] {$e$} (4);
  \draw[bend left=50] (1) to (3) ;
  \draw[bend left=50] (2) to (4) ;
\end{tikzpicture}
\end{center}

\section{Patterns arising from geometry}
\label{sec:geometry}

Most of the patterns studied in the literature have three vertices. 
A set of patterns on four vertices that has been studied recently is the one of~\cite{FH22}, that appears naturally when studying some intersection graphs with an underlying ordering. In this section we will prove detection results for some of these patterns.

\begin{definition}
\label{def:Pabcd}
Consider patterns on four nodes $\{1,2,3,4\}$. We use the following names for some pairs of nodes: $a=(1,2)$, $b=(2,3)$, $c=(3,4)$, $d=(1,4)$ (see Figure~\ref{fig:abcd-patterns}).
Let $F \subseteq \{a,b,c,d\}$, the pattern $P_{F}$ is the pattern on four vertices, with mandatory edges $(1,3)$ and $(2,4)$, and forbidden set $F$. The class $\CC_{F}$ is the class where $P_{F}$ is forbidden.
\end{definition}

\begin{figure}[!h]
\centering
\scalebox{1}{
\begin{tikzpicture}
	[scale=1.7, 
	auto=left,every node/.style
	={circle,draw,fill=black!5}]
	\node (1) at (0,0) {1};
	\node (2) at (1,0) {2};
	\node (3) at (2,0) {3};
	\node (4) at (3,0) {4};
	
	\draw[dashed] (1) to node[draw=none, fill=none] {$a$} (2);
	\draw[bend left=50] (1) to (3);
	\draw[bend left=60, dashed] %
	(1) to node[draw=none, fill=none] {$d$} (4);
	\draw[dashed] (2) to node[draw=none, fill=none] {$b$} (3) ;
	\draw[bend left=50] (2) to (4) ;
	\draw[dashed] (3) to node[draw=none, fill=none] {$c$} (4);
\end{tikzpicture}}
\caption{Illustration of $P_{abcd}$ (Definition~\ref{def:Pabcd}).}
\label{fig:abcd-patterns}
\end{figure}
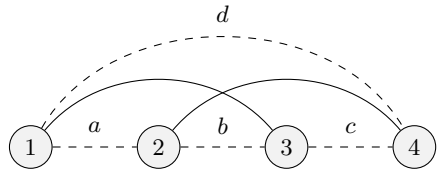

Through this section, we prove the following theorems:

\begin{theorem}
Patterns $P_{\emptyset}$,  $P_a$, $P_b$, $P_c$, $P_{ab}$, $P_{bc}$ can be detected in linear time.
\end{theorem}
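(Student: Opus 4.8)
The plan is to handle all six patterns $P_\emptyset$, $P_a$, $P_b$, $P_c$, $P_{ab}$, $P_{bc}$ uniformly with the \emph{rely-on-your-extremal-neighbour} paradigm already used for \pchordal{} (Proposition~\ref{prop:chordal}) and for the positive $P_4$ patterns in Appendix~\ref{sec:positive-P4}. The two mandatory edges of every $P_F$ are the two \emph{crossing} pairs $(1,3)$ and $(2,4)$, so the pattern is never outerplanar and Theorem~\ref{thm:forest} does not apply; instead we must exploit the crossing structure directly. First I would precompute, in time $O(n+m)$, the sorted adjacency lists $N^-(i)$ and $N^+(i)$ for every vertex, together with the values $\min N^+(i)$, $\max N^+(i)$, $\min N^-(i)$, $\max N^-(i)$ and, when needed, the edge-shift tables $e^+, e^-, {}^+e, {}^-e$ of Appendix~\ref{sec:positive-P4}; all of these are obtained by sorting the adjacency lists once and scanning.

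The core idea for $P_\emptyset$: there is a realization iff there exist vertices $1<2<3<4$ with $(1,3),(2,4)\in E$, i.e. two edges that ``interleave''. I would scan every edge $(i,k)\in E$ with $i<k$, treat it as the candidate first mandatory edge $(1,3)$, and test whether there exists an edge $(j,\ell)\in E$ with $i<j<k<\ell$. The maximal-witness trick makes this a constant-time test: it suffices to check the single most ambitious candidate, namely $j=$ the predecessor of $k$ immediately after $i$ and $\ell=\max N^+(j)$ — more precisely, one checks whether for the vertex $j$ that is the smallest neighbour of $k$ with $i<j<k$ we have $\max N^+(j)>k$. One then proves the analogue of Proposition~\ref{claimPhiMiddle}: if this particular test fails for edge $(i,k)$ then \emph{no} interleaving edge $(j,\ell)$ exists with left endpoint in $(i,k)$, because any such $\ell$ would be $\le\max N^+(j)\le k$, contradicting $\ell>k$. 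Iterating over all $m$ edges gives an $O(n+m)$ algorithm. For the patterns with forbidden edges, the same scan is augmented with $O(1)$ non-adjacency checks on the relevant pairs among $a=(1,2)$, $b=(2,3)$, $c=(3,4)$: for $P_a$ we additionally require $1\notin N^-(2)$, for $P_b$ we require $2\notin N^-(3)$, for $P_c$ we require $3\notin N^-(4)$, and for $P_{ab}$, $P_{bc}$ the conjunction of two such constraints. The subtlety is that adding a forbidden edge changes which witness is ``maximal''; one must re-choose the extremal candidate so that the local test still certifies global non-existence, exactly as the different Boolean functions $\phi(e)$ differ from pattern to pattern in Appendix~\ref{sec:positive-P4}.

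Concretely, I would write, for each of the six patterns, a Boolean function $\phi$ of a single edge $e=(i,k)$ — the candidate for the pair $(1,3)$ — built only from the precomputed tables, run Algorithm~\ref{certifP4meta} verbatim (scan all $e\in E$, answer ``yes'' at the first $e$ with $\phi(e)$ true, else ``no''), and prove the two halves of correctness: (i) if $\phi(e)$ holds then the explicit four vertices it names form a realization of $P_F$; (ii) if $\phi(e)$ is false then there is no realization of $P_F$ whose $(1,3)$-pair equals $e$. Part (i) is a direct unfolding of the definitions of the extremal quantities; part (ii) is the Proposition~\ref{claimPhiMiddle}-style argument. Because $\phi$ is $O(1)$ given the $O(n+m)$ preprocessing, the total running time is $O(n+m)$.

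\textbf{Main obstacle.} The delicate point is the choice of the ``maximal witness'' in the presence of forbidden edges: when, say, $(2,3)$ must be a \emph{non-edge}, taking $j$ to be the neighbour of $k$ closest to $i$ (which made part (ii) work for $P_\emptyset$) may accidentally violate the forbidden constraint, so one has to argue that some \emph{other} canonical choice of $(2,4)$ — e.g. pushing $j$ as far right as possible, or $\ell$ as far right as possible — simultaneously (a) is realizable in constant time from the tables and (b) still has the property that its failure rules out every realization with that fixed $(1,3)$. Verifying (b) for $P_{ab}$ and $P_{bc}$, where two forbidden edges interact, is the place where a careful case analysis is genuinely needed; I expect each individual case to be short (a one-paragraph argument in the style of Proposition~\ref{claimPhiMiddle}), but getting the six $\phi$'s exactly right, and confirming that no realization is missed, is the crux of the proof.
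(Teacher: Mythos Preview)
Your proposal has a concrete error already at $P_\emptyset$. You fix an edge $(i,k)$ as the $(1,3)$-pair and take $j$ to be ``the smallest neighbour of $k$ with $i<j<k$'', but in $P_\emptyset$ the pair $(2,3)$ is undecided: vertex $2$ need not be adjacent to vertex $3$. On the ordered graph with $V=\{1,2,3,4\}$ and $E=\{(1,3),(2,4)\}$ your test finds no admissible $j$ for either edge and outputs ``no'', yet $P_\emptyset$ is plainly realized. Your justification (``any such $\ell$ would be $\le\max N^+(j)\le k$'') controls only that one $j$, not an arbitrary $j'\in(i,k)$ that could serve as vertex~$2$. To make the crossing test a per-edge $O(1)$ check you would need a range-maximum of $\max N^+(\cdot)$ over the interval $(i,k)$, which is not among your precomputed primitives. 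The positive-$P_4$ paradigm you borrow works because the middle edge of a $P_4$ shares a vertex with each side edge, giving the handles $e^\pm,{}^\pm e$; here the two mandatory edges $(1,3)$ and $(2,4)$ are vertex-disjoint and cross, which is exactly why the shortcut fails. The paper's algorithms are sweep-line rather than single-edge Boolean tests: for $P_\emptyset$ it reads each edge as a parenthesis pair and checks well-nesting with a stack; for $P_a$ and $P_b$ it maintains a list \textit{ACTIVE} of vertices that still have a neighbour to the right and, at each vertex $i$ (playing vertex~$3$), compares $N^-(i)$ against the relevant segment of \textit{ACTIVE} or against the neighbourhood of the extremal active vertex, with correctness argued via an invariant on the whole prefix.

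For $P_{ab}$ (hence $P_{bc}$ by mirror) the gap is more serious. The paper itself does \emph{not} give a linear-time algorithm: it proves only $O(n\cdot m)$, remarks that ``it is not clear that a linear time algorithm exists'', and shows that $P_{ab}$-detection is at least as hard as detecting the three-vertex co-comparability pattern, for which no linear algorithm is known. So the theorem as stated is not fully established in the paper, and your plan to dispatch $P_{ab}$ with the same constant-time $\phi$ machinery is over-optimistic precisely at the two patterns the authors flag as open.
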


\begin{theorem}
Pattern $P_{ab}$ is as hard as the co-comparability pattern on three vertices.
\end{theorem}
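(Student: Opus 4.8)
The plan is to exhibit a linear-time reduction from co-\pcomparability{}-{\sc Detection} on three vertices to $P_{ab}$-{\sc Detection}, which proves that the latter is at least as hard as the former (and, since a subquadratic algorithm for co-\pcomparability{}-{\sc Detection} is not believed to exist, the same is conjecturally true for $P_{ab}$). Recall that co-\pcomparability{} is the pattern on $\{1,2,3\}$ with unique mandatory edge $(1,3)$ and forbidden edges $(1,2)$ and $(2,3)$; thus $(G,\tau)$ contains it iff there are $q_1 <_\tau q_2 <_\tau q_3$ with $q_1 q_3 \in E(G)$ and $q_1 q_2, q_2 q_3 \notin E(G)$. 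By Definition~\ref{def:Pabcd}, $P_{ab}$ is the pattern on $\{1,2,3,4\}$ with mandatory edges $(1,3),(2,4)$, forbidden edges $a=(1,2)$ and $b=(2,3)$, and undecided pairs $d=(1,4)$ and $c=(3,4)$. The crucial structural remark is that vertex $4$ of $P_{ab}$ is incident to no forbidden edge and to a single mandatory edge, $(2,4)$, so that in any host graph a vertex which is last in the ordering and adjacent to everything is always an admissible image of vertex~$4$.

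Concretely, given an instance $(G,\tau)$ of co-\pcomparability{}-{\sc Detection}, I would build $(G',\tau')$ by adding one fresh vertex $z$, placed last in the ordering, and making $z$ adjacent to every vertex of $G$; this costs $O(n+m)$ time. It then remains to check that $(G',\tau')$ contains $P_{ab}$ if and only if $(G,\tau)$ contains co-\pcomparability{}. For the ``only if'' direction I would take a co-\pcomparability{} realization $q_1 <_\tau q_2 <_\tau q_3$ of $(G,\tau)$ and argue that $q_1 <_{\tau'} q_2 <_{\tau'} q_3 <_{\tau'} z$ realizes $P_{ab}$: the mandatory edge $(1,3)$ is $q_1 q_3$, the mandatory edge $(2,4)$ is $q_2 z$, the forbidden pairs $(1,2),(2,3)$ are non-edges of $G'$ because $q_1 q_2, q_2 q_3 \notin E(G)$ and $z$ is not involved, and the undecided pairs $(1,4),(3,4)$ are free. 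For the ``if'' direction, given a realization of $P_{ab}$ on $p_1 <_{\tau'} p_2 <_{\tau'} p_3 <_{\tau'} p_4$, I would observe that $p_1,p_2,p_3$ all lie in $V(G)$ (since $z$ is the last vertex of $\tau'$, only $p_4$ could possibly equal $z$), that $p_1 p_3 \in E(G)$ (the mandatory edge $(1,3)$, which does not involve $z$), and that $p_1 p_2, p_2 p_3 \notin E(G)$ (the forbidden edges $(1,2),(2,3)$); hence $p_1 <_\tau p_2 <_\tau p_3$ is a co-\pcomparability{} realization in $(G,\tau)$. In particular no spurious realization appears, so the reduction is correct and the theorem follows.

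There is no serious technical obstacle in this argument; the only step demanding a little care is the ``if'' direction, where one has to notice that the first three vertices of \emph{any} $P_{ab}$-realization in $(G',\tau')$ are forced into $V(G)$ and already satisfy every constraint of co-\pcomparability{}, so that the extra vertex $p_4$ and the two undecided pairs incident to it play no role and cannot produce an occurrence not already witnessed inside $G$. Intuitively, $P_{ab}$ is co-\pcomparability{} with one harmless appended vertex, which is exactly why its detection inherits the (conjectured) hardness of co-\pcomparability{}-{\sc Detection}.
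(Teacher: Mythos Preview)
Your proof is correct and follows the paper's approach exactly: add a universal vertex $z$ at the end of the ordering and verify that $(G',\tau')$ contains $P_{ab}$ iff $(G,\tau)$ contains co-\pcomparability{}, using that only the fourth position of a realization can be $z$. One cosmetic slip: your ``if'' and ``only if'' labels are swapped relative to the equivalence as you stated it, though both arguments themselves are sound.
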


Note that according to Proposition \ref{lem:lower-bound-2k2}, assuming Hypothesis~\ref{hyp:k-clique} for $k=4$, we also know that detecting pattern $P_{abcd}$ requires at least $n^{\omega(2,1,1)-o(1)}$ time, where $n^{\omega(2,1,1)}$ is the time needed to multiply a $n^2$ by $n$ matrix and a $n$ by $n$ matrix.
We do not address remaining patterns (i.e. patterns with $|F|=3$ and patterns with $d \in F$), we leave this for further work.

\medskip

We start with pattern $P_\emptyset$.

\begin{lemma}
The pattern $P_{\emptyset}$ can be detected in linear time.
\end{lemma}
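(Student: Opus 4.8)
The pattern $P_\emptyset$ has mandatory edges $(1,3)$ and $(2,4)$, and \emph{no} forbidden edges: the pairs $a=(1,2)$, $b=(2,3)$, $c=(3,4)$, $d=(1,4)$ are all undecided. So we need to detect whether there exist four vertices $v_1<v_2<v_3<v_4$ in $(G,\tau)$ with $(v_1,v_3)\in E$ and $(v_2,v_4)\in E$; i.e. two ``crossing'' edges. The plan is to reduce this to scanning edges and maintaining a single running quantity. First I would precompute, for every vertex $j$, the value $\max N^+(j)$ (the largest successor of $j$), which can be done in $O(n+m)$ by reading the adjacency lists; set it to $-\infty$ (or $0$) if $N^+(j)=\emptyset$. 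Likewise I would sort all adjacency lists with respect to $\tau$ in linear time.

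\textbf{Main idea.} Process the edges $(v_1,v_3)$ with $v_1<v_3$ as candidates for the mandatory edge $(1,3)$ of the pattern. Given such an edge, we need a vertex $v_2$ with $v_1<v_2<v_3$ and a vertex $v_4>v_3$ such that $(v_2,v_4)\in E$. Equivalently, we need some $v_2\in\{v_1+1,\dots,v_3-1\}$ with $\max N^+(v_2)>v_3$. Define, for each position $v_3$, the quantity
\[
  B(v_1,v_3) \;=\; \max_{\,v_1 < v_2 < v_3}\ \max N^+(v_2),
\]
the largest successor reachable from any strictly-interior vertex. Then the edge $(v_1,v_3)$ completes a realization of $P_\emptyset$ if and only if $B(v_1,v_3) > v_3$. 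The obstacle is that naively $B(v_1,v_3)$ ranges over $\Theta(n^2)$ pairs, so we cannot tabulate it directly in linear time; the key observation that saves us is that we only ever need, for a fixed $v_3$, the \emph{best} starting vertex. So instead I would iterate $v_3$ from $1$ to $n$, maintaining a running array (or rather, iterate the other way): for fixed $v_3$, among all edges $(v_1,v_3)\in E$ incident to $v_3$ from the left, it suffices to test the \emph{smallest} such $v_1$, because enlarging the interval $(v_1,v_3)$ only increases $B$. Thus for each $v_3$ we take $v_1^\star=\min N^-(v_3)$ and check whether there is $v_2$ with $v_1^\star<v_2<v_3$ and $\max N^+(v_2)>v_3$.

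\textbf{How to make the inner test linear overall.} Maintain, while $v_3$ increases by one, a prefix-maximum $m(v_3)=\max_{u<v_3}\max N^+(u)$, updated in $O(1)$ per step. But we need $\max$ over $v_1^\star<v_2<v_3$, not over all $v_2<v_3$; and $v_1^\star$ changes with $v_3$. The fix: observe that if for the current $v_3$ the witness would have to satisfy $v_2>v_1^\star=\min N^-(v_3)$, then in particular $v_2$ is a successor of $v_1^\star$ \emph{or} we can simply drop the lower bound. Indeed, I claim it is enough to test whether $\max_{v_1^\star<v_2<v_3}\max N^+(v_2)>v_3$, and this can be reorganized: for each vertex $w$, once we have processed it we know $\max N^+(w)$; store these in a max-structure keyed by position. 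What we actually need is a data structure supporting, as $v_3$ sweeps, range-maximum queries on $\{\max N^+(w): v_1^\star<w<v_3\}$. Since the right endpoint of the query only increases, and the relevant left endpoint $v_1^\star$ for $v_3$ is monotone-ish but not strictly, I would instead argue combinatorially: if $(v_1,v_3)\in E$ is the leftmost left-neighbor, consider the leftmost vertex $w$ in $(v_1,v_3)$ with $\max N^+(w)>v_3$ — either it exists, giving the pattern, or for \emph{every} interior $w$ we have $\max N^+(w)\le v_3$, and this ``failure'' is inherited by all $v_3'$ between the current $v_3$ and the next interesting event, so amortized linear scanning of interior vertices across all $v_3$ works by a charging argument analogous to the one for \pchordal{} (``rely on your leftmost left neighbor''). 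The main obstacle I anticipate is precisely getting this amortization right — ensuring that the total work of scanning interior vertices, summed over all candidate edges $(v_1,v_3)$, telescopes to $O(n+m)$ rather than $O(nm)$; I expect this is handled by noting that once a vertex $w$ is certified as ``useless for $v_3$'' (i.e. $\max N^+(w)\le v_3$) it stays useless for all larger right endpoints, so each vertex is scanned and discarded at most once, and each edge is examined a constant number of times. The remaining cases ($P_a$, $P_b$, $P_c$, $P_{ab}$, $P_{bc}$) add forbidden edges among $\{a,b,c,d\}$ and are handled by the same sweep with the appropriate $\min$/$\max$ bookkeeping, exactly as in the positive-$P_4$ section, so only $P_\emptyset$ needs the above argument in full.
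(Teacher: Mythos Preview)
Your reduction is correct: $P_\emptyset$ is realized precisely when two edges of $(G,\tau)$ cross, and for each right endpoint $v_3$ it does suffice to take $v_1^\star=\min N^-(v_3)$ and look for some $v_2\in(v_1^\star,v_3)$ with $\max N^+(v_2)>v_3$. The amortization you sketch also goes through, provided you maintain the not-yet-discarded candidates in a doubly-linked list so that stepping left from $v_3$ and deleting a useless $v_2$ is $O(1)$; each vertex is then discarded at most once and each $v_3$ contributes $O(1)$ extra work beyond discards. As written, though, this is a plan rather than a proof: you say ``I expect this is handled by\ldots'' precisely where you would need to name the data structure and carry out the charging.

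The paper's proof is considerably shorter and sidesteps all of this machinery. It observes that avoiding $P_\emptyset$ is exactly the condition that no two edges cross, which is the same as saying that if each edge $(u,v)$ with $u<_\tau v$ is read as a matched pair of parentheses (open at $u$, close at $v$), the resulting word over $\tau$ is well-formed. That is the classical balanced-parenthesis test: a single left-to-right scan with a stack, in $O(n+m)$. So while your argument can be completed, you are rebuilding a standard fact from scratch; the parenthesis/stack observation is the intended one-line shortcut here.
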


\begin{proof}
 To test whether a given ordering $\tau$  of the vertices avoids the pattern $P_{\emptyset}$, we first sort the adjacency list of each vertex, such that the neighbors of a vertex appear in the same order in the list and in $\tau$. Then every edge $uv$ with $u <_{\tau}v$ can be viewed as a pair of parenthesis $(_u$ and $)_v$. Then the test is just the verification that $\tau$ yields a good word of parenthesis. It is well-known that it can be done in one scan of the ordering $\tau$ using a simple stack.\qed
 \end{proof}


Let us consider now the  pattern $P_a$. The main idea is to check during the unique scan if the current vertex $i$ is  the third vertex of a pattern $P_a$.

\begin{proposition}
The pattern $P_a$ can be detected in linear time.
\end{proposition}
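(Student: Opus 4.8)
The plan is to reuse the left-to-right scan used for the three-vertex cases: go through the vertices $1,\dots,n$ in the order of $\tau$ and, when the current vertex is $i$, decide whether $i$ can be the third vertex $v_3$ of a realization $v_1<v_2<v_3<v_4$ of $P_a$. By definition such a realization consists of four vertices in this order with $(v_1,v_3)\in E(G)$, $(v_2,v_4)\in E(G)$ and $(v_1,v_2)\notin E(G)$, while $(v_2,v_3)$, $(v_3,v_4)$, $(v_1,v_4)$ are unconstrained; equivalently, $P_a$ occurs iff there are two \emph{crossing} edges $(v_1,v_3)$ and $(v_2,v_4)$ (i.e.\ $v_1<v_2<v_3<v_4$) whose left endpoints are non-adjacent. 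Taking the widest possible $v_4$, this is the same as: there is an edge $(v_1,v_3)$ and a vertex $v_2$ with $v_1<v_2<v_3$, $(v_1,v_2)\notin E(G)$ and $\max N^+(v_2)>v_3$ (given such a triple put $v_4:=\max N^+(v_2)$; conversely any realization yields such a triple). So for a fixed $i$ it suffices to decide whether some $v_1\in N^-(i)$ admits a vertex $v_2$ with $v_1<v_2<i$, $(v_1,v_2)\notin E(G)$ and $\max N^+(v_2)>i$.

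In linear time I would first build the sorted lists $N^-(v)$, $N^+(v)$ for every $v$ and the values $r(v):=\max N^+(v)$ (with $r(v):=0$ if $N^+(v)=\emptyset$). Then, instead of the loop ``for each $i$, for each $v_1\in N^-(i)$'', I would reorganise the same $O(m)$ pairs as: for each vertex $v_1$, scan the edges $(v_1,q_1),\dots,(v_1,q_k)$ going right from $v_1$, where $N^+(v_1)=\{q_1<\dots<q_k\}$, in increasing order, and write $q_0:=v_1$. For $i=q_j$ the admissible values of $v_2$ are precisely the vertices lying in one of the ``gaps'' $(q_0,q_1),(q_1,q_2),\dots,(q_{j-1},q_j)$: a gap contains no neighbour of $v_1$, and these intervals are exactly the non-neighbours of $v_1$ that are strictly between $v_1$ and $q_j$. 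Hence $P_a$ is realized with first vertex $v_1$ and third vertex $q_j$ iff $\max\{\,r(v_2)\ :\ v_1<v_2<q_j,\ v_2\notin N^+(v_1)\,\}>q_j$, and this maximum equals $\max_{1\le t\le j}\max\{r(v):q_{t-1}<v<q_t\}$. Keeping it as a running maximum while $j$ increases, the only non-constant operation at step $j$ is to evaluate $\max\{r(v):q_{j-1}<v<q_j\}$; feeding the array $(r(v))_{v}$ to a range-maximum-query structure with linear-time preprocessing and constant-time queries makes every such step $O(1)$, so the whole scan runs in $O(n+m)$.

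Correctness is then immediate: whenever the running maximum exceeds $q_j$, its maximizer $v_2$ yields an honest realization $v_1<v_2<q_j<r(v_2)$ of $P_a$ (the undecided pairs impose nothing), and conversely any realization is caught when $v_1$ is taken to be its first vertex and $q_j$ its third. The part I expect to be the actual obstacle is precisely keeping the bookkeeping of the forbidden edge $(v_1,v_2)$ within a linear budget: the naive way of recomputing, for each left endpoint $v_1$, the best $r$-value in the interval $(v_1,\max N^+(v_1))$ rescans that whole interval and is quadratic in the worst case, so one must genuinely exploit the nesting $(v_1,q_1)\subset(v_1,q_2)\subset\cdots$ together with the range-maximum structure (or an equivalent amortized argument). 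The remaining ingredients — the reduction to crossing edges with non-adjacent left endpoints, and the precomputation of $N^-$, $N^+$ and of $r$ — are routine.
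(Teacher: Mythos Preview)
Your argument is correct. The reformulation of $P_a$ as ``two crossing edges whose left endpoints are non-adjacent'' is exactly right, and the reduction to the test $\max\{r(v_2):v_1<v_2<q_j,\ v_2\notin N^+(v_1)\}>q_j$ for each edge $(v_1,q_j)$ is sound in both directions. Organising this as a scan of $N^+(v_1)=\{q_1<\dots<q_k\}$ with a running maximum over the gaps, and handling each new gap $(q_{j-1},q_j)$ by a single range-maximum query on the array $r[\cdot]$, gives a total of $O(m)$ constant-time queries after $O(n)$ preprocessing, so the overall $O(n+m)$ bound holds.

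This is a genuinely different route from the paper. The paper scans the ordering once and, at each vertex $i$ viewed as the potential third vertex, maintains an \textsc{Active} list of vertices that still have a neighbour beyond $i$; it then only tests against the \emph{largest} active vertex $j<i$, and correctness relies on an inductive ``rely on your rightmost left neighbour'' argument showing that any missed pattern would collapse to a smaller one already checked. Your approach instead iterates over the pair $(v_1,v_3)$ and pushes the work into a range-maximum-query structure, so correctness is immediate and no induction is needed. The trade-off is that you import a non-trivial black box (linear-time RMQ with $O(1)$ queries), whereas the paper stays with elementary list operations and an amortised complexity analysis. Both give the same bound; your version has a cleaner correctness proof, the paper's version is more self-contained.
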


\begin{proof}

\begin{algorithm}[ht] 
	\DontPrintSemicolon
    \SetAlgoLined
		\SetKwInOut{Input}{Input}
	\SetKwInOut{Output}{Output}
	\SetKw{Continue}{continue}
	\SetKw{Break}{break}

	\Input{$G$ a graph given by its adjacency lists and $\tau$ a total ordering of the vertices}
	\Output{Does $(G,\tau)$ contain the pattern $P_a$}
	
	\BlankLine
	Adjacency lists are supposed to be ordered with $\tau$ increasing \;
	Wlog we suppose $\tau=1, 2, \dots,n$\;
	$ACTIVE$ is a  list of vertices initialized to $\emptyset$ \;
	
	For every vertex i, partition  its adjacency into 2 lists $N^-(i)=\{x \in N(i)$ s.t. $x<i \}$ and $N^+(i)=\{x \in N(i)$ s.t. $i<x \}$ \;
	
	\For{$i=1,2$}
	{If {$i$ admits a neighbour $x$  with $3\leq_{\tau}x$}\;
	{$append(i, ACTIVE)$}}

	\For {$ i=3$ to $n-1$}
	{\If{ $N^-(i) \neq \emptyset$ and $ACTIVE \neq \emptyset$ and $j \leftarrow  max_{\tau} \{ p  \in ACTIVE$  that has a neighbour $q$ with $i <_{\tau} q\}$ exists and $j \neq First(ACTIVE)$}{
        // $j$ is the largest true active vertex at step $i$ \;
	\If {$NOT (\{x \in N^-(i)$ with $x <_{\tau} j \} \subseteq  N^-(j))$}
	{STOP  Output ``YES, $P_a$ found"}
	\Else{For every  $u \in N^-(i)$: delete $i$ from $N^+(u)$ \;
	\If{$N^+(u) = \emptyset$}{delete u from $ACTIVE$}
	\If{$N^+(i) \neq \emptyset$}{ $append(i, ACTIVE)$}
	}}}
	Output ``NO, $P_a$ not found"	
	
        \vspace{0.5cm}
	\caption{Detecting the pattern $P_a$}\label{certifPa}
\end{algorithm}

We prove that Algorithm \ref{certifPa} detects in linear time if an ordering  contains the pattern $P_a$ or not.
Let us denote by $ACTIVE_i$ the value of the list ACTIVE at the beginning of the $i^{th}$ step of the main \textbf{for} loop.

If the answer is YES it means that for some $i$ we found a $j$ such that $NOT (\{x \in N^-(i)$ with $x <_{\tau} j \} \subseteq  N^-(j))$ and therefore there exists $x <_{\tau} i$ with $xi \in E(G)$ and $xj \notin E(G)$. Therefore, $x \in ACTIVE_i$, and since $j$ is also in $ACTIVE_i$ and not the first of the list and has at least one neighbour bigger than $i$, it yields the $P_a$ pattern.

Now let us consider the main invariant of the NO cases.

\textbf{Main Invariant:} at the end of the $i^{th}$ iteration of the \textbf{for} loop, the ordering $\tau$ does not contain any pattern $P_a$ on vertices $\alpha <_{\tau} \beta <_{\tau}\gamma <_{\tau} \delta$  with $\gamma \leq_{\tau} i$.

Let us prove it by induction.  With $i=3$ the algorithm checks the existence of a pattern $P_a$ on the vertices $1, 2, 3, x$ with $i<_{\tau}x$. So the invariant is correct for $i=3$.

Suppose now that at the end of the $i^{th}$  ($i \geq 3)$ iteration of the \textbf{for} loop, the ordering $\tau$ does not contain any pattern $P_a$ on vertices $\alpha <_{\tau} \beta <_{\tau}i <_{\tau} \delta$.

If $\beta$ is the largest true active vertex as compute in the algorithm ($\beta=j$), then the pattern has been checked, a contradiction.

Else $\beta <_{\tau} j$. Since $\alpha i \in E(G)$, we also know that $\alpha j \in E(G)$.  So we have a pattern $P_a$ with  $\alpha <_{\tau} \beta <_{\tau}j <_{\tau} \delta$ and $j<_{\tau} i$, contradicting the induction hypothesis.

\textbf{Complexity analysis} 
The preprocessing requires $O(n+m)$. The algorithm itself is just one scan of the $\tau$ ordering. and in the \textbf{for} loop the neighbourhood of a vertex is only considered once. So the whole algorithm can be done in $O(n+m)$.

It should be noticed that when the list ACTIVE is scanned to find the first true active vertex if exists, the vertices that do not satisfy the neighbourhood condition will be deleted from the ACTIVE list at the end of the \textbf{for} loop, so they will not be considered again.\qed
\end{proof}

We address $P_c$ by symmetry: to check for $P_c$ in a graph $G$, we mirror the graph in linear time and check for $P_a$ in the mirrored graph.
We now focus on pattern $P_b$.

\begin{proposition}
The pattern $P_b$ can be detected in linear time.
\end{proposition}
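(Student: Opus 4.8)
The plan is to treat $P_b$ exactly like $P_a$ and $P_c$: we scan the vertices left to right and, at each vertex $i$, try to decide whether $i$ can be the \emph{third} vertex of a realization of $P_b$. Recall that $P_b$ has mandatory edges $(1,3)$ and $(2,4)$ and a single forbidden edge $b=(2,3)$; all other pairs are undecided. So a realization on $\alpha<\beta<\gamma<\delta$ requires $\alpha\gamma\in E$, $\beta\delta\in E$, $\beta\gamma\notin E$, with no constraint on $\alpha\beta$, $\gamma\delta$, $\alpha\delta$. First I would do the standard $O(n+m)$ preprocessing: sort all adjacency lists along $\tau$, and split each $N(i)$ into $N^-(i)$ and $N^+(i)$.

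The key structural observation is the one already used for $P_a$: for a fixed potential pair of ``middle'' vertices $(\beta,\gamma)$ with $\beta<\gamma$, the choice of $\gamma$ being the third vertex is witnessed by a vertex $\alpha<\beta$ with $\alpha\gamma\in E$ and a vertex $\delta>\gamma$ with $\beta\delta\in E$, and these two choices are independent. So for each $\gamma=i$, it suffices to know (a) whether there exists some $\alpha<\gamma$, $\alpha<$ (some candidate $\beta$) with $\alpha i\in E$, and (b) whether some candidate $\beta<i$ with $\beta i\notin E(G)$ has a neighbor $\delta>i$. The catch — just as for $P_a$ — is that naively looping over all candidate $\beta$ for each $i$ is quadratic. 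I would maintain an \texttt{ACTIVE} list of vertices $\beta<i$ that still have an out-neighbor $>i$ (so that a valid $\delta$ could exist), updated incrementally: when processing $i$, for every $u\in N^-(i)$ we delete $i$ from $N^+(u)$ and drop $u$ from \texttt{ACTIVE}\ once $N^+(u)$ becomes empty, and then append $i$ if $N^+(i)\ne\emptyset$. Then, reusing the ``rely on your rightmost relevant left neighbour'' paradigm of Algorithm~\ref{algo:chordal} and Algorithm~\ref{certifPa}: rather than testing all $\beta\in$ \texttt{ACTIVE}, I would pick the \emph{largest} $\beta\in$ \texttt{ACTIVE}\ with $\beta\notin N(i)$ (i.e. $\beta i\notin E(G)$), compute whether there is a suitable $\alpha$, and argue that if this maximal choice fails to produce a pattern then no choice does. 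Concretely, suppose $\alpha<\beta'<i<\delta$ is a realization with $\beta'$ not maximal; let $\beta$ be the maximal non-neighbour in \texttt{ACTIVE}. If $\alpha\beta$ also satisfies $\alpha i\in E$ (which it does, $\alpha$ is unchanged) and $\beta i\notin E(G)$ by choice of $\beta$, and $\beta$ has some out-neighbour $>i$ by definition of \texttt{ACTIVE}, then $\alpha<\beta<i<\delta''$ is also a realization; by induction on $i$ this was already detected at an earlier step when $\beta$ played the role of third vertex — contradiction, using an invariant ``after step $i$, no realization has third vertex $\le i$.''

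The main obstacle I anticipate is getting the \emph{amortized} bookkeeping right so the total running time stays $O(n+m)$ despite the fact that a single vertex $\beta$ may be, over many steps, the maximal non-neighbour to scan. This is the same subtlety flagged in the complexity analysis of Algorithm~\ref{certifPa}: when we scan \texttt{ACTIVE}\ to find that maximal $\beta$, we must be careful that vertices skipped because they are currently neighbours of $i$ are not rescanned forever, and that the subset-check ``$\{x\in N^-(i): x<\beta\}$ witnesses a valid $\alpha$'' is charged correctly. For $P_b$ the $\alpha$-side check is actually \emph{easier} than the subset check in Algorithm~\ref{certifPa}: since the pair $(\alpha,\beta)$ is undecided, we only need \emph{existence} of $\alpha<\beta$ with $\alpha i\in E$, i.e. $\min N^-(i)<\beta$, which is a constant-time test after precomputing $\min N^-(i)$. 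The second obstacle, more routine, is nailing down the base cases $i=3$ and making sure \texttt{ACTIVE}\ is initialized with vertices $1,2$ that have a neighbour $\ge 3$. Once these are in place, correctness follows the induction template above and the complexity is one pass over $\tau$ with each adjacency list touched $O(1)$ times, giving the claimed linear bound. I would present the algorithm in the same style as Algorithm~\ref{certifPa}, state the invariant, prove it by induction, and then give the amortized complexity argument.
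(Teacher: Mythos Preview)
Your plan is correct and matches the paper's: maintain \texttt{ACTIVE}, at each $i$ take $j=\min N^{-}(i)$ as candidate vertex~1, and look in \texttt{ACTIVE} for a non-neighbour $\beta$ of $i$ with $j<\beta<i$ as vertex~2. The paper implements the test by comparing the two ordered lists $\texttt{ACTIVE}(]j,i[)$ and $N^{-}(i)\setminus\{j\}$; your right-to-left scan for the maximal non-neighbour, followed by the check $\min N^{-}(i)<\beta$, is an equivalent variant with the same $O(|N^{-}(i)|)$ per-step cost (each vertex you skip while scanning is a neighbour of $i$, so the skip is charged to an edge in $N^{-}(i)$).

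One fix to your write-up: drop the induction. You have already shown that any realization $\alpha<\beta'<i<\delta$ yields one with the maximal non-neighbour $\beta\ge\beta'$, and since $\min N^{-}(i)\le\alpha<\beta'\le\beta$ your test fires \emph{at step $i$}; nothing earlier is invoked. The clause ``detected at an earlier step when $\beta$ played the role of third vertex'' is wrong on both counts: $\beta$ is vertex~2, and the third vertex in the new realization is still $i$. The inductive jump is genuinely needed for $P_a$ because there the $\alpha$-side is a subset test that can fail at step $i$ even when a realization with third vertex $i$ exists; for $P_b$, as you yourself observe, the $\alpha$-side reduces to mere existence, so the single comparison $\min N^{-}(i)<\beta$ is already necessary and sufficient, and the paper's proof is accordingly direct, with no induction.
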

\begin{proof}

To prove this Proposition, we  present and analyze a dedicated detection algorithm \ref{CertifPb}

\begin{algorithm}[ht]
	\DontPrintSemicolon
    \SetAlgoLined
		\SetKwInOut{Input}{Input}
	\SetKwInOut{Output}{Output}
	\SetKw{Continue}{continue}
	\SetKw{Break}{break}

	\Input{$G$ a graph given by its adjacency lists and $\tau$ a total ordering of the vertices}
	\Output{Does $(G,\tau)$ contain the pattern $P_b$}
	
	\BlankLine
	Adjacency lists are supposed to be ordered with $\tau$ increasing \;
	Wlog we suppose $\tau=1, 2, \dots,n$\;
	$ACTIVE$ is a double linked  list of vertices initialized to $\emptyset$ \;
	
	For every vertex i, partition  its adjacency into 2 lists $N^-(i)=\{x \in N(i)$ s.t. $x<i \}$ and $N^+(i)=\{x \in N(i)$ s.t. $i<x \}$ \;
	
	\For{$i=1,2$}
	{\If {$i$ admits a neighbour $x$  with $3\leq_{\tau}x$}{$append(i, ACTIVE)$}}
	\For {$ i=3$ to $n-1$}
	{For every  $u \in N^-(i)$: delete $i$ from $N^+(u)$ \;
	\If{$N^+(u) = \emptyset$}{delete u from $ACTIVE$}
	\If{ $|N^-(i)| \geq 2$  and $ACTIVE \neq \emptyset$}
        { $j \leftarrow First(N^-(i)) $ \;
	delete $j$ from $N^-(i)$ \;
	\If {$ACTIVE(]j, i[) \neq N^-(i)$}{STOP  Output ``YES, $P_b$ found"}
	\If{$N^+(i) \neq \emptyset$}{ $append(i, ACTIVE)$}
	}}
	Output ``NO, $P_b$ not found"

        \vspace{0.5cm}
	\caption{Detecting $P_b$}\label{CertifPb}
\end{algorithm}

\textbf{Nota Bene:}  $ACTIVE$ is not only a double linked list, but we also need that for every vertex $u$  which belongs to  $ACTIVE$, we maintain a pointer to its place in $ACTIVE$. Therefore $ACTIVE(]j, i[)$ is the sublist $ACTIVE$ made up with the vertices $k$ in $ACTIVE$  such that  $j < k < i $.

\textbf{Proof of the algorithm}

When the algorithm outputs YES, we have found a vertex $i$ corresponding to the vertex 3 in Figure \ref{fig:abcd-patterns}:
Since $N^-(i) \neq \emptyset$,  $i$ admits  a neighbour  $u <i$, and since $ACTIVE(]j, i[)$ and $N^-(i)$ are different, there exists a vertex $v$
not adjacent to $i$ with $v <i$. Furthermore  since $v \in ACTIVE$  and $v$ is not adjacent to $i$, it admits a neighbour $w>_{\tau} i$.  So $\{u, v, i, w\}$ yields the pattern $P_b$.

At the end of the $i^{th}$ iteration of the main  \textbf{for} loop, if there exists a non detected $P_b$  pattern, $\alpha<_{\tau} \beta <_{\tau} i<_{\tau}\gamma$  with $\beta i \notin E(G)$.  Then $j <_{\tau} \beta <_{\tau} i<_{\tau}\gamma$ is also such a pattern.

As $\beta \in ACTIVE(]j,i[)$ the edge $\beta i$ has been checked in the comparison of the two lists $ACTIVE(]j,i[)$ and $N^-(i)$, a contradiction.

\textbf{Complexity:}
The preprocessing requires $O(n+m)$. The comparison of the two lists ordered with $\tau$,  $ACTIVE(]j,i[)$ and $N^-(i)$  ends at the first difference and can be charged to $N^-(i)$.
So in the \textbf{for} loop the neighbourhood of a vertex is only considered once. So the whole algorithm can be done in $O(n+m)$.\qed
\end{proof}

\begin{lemma}\label{Pab}
The pattern  $P_{ab}$ can be checked in $O(n \cdot m)$.
\end{lemma}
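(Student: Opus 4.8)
The plan is to recast $P_{ab}$-detection as the following search: find vertices $v_1<v_2<v_3<v_4$ in the order $\tau$ with $v_1v_3,v_2v_4\in E(G)$ and $v_1v_2,v_2v_3\notin E(G)$ (no condition on the undecided pairs $c=(v_3,v_4)$ and $d=(v_1,v_4)$). I would iterate over all candidates for the mandatory edge $(v_1,v_3)$ --- there are $m$ of them once the adjacency lists are sorted along $\tau$ --- and for each one look for a suitable ``middle'' vertex $v_2$ with $v_1<v_2<v_3$.

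First I would do $O(n+m)$ preprocessing: sort every adjacency list according to $\tau$, split each into predecessors $N^-(v)$ and successors $N^+(v)$, and compute for every vertex $v$ the value $\mathrm{last}(v)=\max N(v)$ (or $-\infty$ if $v$ is isolated). The key simplification is that the fourth vertex never needs to be manipulated: given $v_2<v_3$, the requirement ``there is $v_4>v_3$ with $v_2v_4\in E$'' is equivalent to the single inequality $\mathrm{last}(v_2)>v_3$, and since the pairs $v_1v_4,v_3v_4$ are undecided, the chosen $v_4=\mathrm{last}(v_2)$ automatically satisfies all constraints (it exceeds $v_3$, hence is distinct from $v_1,v_2,v_3$). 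Then, running $v_1$ from $1$ to $n$, I would mark $N(v_1)$ in a Boolean array of size $n$; for each $v_3\in N^+(v_1)$ I would mark $N(v_3)$ in a second Boolean array, scan the block $v_2=v_1+1,\dots,v_3-1$, and report success as soon as a position is found with $v_2\notin N(v_1)$, $v_2\notin N(v_3)$ and $\mathrm{last}(v_2)>v_3$; I then unmark $N(v_3)$ (using its list, in time $O(\deg(v_3))$), and at the end of the outer iteration I unmark $N(v_1)$. Correctness is immediate in both directions: a reported tuple realizes $P_{ab}$ with $v_4=\mathrm{last}(v_2)$, and conversely any realization of $P_{ab}$ is inspected when the outer loop reaches its vertex $v_1$, the inner loop reaches its $v_3\in N^+(v_1)$, and the block scan reaches its $v_2$ (whose marks are both off and which has $\mathrm{last}(v_2)\ge v_4>v_3$).

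For the running time, each block scan costs $O(v_3-v_1)=O(n)$ and is charged to the edge $(v_1,v_3)$, for a total of $O(nm)$; marking and unmarking $N(v_3)$ costs $O(\deg(v_3))$ per edge of the form $(v_1,v_3)$ with $v_1<v_3$, summing to $\sum_v\deg(v)^2\le\bigl(\max_v\deg(v)\bigr)\cdot 2m\le 2nm$; and the per-$v_1$ marking of $N(v_1)$ sums to $O(m)$. Hence the whole procedure runs in $O(nm)$ time, as claimed.

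The only mildly delicate point --- the step I would be most careful about --- is to avoid a global adjacency-matrix lookup, which would enlarge the space and is against the spirit of an $O(nm)$ bound when $m$ is small. Keeping the two Boolean arrays ``local'', i.e.\ reset along the endpoints of the current edge rather than globally, is what makes both the complexity analysis and the bookkeeping go through; together with the reduction of $v_4$ to the precomputed value $\mathrm{last}(v_2)$, it keeps the inner work per edge at $O(n)$ (up to the amortized $O(\deg(v_3))$ marking cost) rather than anything larger.
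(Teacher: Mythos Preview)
Your argument is correct and yields a clean $O(nm)$ bound. It does take a different route from the paper, though. The paper performs a single left-to-right scan treating the current vertex $i$ as vertex~3 of the pattern, maintains an $ACTIVE$ list of earlier vertices that still have a neighbour at or after $i$, and at each step spends linear time checking a bipartite-completeness condition between the neighbours and non-neighbours of $i$ inside $ACTIVE$; this gives $n$ steps of linear work each. You instead iterate over all candidate mandatory edges $(v_1,v_3)$, mark the two neighbourhoods in Boolean arrays, and scan the interval $(v_1,v_3)$ for a valid $v_2$. Both proofs rest on the same simplification---vertex~4 can be eliminated by a precomputed quantity (your $\mathrm{last}(v_2)>v_3$ plays the role of the paper's $ACTIVE$ membership)---so the task reduces to a three-vertex search. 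What your edge-driven version buys is an explicit $O(n)$-per-edge accounting and constant-time adjacency queries via local marking, which makes the complexity analysis transparent; the paper's scan-based version stays closer to the style of the surrounding algorithms in that section but is stated more tersely.
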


\begin{proof}
For this pattern, we also process a unique scan of the ordering $\tau$. The current vertex is analyzed as a potential vertex \textbf{3} in Figure \ref{fig:abcd-patterns}. We also maintain a list $ACTIVE$ of vertices that have a neighbour after or to the current vertex in the ordering. 

At each step $i$ we need to check that in $ACTIVE$, the neighbours and the non-neighbours of $i$ form a complete bipartite.
This test can  be done in linear time, which yields the complexity $O(n \cdot m)$.\qed
\end{proof}

For the pattern $P_{ab}$, it is not clear that a linear time algorithm exists, using the following reduction.

\begin{proposition}
The pattern $P_{ab}$ is as hard to recognize as the co-comparability pattern on three vertices.
\end{proposition}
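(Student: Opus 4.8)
The plan is to exhibit a linear-time reduction in the reverse direction: from detecting the co-comparability pattern (pattern 5 in Figure~\ref{fig:27-patterns}, that is, dashed $(1,2)$, mandatory $(1,3)$, dashed $(2,3)$) on an ordered graph $(H,\sigma)$ to detecting $P_{ab}$ on some ordered graph $(G,\tau)$ of size linear in that of $H$. Together with the known converse direction (detecting $P_{ab}$ can be done by ``complementing'' the co-comparability-detection data structure, or equivalently co-comparability detection reduces to $P_{ab}$ detection, which is essentially what makes the $O(n\cdot m)$ bound of Lemma~\ref{Pab} the same as the best known for co-comparability), this establishes that the two problems have the same complexity.

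First I would set up the gadget. Given $(H,\sigma)$ with vertex set $\{h_1<\dots<h_n\}$, I build $G$ on vertex set $V_1\cup V_2$, where $V_1$ is a copy of $V(H)$ and $V_2$ is another copy; more precisely, for the $P_{ab}$ pattern the forbidden edges are $a=(1,2)$ and $b=(2,3)$ and the mandatory edges are $(1,3)$ and $(2,4)$, with $(3,4)$ and $(1,4)$ undecided. The idea is to force vertex~$4$ into a rigid position so that the remaining constraints on vertices $1,2,3$ mimic exactly the co-comparability pattern. A clean way to do this is: put all the ``interesting'' vertices $h_1,\dots,h_n$ (with their $H$-adjacencies) as the middle block, precede them by a single universal dummy vertex, and append after them a single dummy vertex $x$ adjacent to a carefully chosen set, so that a realization of $P_{ab}$ with $x$ playing role~$4$ exists iff there are $h_i<h_j<h_k$ with $(h_i,h_j)\notin E(H)$, $(h_j,h_k)\notin E(H)$, $(h_i,h_k)\in E(H)$ — which is precisely the co-comparability pattern. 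The edge $(2,4)$ being mandatory is what pins $x$ down to being adjacent to the ``role-2'' vertex; one arranges the adjacencies of $x$ and of the leading dummy so that the only way to satisfy the two mandatory and two forbidden edges forces $\{1,2,3\}$ to land on $\{h_i,h_j,h_k\}$ in order, realizing co-comparability on $H$.

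Then I would verify correctness in both directions: (i) from a co-comparability occurrence $h_i<h_j<h_k$ in $H$ one reads off vertices of $G$ (the dummy, $h_i$, $h_j$, $h_k$, $x$ in the right roles) forming a $P_{ab}$ occurrence; (ii) conversely, any occurrence of $P_{ab}$ in $G$ must, by the rigidity forced by the mandatory edge $(2,4)$ and the dummies, have its first three vertices inside $V(H)$ and satisfy exactly the co-comparability constraints, so it projects to a co-comparability occurrence in $H$. Since the construction of $G$ adds only $O(1)$ vertices and $O(n+m)$ edges to $H$ and can be carried out in $O(n+m)$ time (sorting adjacency lists according to the new order, as done throughout Appendix~\ref{sec:three-vertices}), this is a linear-time reduction.

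The main obstacle I anticipate is the gadget design itself: the $P_{ab}$ pattern has \emph{two} mandatory edges ($(1,3)$ and $(2,4)$) and two undecided pairs ($(1,4)$ and $(3,4)$), so a naive single-dummy construction may allow ``spurious'' realizations where vertex~$4$ is mapped inside $V(H)$ rather than onto $x$, or where roles $1$ and $2$ are swapped using the undecided pair $(1,4)$. Ruling these out is the delicate part — I expect to need either two dummy vertices (one leading, one trailing) with complementary adjacency patterns, or a small blow-up turning each $h_i$ into a short clique/independent-set gadget so that the only embeddings of the mandatory edge $(2,4)$ are the intended ones. Once that rigidity is nailed down, the rest is a routine case check analogous to Proposition~\ref{prop:chordal} and the reductions in Section~\ref{sec:lowerbounds}.

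\qed
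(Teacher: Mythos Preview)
Your direction is right --- a reduction from co-comparability detection to $P_{ab}$-detection is exactly what is needed --- but you have not actually given a proof, only a plan with unresolved gadget details, and the plan is far more elaborate than necessary.

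The paper's argument is one line: given $(G,\tau)$, append a single \emph{universal} vertex $u$ to obtain $(G',\tau')$ with $\tau' = \tau + u$. That is the entire construction. If $(G,\tau)$ contains the co-comparability pattern on vertices $i<j<k$ (so $ij\notin E$, $ik\in E$, $jk\notin E$), then $(i,j,k,u)$ realizes $P_{ab}$ in $G'$: the forbidden edges $a=(i,j)$ and $b=(j,k)$ are absent, the mandatory edge $(1,3)=(i,k)$ is present, and the mandatory edge $(2,4)=(j,u)$ is present because $u$ is universal. Conversely, if $(G',\tau')$ contains $P_{ab}$ on $v_1<v_2<v_3<v_4$, then since $u$ is the \emph{last} vertex it can only play the role of $v_4$ (or no role at all); in either case $v_1,v_2,v_3\in V(G)$, and the constraints $(v_1,v_2)\notin E$, $(v_2,v_3)\notin E$, $(v_1,v_3)\in E$ are exactly a co-comparability occurrence in $(G,\tau)$.

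Your anticipated obstacle --- spurious realizations with vertex $4$ landing inside $V(H)$ --- is a non-issue: even if all four vertices of a $P_{ab}$ realization lie in the original graph, the first three of them already form a co-comparability occurrence. So there is no need for two copies of $H$, no leading dummy, no ``carefully chosen'' adjacencies, and no blow-ups. Your opening remark about a ``known converse direction'' is also extraneous: the proposition is a hardness statement (at least as hard), and only the one reduction is required.
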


\begin{proof}
Let G be a graph and $\tau$ an ordering on $V(G)$.
We associate a new graph $G'$ from $G$ by adding a universal vertex $u$.
We note that if $\tau'=\tau + u$ does not contain any $P_{ab}$, then $\tau$ does not contain the co-comparability pattern.\qed
\end{proof}

As noted in \cite{FeuilloleyH21}, there is no known linear time algorithm to detect the co-comparability pattern.


\end{document}